\titlespacing{\section}{0pt}{2ex}{1ex}
\titlespacing{\subsection}{0pt}{1ex}{0ex}
\titlespacing{\subsubsection}{0pt}{0.5ex}{0ex}
\newcommand{\blind}{1}
\newcolumntype{M}[1]{>{\centering\arraybackslash}m{#1}}
\newcolumntype{L}[1]{>{\raggedright\arraybackslash}m{#1}}
\newcolumntype{R}[1]{>{\raggedleft\arraybackslash}m{#1}}
\newcommand{\Unovet}{\bm{1}}
\newcommand{\bvet}{\bm{b}}
\newcommand{\evet}{\bm{e}}
\newcommand{\uvet}{\bm{u}}
\newcommand{\xvet}{\bm{x}}
\newcommand{\yvet}{\bm{y}}
\newcommand{\Avet}{\bm{A}}
\newcommand{\Bvet}{\bm{B}}
\newcommand{\Cvet}{\bm{C}}
\newcommand{\Evet}{\bm{E}}
\newcommand{\Gvet}{\bm{G}}
\newcommand{\Ivet}{\bm{I}}
\newcommand{\Mvet}{\bm{M}}
\newcommand{\Pvet}{\bm{P}}
\newcommand{\Svet}{\bm{S}}
\newcommand{\Uvet}{\bm{U}}
\newcommand{\Xvet}{\bm{X}}
\newcommand{\Zerovet}{\bm{0}}
\newcommand{\Omegavet}{\bm{\Omega}}
\definecolor{mybluehl}{HTML}{cbd3ff}
\newcommand{\githuburl}{\begingroup%
\if0\blind
{
$\dots$
}\fi
\if1\blind
{
\url{https://github.com/danigiro/ctprob}.
}\fi
\endgroup}
\def\@endtheorem{\endtrivlist}
\tikzset{
  basic/.style  = {draw, text width=2cm, drop shadow, font=\sffamily, rectangle},
  root/.style   = {basic, rounded corners=2pt, thin, align=center,
                   fill=green!30},
  level 2/.style = {basic, rounded corners=6pt, thin,align=center, fill=green!60,
                   text width=4em},
  level 3/.style = {basic, thin, align=left, fill=pink!60, text width=1.5em}
}
\newcommand{\relation}[3]
{
	\draw (#3.south) -- +(0,-#1) -| ($ (#2.north) $)
}
\theoremstyle{definition}
\newtheorem{definition}{Definition}[section]
\newtheorem{theorem}{Theorem}[section]
\newcommand{\maketitleblind}{\begingroup%
\if1\blind
{
\clearpage\maketitle
\thispagestyle{empty}
}\fi

\if0\blind
{
\begin{center}%
  \let \footnote \thanks
    {\LARGE \@title \par}%
    \vskip 1.5em%
    {\large \@date}%
  \end{center}
  \bigskip
} \fi
\endgroup}
\renewenvironment{abstract}{%
    \if@twocolumn
      \section*{\abstractname}%
    \else %
      \begin{center}%
        {\bfseries \large\abstractname\vspace{\z@}}%
      \end{center}%
      \quotation
    \fi}
    {\if@twocolumn\else\endquotation\fi}
\title{\normalfont \textbf{Cross-temporal probabilistic forecast reconciliation}: \\ methodological and practical issues}
\author[1]{Daniele Girolimetto}
\author[2]{George Athanasopoulos}
\author[1]{Tommaso Di Fonzo}
\affil[1]{\small Department of Statistical Sciences, University of Padua, Padova 35121, Italy}
\author[2]{Rob J Hyndman}
\affil[2]{\small Department of Econometrics \& Business Statistics, Monash University, Clayton VIC 3800, Australia}
\date{27 October 2023}
\begin{document}

\def\spacingset#1{\renewcommand{\baselinestretch}{#1}\small\normalsize}
\spacingset{1.1}

\thispagestyle{empty} \clearpage\maketitleblind

\begingroup
\let\thefootnote\relax\footnotetext{\raggedright Email: \href{mailto:daniele.girolimetto@phd.unipd.it}{daniele.girolimetto@unipd.it} (DG), \href{mailto:george.athanasopoulos@monash.edu}{george.athanasopoulos@monash.edu} (GA), \\ \phantom{Email: }\href{mailto:tommaso.difonzo@unipd.it}{tommaso.difonzo@unipd.it} (TDF), and \href{mailto:rob.hyndman@monash.edu}{rob.hyndman@monash.edu} (RJH)}
\endgroup

\begin{abstract}
\noindent Forecast reconciliation is a post-forecasting process that involves transforming a set of incoherent forecasts into coherent forecasts which satisfy a given set of linear constraints for a multivariate time series. In this paper we extend the current state-of-the-art cross-sectional probabilistic forecast reconciliation approach to encompass a cross-temporal framework, where temporal constraints are also applied. Our proposed methodology employs both parametric Gaussian and non-parametric bootstrap approaches to draw samples from an incoherent cross-temporal distribution.
	To improve the estimation of the forecast error covariance matrix, we propose using multi-step residuals, especially in the time dimension where the usual one-step residuals fail.
	To address high-dimensionality issues, we present four alternatives for the covariance matrix, where we exploit the two-fold nature (cross-sectional and temporal) of the cross-temporal structure, and introduce the idea of overlapping residuals.
	We assess the effectiveness of the proposed cross-temporal reconciliation approaches through a simulation study that investigates their theoretical and empirical properties and two forecasting experiments, using the Australian GDP and the Australian Tourism Demand datasets. For both applications, the optimal cross-temporal reconciliation approaches significantly outperform the incoherent base forecasts in terms of the Continuous Ranked Probability Score and the Energy Score.
	 Overall, the results highlight the potential of the proposed methods to improve the accuracy of probabilistic forecasts and to address the challenge of integrating disparate scenarios while coherently taking into account short-term operational, medium-term tactical, and long-term strategic planning.
\end{abstract}

\begin{itemize}[nosep, align=left, leftmargin = !]
	\item[\textbf{Keywords}] \textit{Forecast reconciliation,~~Linearly constrained multiple time series,~~Cross-temporal, \\ Probabilistic forecasting,~~GDP,~~Tourism flows}
\end{itemize}
\vfill

\newpage
\spacingset{1.3}


\section{Introduction}

Forecast reconciliation is a post-forecasting process intended to improve the quality of forecasts for a system of linearly constrained multiple time series \citep{hyndman2011, panagiotelis2021}. There are many fields where forecast reconciliation is useful, such as when forecasting demand in supply chains with product categories \citep{punia2020, kourentzes2021}, electricity demand and power generation \citep{spiliotis2020, bentaieb2021}, GDP and its components \citep{athanasopoulos2020}, tourist flows across geographic regions and travel purpose \citep{kourentzes2019}, and more. Moreover, effective decision-making depends on the support of accurate and coherent forecasts, making the use of forecast reconciliation methods increasingly popular in recent years \citep{athanasopoulos2023}. 

Temporal reconciliation is another important aspect of forecast reconciliation that can help organizations to better align their forecasting efforts. This approach consists in reconciling forecasts that are generated at different time horizons, such monthly, quarterly or annual. For example, a retail company may need to reconcile monthly forecasts of sales with quarterly forecasts of revenue to ensure that they are aligned and consistent. 

Classical reconciliation approaches (bottom-up, top-down, middle-out, see \citealp{dunn1976}, \citealp{gross1990}, \citealp{athanasopoulos2009}, respectively) addressed the issue of incoherent forecasts in a cross-sectional hierarchy by forecasting only one level and using these to generate forecasts for the remaining series. All of these approaches ignore useful information available at other levels \citep{pennings2017}. Recently, hierarchical forecasting \citep{fliedner2001} has significantly evolved to include modern least squares-based reconciliation techniques in the cross-sectional framework \citep{hyndman2011, wickramasuriya2019, panagiotelis2021}, later extended to temporal hierarchies \citep{athanasopoulos2017, nystrup2020}.
Obtaining coherent forecasts across both the cross-sectional and temporal dimensions (known as \textit{cross-temporal coherence}) has been limited to sequential approaches that address each dimension separately \citep{kourentzes2019, yagli2019, punia2020, spiliotis2020}. Recently, \citet{difonzo2023} suggested a unified reconciliation step that takes into account both the cross-sectional and temporal dimensions, instead of dealing with them separately, utilizing the entire cross-temporal hierarchy.

However, these cross-temporal works focus on point forecasting, and do not consider distributional or probabilistic forecasts \citep{gneiting2014}. In the cross-sectional and temporal frameworks, there have been some developments towards probabilistic forecasting including  \cite{bentaieb2017}, \cite{panamtash2018}, \cite{jeon2019}, \cite{yang2020}, \cite{yagli2020},
\cite{bentaieb2021}, \cite{corani2021}, \cite{corani2022}, \cite{zambon2022} and \cite{wickramasuriya2021b}. \cite{panagiotelis2023} made a significant contribution by formalizing cross-sectional probabilistic reconciliation using the geometric framework for point forecast reconciliation of \cite{panagiotelis2021}. They show how a reconciled forecast can be constructed from an arbitrary base forecast when its density is available and when only a sample can be drawn. They also show that in the case of elliptical distributions, the correct predictive distribution can be recovered via linear reconciliation, regardless of the base forecast location and scale parameters, and derive conditions for this to hold in the special case of reconciliation via projection.

In this paper, we extend cross-sectional probabilistic reconciliation to the cross-temporal case, working on issues related to the two-fold nature of this framework. First, we revise and develop the notation proposed by \cite{difonzo2023} to generalize the work of \cite{panagiotelis2023}. This allows us to move from cross-temporal point reconciliation to a probabilistic setting through the generalization of definitions and theorems well-established in the cross-sectional framework. Second, we propose solutions to draw a sample from the base forecast distribution according to either a parametric approach that assumes Gaussianity or a non-parametric approach that bootstraps the base model residuals. Third, we propose some solutions to specific problems that arise when combining the cross-sectional and temporal dimensions. We propose using multi-step residuals to estimate the relationships between different forecast horizons when we deal with temporal levels, since one-step residuals are not suitable for this purpose. To solve high-dimensionality issues we introduce the idea of overlapping residuals and consider alternative forms for constructing the covariance matrix. Fourth, we propose new shrinkage procedures for reconciliation that aim to identify a feasible cross-temporal structure. The algorithms described in this paper are implemented in the \texttt{FoReco} package \citep{foreco2023} for R \citep{rcoreteam2022}. Furthermore, the online appendix contains complementary materials on methodological and practical issues, and supplementary tables and graphs related to the empirical applications.

The remainder of the paper is structured as follows. In \autoref{sec:not}, we provide a unified notation for the cross-sectional, temporal and cross-temporal point reconciliation. We generalize the cross-sectional definitions and theorems developed by \cite{panagiotelis2023} in \autoref{sec:prob}, and propose both a parametric Gaussian and a non-parametric bootstrap approach to draw a sample from the base forecast distribution. In \autoref{sec:shrtech}, we analyze the structure of the cross-temporal covariance matrix, proposing four alternative forms, and propose shrinkage approaches for reconciliation. In addition, we explore cross-temporal residuals (overlapping and multi-step) looking at their advantages and limitations. 
Two empirical applications using the Australian GDP and the Australian Tourism Demand datasets are considered in Sections \ref{sec:ausgdp} and \ref{sec:vn525}, respectively\footnote{A complete set of results is available at the GitHub repository \githuburl}. Finally, \autoref{sec:conclusion} presents conclusions and a future research agenda on this and other related topics.

\section{Notation and definitions}\label{sec:not}

Let $\yvet_t = [y_{1,t},\dots,y_{i,t},\dots,y_{n,t}]'$ be an $n$-variate linearly constrained time series observed at the most temporally disaggregated level, with a seasonality of period $m$ (e.g., $m = 12$ for monthly data, $m = 4$ for quarterly data, $m = 24$ for hourly data). Suppose that the constraints are expressed by linear equations such that \citep{difonzo2023}
\begin{equation}
	\label{eq:cs_con}
	\Cvet_{cs}\yvet_t = \Zerovet_{(n_a \times 1)}, \qquad t = 1, \;\dots, \;T,
\end{equation}
where $\Cvet_{cs}$ is the $(n_a \times n)$ zero constraints cross-sectional matrix, that can be seen as the coefficient matrix of a linear system with $n_a$ equations and $n$ variables\footnote{\cite{blogH2022} and \cite{giro2022} show that this ‘zero-contrained representation' is more general and computationally efficient.}.

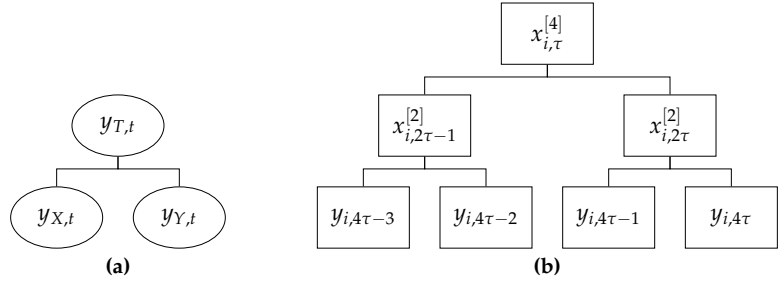
\begin{figure}[t]
\centering
		\resizebox{0.7\linewidth}{!}{\begin{tikzpicture}[baseline=(current  bounding  box.center),
			rel/.append style={shape=ellipse,
				draw=black,
			minimum width=1.5cm,
			minimum height=1cm},
			rel2/.append style={shape=rectangle,
				draw=black,
			minimum width=1.5cm,
			minimum height=1cm},
			connection/.style ={inner sep =0, outer sep =0}]
			
			\node[rel] at (0, 0) (X){$y_{X,t}$};
			\node[rel] at (2, 0) (Y){$y_{Y,t}$};
			\node[rel] at (1, 1.5) (Tt){$y_{T,t}$};
			
			\relation{0.2}{X}{Tt};
			\relation{0.2}{Y}{Tt};

			\node[rel2] at (5, 0) (AA){$y_{i,4\tau-3}$};
			\node[rel2] at (7, 0) (AB){$y_{i,4\tau-2}$};
			\node[rel2] at (9, 0) (BA){$y_{i,4\tau-1}$};
			\node[rel2] at (11, 0) (BB){$y_{i,4\tau}$};
			
			\node[rel2] at (6, 1.5) (A){$x_{i,2\tau-1}^{[2]}$};
			\node[rel2] at (10, 1.5) (B){$x_{i,2\tau}^{[2]}$};
			\node[rel2] at (8, 3) (T){$x_{i,\tau}^{[4]}$};
			
			\node at (1, -0.8) (){\small \textbf{(a)}};
			\node at (8, -0.8) (){\small \textbf{(b)}};
			
			\relation{0.2}{AA}{A};
			\relation{0.2}{AB}{A};
			\relation{0.2}{BA}{B};
			\relation{0.2}{BB}{B};
			
			\relation{0.2}{A}{T};
			\relation{0.2}{B}{T};
		\end{tikzpicture}}
		\caption{\textbf{(a)} A simple two-level cross-sectional hierarchy for 3 time series with $n_a = 1$ and $n_b = 2$. \textbf{(b)} A temporal hierarchy for a quarterly series ($m = 4$ and $\mathcal{K} = \{4,2,1\}$).}
		\label{fig:hierS}
		 \vspace*{-0.5\baselineskip}
\end{figure}

An example is a hierarchical time series where series at upper levels can be expressed by appropriately summing part or all of the series at the bottom level. \autoref{fig:hierS}(a) shows the two-level hierarchical structure for three linearly constrained time series such that $y_{T,t} = y_{X,t} + y_{Y,t}$, $\forall t = 1,...,T$. Now let $\yvet_t = \left[\uvet_t' \quad \bvet_t'\right]'$, where $\uvet_t = [y_{1,t}, \dots, y_{n_a,t}]'$ is the $n_a$-vector of upper levels time series and $\bvet_t = \left[y_{(n_a+1),t}\quad \dots \quad y_{n,t}\right]'$ is the $n_b$-vector of bottom level time series with $n = n_a+n_b$. The upper and lower level time series are connected by the cross-sectional aggregation matrix $\Avet_{cs}$ such that $\uvet_t = \Avet_{cs}\bvet_t$. Following \cite{giro2022}, we can always construct a zero-constraints cross-sectional matrix from the aggregation matrix, $\Cvet_{cs}=\left[\Ivet_{n_a} \quad {-\Avet_{cs}}\right]$, where $\Ivet_{n_a}$ is an identity matrix of dimension $n_a$. Finally, the cross-sectional structural matrix is given by $\Svet_{cs} = \left[\begin{array}{c}
	\Avet_{cs}\\
	\Ivet_{n_b}
\end{array}\right]$, providing the structural representation \citep{hyndman2011} $\yvet_t = \Svet_{cs} \bvet_t$. Considering the hierarchical example in \autoref{fig:hierS}(a), we have
$$
	\Avet_{cs} = \begin{bmatrix} 1 & 1 \end{bmatrix}, \quad \Cvet_{cs} = \begin{bmatrix}1 & -1 & -1 \end{bmatrix} \quad \text{and} \quad \Svet_{cs} = \begin{bmatrix}
		1 & 1 \\
		1 & 0 \\
		0 & 1
	\end{bmatrix}.
$$
In general there is no reason for $\uvet_t$ to be restricted to simple sums of $\bvet_t$; therefore $\Avet_{cs} \in \mathbb{R}^{n_a\times n_b}$ may contain any real values, and not only 0s and 1s.

Considering now the temporal framework, we denote as $\mathcal{K} = \{ k_p , k_{p-1}, \dots, k_2, k_1 \}$ the set of $p$ factors of $m$, in descending order, where $k_1= 1$ and $k_p= m$ \citep{athanasopoulos2017}. For example, for quarterly time series $m = 4$, $p = 3$, and $\mathcal{K} = \{4,2,1\}$.
Given a factor $k$ of $m$, and assuming that $T = N m$ (where $N$ is the length of the most temporally aggregated version of the series), we can construct a temporally aggregated version of the time series of a single variable $\{\yvet_{i,t}\}_{t = 1, \dots, T}$, through the non-overlapping sums of its $k$ successive values, which has a seasonal period equal to $M_k= \displaystyle\frac{m}{k}$: $x_{i,j}^{[k]} = \displaystyle\sum_{t=(j-1)k+1}^{jk} y_{i,t}$, where $j = 1,\dots, N_k$, $i = 1,\dots,n$, $N_k = \displaystyle\frac{T}{k}$ and $x_{i,j}^{[1]}=y_{i,t}$. Define $\tau$ as the observation index of the most aggregate level $k_p$. For a fixed temporal aggregation order $k \in \mathcal{K}$, we stack the observations in the column vector $\xvet_{i,\tau}^{[k]} = \left[x_{i,M_k(\tau-1)+1}^{[k]} \quad x_{i,M_k(\tau-1)+2}^{[k]} \quad \dots \quad x_{i,M_k\tau}^{[k]}\right]',$ and obtain the vector for all the temporal aggregation orders $\xvet_{i,\tau} = \left[x_{i,\tau}^{[k_p]} \quad \xvet_{i,\tau}^{[k_{p-1}]\prime} \quad \dots \quad \xvet_{i,\tau}^{[1]\prime} \right]'$, $\tau = 1,\dots,N$. The structural representation of the temporal hierarchy \citep{athanasopoulos2017} is then $\xvet_{i,\tau} = \Svet_{te}\xvet_{i,\tau}^{[1]}$, where $\Svet_{te} = \left[\begin{array}{c}
	\Avet_{te} \\
	\Ivet_{m}
\end{array}\right]$ is the $[(m+k^\ast) \times m]$ temporal structural matrix, $\Avet_{te} = \left[\Unovet_{k_p} \quad \Ivet_{\frac{m}{k_{p-1}}} \otimes \Unovet_{k_{p-1}} \quad \dots \quad \Ivet_{\frac{m}{k_{2}}}  \otimes \Unovet_{k_2} \right]'$
is the $(k^\ast \times m)$ temporal aggregation matrix with $k^\ast = \displaystyle\sum_{k \in \mathcal{K}\setminus\{k_1\}} M_k$, the number of upper time series of the temporal hierarchy, $\Unovet_{k_p}$ is a $(k_p \times 1)$ vector of all ones,
and $\otimes$ is the Kronecker product. For each series $x_{i,\tau}$, $i = 1,\dots,n$, we have also the zero-constrained representation
\begin{equation}
	\label{eq:te_con}
	\Cvet_{te}\xvet_{i,\tau} = \Zerovet_{[k^\ast \times (m+k^\ast)]}, \qquad \tau = 1,\dots,N, \qquad i = 1,\dots, n
\end{equation}
where $\Cvet_{te} = [\Ivet_{k^\ast} ~~ {-\Avet_{te}}]$ is the $[k^\ast \times (m+k^\ast)]$ zero constraints temporal matrix. \autoref{fig:hierS}(b) shows the hierarchical representation of a quarterly time series, for which $m = 4$, $\mathcal{K} = \{4,2,1\}$ and
$$
	\Avet_{te} = \begin{bmatrix}
		1 & 1 & 1 & 1 \\
		1 & 1 & 0 & 0 \\
		0 & 0 & 1 & 1 \\
	\end{bmatrix}, \quad \Cvet_{te} = \begin{bmatrix}
		1 & 0 & 0 & -1 & -1 & -1 & -1 \\
		0 & 1 & 0 & -1 & -1 & 0  & 0  \\
		0 & 0 & 1 & 0  & 0  & -1 & -1 \\
	\end{bmatrix} \quad \mathrm{and} \quad \Svet_{te} = \begin{bmatrix}
		\Avet_{te} \\
		\Ivet_4
	\end{bmatrix}.
$$
When we temporally aggregate each series, the cross-sectional constraints for the most temporally disaggregated series \eqref{eq:cs_con} hold for all the temporal aggregation orders such that $\Cvet_{cs}\xvet^{[k]}_j = \Zerovet_{(n_a \times 1)}$, for $k \in \mathcal{K}$ and $j = 1, \dots, N_k$, where $\xvet_j^{[k]} = \left[\uvet_j^{[k]\prime}\quad \bvet_j^{[k]\prime}\right]'$ with $\uvet^{[k]}_j = \left[ x^{[k]}_{1,\;j}\quad \dots\quad x^{[k]}_{n_a,\;j}\right]'$ is the $n_a$-vector of upper time series and $\bvet^{[k]}_j = \left[x^{[k]}_{(n_a+1),\;j}\quad\dots\quad x^{[k]}_{n,\;j}\right]'$ is the $n_b$-vector of bottom time series in the temporal hierarchy.

To include both cross-sectional and temporal constraints at the same time in a unified framework, we stack the series into a $[n \times (m+k^\ast)]$ matrix $\Xvet_\tau$, where we recall that $n$, $m$, and $k^*$ represent respectively the total number of time series, the seasonal period, and the number of upper time series of the temporal hierarchy. The rows and columns represent, respectively, the cross-sectional and the temporal dimension:
$$
	\Xvet_\tau = \begin{bmatrix}
		\xvet_{1,\tau}' \\[-0.1cm]
		\vdots          \\[-0.2cm]
		\xvet_{n,\tau}'
	\end{bmatrix} = \begin{bmatrix}
	\Uvet_{\tau}^{[k_p]} & \Uvet_{\tau}^{[k_p-1]} & \dots & \Uvet_{\tau}^{[1]}\\[0.25cm]
	\Bvet_{\tau}^{[k_p]} & \Bvet_{\tau}^{[k_p-1]} & \dots & \Bvet_{\tau}^{[1]}
	\end{bmatrix},
$$
where for any fixed $k$,
$\Uvet_{\tau}^{[k]}$ is the ($n_a\times N_k$) matrix grouping the upper time series, $\Bvet_{\tau}^{[k]}$ is the ($n_b\times N_k$) matrix grouping the bottom time series. For example, for the cross-temporal structure of \autoref{fig:hierS}, we have
$$
\Xvet_\tau = \left[\begin{array}{c|cc|cccc}
x^{[4]}_{T,\tau} & x^{[2]}_{T,2\tau-1} & x^{[2]}_{T,2\tau} & y_{T,4\tau-3} & y_{T,4\tau-2} & y_{T,4\tau-1} & y_{T,4\tau}\\
\midrule
x^{[4]}_{X,\tau} & x^{[2]}_{X,2\tau-1} & x^{[2]}_{X,2\tau} & y_{X,4\tau-3} & y_{X,4\tau-2} & y_{X,4\tau-1} & y_{X,4\tau}\\
x^{[4]}_{Y,\tau} & x^{[2]}_{Y,2\tau-1} & x^{[2]}_{Y,2\tau} & y_{Y,4\tau-3} & y_{Y,4\tau-2} & y_{Y,4\tau-1} & y_{Y,4\tau}\\
\end{array}\right].
$$
Further, $\Cvet_{cs}\Xvet_\tau = \Zerovet_{\left[n_a \times (m+k^\ast)\right]}$ and $\Cvet_{te}\Xvet_\tau' = \Zerovet_{(k^\ast \times n)} $. We can consider the cross-temporal framework as a generalization of the cross-sectional and temporal frameworks, that simultaneously takes into account both types of constraints. The cross-sectional reconciliation approach proposed by \cite{hyndman2011} can be obtained by assuming $m = 1$, while the temporal one \citep{athanasopoulos2017} is obtained when $n = 1$ (with $n_a = 0$ and $n_b = 1$).

\begin{figure}[!t]
  \centering
\begin{tikzpicture}
\tikzset{square matrix/.style={
    matrix of nodes,
    column sep=-\pgflinewidth, row sep=-\pgflinewidth,
    nodes={draw,
      minimum height=#1,
      anchor=center,
      text width=#1,
      align=center,
      inner sep=0pt,
      font=\tiny
    },
  },
  square matrix/.default=0.5cm
}

\matrix[square matrix, label={[font=\normalsize,text=black]left:{$\Cvet_{ct} = $}}](St)
{
|[fill=white]| & |[fill=white]| & |[fill=white]| & |[fill=black]| & |[fill=white]| & |[fill=white]| & |[fill=white]| & |[fill=white]| & |[fill=white]| &  |[fill=white]| & |[fill=red]| &  |[fill=white]| &  |[fill=white]| &  |[fill=white]| &  |[fill=white]| &  |[fill=white]| &  |[fill=white]| & |[fill=red]| &  |[fill=white]| &  |[fill=white]| &  |[fill=white]| \\
|[fill=white]| & |[fill=white]| & |[fill=white]| & |[fill=white]| & |[fill=black]| & |[fill=white]| & |[fill=white]| & |[fill=white]| & |[fill=white]| &  |[fill=white]| &  |[fill=white]| & |[fill=red]| &  |[fill=white]| &  |[fill=white]| &  |[fill=white]| &  |[fill=white]| &  |[fill=white]| &  |[fill=white]| & |[fill=red]| &  |[fill=white]| &  |[fill=white]| \\
|[fill=white]| & |[fill=white]| & |[fill=white]| & |[fill=white]| & |[fill=white]| & |[fill=black]| & |[fill=white]| & |[fill=white]| & |[fill=white]| &  |[fill=white]| &  |[fill=white]| &  |[fill=white]| & |[fill=red]| &  |[fill=white]| &  |[fill=white]| &  |[fill=white]| &  |[fill=white]| &  |[fill=white]| &  |[fill=white]| & |[fill=red]| &  |[fill=white]| \\
|[fill=white]| & |[fill=white]| & |[fill=white]| & |[fill=white]| & |[fill=white]| & |[fill=white]| & |[fill=black]| & |[fill=white]| & |[fill=white]| &  |[fill=white]| &  |[fill=white]| &  |[fill=white]| &  |[fill=white]| & |[fill=red]| &  |[fill=white]| &  |[fill=white]| &  |[fill=white]| &  |[fill=white]| &  |[fill=white]| &  |[fill=white]| & |[fill=red]| \\
|[fill=black]| & |[fill=white]| & |[fill=white]| & |[fill=red]| & |[fill=red]| & |[fill=red]| & |[fill=red]| & |[fill=white]| & |[fill=white]| &  |[fill=white]| &  |[fill=white]| &  |[fill=white]| &  |[fill=white]| &  |[fill=white]| &  |[fill=white]| &  |[fill=white]| &  |[fill=white]| &  |[fill=white]| &  |[fill=white]| &  |[fill=white]| &  |[fill=white]| \\
|[fill=white]| & |[fill=black]| & |[fill=white]| & |[fill=red]| & |[fill=red]| & |[fill=white]| & |[fill=white]| & |[fill=white]| & |[fill=white]| &  |[fill=white]| &  |[fill=white]| &  |[fill=white]| &  |[fill=white]| &  |[fill=white]| &  |[fill=white]| &  |[fill=white]| &  |[fill=white]| &  |[fill=white]| &  |[fill=white]| &  |[fill=white]| &  |[fill=white]| \\
|[fill=white]| & |[fill=white]| & |[fill=black]| & |[fill=white]| & |[fill=white]| & |[fill=red]| & |[fill=red]| & |[fill=white]| & |[fill=white]| &  |[fill=white]| &  |[fill=white]| &  |[fill=white]| &  |[fill=white]| &  |[fill=white]| &  |[fill=white]| &  |[fill=white]| &  |[fill=white]| &  |[fill=white]| &  |[fill=white]| &  |[fill=white]| &  |[fill=white]| \\
|[fill=white]| & |[fill=white]| & |[fill=white]| & |[fill=white]| & |[fill=white]| & |[fill=white]| & |[fill=white]| & |[fill=black]| & |[fill=white]| &  |[fill=white]| & |[fill=red]| & |[fill=red]| & |[fill=red]| & |[fill=red]| &  |[fill=white]| &  |[fill=white]| &  |[fill=white]| &  |[fill=white]| &  |[fill=white]| &  |[fill=white]| &  |[fill=white]| \\
|[fill=white]| & |[fill=white]| & |[fill=white]| & |[fill=white]| & |[fill=white]| & |[fill=white]| & |[fill=white]| & |[fill=white]| & |[fill=black]| &  |[fill=white]| & |[fill=red]| & |[fill=red]| &  |[fill=white]| &  |[fill=white]| &  |[fill=white]| &  |[fill=white]| &  |[fill=white]| &  |[fill=white]| &  |[fill=white]| &  |[fill=white]| &  |[fill=white]| \\
|[fill=white]| & |[fill=white]| & |[fill=white]| & |[fill=white]| & |[fill=white]| & |[fill=white]| & |[fill=white]| & |[fill=white]| & |[fill=white]| &  |[fill=black]| &  |[fill=white]| &  |[fill=white]| & |[fill=red]| & |[fill=red]| &  |[fill=white]| &  |[fill=white]| &  |[fill=white]| &  |[fill=white]| &  |[fill=white]| &  |[fill=white]| &  |[fill=white]| \\
|[fill=white]| & |[fill=white]| & |[fill=white]| & |[fill=white]| & |[fill=white]| & |[fill=white]| & |[fill=white]| & |[fill=white]| & |[fill=white]| &  |[fill=white]| &  |[fill=white]| &  |[fill=white]| &  |[fill=white]| &  |[fill=white]| &  |[fill=black]| &  |[fill=white]| &  |[fill=white]| & |[fill=red]| & |[fill=red]| & |[fill=red]| & |[fill=red]| \\
|[fill=white]| & |[fill=white]| & |[fill=white]| & |[fill=white]| & |[fill=white]| & |[fill=white]| & |[fill=white]| & |[fill=white]| & |[fill=white]| &  |[fill=white]| &  |[fill=white]| &  |[fill=white]| &  |[fill=white]| &  |[fill=white]| &  |[fill=white]| &  |[fill=black]| &  |[fill=white]| & |[fill=red]| & |[fill=red]| &  |[fill=white]| &  |[fill=white]| \\
|[fill=white]| & |[fill=white]| & |[fill=white]| & |[fill=white]| & |[fill=white]| & |[fill=white]| & |[fill=white]| & |[fill=white]| & |[fill=white]| &  |[fill=white]| &  |[fill=white]| &  |[fill=white]| &  |[fill=white]| &  |[fill=white]| &  |[fill=white]| &  |[fill=white]| &  |[fill=black]| &  |[fill=white]| &  |[fill=white]| & |[fill=red]| & |[fill=red]| \\
};

\node[rectangle,above delimiter=\{, yshift = -0.25cm, label={[font=\footnotesize,text=black, yshift = 0.25cm]above:{$\xvet_{\tau}-$ordered column}}] at (St.north) {\tikz{\path (St-1-1.west) rectangle (St-1-21.east);}};




\end{tikzpicture}
\caption{Visual representation of the zero constraints cross-temporal matrix $\Cvet_{ct}$ defined in (\ref{eq:Cct}) for a system of 3 linearly constrained quarterly time series (see \autoref{fig:hierS}). The four upper rows describe the cross-sectional constraints (one for each quarter), the remaining rows the temporal constraints (one for each of the three time series). Colours legend: 0s in white, 1s in black, -1s in red.}
  \label{fig:Cmatvis}
\end{figure}

\cite{difonzo2023} show that the cross-temporal constraints working on the complete set of observations corresponding to time period $\tau$ can be expressed in a zero-constrained representation through the full rank $\left[(n_am+nk^\ast)\times n(m+k^\ast)\right]$ zero constraints cross-temporal matrix $\Cvet_{ct}$ such that
\begin{equation}
	\label{eq:Cct}
	\Cvet_{ct} = \begin{bmatrix}
		\Cvet_\ast \\
		\Ivet_n \otimes \Cvet_{te}
	\end{bmatrix} \quad \Longrightarrow \quad
	\Cvet_{ct} \xvet_{\tau} = \Zerovet_{[(n_am+nk^\ast)\times1]} \quad \mathrm{for} \quad \tau = 1,\dots,N,
\end{equation}
where $\xvet_{\tau} = \mathrm{vec}(\Xvet_{\tau}') = [\xvet_{1, \tau}',~ 	\dots, ~ \xvet_{n, \tau}']'$, $\Cvet_\ast = [\Zerovet_{(n_a m\times nk^\ast)} ~~ \Ivet_m \otimes \Cvet_{cs}]\Pvet'$, $\Pvet$ is the commutation matrix \citep[][p. 54]{magnus2019} such that $\Pvet \mathrm{vec}(\Xvet_{\tau}) = \mathrm{vec}(\Xvet_{\tau}')$, and the operator $\mathrm{vec}(\cdot)$ converts a matrix into a vector. \autoref{fig:Cmatvis} shows a visual example for the zero constraints cross-temporal matrix.
A structural representation can be considered as well: $\xvet_\tau = \Svet_{ct}\bvet^{[1]}_\tau = s(\bvet_{\tau}^{[1]})$, where \vspace{-0.1cm}
\begin{equation}
	\label{eq:Sct}
	\Svet_{ct} = \Svet_{cs} \otimes \Svet_{te}
	\vspace{-0.1cm}
\end{equation}
is the $\left[n(k^\ast+m)\times n_b m\right]$ cross-temporal summation matrix, $s: \mathbb{R}^{n_b m} \rightarrow \mathbb{R}^{n(m+k^\ast)}$ is the operator describing the pre-multiplication by $\Svet_{ct}$, and $\bvet^{[1]}_\tau = \mathrm{vec}(\Bvet^{[1]\prime}_{\tau})$. In \autoref{fig:Stilde}, we have represented $\Svet_{ct}$ for a system of 3 linearly constrained quarterly time series (see \autoref{fig:hierS}).
In agreement with \cite{panagiotelis2021}, $\xvet_{\tau}$ lies in an $(n_b m)$-dimensional subspace $\mathfrak{s}_{ct}$ of $\mathbb{R}^{n(k^\ast+m)}$, which we refer to as the \textit{cross-temporal coherent subspace}, spanned by the columns of $\Svet_{ct}$.

\begin{figure}[!t]
  \centering
\begin{tikzpicture}
\tikzset{square matrix/.style={
    matrix of nodes,
    column sep=-\pgflinewidth, row sep=-\pgflinewidth,
    nodes={draw,
      minimum height=#1,
      anchor=center,
      text width=#1,
      align=center,
      inner sep=0pt,
      font=\scriptsize
    },
  },
  square matrix/.default=0.5cm
}

\matrix[square matrix](S)
{
|[fill=black]| & |[fill=black]| & |[draw=none]|$T$\\
|[fill=black]| & |[fill=white]| & |[draw=none]|$X$\\
|[fill=white]| & |[fill=black]| & |[draw=none]|$Y$\\
|[draw=none]|$X$ & |[draw=none]|$Y$ & |[draw=none]|$i$\\
};

\matrix[square matrix, right=of S](K)
{
|[fill=black]| & |[fill=black]| & |[fill=black]| & |[fill=black]| & |[draw=none]|$\quad x^{[4]}_{i,\tau}$\\
|[fill=black]| & |[fill=black]| & |[fill=white]| & |[fill=white]| & |[draw=none]|$\quad x^{[2]}_{i,2\tau-1}$\\
|[fill=white]| & |[fill=white]| & |[fill=black]| & |[fill=black]| & |[draw=none]|$\quad x^{[2]}_{i,2\tau}$\\
|[fill=black]| & |[fill=white]| & |[fill=white]| & |[fill=white]| & |[draw=none]|$\quad y^{\phantom{[1]}}_{i,4\tau-3}$\\
|[fill=white]| & |[fill=black]| & |[fill=white]| & |[fill=white]| & |[draw=none]|$\quad y^{\phantom{[1]}}_{i,4\tau-2}$\\
|[fill=white]| & |[fill=white]| & |[fill=black]| & |[fill=white]| & |[draw=none]|$\quad y^{\phantom{[1]}}_{i,4\tau-1}$\\
|[fill=white]| & |[fill=white]| & |[fill=white]| & |[fill=black]| & |[draw=none]|$\quad y^{\phantom{[1]}}_{i,4\tau}$\\
|[draw=none]| & |[draw=none]| & |[draw=none]| & |[draw=none]| & & \\
};

\node[right of=S,font=\bfseries,xshift = 0.35cm, font = \large] (kro) {$\bigotimes$};

\node[fit=(K-8-2)(K-8-3), font=\scriptsize, yshift = -0.25cm, text width=6cm]{$\xvet_{i,\tau}^{[1]}$}; 

\node[right of=S,font=\bfseries,xshift = 4.85cm, font = \huge] (ugu) {$\mathbf{=}$};

\matrix[square matrix, right=of K,xshift = 0.5cm](St)
{
|[fill=black]| & |[fill=black]| & |[fill=black]| & |[fill=black]| & |[fill=black]| & |[fill=black]| & |[fill=black]| & |[fill=black]| & |[draw=none]|$\quad x^{[4]}_{T,\tau}$\\
|[fill=black]| & |[fill=black]| & |[fill=white]| & |[fill=white]| & |[fill=black]| & |[fill=black]| & |[fill=white]| & |[fill=white]| & |[draw=none]|$\quad x^{[2]}_{T,2\tau-1}$\\
|[fill=white]| & |[fill=white]| & |[fill=black]| & |[fill=black]| & |[fill=white]| & |[fill=white]| & |[fill=black]| & |[fill=black]| & |[draw=none]|$\quad x^{[2]}_{T,2\tau}$\\
|[fill=black]| & |[fill=white]| & |[fill=white]| & |[fill=white]| & |[fill=black]| & |[fill=white]| & |[fill=white]| & |[fill=white]| & |[draw=none]|$\quad y^{\phantom{[1]}}_{T,4\tau-3}$\\
|[fill=white]| & |[fill=black]| & |[fill=white]| & |[fill=white]| & |[fill=white]| & |[fill=black]| & |[fill=white]| & |[fill=white]| & |[draw=none]|$\quad y^{\phantom{[1]}}_{T,4\tau-2}$\\
|[fill=white]| & |[fill=white]| & |[fill=black]| & |[fill=white]| & |[fill=white]| & |[fill=white]| & |[fill=black]| & |[fill=white]| & |[draw=none]|$\quad y^{\phantom{[1]}}_{T,4\tau-1}$\\
|[fill=white]| & |[fill=white]| & |[fill=white]| & |[fill=black]| & |[fill=white]| & |[fill=white]| & |[fill=white]| & |[fill=black]| & |[draw=none]|$\quad y^{\phantom{[1]}}_{T,4\tau}$\\
|[fill=black]| & |[fill=black]| & |[fill=black]| & |[fill=black]| & |[fill=white]| & |[fill=white]| & |[fill=white]| & |[fill=white]| & |[draw=none]|$\quad x^{[4]}_{X,\tau}$\\
|[fill=black]| & |[fill=black]| & |[fill=white]| & |[fill=white]| & |[fill=white]| & |[fill=white]| & |[fill=white]| & |[fill=white]| & |[draw=none]|$\quad x^{[2]}_{X,2\tau-1}$\\
|[fill=white]| & |[fill=white]| & |[fill=black]| & |[fill=black]| & |[fill=white]| & |[fill=white]| & |[fill=white]| & |[fill=white]| & |[draw=none]|$\quad x^{[2]}_{X,2\tau}$\\
|[fill=black]| & |[fill=white]| & |[fill=white]| & |[fill=white]| & |[fill=white]| & |[fill=white]| & |[fill=white]| & |[fill=white]| & |[draw=none]|$\quad y^{\phantom{[1]}}_{X,4\tau-3}$\\
|[fill=white]| & |[fill=black]| & |[fill=white]| & |[fill=white]| & |[fill=white]| & |[fill=white]| & |[fill=white]| & |[fill=white]| & |[draw=none]|$\quad y^{\phantom{[1]}}_{X,4\tau-2}$\\
|[fill=white]| & |[fill=white]| & |[fill=black]| & |[fill=white]| & |[fill=white]| & |[fill=white]| & |[fill=white]| & |[fill=white]| & |[draw=none]|$\quad y^{\phantom{[1]}}_{X,4\tau-1}$\\
|[fill=white]| & |[fill=white]| & |[fill=white]| & |[fill=black]| & |[fill=white]| & |[fill=white]| & |[fill=white]| & |[fill=white]| & |[draw=none]|$\quad y^{\phantom{[1]}}_{X,4\tau}$\\
|[fill=white]| & |[fill=white]| & |[fill=white]| & |[fill=white]| & |[fill=black]| & |[fill=black]| & |[fill=black]| & |[fill=black]| & |[draw=none]|$\quad x^{[4]}_{Y,\tau}$\\
|[fill=white]| & |[fill=white]| & |[fill=white]| & |[fill=white]| & |[fill=black]| & |[fill=black]| & |[fill=white]| & |[fill=white]| & |[draw=none]|$\quad x^{[2]}_{Y,2\tau-1}$\\
|[fill=white]| & |[fill=white]| & |[fill=white]| & |[fill=white]| & |[fill=white]| & |[fill=white]| & |[fill=black]| & |[fill=black]| & |[draw=none]|$\quad x^{[2]}_{Y,2\tau}$\\
|[fill=white]| & |[fill=white]| & |[fill=white]| & |[fill=white]| & |[fill=black]| & |[fill=white]| & |[fill=white]| & |[fill=white]| & |[draw=none]|$\quad y^{\phantom{[1]}}_{Y,4\tau-3}$\\
|[fill=white]| & |[fill=white]| & |[fill=white]| & |[fill=white]| & |[fill=white]| & |[fill=black]| & |[fill=white]| & |[fill=white]| & |[draw=none]|$\quad y^{\phantom{[1]}}_{Y,4\tau-2}$\\
|[fill=white]| & |[fill=white]| & |[fill=white]| & |[fill=white]| & |[fill=white]| & |[fill=white]| & |[fill=black]| & |[fill=white]| & |[draw=none]|$\quad y^{\phantom{[1]}}_{Y,4\tau-1}$\\
|[fill=white]| & |[fill=white]| & |[fill=white]| & |[fill=white]| & |[fill=white]| & |[fill=white]| & |[fill=white]| & |[fill=black]| & |[draw=none]|$\quad y^{\phantom{[1]}}_{Y,4\tau}$\\
|[draw=none]| & |[draw=none]| & |[draw=none]| & |[draw=none]| & |[draw=none]| & |[draw=none]| & |[draw=none]| & |[draw=none]| & \\
};

\node[fit=(St-22-1)(St-22-8), font=\scriptsize, yshift = -0.25cm, text width=6cm]{$\bvet_{\tau}^{[1]}$};

\node[fit=(St-1-9)(St-21-9), draw = black, xshift = 0.6cm, minimum width = 1.3cm, label={[font=\scriptsize,text=black]above:{$\xvet_{\tau}$}}]{};

\node[fit=(K-1-5)(K-7-5), draw = black, xshift = 0.5cm, minimum width = 1.2cm, label={[font=\scriptsize,text=black]above:{$\xvet_{i,\tau}$}}]{};

\node[fit=(K-1-1.north west)(K-2-4.south east), label={[font=\normalsize,text=black]above:{$\Svet_{te}$}}]{};

\node[fit=(St-1-1.north west)(St-2-8.south east), label={[font=\normalsize,text=black]above:{$\Svet_{ct}$}}]{};

\node[fit=(S-1-1.north west)(S-2-2.south east), label={[font=\normalsize,text=black]above:{$\Svet_{cs}$}}]{};

\draw[ultra thick,decorate,decoration={calligraphic brace,amplitude=7.5pt}] ($(St-1-9.north)+(1.35,-0.1)$) -- node[right, rotate = -90, anchor = center, align = center, text width = 2.5cm, yshift = 0.75cm, font = \footnotesize] {Temporal components of $T$} ($(St-7-9.south)+(1.35,0.05)$);

\draw[ultra thick,decorate,decoration={calligraphic brace,amplitude=7.5pt}] ($(St-8-9.north)+(1.35,-0.1)$) -- node[right, rotate = -90, anchor = center, align = center, text width = 2.5cm, yshift = 0.75cm, font = \footnotesize] {Temporal components of $X$} ($(St-14-9.south)+(1.35,0.05)$);

\draw[ultra thick,decorate,decoration={calligraphic brace,amplitude=7.5pt}] ($(St-15-9.north)+(1.35,-0.1)$) -- node[right, rotate = -90, anchor = center, align = center, text width = 2.5cm, yshift = 0.75cm, font = \footnotesize] {Temporal components of $Y$} ($(St-21-9.south)+(1.35,0.05)$);
\end{tikzpicture}
\vspace{-0.25cm}
  \caption{Visual representation of the cross-temporal summation matrix $\Svet_{ct} = \Svet_{cs} \otimes \Svet_{te}$ defined in (\ref{eq:Sct}) for a system of 3 linearly constrained quarterly time series (see \autoref{fig:hierS}). Colours legend: 0s in white, 1s in black.}
  \label{fig:Stilde}
\end{figure}

\subsection{Optimal point forecast reconciliation}\label{ssec:oct}
For $h = 1, \dots, H$, let
$$
	\widehat{\Xvet}_{h} = \begin{bmatrix}
		\widehat{\xvet}_{1,h}' \\[-0.1cm]
		\vdots                \\[-0.2cm]
		\widehat{\xvet}_{n,h}'
	\end{bmatrix} =\begin{bmatrix}
		\widehat{\Uvet}_{h}^{[m]} & \dots & \widehat{\Uvet}_{h}^{[k]} & \dots & \widehat{\Uvet}_{h}^{[1]} \\[0.25cm]
		\widehat{\Bvet}_{h}^{[m]} & \dots & \widehat{\Bvet}_{h}^{[k]} & \dots & \widehat{\Bvet}_{h}^{[1]} \\\end{bmatrix},
$$
be the $h$-step ahead base forecasts, where $\widehat{\Uvet}_{h}^{[k]}$ is the ($n_a\times M_k$) matrix grouping the upper time series, $\widehat{\Bvet}_{h}^{[k]}$ is the ($n_b\times M_k$) matrix grouping the bottom time series for a given temporal aggregation order $k$ and $H$ is the forecast horizon for the most temporally aggregated time series. Based on the example in \autoref{fig:hierS} for $H = 1$, we have that
$$
\widehat{\Xvet}_1 = \left[\begin{array}{c|cc|cccc}
\widehat{x}^{[4]}_{T,1} & \widehat{x}^{[2]}_{T,1} & \widehat{x}^{[2]}_{T,2} & \widehat{y}_{T,1} & \widehat{y}_{T,2} & \widehat{y}_{T,3} & \widehat{y}_{T,4}\\
\midrule
\widehat{x}^{[4]}_{X,1} & \widehat{x}^{[2]}_{X,1} & \widehat{x}^{[2]}_{X,2} & \widehat{y}_{X,1} & \widehat{y}_{X,2} & \widehat{y}_{X,3} & \widehat{y}_{X,4}\\
\widehat{x}^{[4]}_{Y,1} & \widehat{x}^{[2]}_{Y,1} & \widehat{x}^{[2]}_{Y,2} & \widehat{y}_{Y,1} & \widehat{y}_{Y,2} & \widehat{y}_{Y,3} & \widehat{y}_{Y,4}\\
\end{array}\right].
$$
The matrix $\widehat{\Xvet}_{h}$, 
contains incoherent forecasts, such as $\Cvet_{ct} \widehat{\xvet}_{h} \neq \Zerovet_{[(n_am+nk^\ast)\times1]}$
with $h = 1, \dots, H$ and $\widehat{\xvet}_{h} = \mathrm{vec}(\widehat{\Xvet}_{h}')$. In this framework, the definition for forecast reconciliation in the cross-sectional framework given by \cite{panagiotelis2021} can be generalized as follows.

\begin{definition}
	Forecast reconciliation adjusts the base forecast $\widehat{\xvet}_{h}$ by finding a mapping $\psi: \mathbb{R}^{n(m+k^\ast)} \rightarrow \mathfrak{s}$ such that $\widetilde{\xvet}_{h} = \psi\left(\widehat{\xvet}_{h}\right)$, where $\widetilde{\xvet}_{h} \in \mathfrak{s}$ is the vector of the reconciled forecasts.
\end{definition}

For a given forecast horizon $h = 1,\dots, H$, the mapping $\psi$ may be defined as a projection onto $\mathfrak{s}$ given by \citep{panagiotelis2021, difonzo2023}
\begin{equation}
	\label{eq:Mvet}
	\widetilde{\xvet}_{h} = \psi\left(\widehat{\xvet}_h\right) = \Mvet \widehat{\xvet}_h,
\end{equation}
where $\Mvet = \Ivet_{n(m+ k^\ast)} - \Omegavet_{ct}\Cvet'_{ct}\left(\Cvet_{ct}\Omegavet_{ct}\Cvet'_{ct}\right)^{-1}\Cvet_{ct}$, for a positive definite matrix $\Omegavet_{ct}$, and $\widetilde{\xvet}_{h} = \mathrm{vec}(\widetilde{\Xvet}'_{h})$.
\citet{wickramasuriya2019} showed that the minimum variance linear unbiased reconciled forecasts, satisfying the unbiasedness condition $\text{E}(\widetilde{\xvet}_h -\xvet_h) = 0$, has solution \eqref{eq:Mvet} when $\Omegavet_{ct} = \text{Var}(\widehat{\xvet}_h -\xvet_h)$.

Alternatively, the cross-temporal reconciled forecasts $\widetilde{\Xvet}_{h}$ may be found according to the structural approach proposed by \cite{hyndman2011} for the cross-sectional framework, yielding $\widetilde{\xvet}_h = \Svet_{ct}\Gvet \widehat{\xvet}_h$ for some matrix $\Gvet$. \citet{wickramasuriya2019} showed that this leads to a solution equivalent to the cross-temporally reconciled forecasts in \eqref{eq:Mvet}, given by
\begin{equation}\label{eq:SGy}
	\widetilde{\xvet}_{h} = \psi\left(\widehat{\xvet}_h \right) = \left(s \circ g \right)\left(\widehat{\xvet}_h\right)=\Svet_{ct}\Gvet \widehat{\xvet}_{h},
\end{equation}
where $\Gvet = (\Svet_{ct}' \Omegavet_{ct}^{-1}\Svet_{ct})^{-1} \Svet_{ct}'\Omegavet_{ct}^{-1}$,~ and $\Mvet = \Svet_{ct} \Gvet$. In this case, $\psi$ is the composition of two transformations, say $s \circ g$, where $g: \mathbb{R}^{n(m+k^\ast)} \rightarrow \mathbb{R}^{n_b m}$ is a continuous function. In the online appendix A we report some cross-sectional, temporal and cross-temporal approximations for the covariance matrix to be used in \eqref{eq:Mvet} and \eqref{eq:SGy}.

\subsection{Cross-temporal bottom-up forecast reconciliation}\label{ssec:ctbu}

The classic bottom-up approach \citep{dunn1976, dangerfield1992} simply consists in summing-up the base forecasts of the most disaggregated level in the hierarchy to obtain forecasts of the upper-level series. To reduce the computational cost involved in optimal cross-temporal reconciliation, we may be interested in applying a reconciliation along only one dimension (cross-sectional or temporal) and reconstructing the cross-temporal structure using a partly bottom-up approach \citep{difonzo2022b, difonzo2023a, sanguri2022}.

\begin{figure}[!t]
	\centering

	\begin{subfigure}[b]{0.49\textwidth}
		\centering
		\caption{$\widetilde{\Xvet}$ with ct$(rec_{cs}, bu_{te})$}
		\resizebox{\linewidth}{!}{
			\begin{tikzpicture}[>=latex, line width=1pt,
				Matrix/.style={
				matrix of nodes,
				font=\large,
				align=center,
				text width = 1.5cm,
				text height = 0.65cm,
				column sep=2pt,
				row sep=7pt,
				nodes in empty cells,
				left delimiter={[},
						right delimiter={]},
						ampersand replacement=\&
						}]
				\matrix[Matrix] (Mt){ 
				$\widetilde{\Uvet}_{te(bu)}^{[m]}$ \& \dots \& $\widetilde{\Uvet}_{te(bu)}^{[k_2]}$ \& $\widetilde{\Uvet}_{cs(rec)}^{[1]}$ \\
				$\widetilde{\Bvet}^{[m]}_{te(bu)}$ \& \dots \& $\widetilde{\Bvet}^{[k_2]}_{te(bu)}$ \& $\widetilde{\Bvet}^{[1]}_{cs(rec)}$ \\
				};
				\draw[<-, opacity = 0] (Mt.north east)++(0.4,0) coordinate (temp) -- (temp |- Mt.south) node [midway,label={[label distance=0.1cm,rotate=-90, xshift = 1.5mm, font=\footnotesize]Cross-sectional}]{};
				\draw[<-] (Mt.south west)++(0,-0.15) coordinate (temp) -- (temp -| Mt.east) node [midway,label={[label distance=0cm,xshift = 1.5mm, font=\footnotesize]below:Temporal}]{};
				\node[opacity=0.2,
					rounded corners,
					inner sep=0pt, fill = blue, fit=(Mt-1-4)(Mt-2-4)](Bt){};
				\node[opacity=0.2,
					rounded corners,
					inner sep=0pt, fill = red, fit=(Mt-1-1)(Mt-2-3)](At){};
			\end{tikzpicture}}
		\label{fig:tebu}
	\end{subfigure}
	\hfill
	\begin{subfigure}[b]{0.49\textwidth}
		\centering
		\caption{$\widetilde{\Xvet}$ with ct$(rec_{te}, bu_{cs})$}
		\resizebox{\linewidth}{!}{
			\begin{tikzpicture}[>=latex, line width=1pt,
				Matrix/.style={
				matrix of nodes,
				font=\large,
				align=center,
				text width = 1.5cm,
				text height = 0.65cm,
				column sep=2pt,
				row sep=7pt,
				nodes in empty cells,
				left delimiter={[},
						right delimiter={]},
						ampersand replacement=\&
						}]
				\matrix[Matrix] (Mcs){ 
				$\widetilde{\Uvet}_{cs(bu)}^{[m]}$ \& \dots \& $\widetilde{\Uvet}_{cs(bu)}^{[k_2]}$ \& $\widetilde{\Uvet}_{cs(bu)}^{[1]}$ \\
				$\widetilde{\Bvet}^{[m]}_{te(rec)}$ \& \dots \& $\widetilde{\Bvet}^{[k_2]}_{te(rec)}$ \& $\widetilde{\Bvet}^{[1]}_{te(rec)}$ \\
				};
				\draw[<-] (Mcs.north east)++(0.4,0) coordinate (temp) -- (temp |- Mcs.south) node [midway,label={[label distance=0.1cm,rotate=-90, xshift = 1.5mm, font=\footnotesize]Cross-sectional}]{};
				\draw[<-, opacity = 0] (Mcs.south west)++(0,-0.15) coordinate (temp) -- (temp -| Mcs.east) node [midway,label={[label distance=0cm,xshift = 1.5mm, font=\footnotesize]below:Temporal}]{};
				\node[opacity=0.2,
					rounded corners,
					inner sep=0pt, fill = blue, fit=(Mcs-2-1)(Mcs-2-4)](Bcs){};
				\node[opacity=0.2,
					rounded corners,
					inner sep=0pt, fill = red, fit=(Mcs-1-1)(Mcs-1-4)](Acs){};
			\end{tikzpicture}}
		\label{fig:csbu}
	\end{subfigure}
	\vspace{-0.25cm}
	\caption{A visual representation of partly bottom up starting from \eqref{fig:tebu} cross-sectionally reconciled forecasts for the temporal order 1 ($\widetilde{\Uvet}^{[1]}$ and $\widetilde{\Bvet}^{[1]}$) followed by temporal bottom-up, and \eqref{fig:csbu} temporally reconciled forecasts of the cross-sectional bottom time series $(\widetilde{\Bvet}^{[k]}, \, k\in \mathcal{K})$ followed by cross-sectional bottom-up. The \colorbox{mybluehl}{blue} background indicates generating reconciled forecasts along one dimension, while the \colorbox{pink}{pink} background indicates the forecasts obtained using bottom-up along the other.
		}
	\label{fig:bigBU}
	\vspace*{-0.5\baselineskip}
\end{figure}
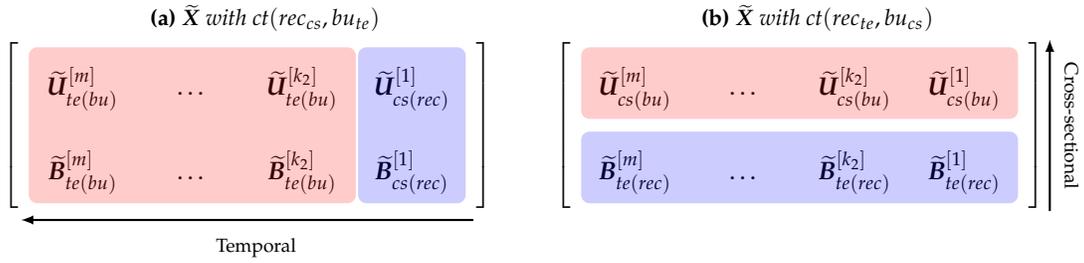

\autoref{fig:bigBU} provides a visual representation of partly bottom-up in a two-step cross-temporal reconciliation approach. On the left (\autoref{fig:tebu}), we first compute the cross-sectionally reconciled forecasts at the highest frequency ($k = 1$), and then apply temporal bottom-up to obtain coherent cross-temporal forecasts. On the right (\autoref{fig:csbu}), we first compute temporally reconciled forecasts for the most disaggregated cross-sectional level, and then apply the cross-sectional bottom-up. We denote these two-step reconciliation approaches, respectively, as ct$(rec_{te},bu_{cs})$, and ct$(rec_{cs},bu_{te})$, where ‘$rec_{te}$’ and ‘$rec_{cs}$’ denote a forecast reconciliation approach in the temporal and cross-sectional dimensions and, ‘$bu_{cs}$’ and ‘$bu_{te}$’ denote using bottom-up in the cross-sectional and temporal dimensions, respectively. It is worth noting that the simple cross-temporal bottom-up approach corresponds to $\mathrm{ct}(bu_{cs}, bu_{te})=\mathrm{ct}(bu_{te}, bu_{cs})=\mathrm{ct}(bu)$.

\section{Probabilistic forecast reconciliation}\label{sec:prob}

To introduce the idea of coherence and probabilistic forecast reconciliation, we adapt the notations and the formal definitions introduced in \cite{wickramasuriya2021b} and \cite{panagiotelis2023} for the cross-sectional probabilistic case. These definitions can also be generalized to the cross-temporal framework by following the approach developed by \cite{corani2022} for count data. However, in this paper we only focus on the continuous case.

Our aim is to extend these definitions to \textit{cross-temporal coherent probabilistic forecasts} and \textit{cross-temporal probabilistic forecast reconciliation}. Let $(\mathbb{R}^{n_b m}, \mathcal{F}_{\mathbb{R}^{n_b m}}, \nu)$ be a probability space for the bottom time series $\bvet_{\tau}^{[1]}$, where $\mathcal{F}_{\mathbb{R}^{n_b m}}$ is the Borel $\sigma$-algebra on $\mathbb{R}^{n_b m}$. Then a $\sigma$-algebra $\mathcal{F}_{\mathfrak{s}}$ can be constructed from the collection of sets $s(\mathcal{B})$ for all $\mathcal{B} \in \mathcal{F}_{\mathbb{R}^{n_b m}}$.
\begin{definition}[Cross-temporal coherent probabilistic forecasts]
	Given the probability space $(\mathbb{R}^{n_b m}, \mathcal{F}_{\mathbb{R}^{n_b m}}, \nu)$, we define the coherent probability space as the triple $(\mathfrak{s}, \mathcal{F}_{\mathfrak{s}}, \breve{\nu})$ satisfying the following property:
	$\breve{\nu}(s(\mathcal{B}))=\nu(\mathcal{B})$, $\forall \mathcal{B} \in \mathcal{F}_{\mathbb{R}^{n_b m}}$.
\end{definition}
Let $(\mathbb{R}^{n(m+k^\ast)}, \mathcal{F}_{\mathbb{R}^{n(m+k^\ast)}}, \hat{\nu})$ be a probability space referring to the incoherent probabilistic forecast ($\widehat{\xvet}_{h}$) for all the $n$ series in the system at any temporal aggregation order $k \in \mathcal{K}$.
\begin{definition}[Cross-temporal probabilistic forecast reconciliation]\label{def:pfr}
	The reconciled probability measure of $\hat{\nu}$ with respect to $\psi$ is a probability measure $\tilde{\nu}$ on $\mathfrak{s}$ with $\sigma$-algebra $\mathcal{F}_{\mathfrak{s}}$ satisfying
	\begin{equation}\label{eq:pfr}
		\tilde{\nu}(\mathcal{A})=\hat{\nu}(\psi^{-1}(\mathcal{A})), \quad \forall \mathcal{A} \in \mathcal{F}_{\mathfrak{s}},
	\end{equation}
	where $\psi^{-1}(\mathcal{A})=\{x \in \mathbb{R}^{n(m+k^\ast)}: \psi(x) \in \mathcal{A}\}$ denotes the pre-image of $\mathcal{A}$.
\end{definition}
The map $\psi$ may be obtained as the composition $s \circ g$, as for the cross-temporal point reconciliation \eqref{eq:SGy}.

\begin{theorem}[Cross-temporal reconciled samples] \label{thm:rs}
	Suppose that $(\widehat{\xvet}_1, \dots, \widehat{\xvet}_L)$ is a sample drawn from a (cross-temporal) incoherent probability measure $\widehat{\nu}$. Then $(\widetilde{\xvet}_1, \dots, \widetilde{\xvet}_L)$, where $\widetilde{\xvet}_\ell=\psi(\widehat{\xvet}_\ell)$ and $\ell= 1, \dots, L$, is a sample drawn from the (cross-temporal) reconciled probability measure $\widetilde{\nu}$ defined in \eqref{eq:pfr}.
\end{theorem}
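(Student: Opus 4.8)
The plan is to recognize that \eqref{eq:pfr} defines $\widetilde{\nu}$ as nothing other than the pushforward (image) measure of $\widehat{\nu}$ under the map $\psi$, and then to invoke the elementary fact that transforming a draw from a measure by a measurable map yields a draw from the corresponding pushforward. Concretely, I would model the draws $\widehat{\xvet}_\ell$ as realizations of random vectors defined on a common underlying probability space $(\Omega, \mathcal{G}, \Pr)$ whose law is $\widehat{\nu}$, i.e.\ $\Pr\!\left(\widehat{\xvet}_\ell \in \mathcal{E}\right) = \widehat{\nu}(\mathcal{E})$ for every $\mathcal{E} \in \mathcal{F}_{\mathbb{R}^{n(m+k^\ast)}}$. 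It then suffices to show that, for each $\ell$, the transformed vector $\widetilde{\xvet}_\ell = \psi(\widehat{\xvet}_\ell)$ has law $\widetilde{\nu}$ on $(\mathfrak{s}, \mathcal{F}_{\mathfrak{s}})$, since the reconciliation is applied identically to every element of the sample.

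First I would verify the measurability of $\psi$, which is what makes the pushforward well defined and guarantees that the events appearing below lie in the relevant $\sigma$-algebras. Since $\psi = s \circ g$ with $g$ continuous and $s$ the linear (hence continuous) pre-multiplication by $\Svet_{ct}$ --- equivalently, $\psi$ is the linear map $\widehat{\xvet} \mapsto \Mvet\widehat{\xvet}$ of \eqref{eq:Mvet} --- the composition $\psi$ is continuous, therefore Borel measurable, and it maps into $\mathfrak{s}$ because $\Mvet = \Svet_{ct}\Gvet$ is a projection onto the coherent subspace. Consequently $\psi^{-1}(\mathcal{A}) \in \mathcal{F}_{\mathbb{R}^{n(m+k^\ast)}}$ for every $\mathcal{A} \in \mathcal{F}_{\mathfrak{s}}$.

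The core of the argument is then a short chain of equalities. Fixing $\mathcal{A} \in \mathcal{F}_{\mathfrak{s}}$ and using, in turn, the definition of the preimage, the assumption that $\widehat{\xvet}_\ell$ has law $\widehat{\nu}$, and the defining property \eqref{eq:pfr} of $\widetilde{\nu}$, I obtain
\begin{equation*}
	\Pr\!\left(\widetilde{\xvet}_\ell \in \mathcal{A}\right) = \Pr\!\left(\psi(\widehat{\xvet}_\ell) \in \mathcal{A}\right) = \Pr\!\left(\widehat{\xvet}_\ell \in \psi^{-1}(\mathcal{A})\right) = \widehat{\nu}\!\left(\psi^{-1}(\mathcal{A})\right) = \widetilde{\nu}(\mathcal{A}).
\end{equation*}
As $\mathcal{A}$ was arbitrary, the law of $\widetilde{\xvet}_\ell$ coincides with $\widetilde{\nu}$ on all of $\mathcal{F}_{\mathfrak{s}}$, so each $\widetilde{\xvet}_\ell$ is a draw from $\widetilde{\nu}$; since the same deterministic map $\psi$ is applied to every element, the collection $(\widetilde{\xvet}_1, \dots, \widetilde{\xvet}_L)$ is a sample from $\widetilde{\nu}$, as claimed.

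I do not anticipate a serious obstacle: at heart the result is the statement that the law of a measurable image equals the image of the law. The only point genuinely requiring care is the measurability and well-definedness of $\psi$ as a map into $\mathfrak{s}$ (together with the fact that $\mathcal{F}_{\mathfrak{s}}$ is exactly the $\sigma$-algebra with respect to which $\widetilde{\nu}$ is defined), because without it the preimages $\psi^{-1}(\mathcal{A})$ need not be $\widehat{\nu}$-measurable and the second and third equalities above would fail to be justified.
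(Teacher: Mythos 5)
Your proof is correct and is essentially the argument the paper relies on: the paper's ``proof'' is a one-line citation to Theorem 4.5 of \cite{panagiotelis2023} (combined with Definition \ref{def:pfr}), and the content of that cited theorem is precisely the pushforward fact you establish, namely that the law of a measurable image is the image of the law. Your write-up simply makes the argument self-contained in the cross-temporal setting, including the verification that $\psi$ is measurable and maps into $\mathfrak{s}$ --- the only detail specific to this framework, and one the paper leaves implicit.
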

\begin{proof}
	See Theorem 4.5 in \cite{panagiotelis2023} using Definition \ref{def:pfr}.
\end{proof}
Theorem \ref{thm:rs} is the cross-temporal extension of Theorem 4.5 in \cite{panagiotelis2023}, valid only for the cross-sectional case. It means that a sample from the reconciled distribution can be obtained by reconciling each member of a sample from the incoherent distribution. With this result, we can separate the mechanism used to generate the base forecasts samples from the reconciliation phase.

\subsection{Parametric framework: Gaussian reconciliation}\label{ssec:prob_pf}

It is possible to obtain a reconciled probabilistic forecast analytically for some parametric distributions, such as the multivariate normal \citep{corani2021, eckert2021, panagiotelis2023, wickramasuriya2021b}. In the cross-sectional framework, \cite{panagiotelis2023} show that, starting from an elliptical distribution for the base forecasts, the reconciled forecast distribution is also elliptical. Using the results shown in \autoref{sec:not}, we extend\footnote{We assume $H =1$ and simplify the notation by removing the $h$ suffix without loss of generality} this results to the cross-temporal case. To obtain a reconciled forecast using the multivariate normal distribution, we start with a base forecast distributed as $\mathcal{N}(\widehat{\xvet}, \Omegavet)$, where $\widehat{\xvet}$ is the mean vector and $\Omegavet$ is the covariance matrix of the base forecasts. Using standard results for the Gaussian case, the reconciled forecast distribution is given by $\mathcal{N}(\widetilde{\xvet}, \widetilde{\Omegavet})$, where
\begin{equation}\label{eq:meanvar}
	\widetilde{\xvet} = \Mvet\widehat{\xvet} \quad \mbox{and} \quad \widetilde{\Omegavet} = \Mvet \Omegavet \Mvet',
\end{equation}
where $\Mvet$ is the projection matrix defined in \eqref{eq:Mvet}.
Note that if we assume that $\Omegavet = \Omegavet_{ct}$ (see the projection matrices in (\ref{eq:Mvet}) and (\ref{eq:SGy})), then the covariance matrix in \eqref{eq:meanvar} simplifies to $\widetilde{\Omegavet} = \Mvet \Omegavet_{ct}$. In the cross-temporal case, sensibly estimating the covariance matrix $\Omegavet$ can be difficult because we need to simultaneously consider both the temporal and cross-sectional structures. This requires many parameters to be estimated, which can be challenging in practice. Additionally, naively using one-step residuals to estimate the cross-temporal correlation structure can lead to an inappropriate estimate of the covariance matrix\footnote{In particular, some temporal covariances are fixed to zero (see the online appendix C for more details).}. These challenges will be explored in more depth in the following sections.

\begin{figure}[!t]
	\centering
	\begingroup
	\spacingset{0.9}
\begin{tikzpicture}

\node[draw, thick, inner sep=1mm, minimum width = 1cm, text width=4.5cm, minimum height = 2.1cm, text centered] (dist) {{\footnotesize R.F. distribution} \\[0.2cm]
$N\left(\widetilde{\xvet}, \widetilde{\Omegavet} \right)$\\
{\footnotesize $\widetilde{\xvet} = \Mvet\,\widehat{\xvet}\qquad \widetilde{\Omegavet} = \Mvet\,\Omegavet\Mvet'$}};

\node[draw, thick, circle, inner sep=1mm, text width=0.35cm, text centered, minimum width = 0.5cm, minimum height = 0.5cm, left= 1.5cm of dist, label={[align=center, font=\footnotesize]below:{B.F. covariance}}] (cov) {$\Omegavet$};

\node[draw, thick, inner sep=1mm, minimum width = 1cm, text width=4.5cm, minimum height = 2.1cm, text centered, left= 1.5cm of cov] (dist2) {{\footnotesize B.F. distribution} \\[0.2cm] $N\left(\widehat{\xvet}, \Omegavet\right)$};

\node[draw, thick, circle, inner sep=1mm, text width=0.5cm, text centered, minimum width = 0.5cm, minimum height = 0.5cm, below = 1.25cm of cov, label={[align=center, font=\footnotesize]left:{Covariance approx. \\for point R.F.}}] (covapprx) {$\Omegavet_{ct}$};

\node[draw, thick, circle, inner sep=1mm, text width=0.35cm, text centered, minimum width = 0.5cm, minimum height = 0.5cm, above = 0.3cm of cov, label=above:{\footnotesize B.F. mean}] (mean) {$\widehat{\xvet}$};

\node[inner sep=1mm, minimum width = 1cm, text width=2.5cm, text centered, below= 2.5cm of dist2] (sim_base) {{\footnotesize B.F. samples} \\[0.1cm] $\widehat{\xvet}_{1}, \;\dots, \;\widehat{\xvet}_{L}$};

\node[inner sep=1mm, minimum width = 1cm, text width=2.5cm, text centered, below= 2.5cm of dist] (sim_reco) {{\footnotesize R.F. samples} \\[0.1cm] $\widetilde{\xvet}_{1}, \;\dots, \;\widetilde{\xvet}_{L}$};

\draw[-{Triangle[scale=1]}, decorate, thick] (mean.east) -- (dist.north west);
\draw[-{Triangle[scale=1]}, decorate, thick] (cov) -- (dist);
\draw[-{Triangle[scale=1]}, decorate, thick] (mean.west) -- (dist2.north east);
\draw[-{Triangle[scale=1]}, decorate, thick] (cov) -- (dist2);

\draw[-{Triangle[scale=0.5]}, decorate, ultra thick, double, double distance=0.65mm] (sim_base) -- (sim_reco);

\draw[draw=none,fill=none] (sim_base) -- node[fill=white, anchor=center, pos=0.5,font = \footnotesize, text width = 2.2cm, text centered] (nodemid) {Thm \ref{thm:rs} \\[0.15cm] $\widetilde{\xvet}_{\ell} = \Mvet\,\widehat{\xvet}_{\ell}$ \\ $\ell = 1,\, \dots,\, L$} (sim_reco);

\draw[-{Triangle[scale=1]}, decorate, thick] (covapprx.south) -- (nodemid.north);

\draw [-{Triangle[scale=1.5]}, decorate, decoration={snake,pre length=4pt,post length=15pt}, thick, shorten >= 5pt] (dist2) -- (sim_base);


\draw [-{Triangle[scale=1.5]}, decorate, decoration={snake,pre length=4pt,post length=15pt}, thick, shorten >= 5pt] (dist) -- (sim_reco)  node[draw=none, midway, font = \footnotesize, text width = 6.25cm, text centered, fill = white, yshift = 0.2cm] {HF-BTS\\[0.2cm]$N\left(\widetilde{\textbf{b}}^{[1]}, \widetilde{\Omegavet}_{hf-bts} \right)$, $\widetilde{\xvet}_{\ell}=\textbf{S}_{ct} \widetilde{\textbf{b}}^{[1]}_{\ell}$\\ $\ell = 1,\, \dots,\, L$} ;
\draw[-{Triangle[scale=1]}, decorate, thick] (covapprx.north east) -- (dist.south west);
\end{tikzpicture}
	\endgroup
	\caption{Overview of cross-temporal forecast reconciliation in the Gaussian framework: two different but equivalent ways of obtaining reconciled forecast samples, as described in Section \ref{ssec:prob_pf}. The acronyms R.F and B.F. stand for Reconciled and Base Forecasts, respectively. HF-BTS stands for High Frequency Bottom Time Series.}
	
	\label{fig:gaussrel}
	\vspace*{-0.75\baselineskip}
\end{figure}
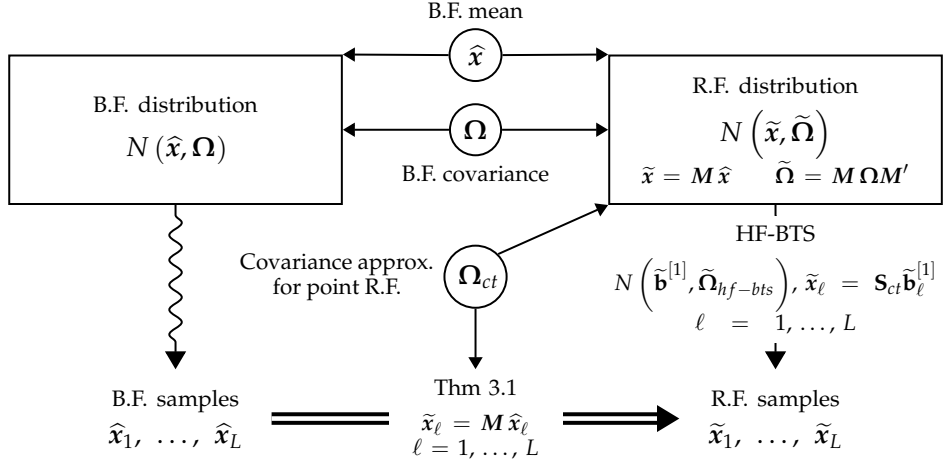

Focusing on the computational aspect\footnote{We use two R packages to sample from a the base forecast Gaussian distribution: \texttt{MASS} \citep{mass2002} and \texttt{Rfast} \citep{rfast2022} in Sections \ref{sec:ausgdp} and \ref{sec:vn525}, respectively.}, we can take several steps to reduce the time required to obtain simulations from the reconciled forecast distribution. For example when dealing with a genuine hierarchical structure, it is not necessary to simulate from a normal distribution with a defined covariance matrix for the entire structure. Instead, we can utilize the properties of elliptical distributions to simulate from the high frequency bottom time series and then obtain the complete simulation through the $\Svet_{ct}$ matrix. Furthermore, we do not need to calculate the reconciled mean and variance and generate a new sample if we already have a sample from the normal distribution of the base forecasts; we can simply apply the point forecast reconciliation \eqref{eq:Mvet} as outlined in Theorem \ref{thm:rs}. \autoref{fig:gaussrel} shows two different but equivalent ways of obtaining reconciled forecast samples: the former from the base distribution through the Theorem \ref{thm:rs}, and the latter from the reconciled distribution through the high frequency bottom time series forecasts $\widetilde{\bvet}^{[1]}$ only.
The two rectangles represent the base and reconciled forecast distributions, respectively. Enclosed within circles are the distribution parameters involved in the point forecast reconciliation process, transforming $\widehat{\xvet}$ into $\widetilde{\xvet}$ and $\Omegavet$ into $\widetilde{\Omegavet}$. The wave-like arrows represent the simulation processes, generating both base and reconciled forecast samples. Finally, the bold double arrow “$\Rightarrow$" illustrates the generation of the reconciled forecast distributions as described in Theorem \ref{thm:rs}.

\subsection{Non-parametric framework: bootstrap reconciliation}\label{ssec:boot}

Analytical expressions for the base and reconciled forecast distributions are sometimes challenging to obtain. Furthermore parametric assumptions can be restrictive and unrealistic. We propose a procedure called \textit{cross-temporal joint (block) bootstrap} (\textbf{ctjb}) to generate samples from the base forecast distributions that preserve cross-temporal relationships. This approach involves drawing samples of all series simultaneously from the most temporally aggregated level, and using the most temporally aggregated level to determine the corresponding time indices for the other levels.

Let $\widehat{\Evet}^{[k]}$ be the ($n \times N_k$) matrix of the residuals for $k \in \mathcal{K}$. \autoref{fig:res_boot} (on the left) provides a visualization of these matrices and how they are related to each other for the example in \autoref{fig:hierS}. It is assumed that the residuals cover four years ($N=4$): the green color corresponds to the first year, the blue to the second year, and so on. Further, let $\mathcal{M}_i$ be the model used to calculate the base forecasts and residuals for the $i^{th}$ series. 
Assuming $H = 1$, $\tau$ is a random draw with replacement from $1,\dots, N$ and the $\ell^{th}$ bootstrap incoherent sample is
$\widehat{\xvet}_{i,\ell}^{[k]} = f_i(\mathcal{M}_i, \widehat{\evet}_{i}^{[k]})$,
where $f_i(\cdot)$ depends on the fitted 
model $\mathcal{M}_i$. That is, $\widehat{\xvet}_{i,l}^{[k]}$ is a sample path simulated for the $i^{th}$ series with error approximated by the corresponding block bootstrapped sample residual $\widehat{\evet}_{i}^{[k]}$, the $i^{th}$ row of
$$
	\widehat{\Evet}^{[k]}_{\tau} = \begin{bmatrix}
		\widehat{e}^{[k]}_{1,M_k(\tau-1)+1} & \dots  & \widehat{e}^{[k]}_{1,M_k\tau}   \\
		\vdots                              & \ddots & \vdots                          \\
		\widehat{e}^{[k]}_{n,M_k(\tau-1)+1} & \dots  & \widehat{e}^{[k]}_{n,M_k\tau} \
	\end{bmatrix}\qquad k \in \mathcal{K}.
$$
\autoref{fig:res_boot} (on the right) shows $\widehat{\Evet}^{[k]}_{\tau}$ for the quarterly cross-temporal hierarchy in \autoref{fig:hierS}.

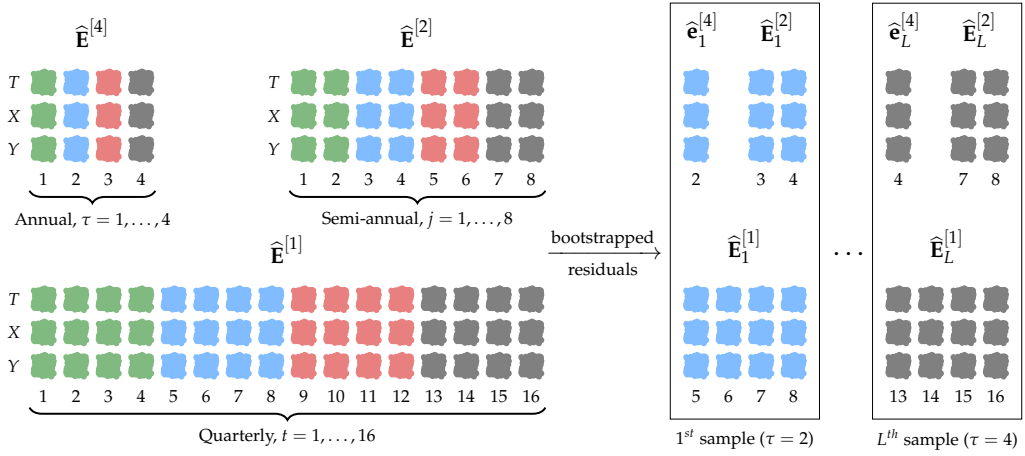
\begin{figure}[!t]
	\centering
	\begin{tikzpicture}
\matrix (e1) [matrix of nodes,row sep=0cm,column sep=0cm, nodes= {rectangle, fill=white, inner sep = 1pt, font = {\fontsize{7}{6}\selectfont}, minimum width=1em, minimum height=1em,anchor=center}, label={[xshift = 0.5em]above:{\footnotesize$\widehat{\textbf{E}}^{[1]}$}}]
{
$T$ & \pgfuseimage{ngreen2} & \pgfuseimage{ngreen2} & \pgfuseimage{ngreen2} & \pgfuseimage{ngreen2} & \pgfuseimage{nblue2} & \pgfuseimage{nblue2}  & \pgfuseimage{nblue2} & \pgfuseimage{nblue2} & \pgfuseimage{nred2} & \pgfuseimage{nred2} & \pgfuseimage{nred2} & \pgfuseimage{nred2} & \pgfuseimage{nblack2} & \pgfuseimage{nblack2} & \pgfuseimage{nblack2} & \pgfuseimage{nblack2}\\
$X$ & \pgfuseimage{ngreen2} & \pgfuseimage{ngreen2} & \pgfuseimage{ngreen2} & \pgfuseimage{ngreen2} & \pgfuseimage{nblue2} & \pgfuseimage{nblue2}  & \pgfuseimage{nblue2} & \pgfuseimage{nblue2} & \pgfuseimage{nred2} & \pgfuseimage{nred2} & \pgfuseimage{nred2} & \pgfuseimage{nred2} & \pgfuseimage{nblack2} & \pgfuseimage{nblack2} & \pgfuseimage{nblack2} & \pgfuseimage{nblack2}\\
$Y$ & \pgfuseimage{ngreen2} & \pgfuseimage{ngreen2} & \pgfuseimage{ngreen2} & \pgfuseimage{ngreen2} & \pgfuseimage{nblue2} & \pgfuseimage{nblue2}  & \pgfuseimage{nblue2} & \pgfuseimage{nblue2} & \pgfuseimage{nred2} & \pgfuseimage{nred2} & \pgfuseimage{nred2} & \pgfuseimage{nred2} & \pgfuseimage{nblack2} & \pgfuseimage{nblack2} & \pgfuseimage{nblack2} & \pgfuseimage{nblack2}\\
& 1 & 2 & 3 & 4 & 5 & 6 & 7 & 8 & 9 & 10 & 11 & 12 & 13 & 14 & 15 & 16 \\
};
\draw[decorate,thick, decoration={brace, mirror, amplitude=5pt,raise=-1pt}] (e1-4-2.south west) -- (e1-4-17.south east) node[midway, font = {\fontsize{7}{6}\selectfont}, yshift = -1em]{Quarterly, $t = 1,\dots,16$};

\matrix (ek) [above= 10mm of e1.north east, anchor=south east, matrix of nodes,row sep=0cm,column sep=0cm, nodes= {rectangle, fill=white, inner sep = 1pt, font = {\fontsize{7}{6}\selectfont}, minimum width=1em, minimum height=1em,anchor=center}, label={[xshift = 0.5em]above:{\footnotesize$\widehat{\textbf{E}}^{[2]}$}}]
{
$T$ & \pgfuseimage{ngreen2} & \pgfuseimage{ngreen2} & \pgfuseimage{nblue2} & \pgfuseimage{nblue2} & \pgfuseimage{nred2} & \pgfuseimage{nred2} & \pgfuseimage{nblack2} & \pgfuseimage{nblack2}\\
$X$ & \pgfuseimage{ngreen2} & \pgfuseimage{ngreen2} & \pgfuseimage{nblue2} & \pgfuseimage{nblue2} & \pgfuseimage{nred2} & \pgfuseimage{nred2} & \pgfuseimage{nblack2} & \pgfuseimage{nblack2}\\
$Y$ & \pgfuseimage{ngreen2} & \pgfuseimage{ngreen2} & \pgfuseimage{nblue2} & \pgfuseimage{nblue2} & \pgfuseimage{nred2} & \pgfuseimage{nred2} & \pgfuseimage{nblack2} & \pgfuseimage{nblack2}\\
& 1 & 2 & 3 & 4 & 5 & 6 & 7 & 8 \\
};
\draw[decorate,thick, decoration={brace, mirror, amplitude=5pt,raise=-1pt}] (ek-4-2.south west) -- (ek-4-9.south east) node[midway, font = {\fontsize{7}{6}\selectfont}, yshift = -1em]{Semi-annual, $j = 1,\dots,8$};

\matrix (em) [above= 10mm of e1.north west, anchor=south west, matrix of nodes,row sep=0cm,column sep=0cm, nodes= {rectangle, fill=white, inner sep = 1pt, font = {\fontsize{7}{6}\selectfont}, minimum width=1em, minimum height=1em,anchor=center}, label={[xshift = 0.5em]above:{\footnotesize$\widehat{\textbf{E}}^{[4]}$}}]
{
$T$ & \pgfuseimage{ngreen2} & \pgfuseimage{nblue2} & \pgfuseimage{nred2} & \pgfuseimage{nblack2}\\
$X$ & \pgfuseimage{ngreen2} & \pgfuseimage{nblue2} & \pgfuseimage{nred2} & \pgfuseimage{nblack2}\\
$Y$ & \pgfuseimage{ngreen2} & \pgfuseimage{nblue2} & \pgfuseimage{nred2} & \pgfuseimage{nblack2}\\
& 1 & 2 & 3 & 4 \\
};
\draw[decorate,thick, decoration={brace, mirror, amplitude=5pt,raise=-1pt}] (em-4-2.south west) -- (em-4-5.south east) node[midway, font = {\fontsize{7}{6}\selectfont}, yshift = -1em]{Annual, $\tau = 1,\dots,4$};

\node[right= -0.75em of e1.north east, yshift = 2em,
       anchor=north west, font = {\fontsize{9}{10}\selectfont}] {$\xrightarrow[\text{residuals}]{\text{bootstrapped}}$};
       
\matrix (e1b) [right= 16mm of e1.north east, anchor=north west, matrix of nodes,row sep=0cm,column sep=0cm, nodes= {rectangle, fill=white, inner sep = 1pt, font = {\fontsize{7}{6}\selectfont}, minimum width=1em, minimum height=1em,anchor=center}, label={[name=labe11] above:{\footnotesize$\widehat{\textbf{E}}_1^{[1]}$}}]
{
\pgfuseimage{nblue2} & \pgfuseimage{nblue2} & \pgfuseimage{nblue2} & \pgfuseimage{nblue2}\\
\pgfuseimage{nblue2} & \pgfuseimage{nblue2} & \pgfuseimage{nblue2} & \pgfuseimage{nblue2}\\
\pgfuseimage{nblue2} & \pgfuseimage{nblue2} & \pgfuseimage{nblue2} & \pgfuseimage{nblue2}\\
5 & 6 & 7 & 8 \\
};

\matrix (ekb) [above= 10mm of e1b.north east, anchor=south east, matrix of nodes,row sep=0cm,column sep=0cm, nodes= {rectangle, fill=white, inner sep = 1pt, font = {\fontsize{7}{6}\selectfont}, minimum width=1em, minimum height=1em,anchor=center}, label={[name=labek1] above:{\footnotesize$\widehat{\textbf{E}}_1^{[2]}$}}]
{
\pgfuseimage{nblue2} & \pgfuseimage{nblue2} \\
\pgfuseimage{nblue2} & \pgfuseimage{nblue2} \\
\pgfuseimage{nblue2} & \pgfuseimage{nblue2} \\
3 & 4 \\
};

\matrix (emb) [above= 10mm of e1b.north west, anchor=south west, matrix of nodes,row sep=0cm,column sep=0cm, nodes= {rectangle, fill=white, inner sep = 1pt, font = {\fontsize{7}{6}\selectfont}, minimum width=1em, minimum height=1em,anchor=center}, label={[name=labem1, xshift = 0.25em] above:{\footnotesize$\widehat{\textbf{e}}_1^{[4]}$}}]
{
\pgfuseimage{nblue2} \\
\pgfuseimage{nblue2} \\
\pgfuseimage{nblue2} \\
2 \\
};
\node[draw,inner sep=0mm,label={[name = boot1n, font = {\fontsize{7}{6}\selectfont}] below:{$1^{st}$ sample ($\tau = 2$)}},fit=(e1b) (ekb) (emb) (labek1) (labe11) (labem1)] (boot1) {};

\node[above right= 0.2em of e1b.north east, yshift = 1em,
       anchor=north west] {$\dots$};
       
\matrix (e1bl) [right= 7.5mm of e1b.north east, anchor=north west, matrix of nodes,row sep=0cm,column sep=0cm, nodes= {rectangle, fill=white, inner sep = 1pt, font = {\fontsize{7}{6}\selectfont}, minimum width=1em, minimum height=1em,anchor=center}, label={[name=labe1l] above:{\footnotesize$\widehat{\textbf{E}}_L^{[1]}$}}]
{
\pgfuseimage{nblack2} & \pgfuseimage{nblack2} & \pgfuseimage{nblack2} & \pgfuseimage{nblack2}\\
\pgfuseimage{nblack2} & \pgfuseimage{nblack2} & \pgfuseimage{nblack2} & \pgfuseimage{nblack2}\\
\pgfuseimage{nblack2} & \pgfuseimage{nblack2} & \pgfuseimage{nblack2} & \pgfuseimage{nblack2}\\
13 & 14 & 15 & 16 \\
};

\matrix (ekbl) [above= 10mm of e1bl.north east, anchor=south east, matrix of nodes,row sep=0cm,column sep=0cm, nodes= {rectangle, fill=white, inner sep = 1pt, font = {\fontsize{7}{6}\selectfont}, minimum width=1em, minimum height=1em,anchor=center}, label={[name=labekl] above:{\footnotesize$\widehat{\textbf{E}}_L^{[2]}$}}]
{
\pgfuseimage{nblack2} & \pgfuseimage{nblack2} \\
\pgfuseimage{nblack2} & \pgfuseimage{nblack2} \\
\pgfuseimage{nblack2} & \pgfuseimage{nblack2} \\
7 & 8 \\
};

\matrix (embl) [above= 10mm of e1bl.north west, anchor=south west, matrix of nodes,row sep=0cm,column sep=0cm, nodes= {rectangle, fill=white, inner sep = 1pt, font = {\fontsize{7}{6}\selectfont}, minimum width=1em, minimum height=1em,anchor=center}, label={[name=labeml, xshift = 0.25em] above:{\footnotesize$\widehat{\textbf{e}}_L^{[4]}$}}]
{
\pgfuseimage{nblack2} \\
\pgfuseimage{nblack2} \\
\pgfuseimage{nblack2} \\
4 \\
};
\node[draw,inner sep=0mm,label={[name = boot1n, font = {\fontsize{7}{6}\selectfont}] below:{$L^{th}$ sample ($\tau = 4$)}},fit=(e1bl) (ekbl) (embl) (labekl) (labe1l) (labeml)] (boot1) {};
\end{tikzpicture}
	\caption{Example of bootstrapped residuals for 3 linearly constrained quarterly time series (see \autoref{fig:hierS}). On the left there are the residual matrices with 4 years of data ($N=4$): the green, blue, red and black colors correspond, respectively, to years 1, 2, 3 and 4. On the right the bootstrapped residuals are represented.}
	\label{fig:res_boot}
	\vspace*{-0.75\baselineskip}
\end{figure}

One of the main advantages of the cross-temporal joint bootstrap is that it allows us to accurately account for the dependence between the different levels of temporal aggregation and not only the cross-sectional dependencies. By sampling residuals from the most temporally aggregated level and using it to determine the indices for the other levels, we can ensure that the bootstrap sample reflects the underlying data distribution. Additionally, the cross-temporal joint bootstrap is easy to implement 
for many forecasting models, making it a practical and efficient tool. Furthermore, this approach is easily scalable in order to utilize multiple computing power simultaneously for each individual series. This can be especially useful when dealing with large datasets or when trying to speed up the analysis process.

\section{Cross-temporal covariance matrix estimation}\label{sec:shrtech}

As the covariance matrix $\Omegavet$ is unknown in practice, a natural estimate is the empirical sample covariance matrix of the base forecasts $\widehat{\Omegavet}$. In this section, our focus will be exclusively on the cross-temporal framework., this means that we have to estimate $r = n(k^\ast+m)[n(k^\ast+m)-1]/2$ different parameters. A possible solution to estimating many parameters when we have fewer observations than $r$, is to construct a shrinkage estimator \citep{efron1975a,efron1975,efron1977}, using a convex combination of $\widehat{\Omegavet}$ and a diagonal target matrix $\widehat{\Omegavet}_D = \widehat{\Omegavet} \odot \Ivet_{n(k^\ast+m)}$, such that $\widehat{\Omegavet}_{G} = \lambda \widehat{\Omegavet}_D + (1-\lambda) \widehat{\Omegavet}$, where $\lambda \in [0,1]$ is the shrinkage intensity parameter that can be estimate using the unbiased estimator proposed by \cite{ledoit2004a} (see \citealp{schafer2005}). The linear combination involving these two matrices is referred to as \textit{Global shrinkage} (\textit{G}), where all off-diagonal elements are shrunk towards zero. $\widehat{\Omegavet}_{G}$ corresponds to the matrix used by the reconciliation approach oct$(shr)$ \citep{difonzo2023}. However, shrinking all off-diagonal elements to zero, when we know that the covariance matrix has a cross-sectional and/or temporal structure, results in information loss. Therefore, we propose to estimate a smaller matrix, and to use the cross-sectional and/or temporal structure to obtain a better estimator for the covariance matrix of the entire system. Given that $\Svet_{ct} = \Svet_{cs} \otimes \Svet_{te}$, it is possible to express the actual covariance matrix in terms of three smaller matrices such that
\begin{equation}\label{eq:OmSct}
\begin{aligned}
\widetilde{\Omegavet} &= \Svet_{ct}\Omegavet_{\textit{hf-bts}}\Svet_{ct}' \\
	&= \left(\Ivet_n \otimes \Svet_{te}\right)\Omegavet_{\textit{hf}}\left(\Ivet_n \otimes \Svet_{te}\right)' \\
	&= \left(\Svet_{cs} \otimes \Ivet_{m+k^\ast}\right)\Omegavet_{bts}\left(\Svet_{cs} \otimes \Ivet_{m+k^\ast}\right)',
\end{aligned}
\end{equation}
where $\Omegavet_{\textit{hf-bts}}$ is the $(n_b m\times n_b m)$ covariance matrix for the bottom time series at temporal aggregation level $k = 1$ (highest frequency bottom time series), $\Omegavet_{\textit{hf}}$ is the $(nm\times nm)$ covariance matrix related to all the high frequency time series and $\Omegavet_{bts}$ is the $[n_b(k^\ast + m)\times n_b(k^\ast + m)]$ covariance matrix related to bottom time series at any temporal aggregation. Equation (\ref{eq:OmSct}) offers three decompositions of the covariance matrix $\widetilde{\Omegavet}$, each characterized by well-defined structures: $\Svet_{ct}$ capturing cross-temporal, $\Ivet_n \otimes \Svet_{te}$ temporal, and $\Svet_{cs} \otimes \Ivet_{m+k^\ast}$ cross-sectional relationships. At the same time, each involves smaller covariance matrices 
as $\Omegavet_{\textit{hf-bts}}$, $\Omegavet_{\textit{hf}}$, and $\Omegavet_{\textit{bts}}$. Starting from these representations, we propose three different approaches (\textit{HB}, \textit{H}, and \textit{B}, respectively) to approximate $\widetilde{\Omegavet}$.

Therefore, we can apply the idea of “Stein-type shrinkage" \citep{efron1977} to $\Omegavet_{\textit{hf-bts}}$, $\Omegavet_{\textit{hf}}$ and $\Omegavet_{\textit{bts}}$ by using the corresponding empirical base forecasts residuals estimation. We obtain the following expressions (see the online appendix B for details):
\begin{itemize}[nosep]
	\item \textit{High frequency Bottom time series shrinkage matrix} (\textit{HB}): \\
	$\widehat{\Omegavet}_{HB} = \lambda \Svet_{ct}\widehat{\Omegavet}_{\textit{hf-bts}, D}\Svet_{ct}'+ (1-\lambda) \Svet_{ct}\widehat{\Omegavet}_{\textit{hf-bts}}\Svet_{ct}'$;
	\item \textit{High frequency shrinkage matrix} (\textit{H}): \\ $\widehat{\Omegavet}_{H}  = \lambda (\Ivet_{n} \otimes \Svet_{te})\widehat{\Omegavet}_{hf, D}(\Ivet_{n} \otimes \Svet_{te})' + (1-\lambda) (\Ivet_{n} \otimes \Svet_{te})\widehat{\Omegavet}_{\textit{hf}}(\Ivet_{n} \otimes \Svet_{te})'$;
	\item \textit{Bottom time series shrinkage matrix} (\textit{B}): \\$\widehat{\Omegavet}_{B} = \lambda \left(\Svet_{cs} \otimes \Ivet_{m+k^\ast}\right)\widehat{\Omegavet}_{bts, D}\left(\Svet_{cs} \otimes \Ivet_{m+k^\ast}\right)' +  (1-\lambda) \left(\Svet_{cs} \otimes \Ivet_{m+k^\ast}\right)\widehat{\Omegavet}_{bts}\left(\Svet_{cs} \otimes \Ivet_{m+k^\ast}\right)'$,
\end{itemize}
where $\widehat{\Omegavet}_{l, D} = \Ivet_{n_b m}\odot\widehat{\Omegavet}_{j}$, $l = \{\textit{hf-bts}, \;\textit{hf}, \;\textit{bts}\}$, and $\lambda$ is the shrinkage parameter.
These matrices are not full rank, meaning their inverses, needed to compute the projection to the coherent subspace, do not exist. To address this, a ridge regularization of the form $\widehat{\Omegavet} + \omega \Ivet$ was used \citep{marquardt1970t}, where $\omega$ is chosen to make the matrix invertible without introducing excessive bias. \autoref{fig:shr_vis} gives some visual insights on the covariance matrices obtainable with $\lambda = 0$ and $\lambda = 1$, respectively, for a simple cross-temporal hierarchical structure with 3 time series and $\mathcal{K} = \{4,2,1\}$ (see \autoref{fig:hierS}).

\begin{table}[!t]
	\centering
	\begingroup
	\spacingset{1.1}
	\begin{tabular}{cccccc}
		\toprule
		\textbf{Method}            & \textbf{\# of different parameters}                             & \textbf{GDP} & \textbf{Tourism}\\
		\midrule
		\addlinespace[0.25cm]
		\textit{G}                          & $r = \displaystyle\frac{n(k^\ast+m)[n(k^\ast+m)-1]}{2}$         & $221\,445$          & $108\,052\,350$ \\
		\addlinespace[0.25cm]
		\textit{B}              & $r_{HB}<\displaystyle\frac{n_b(k^\ast+m)[n_b(k^\ast+m)-1]}{2}<r$ & \makecell{$94\,395$ \\[-0.1cm] {\footnotesize$(57\%)$}}           & \makecell{$36\,231\,328$ \\[-0.1cm] {\footnotesize$(66\%)$}}\\
		\addlinespace[0.25cm]
		\textit{H}               & $r_{HB}<\displaystyle\frac{nm[nm-1]}{2}<r$ & \makecell{$72\,390$ \\[-0.1cm] {\footnotesize$(67\%)$}}           & \makecell{$19\,848\,150$ \\[-0.1cm] {\footnotesize$(82\%)$}}\\
		\addlinespace[0.25cm]
		\textit{HB} & $r_{HB} = \displaystyle\frac{n_bm[n_bm-1]}{2}<r$       & \makecell{$30\,876$ \\[-0.1cm] {\footnotesize$(86\%)$}}           & \makecell{$6\,655\,776$ \\[-0.1cm] {\footnotesize$(94\%)$}}  \\
		\addlinespace[0.1cm]
		\bottomrule
	\end{tabular}
	\endgroup
	\caption{Number of different parameters that need to be estimated for 
	the Australian GDP (see \autoref{sec:ausgdp}) and the Australian Tourism Demand (see \autoref{sec:vn525}) forecasting experiments. 
	The percentage reductions in the number of parameters compared to the global approach \textit{G} are reported in parentheses.}
	\label{tab:num_param}
	\vspace*{-0.5\baselineskip}
\end{table}

Another important aspect is the number of parameters to be estimated through the residuals of the base forecasts. In \autoref{tab:num_param} we report the number of different parameters 
for the two forecasting experiment: Australian GDP (see \autoref{sec:ausgdp}) and Australian Tourism Demand (see \autoref{sec:vn525}). In addition, we also calculate the percentage reductions in the number of parameters compared to the global approach. 
As we can see, \textit{G} involves a considerably large number of parameters compared to other estimators. \textit{HB} leads to the largest decrease of around 85\%, whereas approaches \textit{H} and \textit{B}  lie somewhere between \textit{G} and \textit{HB}. In general, as $m$ and $n$ increase, using \textit{H} requires the estimation of less parameters than \textit{B}.

It is worth noting that when using the $HB$ covariance matrix, we make the assumption that the base error covariance matrix is coherent. This assumption is valid provided the base forecasts also approximately fulfil constraints (\ref{eq:Cct}), which is expected for any reasonable set of forecasts. 
In addition, with this covariance matrix, the computational complexity of the reconciliation phase is reduced. Specifically, \autoref{thm:HBols} extends 
Theorem 1 in \cite{hyndman2011}, showing that reconciling using a coherent covariance matrix simplifies to the $ols$ approach.
\begin{theorem}\label{thm:HBols}
	Let $\widehat{\Omega}_{hf-bts}$ be a $[(n_bm)\times (n_bm)]$ p.d. matrix. Then, using $\Omegavet_{ct} = \Svet_{ct}\widehat{\Omega}_{hfbts}\Svet_{ct}'$ in the reconciliation formulae (\ref{eq:Mvet}) and (\ref{eq:SGy}) is equivalent to using $\Omegavet_{ct} = \Ivet_{n(m+k^\ast)}$ ($ols$ approach).
\end{theorem}
\begin{proof}
	See online appendix B.
\end{proof}

In the forecasting experiments that follow (and in the simulation in the online appendix C), we closely analyze these different constructions with a dual purpose. In particular, we use the full covariance matrix ($\lambda = 0$) of the base forecasts to obtain base forecast samples of the linearly constrained time series under Gaussianity. We also use the shrinkage versions as approximations of the covariance matrix to be used for reconciliation (excluding HB, see \autoref{thm:HBols}). This will allow us to better understand the properties and abilities of each parameterization.

\begin{figure}[!t]
	\centering
	\includegraphics[width = \linewidth]{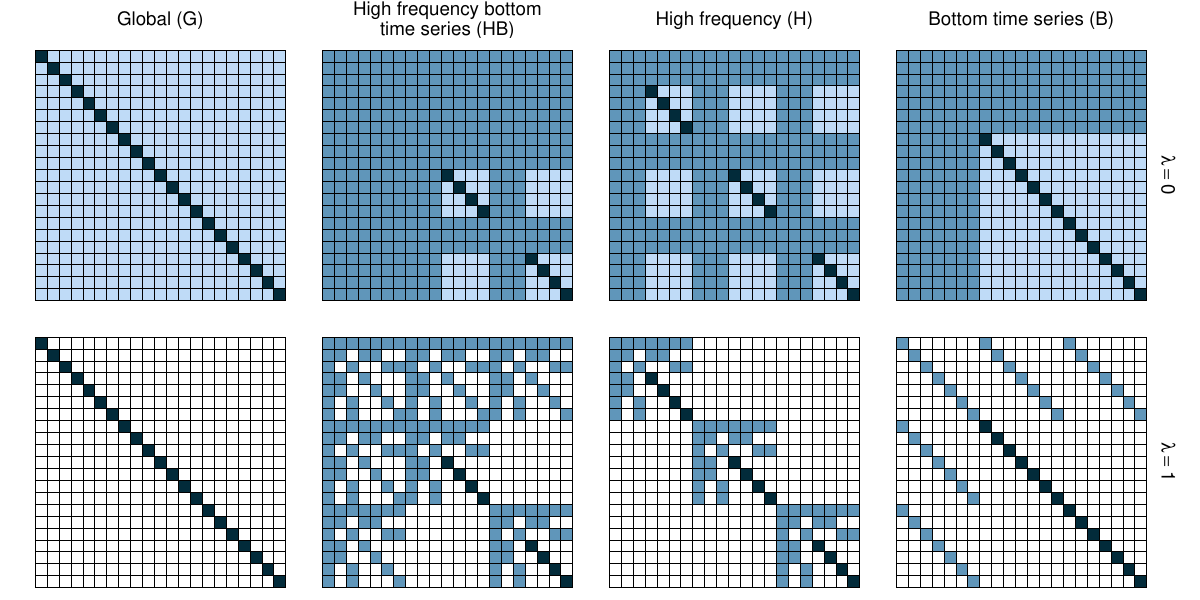}
	\caption{Representation of four types of covariance matrices that can be obtained from the cross-temporal hierarchical structure (example based on the quarterly series of Figure 1) for two different values of $\lambda \in \{0, 1\}$, the shrinkage parameter. The entries in black are not modified by shrinkage, the entries in light blue are those actively involved in the shrinkage phase, while the entries in darker blue are derived directly from the cross-sectional and/or temporal structure and hence not estimated. Additionally, for $\lambda = 1$, the white entries correspond to a zero value.}
	\label{fig:shr_vis}
\end{figure}

\subsection{Multi-step residuals} \label{ssec:multi_res}

Model residuals may be used to estimate the covariance matrix in cross-temporal forecast reconciliation. In time series analysis, it is common to use residuals corresponding to one-step ahead forecasts. However, due to the temporal dimension in our setting, residuals corresponding to different forecast horizons are required. Thus, we define \textit{multi-step residuals} as $e_{i,h,j}^{[k]} = x_{i,j+h}^{[k]} - \widehat{x}_{i,j+h|j}^{[k]}$, where $i = 1,\dots,n$, $j = 1,\dots,N_k$ and $\widehat{x}_{i,j+h|t}^{[k]}$ is the $h$-step fitted value, calculated as the $h$-step-ahead forecast using data up to time $j$. In general, these residuals will be autocorrelated except when $h=1$.

Following \cite{difonzo2023}, we use a matrix organization of the residuals similar to the one for the base forecasts in \autoref{ssec:oct}. Specifically, let $N$ be the total number of observations for the most temporally aggregate time series. Then, the $N_k$-vectors of multi-step residuals for the temporal aggregation $k$ and the series $i$, $\evet_{i,h}^{[k]} =  \Big[e_{i,h,1}^{[k]} \quad e_{i,h,2}^{[k]} \quad \dots \quad e_{i,h,N_k}^{[k]}\Big]'$ with $h = 1,\dots, M_k$, can be organized in matrix form as
$$
	\Evet_i^{[k]} = \begin{bmatrix}
		e_{i,1,1}^{[k]}                     & e_{i,2,2}^{[k]}                     & \dots & e_{i,M_k,M_k}^{[k]} \\
		\vdots                            & \vdots                            &       & \vdots                  \\
		e_{i,1,N_k - M_k + 1}^{[k]} & e_{i,2,N_k - M_k + 2}^{[k]} & \dots & e_{i,M_k,N_k}^{[k]}         \\
	\end{bmatrix}.
$$
Let $\Evet_i = \Big[\Evet_i^{[m]} \quad \Evet_i^{[k_p-1]} \quad \dots \quad \Evet_i^{[1]}  \Big]$. Then the $[N \times n(m+k^\ast)]$ cross-temporal residual matrix is given by $
	\Evet = \Big[\Evet_1 \quad \Evet_2 \quad \dots \quad \Evet_n \Big]$.

To better understand the properties of the proposed alternatives, a simulation study was performed (the results are shown in the online appendix C). We have studied the effect of combining cross-sectional and temporal aggregations using a simple hierarchy, where the small size and nature of the data generating process make it possible to exactly calculate the true cross-temporal covariance structure, thus providing insights into the nature of the time series data involved in the forecast reconciliation process. We find that simulating base forecasts from multi-step residuals allows for a more accurate estimation of the covariance matrix and that reconciliation further improves the forecast accuracy. 

\subsection{Overlapping residuals}\label{ssec:over_res}

Another issue that arises in the case of cross-temporal reconciliation is the low number of available residuals, especially for the higher orders of temporal aggregation. A possible solution is to use residuals calculated using overlapping series by allowing the year to have a varying starting time. To better explain how to calculate overlapping residuals, assume we have a single series $\yvet = [y_1 \; y_2 \; y_3 \; \dots\; y_{T-1}\; y_{T}]'$. We can construct $k$ non overlapping series such that $\xvet^{[k], s} = \left\{x^{[k],s}_{j}\right\}_{j = 1}^{N_k-s}$ where $x^{[k],s}_{j} = \displaystyle\sum_{t = (j-1)k+s+1}^{jk-s} y_t$, with $s = 0, \dots, (k-1)$. For example, suppose we have a biannual series with $k = 2$ and $T = 6$, then we can construct two annual time series depending on which time is deemed the start of the year: $\xvet^{[2], 0} =  \Big[x_1^{[2], 0}\quad x_2^{[2], 0}\quad x_{3}^{[2], 0} \Big]' =\Big[y_1 + y_2\quad y_3 + y_4\quad y_5 + y_6\Big]'$ and $\xvet^{[2], 1} = \Big[x_1^{[2], 1}\quad x_2^{[2], 1} \Big]' = \Big[y_2 + y_3\quad  y_4 + y_5 \Big]'$. To calculate overlapping residuals, we propose the following steps:
\begin{enumerate}[nosep]
	\item Fit a model to $\xvet^{[k], 0}$ (i.e., select an appropriate model and estimate the model parameters using the available data) and calculate the residuals.
	\item Apply the same model in step 1 to $\xvet^{[k], s}$ for $s = 1, \dots, k-1$, without re-estimating the parameters, and calculate the residuals.
\end{enumerate}

The resulting residuals can be used to estimate the covariance matrix in cross-temporal forecast reconciliation. This increases the number of available residuals, particularly when working with higher frequency observations such as monthly or daily data.
It is important to note that this approach assumes that the model used in step 1 is appropriate for all the different series $\xvet^{[k], s}$. Some seasonal models will not be appropriate as the seasonal pattern will be shifted for different values of $s$. However, this will not affect seasonal ARIMA models as the seasonality is defined in terms of lags which are unaffected by the value of $s$.

\section{Forecasting Australian GDP}\label{sec:ausgdp}

The Australian Quarterly National Accounts (QNA) dataset has been widely studied in the literature on forecast reconciliation \citep{athanasopoulos2020, difonzo2023}. Building on these results, we now consider cross-temporally reconciled probabilistic forecasts.

We use univariate ARIMA models\footnote{We use the \texttt{auto.arima} function from the R package \texttt{forecast} \citep{Rforecast}.} to obtain quarterly base forecasts for the $n = 95$ QNA time series, spanning the period 1984:Q4 -- 2018:Q1, defining GDP from both the Income and Expenditure sides. We perform a rolling forecast experiment with an expanding window: the first training sample spans the period 1984:Q4 to 1994:Q3, and the last ends in 2017:Q1, for a total of 91 forecast origins. For the temporal aggregation dimension we aggregate the quarterly data to both semi-annual and annual. We obtain $4$-step, $2$-step and $1$-step ahead base forecasts respectively from the quarterly, semi-annual and annual frequencies, i.e., $\mathcal{K} = \{4,2,1\}$.

\begin{table}[!t]
	\centering
	\begin{tabular}{M{0.17\linewidth}|L{0.77\linewidth}}
		\toprule
		\textbf{Label} & \textbf{Description} \\
		\midrule
		ct$(shr_{cs}, bu_{te})$ & Partly bottom-up (\autoref{ssec:ctbu}) starting from cross-sectional reconciled forecasts using the $shr$ approach.\\
		\addlinespace[0.15cm]
		oct$(\;\cdot\;)$ & Optimal cross-temporal reconciliation for the $struc$, $wlsv$ and $bdshr$ approaches. One-step residuals were used with $wlsv$ and $bdshr$. \\
		\addlinespace[0.15cm]
		oct$_h(\;\cdot\;)$ & Optimal cross-temporal reconciliation with multi-step residuals (see \autoref{ssec:multi_res}) for the approaches presented in \autoref{sec:shrtech}: $hshr$ for \textit{High frequency shrinkage}, and $bshr$ for \textit{bottom time series shrinkage}.\\
		\addlinespace[0.15cm]
		oct$_o(\;\cdot\;)$ & Optimal cross-temporal reconciliation with overlapping residuals (see \autoref{ssec:over_res}) for the $wlsv$ and $bdshr$ approaches. \\
		\addlinespace[0.15cm]
		oct$_{oh}(hshr)$ & Optimal cross-temporal reconciliation with overlapping and multi-step residuals (see Section \ref{ssec:multi_res} and \ref{ssec:over_res}) for the $hshr$ (\textit{High frequency shrinkage}) approach presented in \autoref{sec:shrtech}.\\
		\bottomrule
	\end{tabular}
	\caption{Cross-temporal reconciliation approaches for 
	the Australian GDP (see \autoref{sec:ausgdp}) and the Australian Tourism Demand (see \autoref{sec:vn525}) forecasting experiments. All the reconciliation procedures are available in \texttt{FoReco} \citep{foreco2023}.}
	\label{tab:notation}
	\vspace*{-0.5\baselineskip}
\end{table}

The base forecast samples in the Gaussian case are obtained using the sample covariance matrices with the \textit{Global} (G) and \textit{High frequency} (H) parameterization (\autoref{sec:shrtech}), since it is not possible to identify a unique representation for the other cases\footnote{When simultaneously considering Income and Expenditure sides hierarchies, the result is a general linearly constrained time series, where bottom and upper time series are not uniquely defined, making unfeasible the cross-sectional bottom-up reconciliation approach \citep{giro2022}.}. We compare the results obtained using multi-step residuals with and without overlapping, in order to measure the benefit of obtaining overlapping residuals. In the non-parametric case, we use the cross-temporal joint bootstrap (ctjb) presented in \autoref{ssec:boot}. Finally, to reconcile the resulting (1000) base forecasts samples, we have applied the following techniques\footnote{The results with shrunk covariance matrices are available in the online appendix D.2, where we also report the results obtained using other reconciliation approaches.} (see \autoref{tab:notation}): ct$(shr_{cs}, bu_{te})$, ct$(wls_{cs}, bu_{te})$, oct$_o(wlsv)$, oct$_o(bdshr)$, and oct$_{oh}(hshr)$.

The accuracy of the probabilistic forecasts is evaluated using the Continuous Ranked Probability Score (CRPS, \citealp{matheson1976ms, gneiting2014}), 
which is an index that considers the single series and provides us a marginal evaluation of the approaches. In addition, we employ the Energy Score (ES, \citealp{gneiting2014}), that is the CRPS extension to the multivariate case, to evaluate the forecasting accuracy for the whole system \citep{panagiotelis2023, wickramasuriya2021b}.
In particular, we consider the geometric mean of the relative CRPS \citep{fleming1986}, and the relative ES:
\begin{equation}\label{eq:skill}
	\operatorname{\overline{RelCRPS}}_{j,s}^{[k]} = \left(\prod_{i = 1}^n \frac{CRPS^{[k]}_{i, j, s}}{CRPS^{[k]}_{i, 0, 0}}\right)^{\frac{1}{n}} \qquad \mathrm{and} \qquad \operatorname{RelES}_{j,s}^{[k]} = \frac{ES^{[k]}_{j, s}}{ES^{[k]}_{0, 0}},
\end{equation}
where $j$ denotes the reconciliation approach and $s$ indicates the approach used to simulate the base forecasts. As a reference approach ($s=0$ and $j=0$), we consider the base forecasts produced by the Bootstrap approach. If we consider all the temporal aggregation orders (i.e. $\forall k \in \mathcal{K}$), the overall accuracy indices are given by, respectively,
\begin{equation}\label{eq:skill_all}
	\operatorname{\overline{RelCRPS}}_{j,s} = \left(\prod_{\substack{i = 1, \dots, n \\ k \in \mathcal{K}}}\frac{CRPS^{[k]}_{i, j, s}}{CRPS^{[k]}_{i, 0, 0}}\right)^{\frac{1}{n(k^\ast+m)}}\mbox{and } \operatorname{\overline{RelES}}_{j,s}= \left(\prod_{k \in \mathcal{K}}\frac{ES^{[k]}_{j, s}}{ES^{[k]}_{0, 0}}\right)^{\frac{1}{(k^\ast+m)}}.
\end{equation}

\subsection{Results}\label{ssec:ausresults}
\begin{table}[!t]
	\centering
	\begingroup
	\spacingset{1}
	\fontsize{9}{11}\selectfont
	
\begin{tabular}[t]{l|>{}cccc>{}c|ccccc}
\toprule
\multicolumn{1}{c}{\textbf{}} & \multicolumn{10}{c}{\textbf{Base forecasts' sample approach}} \\
\cmidrule(l{0pt}r{0pt}){2-11}
\multicolumn{1}{c}{\makecell[c]{\bfseries Reconciliation\\\bfseries approach}} & \multicolumn{1}{c}{ctjb} & \multicolumn{4}{c}{\makecell[c]{Gaussian approach\textsuperscript{*}}} & \multicolumn{1}{c}{ctjb} & \multicolumn{4}{c}{\makecell[c]{Gaussian approach\textsuperscript{*}}} \\
\multicolumn{1}{c}{} &  & G$_{h}$ & H$_{h}$ & G$_{oh}$ & \multicolumn{1}{c}{H$_{oh}$} &  & G$_{h}$ & H$_{h}$ & G$_{oh}$ & \multicolumn{1}{c}{H$_{oh}$}\\
\midrule
\addlinespace[0.3em]
\multicolumn{1}{c}{} & \multicolumn{10}{c}{\textit{$\overline{RelCRPS}$}}\\ \multicolumn{1}{c}{} & \multicolumn{5}{c}{\textbf{$\forall k \in \{4,2,1\}$}} & \multicolumn{5}{c}{\textbf{$k = 1$}}\\
base & \textcolor{black}{1.000} & \textcolor{black}{0.979} & \textcolor{black}{0.995} & \textcolor{black}{0.968} & \textcolor{black}{0.976} & \textcolor{black}{1.000} & \textcolor{black}{0.988} & \textcolor{black}{0.988} & \textcolor{black}{0.971} & \textcolor{black}{0.971}\\
ct$(shr_{cs}, bu_{te})$ & \textcolor{black}{0.937} & \textcolor{black}{0.956} & \textcolor{black}{0.956} & \textcolor{black}{0.976} & \textcolor{black}{0.976} & \textcolor{black}{0.992} & \textcolor{red}{1.008} & \textcolor{red}{1.008} & \textcolor{red}{1.029} & \textcolor{red}{1.029}\\
ct$(wls_{cs}, bu_{te})$ & \textcolor{black}{0.930} & \textcolor{black}{0.917} & \textcolor{black}{0.917} & \textcolor{black}{0.898} & \textcolor{black}{0.898} & \textcolor{black}{0.986} & \textcolor{black}{0.974} & \textcolor{black}{0.975} & \textcolor{black}{0.956} & \textcolor{black}{0.956}\\
oct$_o(wlsv)$ & \textcolor{black}{\textbf{0.926}} & \textcolor{black}{\textbf{0.911}} & \textcolor{black}{\textbf{0.912}} & \textcolor{black}{\textbf{0.896}} & \textcolor{blue}{\textbf{0.895}} & \textcolor{black}{\textbf{0.984}} & \textcolor{black}{\textbf{0.971}} & \textcolor{black}{\textbf{0.970}} & \textcolor{black}{\textbf{0.954}} & \textcolor{blue}{\textbf{0.954}}\\
oct$_o(bdshr)$ & \textcolor{black}{0.978} & \textcolor{black}{0.964} & \textcolor{black}{0.946} & \textcolor{black}{0.952} & \textcolor{black}{0.930} & \textcolor{red}{1.034} & \textcolor{red}{1.016} & \textcolor{red}{1.003} & \textcolor{red}{1.005} & \textcolor{black}{0.989}\\
oct$_{oh}(hshr)$ & \textcolor{red}{1.006} & \textcolor{black}{0.983} & \textcolor{red}{1.009} & \textcolor{black}{0.974} & \textcolor{red}{1.001} & \textcolor{red}{1.068} & \textcolor{red}{1.046} & \textcolor{red}{1.059} & \textcolor{red}{1.034} & \textcolor{red}{1.061}\\
\addlinespace[0.3em]
\multicolumn{1}{c}{} & \multicolumn{5}{c}{\textbf{$k = 2$}} & \multicolumn{5}{c}{\textbf{$k = 4$}}\\
base & \textcolor{black}{1.000} & \textcolor{black}{0.984} & \textcolor{black}{0.993} & \textcolor{black}{0.968} & \textcolor{black}{0.976} & \textcolor{black}{1.000} & \textcolor{black}{0.966} & \textcolor{red}{1.004} & \textcolor{black}{0.964} & \textcolor{black}{0.981}\\
ct$(shr_{cs}, bu_{te})$ & \textcolor{black}{0.949} & \textcolor{black}{0.966} & \textcolor{black}{0.966} & \textcolor{black}{0.987} & \textcolor{black}{0.987} & \textcolor{black}{0.874} & \textcolor{black}{0.896} & \textcolor{black}{0.896} & \textcolor{black}{0.914} & \textcolor{black}{0.914}\\
ct$(wls_{cs}, bu_{te})$ & \textcolor{black}{0.942} & \textcolor{black}{0.928} & \textcolor{black}{0.928} & \textcolor{black}{0.909} & \textcolor{black}{0.909} & \textcolor{black}{0.866} & \textcolor{black}{0.853} & \textcolor{black}{0.853} & \textcolor{black}{0.834} & \textcolor{black}{0.834}\\
oct$_o(wlsv)$ & \textcolor{black}{\textbf{0.938}} & \textcolor{black}{\textbf{0.921}} & \textcolor{black}{\textbf{0.923}} & \textcolor{black}{\textbf{0.907}} & \textcolor{blue}{\textbf{0.906}} & \textcolor{black}{\textbf{0.860}} & \textcolor{black}{\textbf{0.847}} & \textcolor{black}{\textbf{0.848}} & \textcolor{black}{\textbf{0.832}} & \textcolor{blue}{\textbf{0.830}}\\
oct$_o(bdshr)$ & \textcolor{black}{0.991} & \textcolor{black}{0.974} & \textcolor{black}{0.957} & \textcolor{black}{0.964} & \textcolor{black}{0.942} & \textcolor{black}{0.914} & \textcolor{black}{0.905} & \textcolor{black}{0.883} & \textcolor{black}{0.892} & \textcolor{black}{0.865}\\
oct$_{oh}(hshr)$ & \textcolor{red}{1.021} & \textcolor{black}{0.996} & \textcolor{red}{1.021} & \textcolor{black}{0.987} & \textcolor{red}{1.016} & \textcolor{black}{0.934} & \textcolor{black}{0.912} & \textcolor{black}{0.951} & \textcolor{black}{0.904} & \textcolor{black}{0.931}\\
\addlinespace[0.3em]
\multicolumn{1}{c}{} & \multicolumn{10}{c}{\textit{ES ratio indices}}\\ \multicolumn{1}{c}{} & \multicolumn{5}{c}{\textbf{$\forall k \in \{4,2,1\}$}} & \multicolumn{5}{c}{\textbf{$k = 1$}}\\
base & \textcolor{black}{1.000} & \textcolor{black}{0.970} & \textcolor{black}{0.988} & \textcolor{black}{0.960} & \textcolor{black}{0.970} & \textcolor{black}{1.000} & \textcolor{black}{0.977} & \textcolor{black}{0.977} & \textcolor{black}{0.965} & \textcolor{black}{0.965}\\
ct$(shr_{cs}, bu_{te})$ & \textcolor{black}{0.897} & \textcolor{black}{0.944} & \textcolor{black}{0.944} & \textcolor{black}{0.973} & \textcolor{black}{0.973} & \textcolor{black}{0.964} & \textcolor{red}{1.001} & \textcolor{red}{1.001} & \textcolor{red}{1.033} & \textcolor{red}{1.033}\\
ct$(wls_{cs}, bu_{te})$ & \textcolor{black}{\textbf{0.886}} & \textcolor{black}{0.880} & \textcolor{black}{\textbf{0.880}} & \textcolor{blue}{\textbf{0.860}} & \textcolor{black}{\textbf{0.860}} & \textcolor{black}{\textbf{0.954}} & \textcolor{black}{\textbf{0.944}} & \textcolor{black}{\textbf{0.945}} & \textcolor{blue}{\textbf{0.928}} & \textcolor{black}{\textbf{0.928}}\\
oct$_o(wlsv)$ & \textcolor{black}{0.891} & \textcolor{black}{\textbf{0.879}} & \textcolor{black}{0.881} & \textcolor{black}{0.864} & \textcolor{black}{0.864} & \textcolor{black}{0.958} & \textcolor{black}{0.945} & \textcolor{black}{0.945} & \textcolor{black}{0.931} & \textcolor{black}{0.931}\\
oct$_o(bdshr)$ & \textcolor{black}{0.940} & \textcolor{black}{0.928} & \textcolor{black}{0.910} & \textcolor{black}{0.918} & \textcolor{black}{0.895} & \textcolor{red}{1.004} & \textcolor{black}{0.986} & \textcolor{black}{0.971} & \textcolor{black}{0.980} & \textcolor{black}{0.961}\\
oct$_{oh}(hshr)$ & \textcolor{black}{0.986} & \textcolor{black}{0.968} & \textcolor{black}{0.999} & \textcolor{black}{0.959} & \textcolor{black}{0.992} & \textcolor{red}{1.053} & \textcolor{red}{1.034} & \textcolor{red}{1.049} & \textcolor{red}{1.024} & \textcolor{red}{1.055}\\
\addlinespace[0.3em]
\multicolumn{1}{c}{} & \multicolumn{5}{c}{\textbf{$k = 2$}} & \multicolumn{5}{c}{\textbf{$k = 4$}}\\
base & \textcolor{black}{1.000} & \textcolor{black}{0.972} & \textcolor{black}{0.985} & \textcolor{black}{0.959} & \textcolor{black}{0.969} & \textcolor{black}{1.000} & \textcolor{black}{0.959} & \textcolor{red}{1.000} & \textcolor{black}{0.957} & \textcolor{black}{0.976}\\
ct$(shr_{cs}, bu_{te})$ & \textcolor{black}{0.915} & \textcolor{black}{0.961} & \textcolor{black}{0.960} & \textcolor{black}{0.991} & \textcolor{black}{0.991} & \textcolor{black}{0.818} & \textcolor{black}{0.874} & \textcolor{black}{0.874} & \textcolor{black}{0.899} & \textcolor{black}{0.900}\\
ct$(wls_{cs}, bu_{te})$ & \textcolor{black}{\textbf{0.904}} & \textcolor{black}{0.896} & \textcolor{black}{\textbf{0.896}} & \textcolor{blue}{\textbf{0.877}} & \textcolor{black}{\textbf{0.877}} & \textcolor{black}{\textbf{0.807}} & \textcolor{black}{0.805} & \textcolor{black}{\textbf{0.805}} & \textcolor{blue}{\textbf{0.782}} & \textcolor{black}{\textbf{0.783}}\\
oct$_o(wlsv)$ & \textcolor{black}{0.908} & \textcolor{black}{\textbf{0.895}} & \textcolor{black}{0.898} & \textcolor{black}{0.881} & \textcolor{black}{0.882} & \textcolor{black}{0.812} & \textcolor{black}{\textbf{0.802}} & \textcolor{black}{0.806} & \textcolor{black}{0.786} & \textcolor{black}{0.786}\\
oct$_o(bdshr)$ & \textcolor{black}{0.960} & \textcolor{black}{0.947} & \textcolor{black}{0.929} & \textcolor{black}{0.938} & \textcolor{black}{0.915} & \textcolor{black}{0.860} & \textcolor{black}{0.856} & \textcolor{black}{0.836} & \textcolor{black}{0.841} & \textcolor{black}{0.816}\\
oct$_{oh}(hshr)$ & \textcolor{red}{1.007} & \textcolor{black}{0.988} & \textcolor{red}{1.017} & \textcolor{black}{0.979} & \textcolor{red}{1.014} & \textcolor{black}{0.904} & \textcolor{black}{0.888} & \textcolor{black}{0.934} & \textcolor{black}{0.881} & \textcolor{black}{0.913}\\
\bottomrule
\multicolumn{11}{l}{\rule{0pt}{1em}\rule{0pt}{1.75em}\makecell[l]{$^\ast$The Gaussian method employs a sample covariance matrix:\\G$_{h}$ and H$_{h}$ use multi-step residuals and G$_{oh}$ and H$_{oh}$ use overlapping and multi-step residuals.}}\\
\end{tabular}

	\endgroup
	\caption{$\overline{RelCRPS}$ and ES ratio indices defined in \eqref{eq:skill} and \eqref{eq:skill_all} for the Australian QNA dataset. Approaches performing worse than the benchmark (bootstrap base forecasts, ctjb) are highlighted in red, the best for each column is marked in bold, and the overall lowest value is highlighted in blue. The reconciliation approaches are described in \autoref{tab:notation}.}
	\label{tab:ausscore}
	\vspace*{-0.5\baselineskip}
\end{table}

Forecasting accuracy indices based on CRPS and ES are presented in \autoref{tab:ausscore}. As a benchmark approach, we use the base forecasts calculated using the bootstrap method. For base forecasts, we find that using a parametric approach with the normal distribution performs better than the non-parametric bootstrap approach. This is likely due to the limited number of residuals available for bootstrapping, which does not allow for sufficient exploration of the data. Directly specifying diagonal covariance matrices seems to be more effective than shrinking to a target covariance matrix. Among all the procedures, ct$(wls_{cs},bu_{te})$ and oct$_o(wlsv)$ show the greatest relative gains. In contrast, $oct_{oh}(hshr)$ does not show much improvement. Furthermore, the greatest improvements are observed for higher temporal aggregation levels.

\begin{figure}[t]
	\centering
	\includegraphics[width = 0.45\linewidth]{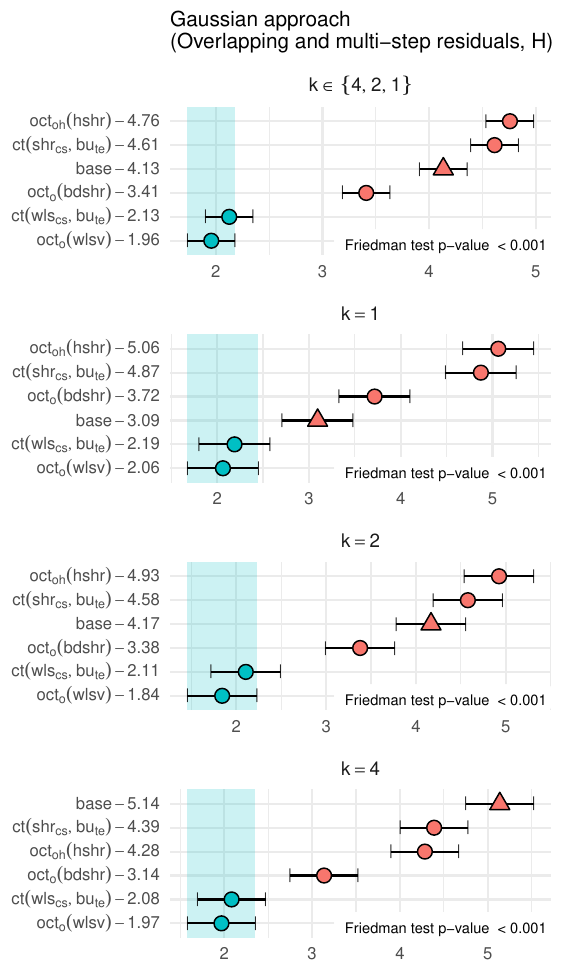}
	\includegraphics[width = 0.45\linewidth]{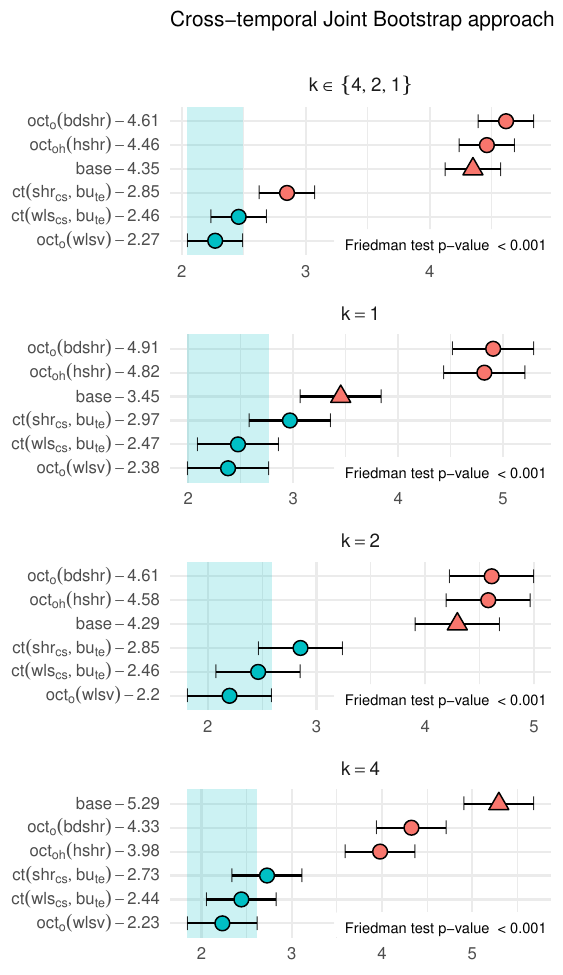}
	\caption{MCB Nemenyi test for the Australian QNA dataset using CRPS at different temporal aggregation levels for the Gaussian (using overlapping and multi-step residuals, H) and the non-parametric bootstrap approaches. In each panel, the Friedman test p-value is reported in the lower right corner. The mean rank of each approach is shown to the right of its name. Statistically significant differences in performance are indicated if the intervals of two forecast reconciliation procedures do not overlap. Thus, approaches that do not overlap with the blue interval are considered significantly worse than the best, and vice-versa.}
	\label{fig:mcb}
\end{figure}

We utilize the non-parametric Friedman test and the post hoc “Multiple Comparison with the Best" (MCB) Nemenyi test \citep{koning2005, kourentzes2019, makridakis2022, tsutils} to determine if the forecasting performances of the different techniques are significantly different from one another.
\autoref{fig:mcb} presents the MCB using the CRPS. The probabilistic forecasts from ct$(wls_{cs},bu_{te})$ and oct$_o(wlsv)$ are significantly better than the base forecasts at any level of aggregation. Unlike the application on the Australian Tourism Demand (see \autoref{sec:vn525}), in this case one of the partly bottom-up approaches is not significantly worse than the most performing optimal approach.

Overall, we find that using overlapping residuals almost always leads to a greater improvement in terms of both ES and CRPS. 
Forecasts at the most aggregated level (year) seem to benefit the most from reconciliation, and using one-step overlapping residuals appears to be sufficient to improve forecasts if the generation of the base forecasts sample paths takes into account the multi-step structure.

\section{Forecasting Australian Tourism Demand}\label{sec:vn525}

The Australian Tourism Demand dataset \citep{wickramasuriya2019} measures the number of nights Australians spent away from home. It includes 228 monthly observations of Visitor Nights (VN) from January 1998 to December 2016, and has a cross-sectional grouped structure based on a geographic hierarchy crossed by purpose of travel. The geographic hierarchy comprises seven states, 27 zones, and 76 regions, for a total of 111 nested geographic divisions. Six of these zones are each formed by a single region, resulting in a total of 105 unique nodes in the hierarchy. The purpose of travel comprises four categories: holiday, visiting friends and relatives, business, and other.
To avoid redundancies \citep{difonzo2022a}, 24 nodes are not considered, resulting in an unbalanced hierarchy of 525 unique nodes instead of the theoretical 555 with duplicated nodes.
The dataset includes the 304 bottom series, which are aggregated into 221 upper time series. \autoref{tab:nseries} omits duplicated entries and updates the information in Table 7 from \cite{wickramasuriya2019}. This data can be temporally aggregated into 2, 3, 4, 6, or 12 months ($\mathcal{K} = \{12,4,3,2,1\}$).

\begin{table}[b]
	\spacingset{1.1}
	\setlength{\tabcolsep}{10pt}
	\centering
	\begin{tabular}{c|cc|c}
		\toprule
		& \multicolumn{3}{c}{\textbf{Number of series}}\\
		& \textbf{GD} & \textbf{PT} & \textbf{Tot}. \\
		\midrule
		Australia & 1 & 4 & 5 \\
		States & 7 & 28 & 35 \\
		Zones$^*$ & 21 & 84 & 105 \\
		Regions & 76 & 304 & 380 \\
		\bottomrule
		\textbf{Total} & \textbf{105}                                  & \textbf{420}   & \textbf{525} \\
		\bottomrule
		\addlinespace[0.3em]
	\end{tabular}\\
	{\footnotesize \textbf{*} 6 Zones with only one Region are included in Regions. GD: Geographic Division; PT: Purpose of Travel.}\\[0.1cm]
	\caption{\label{tab:nseries} Grouped time series for the Australian Tourism Demand dataset. }
	\vspace*{-0.5\baselineskip}
\end{table}

We perform a rolling forecast experiment with an expanding window. The process begins by using the first 10 years, from January 1998 to December 2008, to generate forecasts for the entire following year (2009). Then, the training set is increased by one month. This process is repeated until the last training set is used (January 1998 to December 2015) with a total of 85 different test sets. For the temporal aggregation dimension we aggregate the monthly data up to annual data. We obtain $12$-step, $6$-step, $4$-step, $3$-step, $2$-step and $1$-step ahead base forecasts respectively from the monthly data and the aggregation over 2, 3, 4, 6, and 12 months. ETS models selected by minimizing the AICc criterion \citep{Rforecast} 
are fitted to the log-transformed data, with the resulting base forecasts being back-transformed to produce non-negative forecasts \citep{wickramasuriya2020}.

The (1000) base forecast samples are obtained using the Gaussian approach with sample\footnote{The results with shrunk covariance matrices are available in the online appendix E.2, where we also report the results obtained using other reconciliation approaches.} covariance matrices (\autoref{sec:shrtech}) using multi-step residuals\footnote{We do not include overlapping, as we are unable to correctly determine the residuals for the overlapping series using ETS models (see \autoref{ssec:over_res}).} and the bootstrap approach (\autoref{ssec:boot}). For reconciliation, 6 different approaches have been adopted (see \autoref{tab:notation}): ct$(shr_{cs}, bu_{te})$, oct$(struc)$, oct$(wlsv)$, oct$(bdshr)$, oct$_h(bshr)$, and oct$_h(hshr)$.

Negative forecasts may be produced during the reconciliation phase \citep{wickramasuriya2020, difonzo2022a, difonzo2023a} thus generating unreasonable values (e.g., a negative forecast for tourism demand makes no sense). To overcome this limitation, we applied the simple heuristic proposed by \cite{difonzo2022b, difonzo2023a}. Following Theorem \ref{thm:rs}, we are thus able to obtain reconciled samples respecting non-negativity constraints starting from an incoherent sample simulated from a Gaussian distribution. Finally, to evaluate the performance, we employ the Continuous Ranked Probability Score (CRPS), the Energy Score (ES), and the “Multiple Comparison with the Best" (MCB) Nemenyi test, introduced in Sections \ref{sec:ausgdp} and \ref{ssec:ausresults}.

\subsection{Results}

\begin{table}[!tb]
	\centering
	\begingroup
	\spacingset{1}
	\fontsize{9}{10}\selectfont
	
\begin{tabular}[t]{l|>{}cccc>{}c|ccccc}
\toprule
\multicolumn{1}{c}{\textbf{}} & \multicolumn{10}{c}{\textbf{Base forecasts' sample approach}} \\
\cmidrule(l{0pt}r{0pt}){2-11}
\multicolumn{1}{c}{\makecell[c]{\bfseries Reconciliation\\\bfseries approach}} & \multicolumn{1}{c}{ctjb} & \multicolumn{4}{c}{\makecell[c]{Gaussian approach\textsuperscript{*}}} & \multicolumn{1}{c}{ctjb} & \multicolumn{4}{c}{\makecell[c]{Gaussian approach\textsuperscript{*}}} \\
\multicolumn{1}{c}{} &  & G & B & H & \multicolumn{1}{c}{HB} &  & G & B & H & HB\\
\midrule
\addlinespace[0.3em]
\multicolumn{1}{c}{} & \multicolumn{10}{c}{\textit{$\overline{RelCRPS}$}}\\ \multicolumn{1}{c}{} & \multicolumn{5}{c}{\textbf{$\forall k \in \{12,6,4,3,2,1\}$}} & \multicolumn{5}{c}{\textbf{$k = 1$}}\\
base & \textcolor{black}{1.000} & \textcolor{black}{0.971} & \textcolor{black}{0.971} & \textcolor{black}{0.973} & \textcolor{black}{0.973} & \textcolor{black}{1.000} & \textcolor{black}{0.972} & \textcolor{black}{0.972} & \textcolor{black}{0.972} & \textcolor{black}{0.972}\\
ct$(shr_{cs}, bu_{te})$ & \textcolor{red}{1.057} & \textcolor{black}{0.974} & \textcolor{black}{0.969} & \textcolor{black}{0.974} & \textcolor{black}{0.969} & \textcolor{black}{0.976} & \textcolor{black}{0.963} & \textcolor{black}{0.962} & \textcolor{black}{0.963} & \textcolor{black}{0.962}\\
oct$(struc)$ & \textcolor{black}{0.982} & \textcolor{black}{0.962} & \textcolor{black}{0.961} & \textcolor{black}{0.961} & \textcolor{black}{0.959} & \textcolor{black}{0.970} & \textcolor{black}{0.963} & \textcolor{black}{0.963} & \textcolor{black}{0.963} & \textcolor{black}{0.963}\\
oct$(wlsv)$ & \textcolor{black}{0.987} & \textcolor{black}{0.959} & \textcolor{black}{0.959} & \textcolor{black}{0.958} & \textcolor{black}{0.957} & \textcolor{black}{0.952} & \textcolor{black}{0.957} & \textcolor{black}{0.957} & \textcolor{black}{0.957} & \textcolor{black}{0.957}\\
oct$(bdshr)$ & \textcolor{black}{0.975} & \textcolor{black}{\textbf{0.956}} & \textcolor{black}{\textbf{0.953}} & \textcolor{black}{\textbf{0.952}} & \textcolor{blue}{\textbf{0.951}} & \textcolor{blue}{\textbf{0.949}} & \textcolor{black}{\textbf{0.955}} & \textcolor{black}{\textbf{0.953}} & \textcolor{black}{\textbf{0.954}} & \textcolor{black}{\textbf{0.954}}\\
oct$_h(bshr)$ & \textcolor{black}{0.994} & \textcolor{red}{1.018} & \textcolor{red}{1.020} & \textcolor{red}{1.016} & \textcolor{red}{1.019} & \textcolor{black}{0.988} & \textcolor{red}{1.007} & \textcolor{red}{1.013} & \textcolor{red}{1.006} & \textcolor{red}{1.012}\\
oct$_h(hshr)$ & \textcolor{black}{\textbf{0.969}} & \textcolor{black}{0.993} & \textcolor{black}{0.993} & \textcolor{black}{0.990} & \textcolor{black}{0.991} & \textcolor{black}{0.953} & \textcolor{black}{0.977} & \textcolor{black}{0.977} & \textcolor{black}{0.979} & \textcolor{black}{0.979}\\
\addlinespace[0.3em]
\multicolumn{1}{c}{} & \multicolumn{5}{c}{\textbf{$k = 3$}} & \multicolumn{5}{c}{\textbf{$k = 12$}}\\
base & \textcolor{black}{1.000} & \textcolor{black}{0.971} & \textcolor{black}{0.971} & \textcolor{black}{0.972} & \textcolor{black}{0.973} & \textcolor{black}{1.000} & \textcolor{black}{0.968} & \textcolor{black}{0.967} & \textcolor{black}{0.969} & \textcolor{black}{0.969}\\
ct$(shr_{cs}, bu_{te})$ & \textcolor{red}{1.041} & \textcolor{black}{0.977} & \textcolor{black}{0.974} & \textcolor{black}{0.977} & \textcolor{black}{0.974} & \textcolor{red}{1.163} & \textcolor{black}{0.977} & \textcolor{black}{0.965} & \textcolor{black}{0.977} & \textcolor{black}{0.965}\\
oct$(struc)$ & \textcolor{black}{0.986} & \textcolor{black}{0.967} & \textcolor{black}{0.966} & \textcolor{black}{0.966} & \textcolor{black}{0.965} & \textcolor{black}{0.982} & \textcolor{black}{0.951} & \textcolor{black}{0.949} & \textcolor{black}{0.947} & \textcolor{black}{0.943}\\
oct$(wlsv)$ & \textcolor{black}{0.983} & \textcolor{black}{0.963} & \textcolor{black}{0.962} & \textcolor{black}{0.962} & \textcolor{black}{0.962} & \textcolor{red}{1.025} & \textcolor{black}{0.954} & \textcolor{black}{0.953} & \textcolor{black}{0.949} & \textcolor{black}{0.947}\\
oct$(bdshr)$ & \textcolor{black}{0.972} & \textcolor{black}{\textbf{0.960}} & \textcolor{black}{\textbf{0.958}} & \textcolor{black}{\textbf{0.957}} & \textcolor{blue}{\textbf{0.957}} & \textcolor{red}{1.002} & \textcolor{black}{\textbf{0.950}} & \textcolor{black}{\textbf{0.944}} & \textcolor{black}{\textbf{0.939}} & \textcolor{blue}{\textbf{0.935}}\\
oct$_h(bshr)$ & \textcolor{black}{0.999} & \textcolor{red}{1.021} & \textcolor{red}{1.022} & \textcolor{red}{1.018} & \textcolor{red}{1.022} & \textcolor{black}{0.987} & \textcolor{red}{1.024} & \textcolor{red}{1.021} & \textcolor{red}{1.021} & \textcolor{red}{1.019}\\
oct$_h(hshr)$ & \textcolor{black}{\textbf{0.971}} & \textcolor{black}{0.994} & \textcolor{black}{0.994} & \textcolor{black}{0.992} & \textcolor{black}{0.993} & \textcolor{black}{\textbf{0.978}} & \textcolor{red}{1.003} & \textcolor{red}{1.005} & \textcolor{black}{0.996} & \textcolor{black}{0.997}\\
\addlinespace[0.3em]
\multicolumn{1}{c}{} & \multicolumn{10}{c}{\textit{ES ratio indices}}\\ \multicolumn{1}{c}{} & \multicolumn{5}{c}{\textbf{$\forall k \in \{12,6,4,3,2,1\}$}} & \multicolumn{5}{c}{\textbf{$k = 1$}}\\
base & \textcolor{black}{1.000} & \textcolor{black}{0.956} & \textcolor{black}{0.955} & \textcolor{black}{0.958} & \textcolor{black}{0.951} & \textcolor{black}{1.000} & \textcolor{black}{0.952} & \textcolor{black}{0.950} & \textcolor{black}{0.952} & \textcolor{black}{0.950}\\
ct$(shr_{cs}, bu_{te})$ & \textcolor{red}{1.243} & \textcolor{black}{0.886} & \textcolor{black}{0.879} & \textcolor{black}{0.886} & \textcolor{black}{0.879} & \textcolor{red}{1.098} & \textcolor{black}{0.929} & \textcolor{black}{0.928} & \textcolor{black}{0.930} & \textcolor{black}{0.927}\\
oct$(struc)$ & \textcolor{red}{1.085} & \textcolor{black}{0.917} & \textcolor{black}{0.915} & \textcolor{black}{0.916} & \textcolor{black}{0.912} & \textcolor{red}{1.027} & \textcolor{black}{0.943} & \textcolor{black}{0.942} & \textcolor{black}{0.943} & \textcolor{black}{0.942}\\
oct$(wlsv)$ & \textcolor{red}{1.132} & \textcolor{black}{0.933} & \textcolor{black}{0.929} & \textcolor{black}{0.931} & \textcolor{black}{0.927} & \textcolor{red}{1.050} & \textcolor{black}{0.951} & \textcolor{black}{0.949} & \textcolor{black}{0.950} & \textcolor{black}{0.949}\\
oct$(bdshr)$ & \textcolor{red}{1.047} & \textcolor{black}{0.904} & \textcolor{black}{0.897} & \textcolor{black}{0.897} & \textcolor{black}{0.891} & \textcolor{red}{1.009} & \textcolor{black}{0.936} & \textcolor{black}{0.933} & \textcolor{black}{0.934} & \textcolor{black}{0.931}\\
oct$_h(bshr)$ & \textcolor{black}{\textbf{0.931}} & \textcolor{black}{\textbf{0.867}} & \textcolor{black}{\textbf{0.866}} & \textcolor{black}{\textbf{0.863}} & \textcolor{blue}{\textbf{0.860}} & \textcolor{black}{\textbf{0.965}} & \textcolor{black}{\textbf{0.927}} & \textcolor{black}{\textbf{0.927}} & \textcolor{black}{\textbf{0.925}} & \textcolor{blue}{\textbf{0.923}}\\
oct$_h(hshr)$ & \textcolor{red}{1.081} & \textcolor{black}{0.935} & \textcolor{black}{0.931} & \textcolor{black}{0.935} & \textcolor{black}{0.927} & \textcolor{red}{1.028} & \textcolor{black}{0.952} & \textcolor{black}{0.951} & \textcolor{black}{0.952} & \textcolor{black}{0.950}\\
\addlinespace[0.3em]
\multicolumn{1}{c}{} & \multicolumn{5}{c}{\textbf{$k = 3$}} & \multicolumn{5}{c}{\textbf{$k = 12$}}\\
base & \textcolor{black}{1.000} & \textcolor{black}{0.961} & \textcolor{black}{0.958} & \textcolor{black}{0.960} & \textcolor{black}{0.955} & \textcolor{black}{1.000} & \textcolor{black}{0.942} & \textcolor{black}{0.947} & \textcolor{black}{0.951} & \textcolor{black}{0.937}\\
ct$(shr_{cs}, bu_{te})$ & \textcolor{red}{1.245} & \textcolor{black}{0.911} & \textcolor{black}{0.904} & \textcolor{black}{0.911} & \textcolor{black}{0.904} & \textcolor{red}{1.326} & \textcolor{black}{0.779} & \textcolor{black}{0.767} & \textcolor{black}{0.777} & \textcolor{black}{0.766}\\
oct$(struc)$ & \textcolor{red}{1.096} & \textcolor{black}{0.939} & \textcolor{black}{0.936} & \textcolor{black}{0.938} & \textcolor{black}{0.933} & \textcolor{red}{1.077} & \textcolor{black}{0.826} & \textcolor{black}{0.822} & \textcolor{black}{0.823} & \textcolor{black}{0.818}\\
oct$(wlsv)$ & \textcolor{red}{1.142} & \textcolor{black}{0.953} & \textcolor{black}{0.949} & \textcolor{black}{0.951} & \textcolor{black}{0.946} & \textcolor{red}{1.149} & \textcolor{black}{0.851} & \textcolor{black}{0.845} & \textcolor{black}{0.847} & \textcolor{black}{0.840}\\
oct$(bdshr)$ & \textcolor{red}{1.060} & \textcolor{black}{0.926} & \textcolor{black}{0.920} & \textcolor{black}{0.921} & \textcolor{black}{0.915} & \textcolor{red}{1.021} & \textcolor{black}{0.808} & \textcolor{black}{0.796} & \textcolor{black}{0.796} & \textcolor{black}{0.787}\\
oct$_h(bshr)$ & \textcolor{black}{\textbf{0.954}} & \textcolor{black}{\textbf{0.895}} & \textcolor{black}{\textbf{0.895}} & \textcolor{black}{\textbf{0.892}} & \textcolor{blue}{\textbf{0.887}} & \textcolor{black}{\textbf{0.833}} & \textcolor{black}{\textbf{0.741}} & \textcolor{black}{\textbf{0.741}} & \textcolor{black}{\textbf{0.737}} & \textcolor{blue}{\textbf{0.735}}\\
oct$_h(hshr)$ & \textcolor{red}{1.093} & \textcolor{black}{0.955} & \textcolor{black}{0.951} & \textcolor{black}{0.956} & \textcolor{black}{0.949} & \textcolor{red}{1.066} & \textcolor{black}{0.851} & \textcolor{black}{0.846} & \textcolor{black}{0.848} & \textcolor{black}{0.838}\\
\bottomrule
\multicolumn{11}{l}{\rule{0pt}{1em}\rule{0pt}{1.75em}\makecell[l]{$^\ast$The Gaussian method employs a sample covariance matrix and includes four techniques\\ (G, B, H, HB) with multi-step residuals..}}\\
\end{tabular}

	\endgroup
	\caption{$\overline{RelCRPS}$ and ES ratio indices defined in \eqref{eq:skill} and \eqref{eq:skill_all} for the Australian Tourism Demand dataset. 
	Approaches performing worse than the benchmark (bootstrap base forecasts, ctjb) are highlighted in red, the best for each column is marked in bold, and the overall lowest value is highlighted in blue. The reconciliation approaches are described in \autoref{tab:notation}.}
	\label{tab:vnscore}
	\vspace*{-0.5\baselineskip}
\end{table}

The CRPS and ES indices are shown, respectively, in \autoref{tab:vnscore} for monthly, quarterly and annual forecasts. These tables are divided by different temporal levels and each column uses a different approach to calculate the base forecasts, referred to as “base". The bootstrap method is used as a benchmark to calculate the accuracy indices.


Base forecasts using a Gaussian approach are better in terms of both CRPS and ES compared to the bootstrap approach (the benchmark). Assumptions of truncated Gaussianity (Gaussian with negative values set to zero) may seem strict, but given the limited number of residuals, it can lead to improved forecasts in terms of CRPS and ES. Bootstrap forecasts suffer from the limited number of available residuals, leading in general to lower forecast accuracy. The Gaussian approach overcomes this limitation and provides better results. Regarding the different covariance matrix estimates for Gaussian base forecasts, there are no big differences. For this reason, using only the high frequency bottom time series can be useful to estimate fewer parameters and reduce the initial high dimensionality.

\begin{figure}[!t]
	\centering
	\includegraphics[width = 0.45\linewidth]{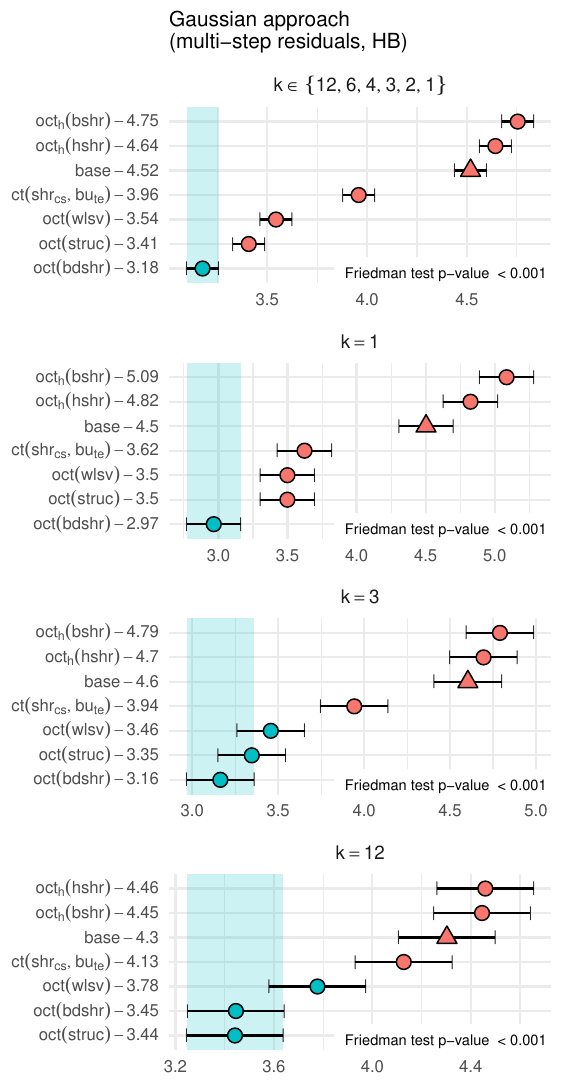}
	\includegraphics[width = 0.45\linewidth]{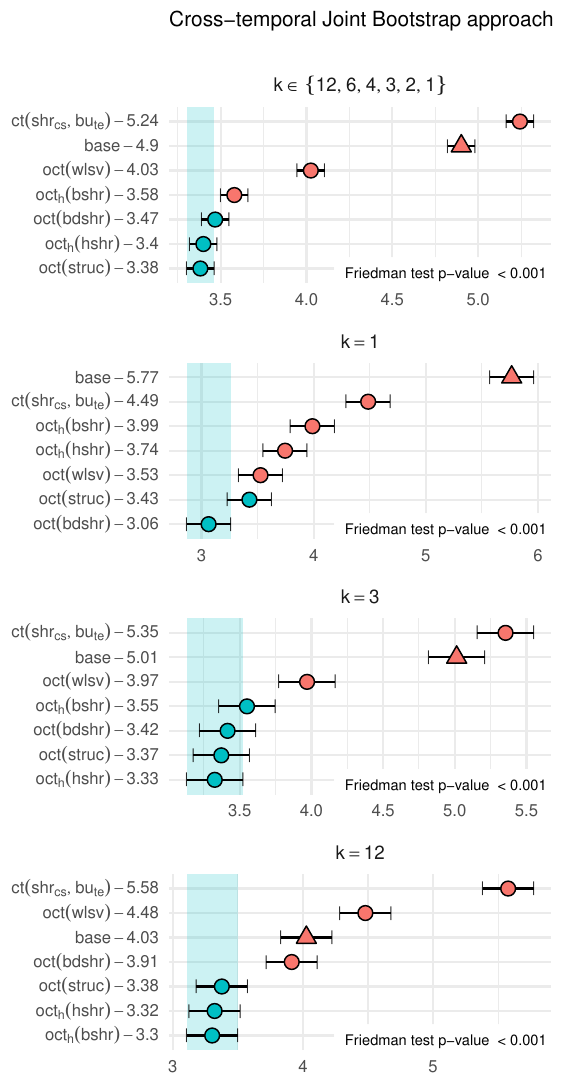}
	\caption{MCB Nemenyi test for the Australian Tourism Demand dataset using CRPS at different temporal aggregation levels for the Gaussian (multi-step residuals, HB) and the non-parametric bootstrap approaches. In each panel, the Friedman test p-value is reported in the lower right corner. The mean rank of each approach is shown to the right of its name. Statistically significant differences in performance are indicated if the intervals of two forecast reconciliation procedures do not overlap. Thus, approaches that do not overlap with the blue interval are considered significantly worse than the best, and vice-versa.}
	\label{fig:vnmcb}
\end{figure}

In the Gaussian case, partly bottom-up techniques like ct$(shr_{cs}, bu_{te})$ lead to better results than the benchmark (bootstrap base forecasts). However, it is not always guaranteed that the improvement is higher than the starting base forecasts (by comparing the value of each column). This is particularly true for higher levels of temporal aggregation. Overall, oct$(bdshr)$ in terms of CRPS is almost always the best. The shrinkage approach oct$_h(hshr)$ performs well in the bootstrap case: it is competitive with oct$(bdshr)$ at lower temporal frequency ($k \in \{2,1\}$) and it is able to improve for $k\ge 3$. In terms of ES, oct$(bdshr)$ is still competitive, although it does not always show the best relative performance, like oct$_h(bshr)$. 
It is also worth noting that oct$(struc)$, which does not make use of residuals, proves to be competitive by consistently improving on the base forecasts in terms of both CRPS and ES.

\autoref{fig:vnmcb} shows the MCB using the CRPS for the Gaussian approach using multi-step residuals (HB) and the non-parametric bootstrap approach. In general, the partly bottom-up procedure improves with respect to base forecasts at monthly level, but optimal cross-temporal procedures are always better. In the bootstrap framework, we can identify a group of three procedures, oct$(bdshr)$, oct$(hshr)$ and oct$(struc)$ that are almost always in the group of the best approaches (denoted by the blue dot). In the Gaussian framework, oct$(wlsv)$, oct$(struc)$, and oct$(bdshr)$ are always significantly better than base forecasts and equivalent in terms of results for temporal aggregation orders greater than 2. For monthly series, oct$(bdshr)$ is always significantly better than all other approaches.

\section{Conclusion}\label{sec:conclusion}

In this paper, we extend the probabilistic reconciliation setting developed by \cite{panagiotelis2023} for the cross-sectional case to the cross-temporal framework. Through appropriate notation, we show how theorems and definitions valid for the cross-sectional case can be reinterpreted and extended. The general notation proposed can help investigate extensions following different probabilistic approaches, such as those in \cite{jeon2019}, \cite{bentaieb2021} and \cite{corani2022}. We propose a Gaussian and a bootstrap approach to simulate the base forecasts able to take into account both cross-sectional and temporal relationships simultaneously, opening the way for further research into  cross-temporal probabilistic forecasting.

Moreover, we analyze the use of residuals, showing that one-step residuals fail to capture the temporal structure, and propose multi-step residuals that can fully capture the cross-temporal relationships. Due to the high-dimensionality of the cross-temporal setting when dealing with covariance matrices, we propose four alternative forms to reduce the number of parameters to be estimated, showing that the overlapping residuals may reduce the high-dimensionality burden by increasing the number of available residuals. These ideas are worth requiring further investigation in future works.

Finally, we perform empirical applications on two datasets commonly used in forecast reconciliation research: Australian GDP from Income and Expenditure sides and Australian Tourism Demand. We find that in both cases optimal cross-temporal reconciliation approaches significantly improve on base forecasts. We also compare these with partly bottom-up techniques that use uni-dimensional reconciliation (either cross-sectional or temporal) and confirm that simultaneously exploiting both dimensions in reconciliation produces better results, especially at higher levels of temporal aggregation.
This is more evident in the Australian Tourism Demand application, where the involved temporal hierarchies are richer, allowing the regression-based forecast reconciliation approaches to capture and exploit more features of the data through the available temporal aggregation levels \citep{kourentzes2014, kourentzes2016, kourentzes2017} compared to partly bottom-up. In these two datasets, oct$(wlsv)$ and oct$(bdshr)$ appear as the two best performing approaches, both in terms of improving forecast accuracy and computational efficiency (see the online appendix), thus corroborating the results of \cite{difonzo2023} for point forecast reconciliation.


In conclusion, cross-temporal forecast reconciliation is an important tool to improve the accuracy of forecasts while simultaneously ensuring their coherency both in space and time. Furthermore, these techniques can also be customized to suit the specific needs of an organization, allowing for the incorporation of relevant domain-specific knowledge (e.g., non negative constraints) and expertise, ensuring that the resulting forecasts are not only accurate but also coherent and more reliable for decision-making purposes.

\appendix
\setcounter{table}{0}
\renewcommand{\thetable}{\Alph{section}.\arabic{table}}

\if1\blind
{
  \phantomsection\addcontentsline{toc}{section}{Acknowledgments}
\section*{Acknowledgments}

\noindent The authors thank Casey Lichtendahl and all the participants to the 2023 IIF Workshop on Forecast Reconciliation in Prato (Italy). Tommaso Di Fonzo and Daniele Girolimetto acknowledge financial support from project PRIN2017 “HiDEA: Advanced Econometrics for High-frequency Data”, 2017RSMPZZ. Rob Hyndman acknowledges the support of the Australian Government through the Australian Research Council Industrial Transformation Training Centre in Optimisation Technologies, Integrated Methodologies, and Applications (OPTIMA), Project ID IC200100009.
} \fi

\phantomsection\addcontentsline{toc}{section}{References}

\bibliographystyle{agsm}
\bibliography{mybibfile.bib}

\end{document}


\def\spacingset#1{\renewcommand{\baselinestretch}{#1}\small\normalsize}
\spacingset{1.1}
  
\maketitleblind

\spacingset{1.3}
\tableofcontents
\newpage
\appendix
\renewcommand{\thetable}{\Alph{section}.\arabic{table}}
\renewcommand{\thefigure}{\Alph{section}.\arabic{figure}}

\section{Cross-sectional, temporal and cross-temporal covariance approximations}\label{app:covapp}

\autoref{tab:cov_app} presents some approximations for the cross-sectional \citep{hyndman2011, hyndman2016, wickramasuriya2019} and the temporal \citep{athanasopoulos2017, nystrup2020} covariance matrices. \cite{difonzo2023} consider the following approximations for the cross-temporal covariance matrix.
\begin{itemize}[nosep, leftmargin=!, labelwidth=\widthof{ oct$(bdsam)$ -}, align=right]
	\item[oct$(ols)$ -] identity: $\Omegavet_{ct} = \Ivet_{n(k^*+m)}$.
	\item[oct$(struc)$ -] structural: $\Omegavet_{ct} = \mathrm{diag}(\Svet_{ct} \mathbf{1}_{mn_b})$.
	\item[oct$(wlsv)$ -] series variance scaling: $\Omegavet_{ct} = \widehat{\Omegavet}_{ct,wlsv}$, a straightforward extension of the series variance scaling matrix presented by \cite{athanasopoulos2017}.
	\item[oct$(bdshr)$ -] block-diagonal shrunk cross-covariance scaling: $\Omegavet_{ct} = \Pvet\widehat{\Wvet}^{BD}_{ct,shr}\Pvet'$, with $\widehat{\Wvet}^{BD}_{ct,shr}$ a block diagonal matrix where each $k-$block ($k = m,k_{p-1},\dots, 1$) is $\Ivet_{M_k} \otimes \widehat{\Wvet}^{[k]}_{shr}$, $\widehat{\Wvet}^{[k]}_{shr}$ is the shrunk estimate of the cross-sectional covariance matrix proposed by \cite{wickramasuriya2019}, and $\Pvet$ is the commutation matrix such that $\Pvet \mathrm{vec}(\Yvet_{\tau}) = \mathrm{vec}(\Yvet_{\tau}')$.
	\item[oct$(shr)$ -] MinT-shr:   $\Omegavet_{ct} = \hat{\lambda}\widehat{\Omegavet}_{ct,D} + (1-\hat{\lambda})\widehat{\Omegavet}_{ct}$,
	where $\hat{\lambda}$ is an estimated shrinkage coefficient (\citealp{ledoit2004a}), $\widehat{\Omegavet}_{ct,D} = \Ivet_{n(k^\ast + m)} \odot \widehat{\Omegavet}_{ct}$ with $\odot$ denoting the Hadamard product, and $\widehat{\Omegavet}_{ct}$ is the covariance matrix of the cross-temporal one-step ahead in-sample forecast errors.
	\item[oct$(sam)$ -] MinT-sam:  $\Omegavet_{ct} = \widehat{\Omegavet}_{ct}$.
\end{itemize}

 \begin{table}[!h]
 	\centering
 	\footnotesize
 	\begin{tabular}{>{\raggedleft\arraybackslash}m{0.15\linewidth}|>{\centering\arraybackslash}m{0.35\linewidth}|>{\centering\arraybackslash}m{0.35\linewidth}}
 		\toprule
 		                & \textbf{Cross-sectional framework}                                                     & \textbf{Temporal framework}                                                                        \\
 		\midrule
 		identity        & cs$(ols)$: $\Wvet = \Ivet_n$                                                           & te$(ols)$: $\Omegavet = \Ivet_{k^\ast + m}$                                                        \\[0.1cm]
 		structural      & cs$(struc)$: $\Wvet = \mathrm{diag}(\Svet_{cs} \mathbf{1}_{nb})$                       & te$(struc)$: $\Omegavet = \mathrm{diag}(\Svet_{te} \mathbf{1}_{m})$                                \\[0.1cm]
 		series variance & cs$(wls)$: $\Wvet = \widehat{\Wvet}_D = \Ivet_n \odot \widehat{\Wvet}$                 & te$(wlsv)$: $\Omegavet = \widehat{\Omegavet}_{wlsv}$                                               \\[0.1cm]
 		MinT-shr        & cs$(shr)$: $\Wvet = \hat{\lambda}\widehat{\Wvet}_D + (1-\hat{\lambda})\widehat{\Wvet}$ & te$(shr)$: $\Omegavet = \hat{\lambda}\widehat{\Omegavet}_D + (1-\hat{\lambda})\widehat{\Omegavet}$ \\[0.1cm]
 		MinT-sam        & cs$(sam)$: $\Wvet = \widehat{\Wvet}$                                                   & te$(sam)$: $\Omegavet = \widehat{\Omegavet}$                                                       \\
 		\bottomrule \addlinespace[0.1cm]
 		\multicolumn{3}{p{0.9\linewidth}}{\footnotesize \textbf{Note:} $\widehat{\Wvet}$ ($\widehat{\Omegavet}$) is the covariance matrix of the cross-sectional (temporal) one-step ahead in-sample forecast errors, $\widehat{\Omegavet}_{wlsv}$ is a diagonal matrix presented by \cite{athanasopoulos2017}, and $\widehat{\Omegavet}_D = \Ivet_{k^\ast + m} \odot \widehat{\Omegavet}$, where $\odot$ denotes the Hadamard product.}
 	\end{tabular}
 	\caption{Approximations for cross-sectional ($\Wvet$) and temporal ($\Omegavet$) covariance matrices.}
 	\label{tab:cov_app}
\end{table}

\clearpage
\section{Alternative forms of the cross-temporal covariance matrix}\label{app:shr}
\setcounter{table}{0} 
\setcounter{figure}{0} 

In this appendix, some derivations of the solutions proposed in Section 4 to obtain an estimator of the cross-temporal covariance matrix are reported.
Starting from the the definition of cross-temporal covariance matrix we obtain the first equivalence in (10). Therefore, we have that
\begin{align*}
	\lambda \widehat{\Omegavet}_{\textit{hf-bts}, D} &+ (1-\lambda) \widehat{\Omegavet}_{\textit{hf-bts}}\\
	&\Downarrow\\
	\widehat{\Omegavet}_{HB} & = \Svet_{ct}\left[\lambda \widehat{\Omegavet}_{\textit{hf-bts}, D} + (1-\lambda) \widehat{\Omegavet}_{\textit{hf-bts}}\right]\Svet_{ct}'                                                                        \\
	                         & = \lambda \Svet_{ct}\widehat{\Omegavet}_{\textit{hf-bts}, D}\Svet_{ct}'+ (1-\lambda) \Svet_{ct}\widehat{\Omegavet}_{\textit{hf-bts}}\Svet_{ct}'.
\end{align*}
The high-frequency time series representation (the second equivalence) can be derived in the following manner:
\begin{align*}
	\widetilde{\Omegavet} & = \Svet_{ct}\Omegavet_{\textit{hf-bts}}\Svet_{ct}'                                                                                                                                                            \\
	          & = \left(\Svet_{cs} \otimes \Svet_{te}\right)\Omegavet_{\textit{hf-bts}}\left(\Svet_{cs} \otimes \Svet_{te}\right)'                                                                                            \\
	          & = \left(\Ivet_n \otimes \Svet_{te}\right)\left(\Svet_{cs} \otimes \Ivet_{m+k^\ast}\right)\Omegavet_{\textit{hf-bts}}\left(\Svet_{cs} \otimes \Ivet_{m+k^\ast}\right)'\left(\Ivet_n \otimes \Svet_{te}\right)' \\
	          & = \left(\Ivet_n \otimes \Svet_{te}\right)\Omegavet_{\textit{hf}}\left(\Ivet_n \otimes \Svet_{te}\right)'
\end{align*}
where $\Omegavet_{\textit{hf}} = \left(\Svet_{cs} \otimes \Ivet_{m+k^\ast}\right)\Omegavet_{\textit{hf-bts}}\left(\Svet_{cs} \otimes \Ivet_{m+k^\ast}\right)'$ and $\Svet_{ct} = \Svet_{cs} \otimes \Svet_{te} = \left(\Ivet_n \otimes \Svet_{te}\right)\left(\Svet_{cs} \otimes \Ivet_{m+k^\ast}\right)$. We can apply the shrinkage estimator as
\begin{align*}
	\lambda \widehat{\Omegavet}_{hf, D} &+ (1-\lambda) \widehat{\Omegavet}_{\textit{hf}}\\
	&\Downarrow\\
	\widehat{\Omegavet}_{H} & = (\Ivet_{n} \otimes \Svet_{te})\left[\lambda \widehat{\Omegavet}_{hf, D} + (1-\lambda) \widehat{\Omegavet}_{\textit{hf}}\right] (\Ivet_{n} \otimes \Svet_{te})'                                                                                                            \\
	                        & = \lambda (\Ivet_{n} \otimes \Svet_{te})\widehat{\Omegavet}_{hf, D}(\Ivet_{n} \otimes \Svet_{te})' + (1-\lambda) (\Ivet_{n} \otimes \Svet_{te})\widehat{\Omegavet}_{\textit{hf}}(\Ivet_{n} \otimes \Svet_{te})'.
\end{align*}
The bottom time series representation (the third equivalence) follows by
\begin{align*}
	\widetilde{\Omegavet} & = \Svet_{ct}\Omegavet_{\textit{hf-bts}}\Svet_{ct}'                                                                                                                                                   \\
	          & = \left(\Svet_{cs} \otimes \Svet_{te}\right)\Omegavet_{\textit{hf-bts}}\left(\Svet_{cs} \otimes \Svet_{te}\right)'                                                                                   \\
	          & = \left(\Svet_{cs} \otimes \Ivet_{m+k^\ast}\right)\left(\Ivet_n \otimes \Svet_{te}\right)\Omegavet_{\textit{hf-bts}}\left(\Ivet_n \otimes \Svet_{te}\right)'\left(\Ivet_n \otimes \Svet_{te}\right)' \\
	          & = \left(\Svet_{cs} \otimes \Ivet_{m+k^\ast}\right)\Omegavet_{bts}\left(\Svet_{cs} \otimes \Ivet_{m+k^\ast}\right)',
\end{align*}
where $\Omegavet_{bts} = \left(\Ivet_n \otimes \Svet_{te}\right)\Omegavet_{\textit{hf-bts}}\left(\Ivet_n \otimes \Svet_{te}\right)'$ and $\Svet_{ct} = \Svet_{cs} \otimes \Svet_{te} = \left(\Svet_{cs} \otimes \Ivet_{m+k^\ast}\right)\left(\Ivet_n \otimes \Svet_{te}\right)$. Finally we have that
\begin{align*}
	\lambda \widehat{\Omegavet}_{bts, D} &+ (1-\lambda) \widehat{\Omegavet}_{bts}\\
	&\Downarrow\\
	\widehat{\Omegavet}_{B} & = \left(\Svet_{cs} \otimes \Ivet_{m+k^\ast}\right)\left[\lambda \widehat{\Omegavet}_{bts, D} + (1-\lambda) \widehat{\Omegavet}_{bts}\right]\left(\Svet_{cs} \otimes \Ivet_{m+k^\ast}\right)'                       \\
	                        & = \lambda \left(\Svet_{cs} \otimes \Ivet_{m+k^\ast}\right)\widehat{\Omegavet}_{bts, D}\left(\Svet_{cs} \otimes \Ivet_{m+k^\ast}\right)' +             \\
	                        & \qquad \qquad (1-\lambda) \left(\Svet_{cs} \otimes \Ivet_{m+k^\ast}\right)\widehat{\Omegavet}_{bts}\left(\Svet_{cs} \otimes \Ivet_{m+k^\ast}\right)'.
\end{align*}

In general, the covariance matrix of the reconciled forecasts is equal to $\Mvet \widehat{\Omegavet} \Mvet'$ where $\Mvet = \Svet_{ct}\Gvet$ is the projection matrix. When using the HB approach, the covariance matrix of the reconciliation and the base forecasts will be identical. Indeed, it can be shown (see \citealp{panagiotelis2021} for more details) that if $\Mvet$ is a projection matrix (6) then $\Mvet\Svet_{ct} = \Svet_{ct}\Gvet\Svet_{ct} = \Svet_{ct}$, and we obtain that
\begin{align*}
	\Mvet \widehat{\Omegavet}_{HB} \Mvet' & = \Mvet\Svet_{ct}\widehat{\Omegavet}_{\textit{hf-bts}, HB}\Svet_{ct}'\Mvet'                      \\
	& = \Svet_{ct}\Gvet\Svet_{ct}\widehat{\Omegavet}_{\textit{hf-bts}, HB}\Svet_{ct}'\Gvet'\Svet_{ct}' \\
	& = \Svet_{ct}\widehat{\Omegavet}_{\textit{hf-bts}, HB}\Svet_{ct}' = \widehat{\Omegavet}_{HB}.
\end{align*}

\autoref{fig:num_param} shows the number of parameters for different values of $m$ and $n$, with $n_b$ fixed to approximately $60\%$ of $n$. The right panel reports the boxplot of the percentage reductions in the number of parameters compared to the global approach.
\autoref{fig:shr_grid} gives some visual insights on the covariance matrices obtainable with $\lambda=0$ and $\lambda=1$, respectively, for a simple cross-temporal hierarchical structure with 3 time series and $\mathcal{K}=\{2,1\}$ (e.g, cross-temporal semi-annual, see the Monte Carlo simulation in \autoref{sec:mcsim}).

\begin{figure}[!t]
	\centering
	\includegraphics[width = \linewidth]{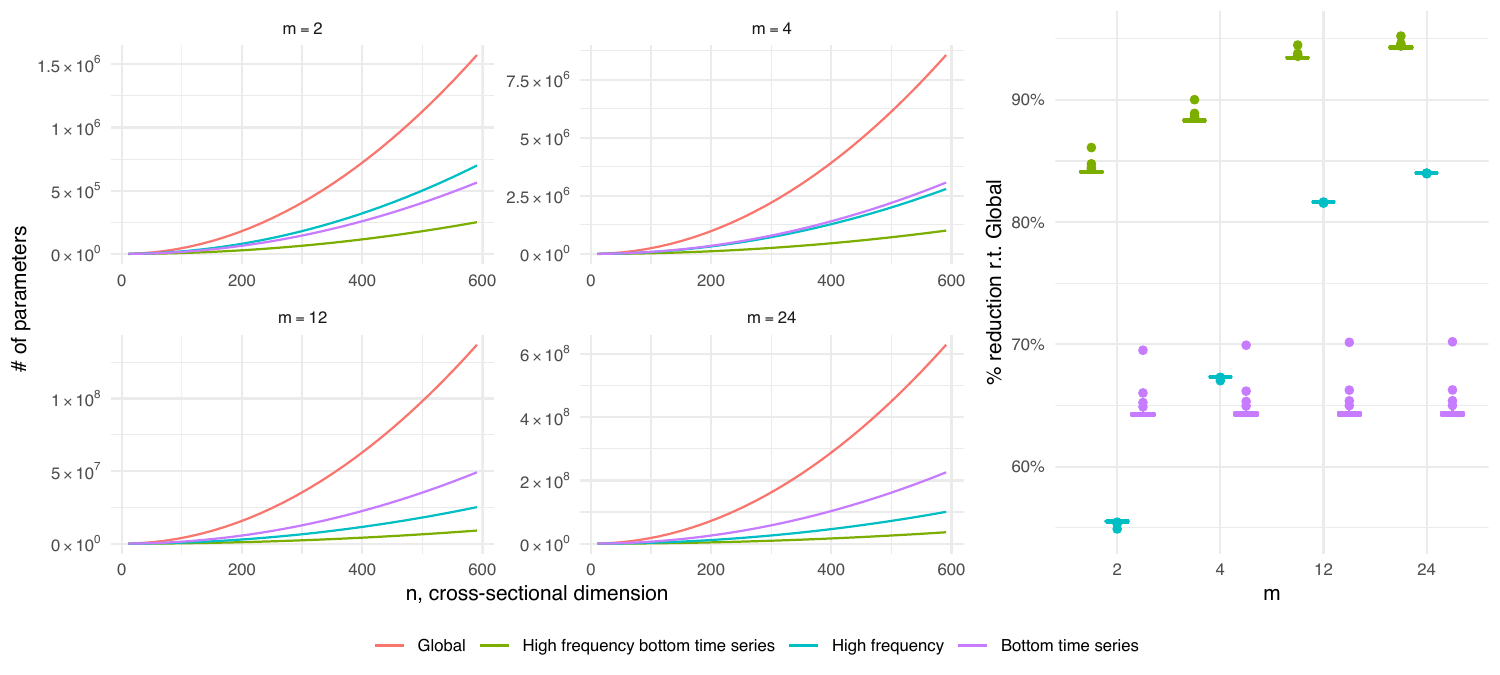}
	\caption{The four graphs on the left represent the number of different parameters in the covariance matrix for the various approaches presented for different values of $m$ and $n$ ($n_b$, the number of bottom time series, is about $60\%$ of the total). On the right, we have the boxplot of the percentage reduction in the number of parameters compared to the global approach.}
	\label{fig:num_param}
\end{figure}

\clearpage
\begin{figure}[!t]
	\centering
	\includegraphics[width = \linewidth]{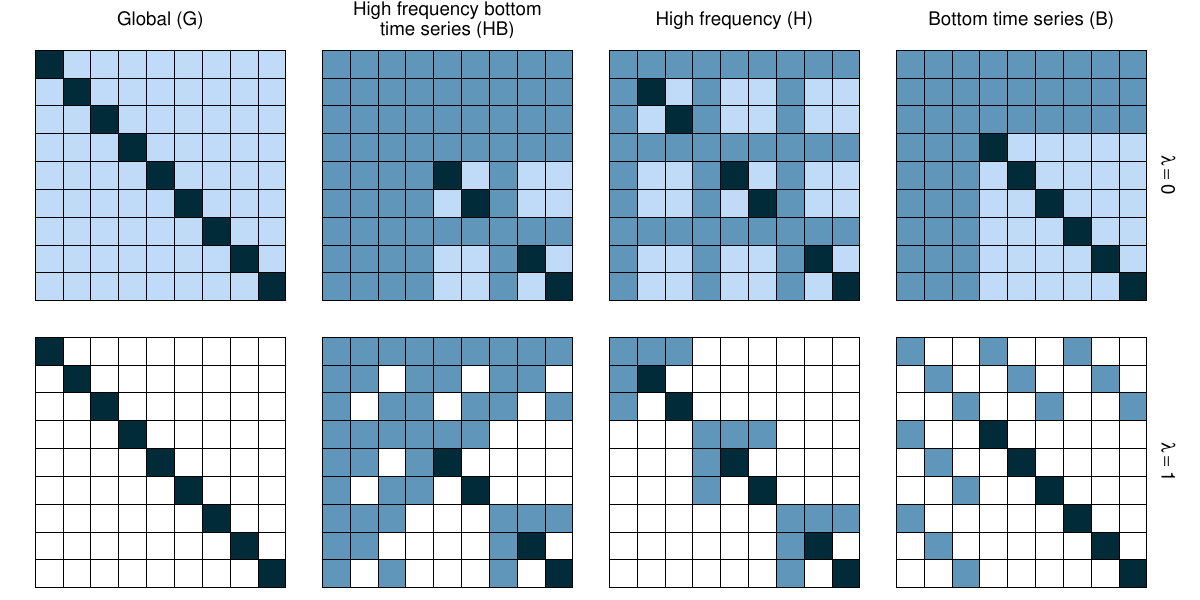}
	\caption{Representation of four types of covariance matrices that can be obtained from the cross-temporal hierarchical structure ($3$ time series and $m = 2$) for two different values of $\lambda\in\{0,1\}$, the shrinkage parameter. The cells that are not modified by shrinkage are colored black, those actively involved in the shrinkage phase are colored light blue, and those derived from and not estimated by the base forecasts errors are colored blue. Additionally, for $\lambda = 1$, the cells corresponding to a zero value are colored white.}
	\label{fig:shr_grid}
\end{figure}

\subsection{Proof Theorem 4.1}
\begin{proof}
Let $\Svet_{ct}^+ = (\Svet_{ct}'\Svet_{ct})^{-1}\Svet_{ct}'$ be the generalized inverse of $\Svet_{ct}$ ($\Svet_{ct}$ has linearly independent columns by construction). Applying (6), we obtain
$$
\widetilde{\bvet}_{ols}^{[1]} = (\Svet_{ct}'\Svet_{ct})^{-1}\Svet_{ct}' \widehat{\xvet}_{h}
$$
and
$$
\widetilde{\bvet}_{hb}^{[1]} = [\Svet_{ct}'(\Svet_{ct}\widehat{\Omega}_{hf-bts}\Svet_{ct}')^{+}\Svet_{ct}]^{-1}\Svet_{ct}' (\Svet_{ct}\widehat{\Omega}_{hf-bts}\Svet_{ct}')^{+}\widehat{\xvet}_{h} .
$$
Since $(\Avet\Bvet)^+ = \Bvet^+\Avet^+$ \citep{greville1966}, then
\begin{equation}\label{eq:part1}\tag{B.1}
\begin{aligned}
	\Svet_{ct}'(\Svet_{ct}\widehat{\Omega}_{hf-bts}\Svet_{ct}')^{+}\Svet_{ct} &= \Svet_{ct}'(\widehat{\Omega}_{hf-bts}^{1/2}\Svet_{ct}')^{+}(\Svet_{ct}\widehat{\Omega}_{hf-bts}^{1/2})^{+}\Svet_{ct} \\
	&= \Svet_{ct}'(\Svet_{ct}')^{+}\widehat{\Omega}_{hf-bts}^{+}\Svet_{ct}^{+}\Svet_{ct} = \widehat{\Omega}_{hf-bts}^{+} = \widehat{\Omega}_{hf-bts}^{-1}
\end{aligned}
\end{equation}
and
\begin{equation}\label{eq:part2}\tag{B.2}
\begin{aligned}
	\Svet_{ct}' (\Svet_{ct}\widehat{\Omega}_{hf-bts}\Svet_{ct}')^{+} &= \Svet_{ct}'(\widehat{\Omega}_{hf-bts}^{1/2}\Svet_{ct}')^{+}(\Svet_{ct}\widehat{\Omega}_{hf-bts}^{1/2})^{+} \\& = \widehat{\Omega}_{hf-bts}^{-1}\Svet_{ct}^+ = \widehat{\Omega}_{hf-bts}^{-1}(\Svet_{ct}'\Svet_{ct})^{-1}\Svet_{ct}'.
\end{aligned}
\end{equation}
Therefore,
\begin{align*}
	\widetilde{\bvet}_{hb}^{[1]} &= [\Svet_{ct}'(\Svet_{ct}\widehat{\Omega}_{hf-bts}\Svet_{ct}')^{+}\Svet_{ct}]^{-1}\Svet_{ct}' (\Svet_{ct}\widehat{\Omega}_{hf-bts}\Svet_{ct}')^{+}\widehat{\xvet}_{h} \\ & \overset{(\ref{eq:part1})}{=} \widehat{\Omega}_{hf-bts}\Svet_{ct}' (\Svet_{ct}\widehat{\Omega}_{hf-bts}\Svet_{ct}')^{+}\widehat{\xvet}_{h}\\
	& \overset{(\ref{eq:part2})}{=} \widehat{\Omega}_{hf-bts}\widehat{\Omega}_{hf-bts}^{-1}(\Svet_{ct}'\Svet_{ct})^{-1}\Svet_{ct}' \widehat{\xvet}_{h} \\
	& = (\Svet_{ct}'\Svet_{ct})^{-1}\Svet_{ct}' \widehat{\xvet}_{h} = \widetilde{\bvet}_{ols}^{[1]} 
\end{align*}	
\end{proof}

\newpage
\section{Monte Carlo simulation}\label{sec:mcsim}

We study the effect of combining cross-sectional and temporal aggregations, using a simple hierarchy that allows us to effectively visualize the quantities involved, such as the covariance matrix. Additionally, the small size and nature of the data generating process make it possible to exactly calculate the true cross-temporal covariance structure, thus providing insights into the nature of the time series data involved in the forecast reconciliation process.

Consider a $2$-level hierarchical structure with three time series (one upper series, $A$, and two bottom series, $B$ and $C$) such that the cross-sectional aggregation matrix is $\Avet_{cs} = \left[ 1 \quad 1 \right]$ ($A = B+C$). The temporal structure we are considering is equivalent to using semi-annual data with $\mathcal{K} = \{2,1\}$ and $m = 2$. The assumed Data-Generating Processes (DPG) for the semi-annual bottom level series are two AR(2) given by
$$
\begin{aligned}
	y_{B,t} &= \phi_{B, 1} y_{B,t-1} + \phi_{B, 2} y_{B,t-2} + \varepsilon_{B, t}\\
	y_{C,t} &= \phi_{C, 1} y_{C,t-1} + \phi_{C, 2} y_{C,t-2} + \varepsilon_{C, t}
\end{aligned}
$$
with parameters\footnote{The $\phivet_B$ and $\phivet_C$ parameters are estimated from the “Lynx" and “Hare" time series contained in the \texttt{pelt} dataset of the \texttt{tsibbledata} package for R \citep{ohara-wild2022}.} $\phivet_B = [\phi_{B,1}\; \phi_{B,2}]' = [1.34\; -0.74]'$ and $\phivet_C  = [\phi_{C,1}\; \phi_{C,2}]' = [0.95\;~-~0.42]'$. The error $\epsvet_t = \left[\varepsilon_{B, t}\quad \varepsilon_{C, t}\right]'$ driving the process is drawn from a multivariate normal distribution with standard deviations simulated from a uniform distribution between 0.5 and 2 and a fixed correlation of -0.8. The cross-sectional error covariance matrix is thus given by
$$
	\Omegavet_{cs} = \begin{bmatrix}
		0.9 & 0   \\
		0   & 1.8
	\end{bmatrix} \begin{bmatrix}
		1    & \rho \\
		\rho & 1
	\end{bmatrix} \begin{bmatrix}
		0.9 & 0   \\
		0   & 1.8
	\end{bmatrix} = \begin{bmatrix}
		\sigma_B^2  & \sigma_{BC} \\
		\sigma_{BC} & \sigma_C^2
	\end{bmatrix}.
$$
To obtain the remaining series, the bottom series are then cross-temporally aggregated.

For the forecast experiment, the base forecasts are computing using AR models where the order is automatically determined by the algorithm proposed by \cite{hyndman2008a} and implemented in the R package \texttt{forecast} \citep{Rforecast}, thus allowing for possible mis-specification in the models. The training window length is 500 years, consisting of 1000 high frequency observations. The experiment is replicated 500 times, with a forecast horizon of 1 year.

Since the AR(2) models used as DPG for the bottom series $B$ and $C$ at the most disaggregated temporal level are known, we may compute the true covariance matrix for one-step ahead forecasts at the annual level $\Omegavet_{ct} = \Svet_{ct}\Omegavet_{\textit{hf-bts}}\Svet_{ct}'$, where
$$
	\Omegavet_{\textit{hf-bts}} = \begin{bmatrix}
		\sigma^2_B            &                                                 &                      &                                        \\
		\phi_{B,1}\sigma_B^2  & \sigma_B^2\left(1+\phi_{B,1}^2\right)           &                      &                                        \\
		\sigma_{BC}           & \phi_{B,1}\sigma_{BC}                           & \sigma_C^2           &                                        \\
		\phi_{C,1}\sigma_{BC} & \sigma_{BC}\left(1+\phi_{B,1}\phi_{C,1} \right) & \phi_{C,1}\sigma_C^2 & \sigma_C^2\left(1+\phi_{C,1}^2\right)\
	\end{bmatrix}.
$$
The detailed calculations can be found in \autoref{app:ar2}.
\autoref{fig:covcorMC} shows both the covariance matrix and the correlation matrix for fixed parameters.

\begin{figure}[!t]
	\centering
	\includegraphics[width = \linewidth]{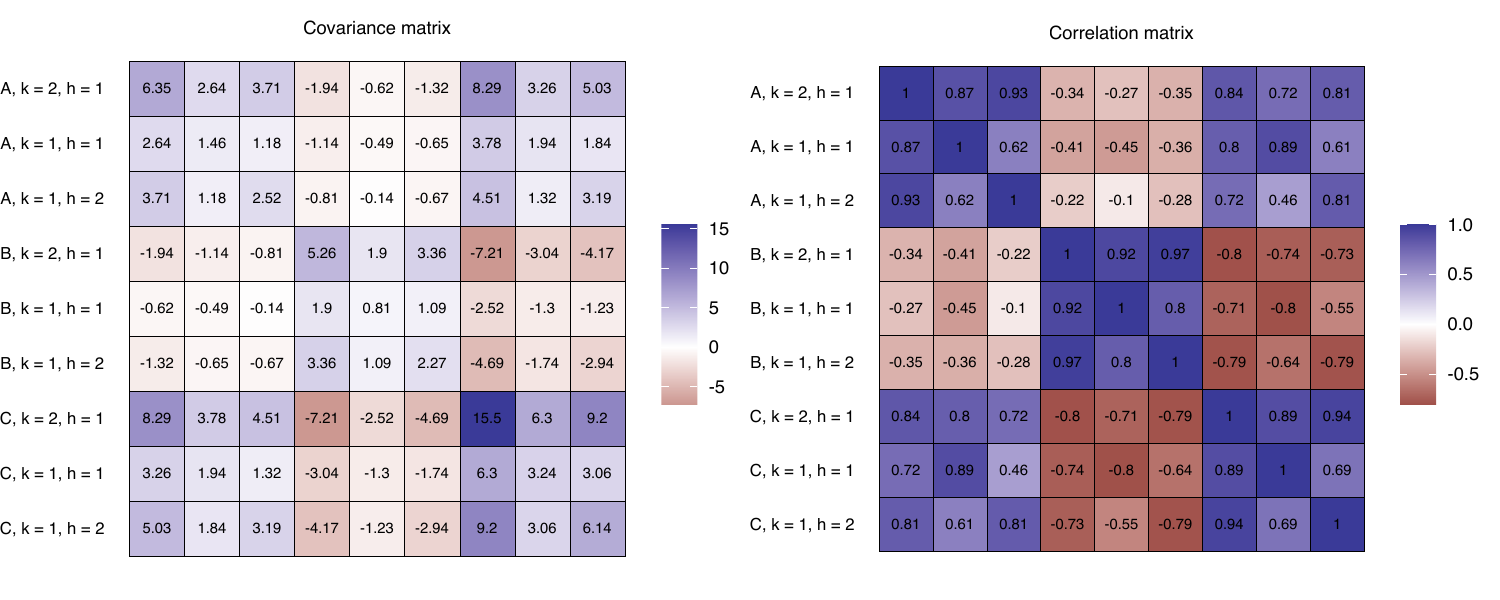}
	\caption{Simulation experiment. True cross-temporal covariance (left) and correlation (right) error matrix of the reconciled forecasts with $\sigma_B = 0.9$, $\sigma_C = 1.8$, $\phivet_B = [1.34\; -0.74]'$, $\phivet_C = [0.95\; -0.42]'$ and $\rho = -0.8$.}
	\label{fig:covcorMC}
\end{figure}

To construct cross-temporal samples of the base forecasts, we use the Gaussian and bootstrap approaches discussed in Sections 3.1 and 3.2, respectively. For the parametric approach we use multi-step residuals with the different covariance matrix structures analyzed in 4.1, while for the non-parametric approach, we use regular one-step residuals. We do not use overlapping residuals in our analysis as we have the advantage of generating a large number of observation. Ten different reconciliation approaches have been adopted (see Table 2): ct$(bu)$, ct$(shr_{cs}, bu_{te})$, ct$(wlsv_{te}, bu_{cs})$, oct$(wlsv)$, oct$(bdshr)$, oct$_h(shr)$, oct$_h(bshr)$, and oct$_h(hshr)$.

\subsection{Covariance matrix comparison and forecast accuracy scores}\label{ssec:acc_scores}
\setcounter{table}{0} 
\setcounter{figure}{0} 

To compare the true covariance matrix $\Omegavet_{ct}$ with the estimated covariance matrix $\widehat{\Omegavet}$, we use the Frobenius norm to quantify the difference between two matrices: 
$$\lVert \Dvet \rVert_F = \displaystyle\sqrt{\sum_{i = 1}^{n(k^\ast + m)}\sum_{j = 1}^{n(k^\ast + m)}|d_{i,j}|^2}$$ where $\Dvet = \widehat{\Omegavet} - \Omegavet_{ct}$. The true covariance matrix, shown in \autoref{fig:covcorMC}, was compared to the estimated covariance matrices obtained using various reconciliation approaches and techniques for generating sample paths of the base forecasts. Thus, we should be able to determine which reconciliation approach and simulation technique produce an accurate estimate of the covariance matrix. Other types of matrix norms were also considered with similar results.

From \autoref{tab:ar2norm}, it appears that the reconciled covariance matrices are always closer to the true matrix than the base forecast matrix when using both the Gaussian and the bootstrap  approach. Overall, there are no major differences in the findings when using either one-step or multi-step residuals in cross-temporal forecast reconciliation. In fact, using approaches like oct$(bdshr)$, we obtain results that are consistent with approaches such as oct$_h(shr)$, where no temporal and/or cross-sectional correlation assumptions are imposed. It is worth noting that the $HB$ covariance matrix when used to calculate the base forecasts samples, is not changed by the reconciliation step (see \autoref{app:shr}). In conclusion, our results suggest that using multi-step residuals or bootstrap techniques may help find a “good" estimate of the covariance matrix, which can be further improved by the reconciliation.

\begin{table}[!t]
	\centering
	\begingroup
	\spacingset{1}
	\fontsize{9}{11}\selectfont
	
\begin{tabular}[t]{l|>{}ccccc}
\toprule
\multicolumn{1}{c}{\textbf{}} & \multicolumn{5}{c}{\textbf{Generation of the base forecasts paths}} \\
\cmidrule(l{0pt}r{0pt}){2-6}
\multicolumn{1}{c}{\makecell[c]{\bfseries Reconciliation\\\bfseries approach}} & \multicolumn{1}{c}{ctjb} & \multicolumn{4}{c}{\makecell[c]{Gaussian approach\textsuperscript{*}}} \\
\multicolumn{1}{c}{} &  & G & B & H & HB\\
\midrule
base & \textcolor{black}{8.260} & \textcolor{black}{7.748} & \textcolor{black}{6.549} & \textcolor{black}{3.409} & \textcolor{black}{2.215}\\
ct$(bu)$ & \textcolor{black}{3.195} & \textcolor{black}{2.215} & \textcolor{black}{2.215} & \textcolor{black}{\textbf{2.215}} & \textcolor{black}{2.215}\\
ct$(shr_{cs}, bu_{te})$ & \textcolor{black}{3.202} & \textcolor{black}{2.224} & \textcolor{black}{2.215} & \textcolor{black}{2.224} & \textcolor{black}{2.215}\\
ct$(wlsv_{te}, bu_{cs})$ & \textcolor{black}{\textbf{3.183}} & \textcolor{black}{\textbf{2.188}} & \textcolor{black}{2.188} & \textcolor{black}{\textbf{2.215}} & \textcolor{black}{2.215}\\
oct$(wlsv)$ & \textcolor{black}{3.766} & \textcolor{black}{3.082} & \textcolor{black}{2.191} & \textcolor{black}{2.910} & \textcolor{black}{2.215}\\
oct$(bdshr)$ & \textcolor{black}{3.203} & \textcolor{black}{2.195} & \textcolor{blue}{\textbf{2.184}} & \textcolor{black}{2.224} & \textcolor{black}{\textbf{2.215}}\\
oct$_h(shr)$ & \textcolor{black}{3.251} & \textcolor{black}{2.260} & \textcolor{black}{2.202} & \textcolor{black}{2.226} & \textcolor{black}{2.215}\\
oct$_h(bshr)$ & \textcolor{black}{3.602} & \textcolor{black}{2.720} & \textcolor{black}{2.220} & \textcolor{black}{2.756} & \textcolor{black}{2.215}\\
oct$_h(hshr)$ & \textcolor{black}{4.869} & \textcolor{black}{4.138} & \textcolor{black}{4.167} & \textcolor{black}{2.225} & \textcolor{black}{2.215}\\
\bottomrule
\multicolumn{6}{l}{\rule{0pt}{1em}\rule{0pt}{1.75em}\makecell[l]{$^\ast$The Gaussian method employs a sample covariance\\ with multi-step residuals.}}\\
\end{tabular}

	\endgroup
	\caption{Simulation experiment. Frobenius norm between the true and the estimated covariance matrix for different reconciliation approaches and different techniques for simulating the base forecasts. Entries in bold represent the lowest value for each column, while the blue entry represent the global minimum. The reconciliation approaches are described in Table 2.}
	\label{tab:ar2norm}
\end{table}

A limitation of this simulation setting is that we are using a high number of residuals, which may result in undervaluing the benefit from using the parameterization form of the covariance matrix such as $HB$, $H$, or $B$ (see Section 4). Additionally, shrinkage techniques often yield very similar results when we use the corresponding matrix with $\lambda = 0$ (full covariance matrix). 

In Tables \ref{tab:ar2crps} and \ref{tab:ar2es} are reported the $\operatorname{\overline{RelCRPS}}$ and ES ratio indices introduced in Sections 5 where low values indicate better quality of the forecasts. The good performance of the ct$(bu)$ approach can be explained by a good quality of the base forecasts at the bottom level for $k=1$, and therefore it is difficult for the other approaches to correctly adjust them using the somewhat less good forecasts of the higher temporal and cross-sectional levels. This also explains the good performance of ct$(shr_{cs}, bu_{te})$, which by definition only takes into account the information provided by the most temporally disaggregated base forecasts.
Looking at the optimal cross-temporal reconciliation approaches, it does not seem to be any advantage in using multi-step residuals to calculate the covariance matrix in the reconciliation step.

\begin{table}[!t]
	\centering
	\begingroup
	\spacingset{1}
	\fontsize{9}{11}\selectfont
	
\begin{tabular}[t]{l|>{}cccc>{}c|ccccc}
\toprule
\multicolumn{1}{c}{\textbf{}} & \multicolumn{10}{c}{\textbf{Generation of the base forecasts sample paths}} \\
\cmidrule(l{0pt}r{0pt}){2-11}
\multicolumn{1}{c}{\makecell[c]{\bfseries Reconciliation\\\bfseries approach}} & \multicolumn{1}{c}{ctjb} & \multicolumn{4}{c}{\makecell[c]{Gaussian approach\textsuperscript{*}}} & \multicolumn{1}{c}{ctjb} & \multicolumn{4}{c}{\makecell[c]{Gaussian approach\textsuperscript{*}}} \\
\multicolumn{1}{c}{} &  & G & B & H & \multicolumn{1}{c}{HB} &  & G & B & H & HB\\
\midrule
\addlinespace[0.3em]
\multicolumn{1}{c}{} & \multicolumn{5}{c}{\textbf{$\forall k \in \{2,1\}$}} & \multicolumn{5}{c}{\textbf{$k = 1$}}\\
base & \textcolor{black}{1.000} & \textcolor{black}{0.998} & \textcolor{black}{0.999} & \textcolor{red}{1.002} & \textcolor{red}{1.004} & \textcolor{black}{1.000} & \textcolor{black}{0.998} & \textcolor{black}{0.999} & \textcolor{black}{0.999} & \textcolor{black}{1.000}\\
ct$(bu)$ & \textcolor{black}{0.901} & \textcolor{black}{0.900} & \textcolor{black}{0.900} & \textcolor{black}{0.900} & \textcolor{black}{0.900} & \textcolor{black}{0.978} & \textcolor{black}{0.976} & \textcolor{black}{0.976} & \textcolor{black}{0.977} & \textcolor{black}{0.977}\\
ct$(shr_{cs}, bu_{te})$ & \textcolor{black}{\textbf{0.901}} & \textcolor{black}{\textbf{0.900}} & \textcolor{blue}{\textbf{0.899}} & \textcolor{black}{\textbf{0.900}} & \textcolor{black}{\textbf{0.900}} & \textcolor{black}{\textbf{0.977}} & \textcolor{black}{\textbf{0.976}} & \textcolor{blue}{\textbf{0.976}} & \textcolor{black}{\textbf{0.976}} & \textcolor{black}{\textbf{0.976}}\\
ct$(wlsv_{te}, bu_{cs})$ & \textcolor{black}{0.910} & \textcolor{black}{0.916} & \textcolor{black}{0.916} & \textcolor{black}{0.916} & \textcolor{black}{0.917} & \textcolor{black}{0.986} & \textcolor{black}{0.993} & \textcolor{black}{0.993} & \textcolor{black}{0.993} & \textcolor{black}{0.993}\\
oct$(wlsv)$ & \textcolor{black}{0.922} & \textcolor{black}{0.930} & \textcolor{black}{0.930} & \textcolor{black}{0.930} & \textcolor{black}{0.931} & \textcolor{black}{0.998} & \textcolor{red}{1.006} & \textcolor{red}{1.006} & \textcolor{red}{1.007} & \textcolor{red}{1.007}\\
oct$(bdshr)$ & \textcolor{black}{0.910} & \textcolor{black}{0.916} & \textcolor{black}{0.915} & \textcolor{black}{0.916} & \textcolor{black}{0.916} & \textcolor{black}{0.986} & \textcolor{black}{0.992} & \textcolor{black}{0.992} & \textcolor{black}{0.993} & \textcolor{black}{0.993}\\
oct$_h(shr)$ & \textcolor{black}{0.904} & \textcolor{black}{0.903} & \textcolor{black}{0.902} & \textcolor{black}{0.902} & \textcolor{black}{0.903} & \textcolor{black}{0.980} & \textcolor{black}{0.979} & \textcolor{black}{0.978} & \textcolor{black}{0.979} & \textcolor{black}{0.979}\\
oct$_h(bshr)$ & \textcolor{black}{0.923} & \textcolor{black}{0.922} & \textcolor{black}{0.922} & \textcolor{black}{0.921} & \textcolor{black}{0.922} & \textcolor{black}{0.999} & \textcolor{black}{0.998} & \textcolor{black}{0.998} & \textcolor{black}{0.998} & \textcolor{black}{0.998}\\
oct$_h(hshr)$ & \textcolor{black}{0.974} & \textcolor{black}{0.972} & \textcolor{black}{0.972} & \textcolor{black}{0.974} & \textcolor{black}{0.975} & \textcolor{red}{1.052} & \textcolor{red}{1.050} & \textcolor{red}{1.050} & \textcolor{red}{1.053} & \textcolor{red}{1.053}\\
\addlinespace[0.3em]
\multicolumn{1}{c}{} & \multicolumn{5}{c}{\textbf{$k = 2$}} & \multicolumn{5}{c}{}\\
base & \textcolor{black}{1.000} & \textcolor{black}{0.998} & \textcolor{black}{0.999} & \textcolor{red}{1.005} & \textcolor{red}{1.008} &  &  &  &  & \\
ct$(bu)$ & \textcolor{black}{0.831} & \textcolor{black}{0.830} & \textcolor{black}{0.829} & \textcolor{black}{0.829} & \textcolor{black}{0.830} &  &  &  &  & \\
ct$(shr_{cs}, bu_{te})$ & \textcolor{black}{\textbf{0.830}} & \textcolor{black}{\textbf{0.830}} & \textcolor{blue}{\textbf{0.829}} & \textcolor{black}{\textbf{0.829}} & \textcolor{black}{\textbf{0.830}} &  &  &  &  & \\
ct$(wlsv_{te}, bu_{cs})$ & \textcolor{black}{0.840} & \textcolor{black}{0.846} & \textcolor{black}{0.844} & \textcolor{black}{0.845} & \textcolor{black}{0.846} &  &  &  &  & \\
oct$(wlsv)$ & \textcolor{black}{0.851} & \textcolor{black}{0.859} & \textcolor{black}{0.859} & \textcolor{black}{0.859} & \textcolor{black}{0.861} &  &  &  &  & \\
oct$(bdshr)$ & \textcolor{black}{0.839} & \textcolor{black}{0.845} & \textcolor{black}{0.844} & \textcolor{black}{0.845} & \textcolor{black}{0.846} &  &  &  &  & \\
oct$_h(shr)$ & \textcolor{black}{0.834} & \textcolor{black}{0.833} & \textcolor{black}{0.831} & \textcolor{black}{0.832} & \textcolor{black}{0.832} &  &  &  &  & \\
oct$_h(bshr)$ & \textcolor{black}{0.852} & \textcolor{black}{0.851} & \textcolor{black}{0.851} & \textcolor{black}{0.851} & \textcolor{black}{0.852} &  &  &  &  & \\
oct$_h(hshr)$ & \textcolor{black}{0.902} & \textcolor{black}{0.900} & \textcolor{black}{0.899} & \textcolor{black}{0.901} & \textcolor{black}{0.902} &  &  &  &  & \\
\bottomrule
\multicolumn{11}{l}{\rule{0pt}{1em}\rule{0pt}{1.75em}\makecell[l]{$^\ast$The Gaussian method employs a sample covariance matrix and includes four techniques\\ (G, B, H, HB) with multi-step residuals.}}\\
\end{tabular}

	\endgroup
	\caption{Simulation experiment. $\overline{RelCRPS}$ defined in Section 5. Approaches performing worse than the benchmark (bootstrap base forecasts, ctjb) are highlighted in red, the best for each column is marked in bold, and the overall lowest value is highlighted in blue. The reconciliation approaches are described in Table 2.}
	\label{tab:ar2crps}
\end{table}

\begin{table}[!t]
	\centering
	\begingroup
	\spacingset{1}
	\fontsize{9}{11}\selectfont
	
\begin{tabular}[t]{l|>{}cccc>{}c|ccccc}
\toprule
\multicolumn{1}{c}{\textbf{}} & \multicolumn{10}{c}{\textbf{Generation of the base forecasts sample paths}} \\
\cmidrule(l{0pt}r{0pt}){2-11}
\multicolumn{1}{c}{\makecell[c]{\bfseries Reconciliation\\\bfseries approach}} & \multicolumn{1}{c}{ctjb} & \multicolumn{4}{c}{\makecell[c]{Gaussian approach\textsuperscript{*}}} & \multicolumn{1}{c}{ctjb} & \multicolumn{4}{c}{\makecell[c]{Gaussian approach\textsuperscript{*}}} \\
\multicolumn{1}{c}{} &  & G & B & H & \multicolumn{1}{c}{HB} &  & G & B & H & HB\\
\midrule
\addlinespace[0.3em]
\multicolumn{1}{c}{} & \multicolumn{5}{c}{\textbf{$\forall k \in \{2,1\}$}} & \multicolumn{5}{c}{\textbf{$k = 1$}}\\
base & \textcolor{black}{1.000} & \textcolor{black}{0.996} & \textcolor{black}{0.999} & \textcolor{black}{1.000} & \textcolor{red}{1.004} & \textcolor{black}{1.000} & \textcolor{black}{0.997} & \textcolor{red}{1.000} & \textcolor{black}{0.997} & \textcolor{red}{1.000}\\
ct$(bu)$ & \textcolor{black}{0.897} & \textcolor{black}{0.895} & \textcolor{black}{0.896} & \textcolor{black}{0.897} & \textcolor{blue}{\textbf{0.895}} & \textcolor{black}{0.969} & \textcolor{black}{\textbf{0.967}} & \textcolor{blue}{\textbf{0.967}} & \textcolor{black}{0.968} & \textcolor{black}{\textbf{0.968}}\\
ct$(shr_{cs}, bu_{te})$ & \textcolor{black}{\textbf{0.896}} & \textcolor{black}{\textbf{0.895}} & \textcolor{black}{\textbf{0.895}} & \textcolor{black}{\textbf{0.896}} & \textcolor{black}{0.896} & \textcolor{black}{\textbf{0.968}} & \textcolor{black}{0.968} & \textcolor{black}{0.967} & \textcolor{black}{\textbf{0.968}} & \textcolor{black}{0.968}\\
ct$(wlsv_{te}, bu_{cs})$ & \textcolor{black}{0.906} & \textcolor{black}{0.912} & \textcolor{black}{0.911} & \textcolor{black}{0.910} & \textcolor{black}{0.912} & \textcolor{black}{0.977} & \textcolor{black}{0.984} & \textcolor{black}{0.983} & \textcolor{black}{0.981} & \textcolor{black}{0.984}\\
oct$(wlsv)$ & \textcolor{black}{0.916} & \textcolor{black}{0.923} & \textcolor{black}{0.923} & \textcolor{black}{0.923} & \textcolor{black}{0.924} & \textcolor{black}{0.989} & \textcolor{black}{0.994} & \textcolor{black}{0.995} & \textcolor{black}{0.995} & \textcolor{black}{0.997}\\
oct$(bdshr)$ & \textcolor{black}{0.906} & \textcolor{black}{0.910} & \textcolor{black}{0.910} & \textcolor{black}{0.911} & \textcolor{black}{0.912} & \textcolor{black}{0.977} & \textcolor{black}{0.981} & \textcolor{black}{0.982} & \textcolor{black}{0.983} & \textcolor{black}{0.985}\\
oct$_h(shr)$ & \textcolor{black}{0.900} & \textcolor{black}{0.898} & \textcolor{black}{0.898} & \textcolor{black}{0.897} & \textcolor{black}{0.898} & \textcolor{black}{0.971} & \textcolor{black}{0.969} & \textcolor{black}{0.969} & \textcolor{black}{0.969} & \textcolor{black}{0.969}\\
oct$_h(bshr)$ & \textcolor{black}{0.916} & \textcolor{black}{0.914} & \textcolor{black}{0.916} & \textcolor{black}{0.915} & \textcolor{black}{0.916} & \textcolor{black}{0.987} & \textcolor{black}{0.986} & \textcolor{black}{0.987} & \textcolor{black}{0.987} & \textcolor{black}{0.988}\\
oct$_h(hshr)$ & \textcolor{black}{0.967} & \textcolor{black}{0.964} & \textcolor{black}{0.964} & \textcolor{black}{0.966} & \textcolor{black}{0.967} & \textcolor{red}{1.040} & \textcolor{red}{1.036} & \textcolor{red}{1.036} & \textcolor{red}{1.040} & \textcolor{red}{1.040}\\
\addlinespace[0.3em]
\multicolumn{1}{c}{} & \multicolumn{5}{c}{\textbf{$k = 2$}} & \multicolumn{5}{c}{}\\
base & \textcolor{black}{1.000} & \textcolor{black}{0.996} & \textcolor{black}{0.998} & \textcolor{red}{1.003} & \textcolor{red}{1.008} &  &  &  &  & \\
ct$(bu)$ & \textcolor{black}{0.831} & \textcolor{black}{0.829} & \textcolor{black}{0.829} & \textcolor{black}{0.830} & \textcolor{blue}{\textbf{0.828}} &  &  &  &  & \\
ct$(shr_{cs}, bu_{te})$ & \textcolor{black}{\textbf{0.829}} & \textcolor{black}{\textbf{0.828}} & \textcolor{black}{\textbf{0.829}} & \textcolor{black}{\textbf{0.829}} & \textcolor{black}{0.829} &  &  &  &  & \\
ct$(wlsv_{te}, bu_{cs})$ & \textcolor{black}{0.839} & \textcolor{black}{0.844} & \textcolor{black}{0.844} & \textcolor{black}{0.844} & \textcolor{black}{0.845} &  &  &  &  & \\
oct$(wlsv)$ & \textcolor{black}{0.849} & \textcolor{black}{0.858} & \textcolor{black}{0.856} & \textcolor{black}{0.856} & \textcolor{black}{0.857} &  &  &  &  & \\
oct$(bdshr)$ & \textcolor{black}{0.839} & \textcolor{black}{0.845} & \textcolor{black}{0.843} & \textcolor{black}{0.845} & \textcolor{black}{0.844} &  &  &  &  & \\
oct$_h(shr)$ & \textcolor{black}{0.835} & \textcolor{black}{0.833} & \textcolor{black}{0.833} & \textcolor{black}{0.831} & \textcolor{black}{0.832} &  &  &  &  & \\
oct$_h(bshr)$ & \textcolor{black}{0.850} & \textcolor{black}{0.847} & \textcolor{black}{0.849} & \textcolor{black}{0.849} & \textcolor{black}{0.850} &  &  &  &  & \\
oct$_h(hshr)$ & \textcolor{black}{0.900} & \textcolor{black}{0.897} & \textcolor{black}{0.896} & \textcolor{black}{0.897} & \textcolor{black}{0.899} &  &  &  &  & \\
\bottomrule
\multicolumn{11}{l}{\rule{0pt}{1em}\rule{0pt}{1.75em}\makecell[l]{$^\ast$The Gaussian method employs a sample covariance matrix and includes four techniques\\ (G, B, H, HB) with multi-step residuals.}}\\
\end{tabular}

	\endgroup
	\caption{Simulation experiment. ES ratio indices defined in Section 5. Approaches performing worse than the benchmark (bootstrap base forecasts, ctjb) are highlighted in red, the best for each column is marked in bold, and the overall lowest value is highlighted in blue. The reconciliation approaches are described in Table 2.}
	\label{tab:ar2es}
\end{table}

In conclusion, we found that simulating base forecasts from multi-step residuals allows us to estimate a covariance matrix close to the true one. Additionally, we observed that reconciliation could be used to further improve the accuracy of these estimates: accurate base forecasts for $k=1$ assist the good performance for bottom-up and optimal cross-temporal reconciliation approaches, such as oct$(wlsv)$ and oct$(bdshr)$, which perform well in terms of both CRPS and ES.

\newpage
\subsection{Cross-temporal covariance matrix}\label{app:ar2}

We assume two AR(2) processes with correlated errors such that
$$
	y_{i,t} = \phi_{i,1}y_{i,t-1} + \phi_{i,2}y_{i,t-2} + \varepsilon_{i,t}
$$
where $\epsvet_t \sim \mathcal{N}_{2}\left(\Zerovet_{(2\times 1)}, \Omegavet_{cs}\right)$ and $i \in \{B, C\}$. Let $y_{i,T+h}$ be the true observation for the $i^{th}$ series and $\widetilde{y}_{i,T+h}$ the corresponding forecasts such that
$$
	\begin{array}{rl}
		y_{i,T+1} & = \phi_{i,1}y_{i,T} + \phi_{i,2}y_{i,T-1} + \varepsilon_{i,T+1} \\
		y_{i,T+2} & = \phi_{i,1}y_{i,T+1} + \phi_{i,2}y_{i,T} + \varepsilon_{i,T+2}
	\end{array}
	\quad\text{and}\quad
	\begin{array}{rl}
		\widetilde{y}_{i,T+1} & = \phi_{i,1}y_{i,T} + \phi_{i,2}y_{i,T-1}             \\
		\widetilde{y}_{i,T+2} & = \phi_{i,1}\widetilde{y}_{i,T+1} + \phi_{i,2}y_{i,T}
	\end{array}\;,
$$
then
\begin{align*}
	y_{i,T+1} - \widetilde{y}_{i,T+1} & = \varepsilon_{i,T+1}                                   \\
	y_{i,T+2} - \widetilde{y}_{i,T+2} & = \varepsilon_{i,T+2} + \phi_{i,1} \varepsilon_{i,T+1}.
\end{align*}
Finally, we can compute each element of the high frequency bottom time series covariance matrix
\begin{align*}\allowdisplaybreaks[2]
	Var\left(y_{i,T+1}-\widetilde{y}_{i,T+1}\right) & = \sigma_i^2, \quad \forall i \in \{B, C\}                                                                    \\
	Var\left(y_{i,T+2}-\widetilde{y}_{i,T+2}\right) & = \sigma_i^2\left(1+\phi_{i,1}^2\right), \quad \forall i \in \{B, C\}\\
	Cov\left[\left(y_{i,T+2}-\widetilde{y}_{i,T+2}\right), \left(y_{i,T+1}-\widetilde{y}_{i,T+1}\right)\right] & = 	Cov\left[\left(y_{i,T+1}-\widetilde{y}_{i,T+1}\right), \left(y_{i,T+2}-\widetilde{y}_{i,T+2}\right)\right] \\
		& = \phi_{i, 1}\sigma_{i}^2, \quad \forall i \in \{B, C\} \\
	Cov\left[\left(y_{i,T+1}-\widetilde{y}_{i,T+1}\right), \left(y_{j,T+1}-\widetilde{y}_{j,T+1}\right)\right] & = 	Cov\left[\left(y_{j,T+1}-\widetilde{y}_{j,T+1}\right), \left(y_{i,T+1}-\widetilde{y}_{i,T+1}\right)\right] \\
	& = \sigma_{i, j}, \quad \forall i,j \in \{B, C\}, \quad i\neq j\\
	Cov\left[\left(y_{i,T+2}-\widetilde{y}_{i,T+2}\right), \left(y_{j,T+1}-\widetilde{y}_{j,T+1}\right)\right] & = 	Cov\left[\left(y_{j,T+1}-\widetilde{y}_{j,T+1}\right), \left(y_{i,T+2}-\widetilde{y}_{i,T+2}\right)\right] \\
	& = \phi_{i,1}\sigma_{i, j}, \quad \forall i,j \in \{B, C\}, \quad i\neq j                                      \\
	Cov\left[\left(y_{i,T+2}-\widetilde{y}_{i,T+2}\right), \left(y_{j,T+2}-\widetilde{y}_{j,T+2}\right)\right] & = 	Cov\left[\left(y_{j,T+2}-\widetilde{y}_{j,T+2}\right), \left(y_{i,T+2}-\widetilde{y}_{i,T+2}\right)\right] \\
	& = \sigma_{i, j}\left(1+\phi_{i,1}\phi_{j,1}\right), \quad \forall i,j \in \{B, C\}, \quad i\neq j.
\end{align*}
In conclusion,
$$
	\Omegavet_{\textit{hf-bts}} = \begin{bmatrix}
		\sigma^2_B & & & \\
		\phi_{B,1}\sigma_B^2  & \sigma_B^2\left(1+\phi_{B,1}^2\right) & & \\
		\sigma_{BC} & \phi_{B,1}\sigma_{BC} & \sigma_C^2 & \\
		\phi_{C,1}\sigma_{BC} & \sigma_{BC}\left(1+\phi_{B,1}\phi_{C,1} \right) & \phi_{C,1}\sigma_C^2 & \sigma_C^2\left(1+\phi_{C,1}^2\right) \\
	\end{bmatrix}
$$
and
$$
	\Omegavet_{ct} = \Svet_{ct}\Omegavet_{\textit{hf-bts}}\Svet_{ct}'.
$$

\newpage
\subsection{One-step residuals and shrinkage covariance matrix}

In Section 4.1, we discussed the use of one-step residuals in estimating the covariance matrix. In particular we point out that one-step residuals lead to a biased estimate of the covariance matrix where some correlation are zeros by definition (see \autoref{fig:ar2covcor_app}). 
In addition, Tables \ref{tab:ar2norm_app}, \ref{tab:ar2crps_app} and \ref{tab:ar2es_app} show the Frobenius norm, CRPS, and ES skill scores as explained in the paper to investigate the effectiveness of one-step residuals. 
Moreover, in Tables \ref{tab:ar2crps_app_shr} and \ref{tab:ar2es_app_shr}, we have utilized a shrinkage matrix rather than the sample covariance matrix to assess the performance of our approach.

\begin{figure}[p]
	\centering
	\includegraphics[width = \linewidth]{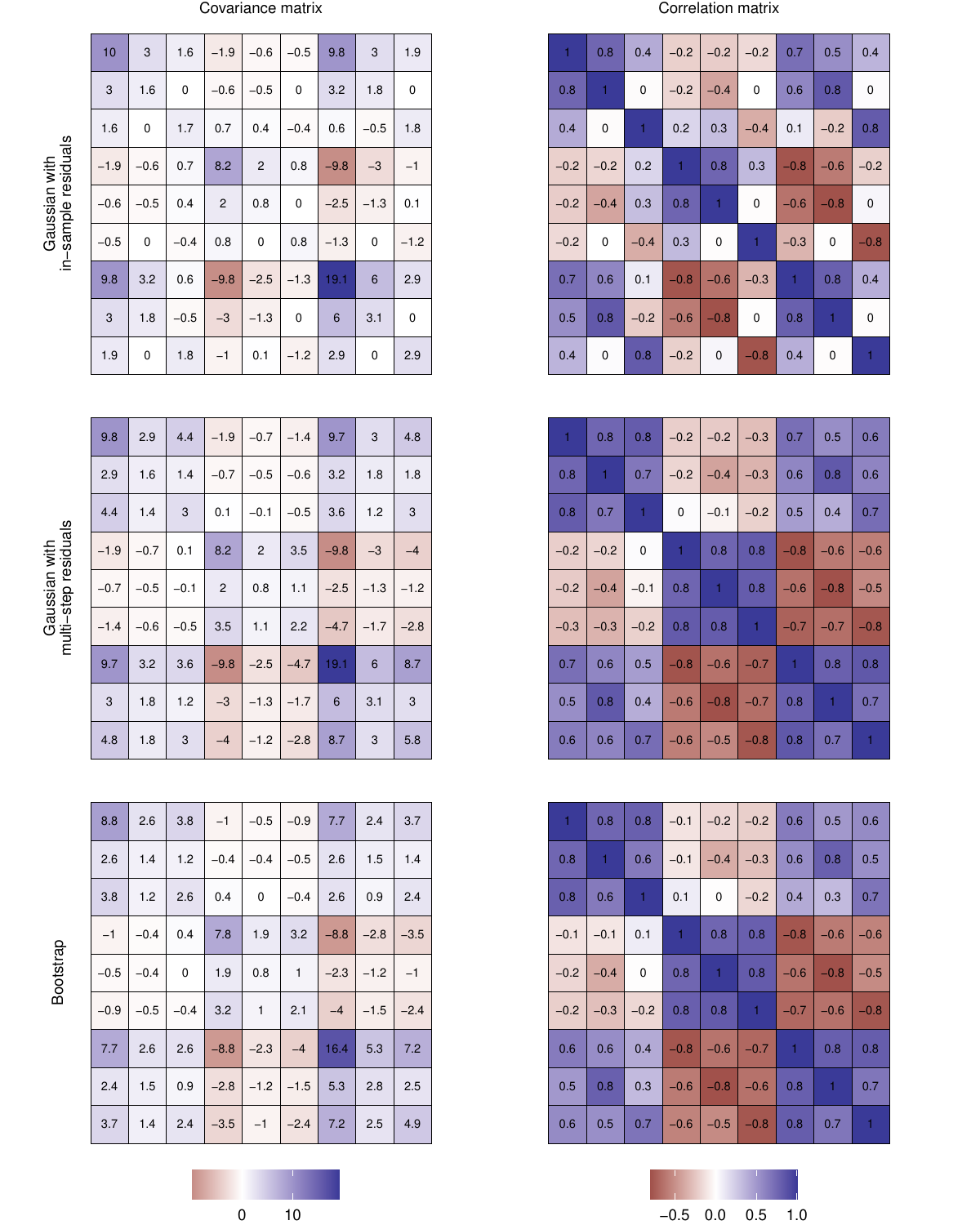}
	\caption{Comparison of estimated covariance and correlation matrices (first simulation) for base forecasts using a parametric Gaussian (with one-step residuals) approach. The true covariance and correlation matrices are shown in Figure 7.}
	\label{fig:ar2covcor_app}
\end{figure}

\begin{table}[H]
	\centering
	\begingroup
	\spacingset{1}
	\fontsize{9}{11}\selectfont
	
\begin{tabular}[t]{l|>{}ccccccccc}
\toprule
\multicolumn{1}{c}{\textbf{}} & \multicolumn{9}{c}{\textbf{Generation of the base forecasts paths}} \\
\cmidrule(l{0pt}r{0pt}){2-10}
\multicolumn{1}{c}{} & \multicolumn{1}{c}{} & \multicolumn{8}{c}{\makecell[c]{Gaussian approach: sample covariance matrix}} \\
\multicolumn{1}{c}{\makecell[c]{\bfseries Reconciliation\\\bfseries approach}} & \multicolumn{1}{c}{ctjb} & \multicolumn{4}{c}{In-sample residuals} & \multicolumn{4}{c}{Multi-step residuals} \\
 &  & G & B & H & HB & G & B & H & HB\\
\midrule
base & \textcolor{black}{8.260} & \textcolor{black}{17.638} & \textcolor{black}{16.733} & \textcolor{black}{22.178} & \textcolor{black}{21.789} & \textcolor{black}{7.748} & \textcolor{black}{6.549} & \textcolor{black}{3.409} & \textcolor{black}{2.215}\\
ct$(bu)$ & \textcolor{black}{3.195} & \textcolor{black}{21.789} & \textcolor{black}{21.789} & \textcolor{black}{\textbf{21.789}} & \textcolor{black}{21.789} & \textcolor{black}{2.215} & \textcolor{black}{2.215} & \textcolor{black}{\textbf{2.215}} & \textcolor{black}{2.215}\\
ct$(shr_{cs}, bu_{te})$ & \textcolor{black}{3.202} & \textcolor{black}{21.942} & \textcolor{black}{21.789} & \textcolor{black}{21.942} & \textcolor{black}{21.789} & \textcolor{black}{2.224} & \textcolor{black}{2.215} & \textcolor{black}{2.224} & \textcolor{black}{2.215}\\
ct$(wlsv_{te}, bu_{cs})$ & \textcolor{black}{\textbf{3.183}} & \textcolor{black}{18.237} & \textcolor{black}{18.237} & \textcolor{black}{21.789} & \textcolor{black}{21.789} & \textcolor{black}{\textbf{2.188}} & \textcolor{black}{2.188} & \textcolor{black}{\textbf{2.215}} & \textcolor{black}{2.215}\\
oct$(wlsv)$ & \textcolor{black}{3.766} & \textcolor{black}{19.174} & \textcolor{black}{18.611} & \textcolor{black}{22.304} & \textcolor{black}{21.789} & \textcolor{black}{3.082} & \textcolor{black}{2.191} & \textcolor{black}{2.910} & \textcolor{black}{2.215}\\
oct$(bdshr)$ & \textcolor{black}{3.203} & \textcolor{black}{18.559} & \textcolor{black}{18.416} & \textcolor{black}{21.937} & \textcolor{black}{21.789} & \textcolor{black}{2.195} & \textcolor{blue}{\textbf{2.184}} & \textcolor{black}{2.224} & \textcolor{black}{\textbf{2.215}}\\
oct$(shr)$ & \textcolor{black}{5.217} & \textcolor{black}{25.015} & \textcolor{black}{23.457} & \textcolor{black}{23.413} & \textcolor{black}{\textbf{21.789}} & \textcolor{black}{2.260} & \textcolor{black}{2.202} & \textcolor{black}{2.226} & \textcolor{black}{2.215}\\
oct$(bshr)$ & \textcolor{black}{5.282} & \textcolor{black}{23.772} & \textcolor{black}{23.997} & \textcolor{black}{22.146} & \textcolor{black}{21.789} & \textcolor{black}{2.720} & \textcolor{black}{2.220} & \textcolor{black}{2.756} & \textcolor{black}{2.215}\\
oct$(hshr)$ & \textcolor{black}{6.161} & \textcolor{black}{\textbf{11.336}} & \textcolor{black}{\textbf{10.940}} & \textcolor{black}{23.598} & \textcolor{black}{\textbf{21.789}} & \textcolor{black}{4.138} & \textcolor{black}{4.167} & \textcolor{black}{2.225} & \textcolor{black}{2.215}\\
oct$(hbshr)$ & \textcolor{black}{5.731} & \textcolor{black}{11.379} & \textcolor{black}{10.940} & \textcolor{black}{22.146} & \textcolor{black}{21.789} & \textcolor{black}{5.085} & \textcolor{black}{4.167} & \textcolor{black}{2.756} & \textcolor{black}{2.215}\\
oct$_h(shr)$ & \textcolor{black}{3.251} & \textcolor{black}{20.965} & \textcolor{black}{19.992} & \textcolor{black}{22.079} & \textcolor{black}{\textbf{21.789}} & \textcolor{black}{2.260} & \textcolor{black}{2.202} & \textcolor{black}{2.226} & \textcolor{black}{2.215}\\
oct$_h(bshr)$ & \textcolor{black}{3.602} & \textcolor{black}{21.306} & \textcolor{black}{21.022} & \textcolor{black}{22.146} & \textcolor{black}{21.789} & \textcolor{black}{2.720} & \textcolor{black}{2.220} & \textcolor{black}{2.756} & \textcolor{black}{2.215}\\
oct$_h(hshr)$ & \textcolor{black}{4.869} & \textcolor{black}{11.405} & \textcolor{black}{10.940} & \textcolor{black}{22.037} & \textcolor{black}{21.789} & \textcolor{black}{4.138} & \textcolor{black}{4.167} & \textcolor{black}{2.225} & \textcolor{black}{2.215}\\
\bottomrule
\end{tabular}

	\endgroup
	\caption{Frobenius norm between the true and the estimated covariance matrix for different reconciliation approaches and different techniques for simulating the base forecasts. Entries in bold represent the lowest value for each column, while the blue entry represent the global minimum. The reconciliation approaches are described in Table 2.}
	\label{tab:ar2norm_app}
\end{table}

\begin{table}[H]
	\centering
	\begingroup
	\spacingset{1}
	\fontsize{9}{11}\selectfont
	
\begin{tabular}[t]{l|ccccccccc}
\toprule
\multicolumn{1}{c}{\textbf{}} & \multicolumn{9}{c}{\textbf{Generation of the base forecasts paths}} \\
\cmidrule(l{0pt}r{0pt}){2-10}
\multicolumn{1}{c}{} & \multicolumn{1}{c}{} & \multicolumn{8}{c}{\makecell[c]{Gaussian approach: sample covariance matrix}} \\
\multicolumn{1}{c}{\makecell[c]{\bfseries Reconciliation\\\bfseries approach}} & \multicolumn{1}{c}{ctjb} & \multicolumn{4}{c}{In-sample residuals} & \multicolumn{4}{c}{Multi-step residuals} \\
 &  & G & B & H & HB & G & B & H & HB\\
\midrule
\addlinespace[0.3em]
\multicolumn{10}{c}{\textbf{$\forall k \in \{2,1\}$}}\\
base & \textcolor{black}{1.000} & \textcolor{red}{1.008} & \textcolor{red}{1.009} & \textcolor{red}{1.044} & \textcolor{red}{1.047} & \textcolor{black}{0.998} & \textcolor{black}{0.999} & \textcolor{red}{1.002} & \textcolor{red}{1.004}\\
ct$(bu)$ & \textcolor{black}{0.901} & \textcolor{black}{0.930} & \textcolor{black}{0.929} & \textcolor{black}{0.929} & \textcolor{black}{0.929} & \textcolor{black}{0.900} & \textcolor{black}{0.900} & \textcolor{black}{0.900} & \textcolor{black}{0.900}\\
ct$(shr_{cs}, bu_{te})$ & \textcolor{black}{\textbf{0.901}} & \textcolor{black}{\textbf{0.929}} & \textcolor{black}{0.928} & \textcolor{black}{\textbf{0.929}} & \textcolor{black}{\textbf{0.928}} & \textcolor{black}{\textbf{0.900}} & \textcolor{blue}{\textbf{0.899}} & \textcolor{black}{\textbf{0.900}} & \textcolor{black}{\textbf{0.900}}\\
ct$(wlsv_{te}, bu_{cs})$ & \textcolor{black}{0.910} & \textcolor{black}{0.930} & \textcolor{black}{0.929} & \textcolor{black}{0.939} & \textcolor{black}{0.939} & \textcolor{black}{0.916} & \textcolor{black}{0.916} & \textcolor{black}{0.916} & \textcolor{black}{0.917}\\
oct$(wlsv)$ & \textcolor{black}{0.922} & \textcolor{black}{0.942} & \textcolor{black}{0.944} & \textcolor{black}{0.951} & \textcolor{black}{0.953} & \textcolor{black}{0.930} & \textcolor{black}{0.930} & \textcolor{black}{0.930} & \textcolor{black}{0.931}\\
oct$(bdshr)$ & \textcolor{black}{0.910} & \textcolor{black}{0.930} & \textcolor{black}{0.930} & \textcolor{black}{0.939} & \textcolor{black}{0.938} & \textcolor{black}{0.916} & \textcolor{black}{0.915} & \textcolor{black}{0.916} & \textcolor{black}{0.916}\\
oct$(shr)$ & \textcolor{black}{0.941} & \textcolor{black}{0.999} & \textcolor{black}{0.985} & \textcolor{black}{0.983} & \textcolor{black}{0.973} & \textcolor{black}{0.903} & \textcolor{black}{0.902} & \textcolor{black}{0.902} & \textcolor{black}{0.903}\\
oct$(bshr)$ & \textcolor{black}{0.951} & \textcolor{black}{0.995} & \textcolor{red}{1.000} & \textcolor{black}{0.983} & \textcolor{black}{0.986} & \textcolor{black}{0.922} & \textcolor{black}{0.922} & \textcolor{black}{0.921} & \textcolor{black}{0.922}\\
oct$(hshr)$ & \textcolor{black}{0.987} & \textcolor{black}{0.995} & \textcolor{black}{0.993} & \textcolor{red}{1.039} & \textcolor{red}{1.026} & \textcolor{black}{0.972} & \textcolor{black}{0.972} & \textcolor{black}{0.974} & \textcolor{black}{0.975}\\
oct$(hbshr)$ & \textcolor{black}{0.987} & \textcolor{black}{0.995} & \textcolor{black}{0.996} & \textcolor{red}{1.024} & \textcolor{red}{1.028} & \textcolor{black}{0.985} & \textcolor{black}{0.985} & \textcolor{black}{0.987} & \textcolor{black}{0.989}\\
oct$_h(shr)$ & \textcolor{black}{0.904} & \textcolor{black}{0.929} & \textcolor{black}{\textbf{0.928}} & \textcolor{black}{0.932} & \textcolor{black}{0.932} & \textcolor{black}{0.903} & \textcolor{black}{0.902} & \textcolor{black}{0.902} & \textcolor{black}{0.903}\\
oct$_h(bshr)$ & \textcolor{black}{0.923} & \textcolor{black}{0.948} & \textcolor{black}{0.952} & \textcolor{black}{0.951} & \textcolor{black}{0.954} & \textcolor{black}{0.922} & \textcolor{black}{0.922} & \textcolor{black}{0.921} & \textcolor{black}{0.922}\\
oct$_h(hshr)$ & \textcolor{black}{0.974} & \textcolor{black}{0.982} & \textcolor{black}{0.982} & \textcolor{red}{1.012} & \textcolor{red}{1.012} & \textcolor{black}{0.972} & \textcolor{black}{0.972} & \textcolor{black}{0.974} & \textcolor{black}{0.975}\\
\addlinespace[0.3em]
\multicolumn{10}{c}{\textbf{$k = 1$}}\\
base & \textcolor{black}{1.000} & \textcolor{red}{1.017} & \textcolor{red}{1.019} & \textcolor{red}{1.017} & \textcolor{red}{1.019} & \textcolor{black}{0.998} & \textcolor{black}{0.999} & \textcolor{black}{0.999} & \textcolor{black}{1.000}\\
ct$(bu)$ & \textcolor{black}{0.978} & \textcolor{black}{0.994} & \textcolor{black}{0.994} & \textcolor{black}{0.994} & \textcolor{black}{0.994} & \textcolor{black}{0.976} & \textcolor{black}{0.976} & \textcolor{black}{0.977} & \textcolor{black}{0.977}\\
ct$(shr_{cs}, bu_{te})$ & \textcolor{black}{\textbf{0.977}} & \textcolor{black}{\textbf{0.993}} & \textcolor{black}{\textbf{0.993}} & \textcolor{black}{\textbf{0.994}} & \textcolor{black}{\textbf{0.993}} & \textcolor{black}{\textbf{0.976}} & \textcolor{blue}{\textbf{0.976}} & \textcolor{black}{\textbf{0.976}} & \textcolor{black}{\textbf{0.976}}\\
ct$(wlsv_{te}, bu_{cs})$ & \textcolor{black}{0.986} & \textcolor{red}{1.002} & \textcolor{red}{1.002} & \textcolor{red}{1.003} & \textcolor{red}{1.003} & \textcolor{black}{0.993} & \textcolor{black}{0.993} & \textcolor{black}{0.993} & \textcolor{black}{0.993}\\
oct$(wlsv)$ & \textcolor{black}{0.998} & \textcolor{red}{1.014} & \textcolor{red}{1.015} & \textcolor{red}{1.015} & \textcolor{red}{1.016} & \textcolor{red}{1.006} & \textcolor{red}{1.006} & \textcolor{red}{1.007} & \textcolor{red}{1.007}\\
oct$(bdshr)$ & \textcolor{black}{0.986} & \textcolor{red}{1.002} & \textcolor{red}{1.002} & \textcolor{red}{1.003} & \textcolor{red}{1.003} & \textcolor{black}{0.992} & \textcolor{black}{0.992} & \textcolor{black}{0.993} & \textcolor{black}{0.993}\\
oct$(shr)$ & \textcolor{red}{1.037} & \textcolor{red}{1.082} & \textcolor{red}{1.067} & \textcolor{red}{1.064} & \textcolor{red}{1.056} & \textcolor{black}{0.979} & \textcolor{black}{0.978} & \textcolor{black}{0.979} & \textcolor{black}{0.979}\\
oct$(bshr)$ & \textcolor{red}{1.041} & \textcolor{red}{1.071} & \textcolor{red}{1.074} & \textcolor{red}{1.060} & \textcolor{red}{1.062} & \textcolor{black}{0.998} & \textcolor{black}{0.998} & \textcolor{black}{0.998} & \textcolor{black}{0.998}\\
oct$(hshr)$ & \textcolor{red}{1.080} & \textcolor{red}{1.090} & \textcolor{red}{1.091} & \textcolor{red}{1.119} & \textcolor{red}{1.105} & \textcolor{red}{1.050} & \textcolor{red}{1.050} & \textcolor{red}{1.053} & \textcolor{red}{1.053}\\
oct$(hbshr)$ & \textcolor{red}{1.065} & \textcolor{red}{1.080} & \textcolor{red}{1.081} & \textcolor{red}{1.088} & \textcolor{red}{1.090} & \textcolor{red}{1.063} & \textcolor{red}{1.064} & \textcolor{red}{1.066} & \textcolor{red}{1.068}\\
oct$_h(shr)$ & \textcolor{black}{0.980} & \textcolor{black}{0.996} & \textcolor{black}{0.995} & \textcolor{black}{0.996} & \textcolor{black}{0.996} & \textcolor{black}{0.979} & \textcolor{black}{0.978} & \textcolor{black}{0.979} & \textcolor{black}{0.979}\\
oct$_h(bshr)$ & \textcolor{black}{0.999} & \textcolor{red}{1.016} & \textcolor{red}{1.018} & \textcolor{red}{1.016} & \textcolor{red}{1.018} & \textcolor{black}{0.998} & \textcolor{black}{0.998} & \textcolor{black}{0.998} & \textcolor{black}{0.998}\\
oct$_h(hshr)$ & \textcolor{red}{1.052} & \textcolor{red}{1.067} & \textcolor{red}{1.066} & \textcolor{red}{1.074} & \textcolor{red}{1.075} & \textcolor{red}{1.050} & \textcolor{red}{1.050} & \textcolor{red}{1.053} & \textcolor{red}{1.053}\\
\addlinespace[0.3em]
\multicolumn{10}{c}{\textbf{$k = 2$}}\\
base & \textcolor{black}{1.000} & \textcolor{black}{0.998} & \textcolor{black}{0.999} & \textcolor{red}{1.071} & \textcolor{red}{1.075} & \textcolor{black}{0.998} & \textcolor{black}{0.999} & \textcolor{red}{1.005} & \textcolor{red}{1.008}\\
ct$(bu)$ & \textcolor{black}{0.831} & \textcolor{black}{0.869} & \textcolor{black}{0.869} & \textcolor{black}{0.869} & \textcolor{black}{0.869} & \textcolor{black}{0.830} & \textcolor{black}{0.829} & \textcolor{black}{0.829} & \textcolor{black}{0.830}\\
ct$(shr_{cs}, bu_{te})$ & \textcolor{black}{\textbf{0.830}} & \textcolor{black}{0.869} & \textcolor{black}{0.868} & \textcolor{black}{\textbf{0.868}} & \textcolor{black}{\textbf{0.868}} & \textcolor{black}{\textbf{0.830}} & \textcolor{blue}{\textbf{0.829}} & \textcolor{black}{\textbf{0.829}} & \textcolor{black}{\textbf{0.830}}\\
ct$(wlsv_{te}, bu_{cs})$ & \textcolor{black}{0.840} & \textcolor{black}{\textbf{0.863}} & \textcolor{black}{\textbf{0.862}} & \textcolor{black}{0.879} & \textcolor{black}{0.878} & \textcolor{black}{0.846} & \textcolor{black}{0.844} & \textcolor{black}{0.845} & \textcolor{black}{0.846}\\
oct$(wlsv)$ & \textcolor{black}{0.851} & \textcolor{black}{0.875} & \textcolor{black}{0.877} & \textcolor{black}{0.891} & \textcolor{black}{0.893} & \textcolor{black}{0.859} & \textcolor{black}{0.859} & \textcolor{black}{0.859} & \textcolor{black}{0.861}\\
oct$(bdshr)$ & \textcolor{black}{0.839} & \textcolor{black}{0.863} & \textcolor{black}{0.863} & \textcolor{black}{0.879} & \textcolor{black}{0.878} & \textcolor{black}{0.845} & \textcolor{black}{0.844} & \textcolor{black}{0.845} & \textcolor{black}{0.846}\\
oct$(shr)$ & \textcolor{black}{0.854} & \textcolor{black}{0.922} & \textcolor{black}{0.909} & \textcolor{black}{0.908} & \textcolor{black}{0.897} & \textcolor{black}{0.833} & \textcolor{black}{0.831} & \textcolor{black}{0.832} & \textcolor{black}{0.832}\\
oct$(bshr)$ & \textcolor{black}{0.869} & \textcolor{black}{0.925} & \textcolor{black}{0.931} & \textcolor{black}{0.911} & \textcolor{black}{0.915} & \textcolor{black}{0.851} & \textcolor{black}{0.851} & \textcolor{black}{0.851} & \textcolor{black}{0.852}\\
oct$(hshr)$ & \textcolor{black}{0.901} & \textcolor{black}{0.908} & \textcolor{black}{0.904} & \textcolor{black}{0.966} & \textcolor{black}{0.952} & \textcolor{black}{0.900} & \textcolor{black}{0.899} & \textcolor{black}{0.901} & \textcolor{black}{0.902}\\
oct$(hbshr)$ & \textcolor{black}{0.915} & \textcolor{black}{0.917} & \textcolor{black}{0.919} & \textcolor{black}{0.964} & \textcolor{black}{0.969} & \textcolor{black}{0.913} & \textcolor{black}{0.913} & \textcolor{black}{0.914} & \textcolor{black}{0.917}\\
oct$_h(shr)$ & \textcolor{black}{0.834} & \textcolor{black}{0.868} & \textcolor{black}{0.865} & \textcolor{black}{0.872} & \textcolor{black}{0.872} & \textcolor{black}{0.833} & \textcolor{black}{0.831} & \textcolor{black}{0.832} & \textcolor{black}{0.832}\\
oct$_h(bshr)$ & \textcolor{black}{0.852} & \textcolor{black}{0.886} & \textcolor{black}{0.890} & \textcolor{black}{0.890} & \textcolor{black}{0.894} & \textcolor{black}{0.851} & \textcolor{black}{0.851} & \textcolor{black}{0.851} & \textcolor{black}{0.852}\\
oct$_h(hshr)$ & \textcolor{black}{0.902} & \textcolor{black}{0.904} & \textcolor{black}{0.904} & \textcolor{black}{0.953} & \textcolor{black}{0.952} & \textcolor{black}{0.900} & \textcolor{black}{0.899} & \textcolor{black}{0.901} & \textcolor{black}{0.902}\\
\bottomrule
\end{tabular}

	\endgroup
	\caption{Simulation experiment. $\overline{RelCRPS}$ defined in Section 5. 
	Approaches performing worse than the benchmark (bootstrap base forecasts, ctjb) are highlighted in red, the best for each column is marked in bold, and the overall lowest value is highlighted in blue. The reconciliation approaches are described in Table 2.}
	\label{tab:ar2crps_app}
\end{table}

\begin{table}[H]
	\centering
	\begingroup
	\spacingset{1}
	\fontsize{9}{11}\selectfont
	
\begin{tabular}[t]{l|ccccccccc}
\toprule
\multicolumn{1}{c}{\textbf{}} & \multicolumn{9}{c}{\textbf{Generation of the base forecasts paths}} \\
\cmidrule(l{0pt}r{0pt}){2-10}
\multicolumn{1}{c}{} & \multicolumn{1}{c}{} & \multicolumn{8}{c}{\makecell[c]{Gaussian approach: sample covariance matrix}} \\
\multicolumn{1}{c}{\makecell[c]{\bfseries Reconciliation\\\bfseries approach}} & \multicolumn{1}{c}{ctjb} & \multicolumn{4}{c}{In-sample residuals} & \multicolumn{4}{c}{Multi-step residuals} \\
 &  & G & B & H & HB & G & B & H & HB\\
\midrule
\addlinespace[0.3em]
\multicolumn{10}{c}{\textbf{$\forall k \in \{2,1\}$}}\\
base & \textcolor{black}{1.000} & \textcolor{red}{1.005} & \textcolor{red}{1.009} & \textcolor{red}{1.039} & \textcolor{red}{1.046} & \textcolor{black}{0.996} & \textcolor{black}{0.999} & \textcolor{black}{1.000} & \textcolor{red}{1.004}\\
ct$(bu)$ & \textcolor{black}{0.897} & \textcolor{black}{0.924} & \textcolor{black}{0.923} & \textcolor{black}{0.924} & \textcolor{black}{0.923} & \textcolor{black}{0.895} & \textcolor{black}{0.896} & \textcolor{black}{0.897} & \textcolor{blue}{\textbf{0.895}}\\
ct$(shr_{cs}, bu_{te})$ & \textcolor{black}{\textbf{0.896}} & \textcolor{black}{0.924} & \textcolor{black}{0.923} & \textcolor{black}{\textbf{0.923}} & \textcolor{black}{\textbf{0.922}} & \textcolor{black}{\textbf{0.895}} & \textcolor{black}{\textbf{0.895}} & \textcolor{black}{\textbf{0.896}} & \textcolor{black}{0.896}\\
ct$(wlsv_{te}, bu_{cs})$ & \textcolor{black}{0.906} & \textcolor{black}{0.924} & \textcolor{black}{0.923} & \textcolor{black}{0.933} & \textcolor{black}{0.932} & \textcolor{black}{0.912} & \textcolor{black}{0.911} & \textcolor{black}{0.910} & \textcolor{black}{0.912}\\
oct$(wlsv)$ & \textcolor{black}{0.916} & \textcolor{black}{0.935} & \textcolor{black}{0.937} & \textcolor{black}{0.944} & \textcolor{black}{0.945} & \textcolor{black}{0.923} & \textcolor{black}{0.923} & \textcolor{black}{0.923} & \textcolor{black}{0.924}\\
oct$(bdshr)$ & \textcolor{black}{0.906} & \textcolor{black}{\textbf{0.923}} & \textcolor{black}{0.923} & \textcolor{black}{0.932} & \textcolor{black}{0.932} & \textcolor{black}{0.910} & \textcolor{black}{0.910} & \textcolor{black}{0.911} & \textcolor{black}{0.912}\\
oct$(shr)$ & \textcolor{black}{0.938} & \textcolor{black}{0.993} & \textcolor{black}{0.980} & \textcolor{black}{0.977} & \textcolor{black}{0.969} & \textcolor{black}{0.898} & \textcolor{black}{0.898} & \textcolor{black}{0.898} & \textcolor{black}{0.897}\\
oct$(bshr)$ & \textcolor{black}{0.947} & \textcolor{black}{0.990} & \textcolor{black}{0.995} & \textcolor{black}{0.979} & \textcolor{black}{0.981} & \textcolor{black}{0.915} & \textcolor{black}{0.915} & \textcolor{black}{0.915} & \textcolor{black}{0.915}\\
oct$(hshr)$ & \textcolor{black}{0.978} & \textcolor{black}{0.987} & \textcolor{black}{0.985} & \textcolor{red}{1.027} & \textcolor{red}{1.016} & \textcolor{black}{0.963} & \textcolor{black}{0.964} & \textcolor{black}{0.966} & \textcolor{black}{0.967}\\
oct$(hbshr)$ & \textcolor{black}{0.977} & \textcolor{black}{0.986} & \textcolor{black}{0.985} & \textcolor{red}{1.012} & \textcolor{red}{1.016} & \textcolor{black}{0.974} & \textcolor{black}{0.976} & \textcolor{black}{0.977} & \textcolor{black}{0.978}\\
oct$_h(shr)$ & \textcolor{black}{0.900} & \textcolor{black}{0.923} & \textcolor{black}{\textbf{0.922}} & \textcolor{black}{0.926} & \textcolor{black}{0.925} & \textcolor{black}{0.898} & \textcolor{black}{0.898} & \textcolor{black}{0.897} & \textcolor{black}{0.898}\\
oct$_h(bshr)$ & \textcolor{black}{0.916} & \textcolor{black}{0.940} & \textcolor{black}{0.943} & \textcolor{black}{0.942} & \textcolor{black}{0.945} & \textcolor{black}{0.914} & \textcolor{black}{0.916} & \textcolor{black}{0.915} & \textcolor{black}{0.916}\\
oct$_h(hshr)$ & \textcolor{black}{0.967} & \textcolor{black}{0.974} & \textcolor{black}{0.974} & \textcolor{red}{1.002} & \textcolor{red}{1.002} & \textcolor{black}{0.964} & \textcolor{black}{0.964} & \textcolor{black}{0.966} & \textcolor{black}{0.967}\\
\addlinespace[0.3em]
\multicolumn{10}{c}{\textbf{$k = 1$}}\\
base & \textcolor{black}{1.000} & \textcolor{red}{1.014} & \textcolor{red}{1.020} & \textcolor{red}{1.015} & \textcolor{red}{1.019} & \textcolor{black}{0.997} & \textcolor{red}{1.000} & \textcolor{black}{0.997} & \textcolor{red}{1.000}\\
ct$(bu)$ & \textcolor{black}{0.969} & \textcolor{black}{0.985} & \textcolor{black}{0.983} & \textcolor{black}{0.985} & \textcolor{black}{0.984} & \textcolor{black}{\textbf{0.967}} & \textcolor{blue}{\textbf{0.967}} & \textcolor{black}{0.968} & \textcolor{black}{\textbf{0.968}}\\
ct$(shr_{cs}, bu_{te})$ & \textcolor{black}{\textbf{0.968}} & \textcolor{black}{\textbf{0.984}} & \textcolor{black}{\textbf{0.983}} & \textcolor{black}{\textbf{0.984}} & \textcolor{black}{\textbf{0.983}} & \textcolor{black}{0.968} & \textcolor{black}{0.967} & \textcolor{black}{\textbf{0.968}} & \textcolor{black}{0.968}\\
ct$(wlsv_{te}, bu_{cs})$ & \textcolor{black}{0.977} & \textcolor{black}{0.991} & \textcolor{black}{0.991} & \textcolor{black}{0.992} & \textcolor{black}{0.992} & \textcolor{black}{0.984} & \textcolor{black}{0.983} & \textcolor{black}{0.981} & \textcolor{black}{0.984}\\
oct$(wlsv)$ & \textcolor{black}{0.989} & \textcolor{red}{1.002} & \textcolor{red}{1.004} & \textcolor{red}{1.003} & \textcolor{red}{1.004} & \textcolor{black}{0.994} & \textcolor{black}{0.995} & \textcolor{black}{0.995} & \textcolor{black}{0.997}\\
oct$(bdshr)$ & \textcolor{black}{0.977} & \textcolor{black}{0.989} & \textcolor{black}{0.991} & \textcolor{black}{0.992} & \textcolor{black}{0.992} & \textcolor{black}{0.981} & \textcolor{black}{0.982} & \textcolor{black}{0.983} & \textcolor{black}{0.985}\\
oct$(shr)$ & \textcolor{red}{1.028} & \textcolor{red}{1.070} & \textcolor{red}{1.056} & \textcolor{red}{1.053} & \textcolor{red}{1.046} & \textcolor{black}{0.969} & \textcolor{black}{0.969} & \textcolor{black}{0.970} & \textcolor{black}{0.969}\\
oct$(bshr)$ & \textcolor{red}{1.034} & \textcolor{red}{1.061} & \textcolor{red}{1.065} & \textcolor{red}{1.051} & \textcolor{red}{1.053} & \textcolor{black}{0.985} & \textcolor{black}{0.987} & \textcolor{black}{0.986} & \textcolor{black}{0.987}\\
oct$(hshr)$ & \textcolor{red}{1.066} & \textcolor{red}{1.075} & \textcolor{red}{1.076} & \textcolor{red}{1.099} & \textcolor{red}{1.090} & \textcolor{red}{1.037} & \textcolor{red}{1.037} & \textcolor{red}{1.039} & \textcolor{red}{1.039}\\
oct$(hbshr)$ & \textcolor{red}{1.050} & \textcolor{red}{1.065} & \textcolor{red}{1.065} & \textcolor{red}{1.070} & \textcolor{red}{1.073} & \textcolor{red}{1.048} & \textcolor{red}{1.049} & \textcolor{red}{1.049} & \textcolor{red}{1.052}\\
oct$_h(shr)$ & \textcolor{black}{0.971} & \textcolor{black}{0.985} & \textcolor{black}{0.985} & \textcolor{black}{0.986} & \textcolor{black}{0.986} & \textcolor{black}{0.969} & \textcolor{black}{0.969} & \textcolor{black}{0.969} & \textcolor{black}{0.969}\\
oct$_h(bshr)$ & \textcolor{black}{0.987} & \textcolor{red}{1.002} & \textcolor{red}{1.005} & \textcolor{red}{1.002} & \textcolor{red}{1.005} & \textcolor{black}{0.986} & \textcolor{black}{0.987} & \textcolor{black}{0.987} & \textcolor{black}{0.988}\\
oct$_h(hshr)$ & \textcolor{red}{1.040} & \textcolor{red}{1.053} & \textcolor{red}{1.053} & \textcolor{red}{1.059} & \textcolor{red}{1.058} & \textcolor{red}{1.036} & \textcolor{red}{1.036} & \textcolor{red}{1.040} & \textcolor{red}{1.040}\\
\addlinespace[0.3em]
\multicolumn{10}{c}{\textbf{$k = 2$}}\\
base & \textcolor{black}{1.000} & \textcolor{black}{0.997} & \textcolor{black}{0.999} & \textcolor{red}{1.063} & \textcolor{red}{1.073} & \textcolor{black}{0.996} & \textcolor{black}{0.998} & \textcolor{red}{1.003} & \textcolor{red}{1.008}\\
ct$(bu)$ & \textcolor{black}{0.831} & \textcolor{black}{0.867} & \textcolor{black}{0.867} & \textcolor{black}{0.867} & \textcolor{black}{0.867} & \textcolor{black}{0.829} & \textcolor{black}{0.829} & \textcolor{black}{0.830} & \textcolor{blue}{\textbf{0.828}}\\
ct$(shr_{cs}, bu_{te})$ & \textcolor{black}{\textbf{0.829}} & \textcolor{black}{0.867} & \textcolor{black}{0.866} & \textcolor{black}{\textbf{0.866}} & \textcolor{black}{\textbf{0.865}} & \textcolor{black}{\textbf{0.828}} & \textcolor{black}{\textbf{0.829}} & \textcolor{black}{\textbf{0.829}} & \textcolor{black}{0.829}\\
ct$(wlsv_{te}, bu_{cs})$ & \textcolor{black}{0.839} & \textcolor{black}{\textbf{0.860}} & \textcolor{black}{\textbf{0.860}} & \textcolor{black}{0.877} & \textcolor{black}{0.876} & \textcolor{black}{0.844} & \textcolor{black}{0.844} & \textcolor{black}{0.844} & \textcolor{black}{0.845}\\
oct$(wlsv)$ & \textcolor{black}{0.849} & \textcolor{black}{0.872} & \textcolor{black}{0.875} & \textcolor{black}{0.887} & \textcolor{black}{0.890} & \textcolor{black}{0.858} & \textcolor{black}{0.856} & \textcolor{black}{0.856} & \textcolor{black}{0.857}\\
oct$(bdshr)$ & \textcolor{black}{0.839} & \textcolor{black}{0.861} & \textcolor{black}{0.861} & \textcolor{black}{0.876} & \textcolor{black}{0.875} & \textcolor{black}{0.845} & \textcolor{black}{0.843} & \textcolor{black}{0.845} & \textcolor{black}{0.844}\\
oct$(shr)$ & \textcolor{black}{0.856} & \textcolor{black}{0.921} & \textcolor{black}{0.909} & \textcolor{black}{0.907} & \textcolor{black}{0.898} & \textcolor{black}{0.832} & \textcolor{black}{0.831} & \textcolor{black}{0.832} & \textcolor{black}{0.831}\\
oct$(bshr)$ & \textcolor{black}{0.868} & \textcolor{black}{0.924} & \textcolor{black}{0.930} & \textcolor{black}{0.911} & \textcolor{black}{0.915} & \textcolor{black}{0.849} & \textcolor{black}{0.848} & \textcolor{black}{0.849} & \textcolor{black}{0.848}\\
oct$(hshr)$ & \textcolor{black}{0.897} & \textcolor{black}{0.905} & \textcolor{black}{0.901} & \textcolor{black}{0.959} & \textcolor{black}{0.947} & \textcolor{black}{0.895} & \textcolor{black}{0.896} & \textcolor{black}{0.898} & \textcolor{black}{0.899}\\
oct$(hbshr)$ & \textcolor{black}{0.910} & \textcolor{black}{0.912} & \textcolor{black}{0.912} & \textcolor{black}{0.957} & \textcolor{black}{0.961} & \textcolor{black}{0.906} & \textcolor{black}{0.909} & \textcolor{black}{0.909} & \textcolor{black}{0.910}\\
oct$_h(shr)$ & \textcolor{black}{0.835} & \textcolor{black}{0.865} & \textcolor{black}{0.862} & \textcolor{black}{0.870} & \textcolor{black}{0.868} & \textcolor{black}{0.833} & \textcolor{black}{0.833} & \textcolor{black}{0.831} & \textcolor{black}{0.832}\\
oct$_h(bshr)$ & \textcolor{black}{0.850} & \textcolor{black}{0.881} & \textcolor{black}{0.885} & \textcolor{black}{0.886} & \textcolor{black}{0.889} & \textcolor{black}{0.847} & \textcolor{black}{0.849} & \textcolor{black}{0.849} & \textcolor{black}{0.850}\\
oct$_h(hshr)$ & \textcolor{black}{0.900} & \textcolor{black}{0.902} & \textcolor{black}{0.901} & \textcolor{black}{0.947} & \textcolor{black}{0.948} & \textcolor{black}{0.897} & \textcolor{black}{0.896} & \textcolor{black}{0.897} & \textcolor{black}{0.899}\\
\bottomrule
\end{tabular}

	\endgroup
	\caption{Simulation experiment. ES ratio indices defined in Section 5. 
	Approaches performing worse than the benchmark (bootstrap base forecasts, ctjb) are highlighted in red, the best for each column is marked in bold, and the overall lowest value is highlighted in blue. The reconciliation approaches are described in Table 2.}
	\label{tab:ar2es_app}
\end{table}

\begin{table}[H]
	\centering
	\begingroup
	\spacingset{1}
	\fontsize{9}{11}\selectfont
	
\begin{tabular}[t]{l|ccccccccc}
\toprule
\multicolumn{1}{c}{\textbf{}} & \multicolumn{9}{c}{\textbf{Generation of the base forecasts paths}} \\
\cmidrule(l{0pt}r{0pt}){2-10}
\multicolumn{1}{c}{} & \multicolumn{1}{c}{} & \multicolumn{8}{c}{\makecell[c]{Gaussian approach: shrinkage covariance matrix}} \\
\multicolumn{1}{c}{\makecell[c]{\bfseries Reconciliation\\\bfseries approach}} & \multicolumn{1}{c}{ctjb} & \multicolumn{4}{c}{In-sample residuals} & \multicolumn{4}{c}{Multi-step residuals} \\
 &  & G & B & H & HB & G & B & H & HB\\
\midrule
\addlinespace[0.3em]
\multicolumn{10}{c}{\textbf{$\forall k \in \{2,1\}$}}\\
base & \textcolor{red}{1.007} & \textcolor{red}{1.009} & \textcolor{red}{1.044} & \textcolor{red}{1.046} & \textcolor{black}{0.997} & \textcolor{black}{0.999} & \textcolor{red}{1.002} & \textcolor{red}{1.003} & \textcolor{black}{1.000}\\
ct$(bu)$ & \textcolor{black}{0.929} & \textcolor{black}{0.929} & \textcolor{black}{0.929} & \textcolor{black}{0.929} & \textcolor{black}{0.899} & \textcolor{black}{0.900} & \textcolor{black}{0.900} & \textcolor{black}{0.900} & \textcolor{black}{0.901}\\
ct$(shr_{cs}, bu_{te})$ & \textcolor{black}{\textbf{0.929}} & \textcolor{black}{0.928} & \textcolor{black}{\textbf{0.929}} & \textcolor{black}{\textbf{0.928}} & \textcolor{blue}{\textbf{0.899}} & \textcolor{black}{\textbf{0.899}} & \textcolor{black}{\textbf{0.900}} & \textcolor{black}{\textbf{0.900}} & \textcolor{black}{\textbf{0.901}}\\
ct$(wlsv_{te}, bu_{cs})$ & \textcolor{black}{0.930} & \textcolor{black}{0.930} & \textcolor{black}{0.939} & \textcolor{black}{0.938} & \textcolor{black}{0.915} & \textcolor{black}{0.916} & \textcolor{black}{0.917} & \textcolor{black}{0.916} & \textcolor{black}{0.910}\\
oct$(wlsv)$ & \textcolor{black}{0.943} & \textcolor{black}{0.944} & \textcolor{black}{0.951} & \textcolor{black}{0.952} & \textcolor{black}{0.929} & \textcolor{black}{0.930} & \textcolor{black}{0.931} & \textcolor{black}{0.930} & \textcolor{black}{0.922}\\
oct$(bdshr)$ & \textcolor{black}{0.930} & \textcolor{black}{0.930} & \textcolor{black}{0.938} & \textcolor{black}{0.938} & \textcolor{black}{0.915} & \textcolor{black}{0.916} & \textcolor{black}{0.916} & \textcolor{black}{0.916} & \textcolor{black}{0.910}\\
oct$(shr)$ & \textcolor{black}{0.994} & \textcolor{black}{0.982} & \textcolor{black}{0.980} & \textcolor{black}{0.973} & \textcolor{black}{0.902} & \textcolor{black}{0.902} & \textcolor{black}{0.903} & \textcolor{black}{0.902} & \textcolor{black}{0.941}\\
oct$(bshr)$ & \textcolor{black}{0.995} & \textcolor{black}{0.998} & \textcolor{black}{0.983} & \textcolor{black}{0.986} & \textcolor{black}{0.921} & \textcolor{black}{0.922} & \textcolor{black}{0.922} & \textcolor{black}{0.922} & \textcolor{black}{0.951}\\
oct$(hshr)$ & \textcolor{black}{0.994} & \textcolor{black}{0.994} & \textcolor{red}{1.035} & \textcolor{red}{1.025} & \textcolor{black}{0.971} & \textcolor{black}{0.972} & \textcolor{black}{0.974} & \textcolor{black}{0.974} & \textcolor{black}{0.987}\\
oct$(hbshr)$ & \textcolor{black}{0.995} & \textcolor{black}{0.997} & \textcolor{red}{1.025} & \textcolor{red}{1.027} & \textcolor{black}{0.984} & \textcolor{black}{0.986} & \textcolor{black}{0.988} & \textcolor{black}{0.988} & \textcolor{black}{0.987}\\
oct$_h(shr)$ & \textcolor{black}{0.929} & \textcolor{black}{\textbf{0.928}} & \textcolor{black}{0.932} & \textcolor{black}{0.932} & \textcolor{black}{0.902} & \textcolor{black}{0.902} & \textcolor{black}{0.903} & \textcolor{black}{0.902} & \textcolor{black}{0.904}\\
oct$_h(bshr)$ & \textcolor{black}{0.948} & \textcolor{black}{0.951} & \textcolor{black}{0.951} & \textcolor{black}{0.953} & \textcolor{black}{0.921} & \textcolor{black}{0.922} & \textcolor{black}{0.922} & \textcolor{black}{0.922} & \textcolor{black}{0.923}\\
oct$_h(hshr)$ & \textcolor{black}{0.982} & \textcolor{black}{0.982} & \textcolor{red}{1.011} & \textcolor{red}{1.011} & \textcolor{black}{0.971} & \textcolor{black}{0.972} & \textcolor{black}{0.974} & \textcolor{black}{0.974} & \textcolor{black}{0.974}\\
\addlinespace[0.3em]
\multicolumn{10}{c}{\textbf{$k = 1$}}\\
base & \textcolor{red}{1.017} & \textcolor{red}{1.019} & \textcolor{red}{1.017} & \textcolor{red}{1.019} & \textcolor{black}{0.998} & \textcolor{black}{0.999} & \textcolor{black}{0.999} & \textcolor{black}{0.999} & \textcolor{black}{1.000}\\
ct$(bu)$ & \textcolor{black}{0.994} & \textcolor{black}{0.994} & \textcolor{black}{0.994} & \textcolor{black}{0.994} & \textcolor{black}{0.976} & \textcolor{black}{0.976} & \textcolor{black}{0.977} & \textcolor{black}{0.976} & \textcolor{black}{0.978}\\
ct$(shr_{cs}, bu_{te})$ & \textcolor{black}{\textbf{0.993}} & \textcolor{black}{\textbf{0.993}} & \textcolor{black}{\textbf{0.993}} & \textcolor{black}{\textbf{0.993}} & \textcolor{blue}{\textbf{0.975}} & \textcolor{black}{\textbf{0.976}} & \textcolor{black}{\textbf{0.976}} & \textcolor{black}{\textbf{0.976}} & \textcolor{black}{\textbf{0.977}}\\
ct$(wlsv_{te}, bu_{cs})$ & \textcolor{red}{1.002} & \textcolor{red}{1.002} & \textcolor{red}{1.003} & \textcolor{red}{1.003} & \textcolor{black}{0.992} & \textcolor{black}{0.993} & \textcolor{black}{0.993} & \textcolor{black}{0.993} & \textcolor{black}{0.986}\\
oct$(wlsv)$ & \textcolor{red}{1.015} & \textcolor{red}{1.015} & \textcolor{red}{1.015} & \textcolor{red}{1.016} & \textcolor{red}{1.005} & \textcolor{red}{1.007} & \textcolor{red}{1.007} & \textcolor{red}{1.007} & \textcolor{black}{0.998}\\
oct$(bdshr)$ & \textcolor{red}{1.002} & \textcolor{red}{1.002} & \textcolor{red}{1.003} & \textcolor{red}{1.002} & \textcolor{black}{0.992} & \textcolor{black}{0.992} & \textcolor{black}{0.993} & \textcolor{black}{0.992} & \textcolor{black}{0.986}\\
oct$(shr)$ & \textcolor{red}{1.076} & \textcolor{red}{1.065} & \textcolor{red}{1.061} & \textcolor{red}{1.056} & \textcolor{black}{0.978} & \textcolor{black}{0.978} & \textcolor{black}{0.979} & \textcolor{black}{0.978} & \textcolor{red}{1.037}\\
oct$(bshr)$ & \textcolor{red}{1.070} & \textcolor{red}{1.072} & \textcolor{red}{1.060} & \textcolor{red}{1.062} & \textcolor{black}{0.997} & \textcolor{black}{0.998} & \textcolor{black}{0.998} & \textcolor{black}{0.998} & \textcolor{red}{1.041}\\
oct$(hshr)$ & \textcolor{red}{1.090} & \textcolor{red}{1.092} & \textcolor{red}{1.114} & \textcolor{red}{1.105} & \textcolor{red}{1.049} & \textcolor{red}{1.050} & \textcolor{red}{1.053} & \textcolor{red}{1.052} & \textcolor{red}{1.080}\\
oct$(hbshr)$ & \textcolor{red}{1.080} & \textcolor{red}{1.081} & \textcolor{red}{1.089} & \textcolor{red}{1.090} & \textcolor{red}{1.062} & \textcolor{red}{1.064} & \textcolor{red}{1.066} & \textcolor{red}{1.066} & \textcolor{red}{1.065}\\
oct$_h(shr)$ & \textcolor{black}{0.996} & \textcolor{black}{0.995} & \textcolor{black}{0.996} & \textcolor{black}{0.996} & \textcolor{black}{0.978} & \textcolor{black}{0.978} & \textcolor{black}{0.979} & \textcolor{black}{0.978} & \textcolor{black}{0.980}\\
oct$_h(bshr)$ & \textcolor{red}{1.016} & \textcolor{red}{1.018} & \textcolor{red}{1.016} & \textcolor{red}{1.018} & \textcolor{black}{0.997} & \textcolor{black}{0.998} & \textcolor{black}{0.998} & \textcolor{black}{0.998} & \textcolor{black}{0.999}\\
oct$_h(hshr)$ & \textcolor{red}{1.066} & \textcolor{red}{1.067} & \textcolor{red}{1.075} & \textcolor{red}{1.075} & \textcolor{red}{1.049} & \textcolor{red}{1.050} & \textcolor{red}{1.053} & \textcolor{red}{1.052} & \textcolor{red}{1.052}\\
\addlinespace[0.3em]
\multicolumn{10}{c}{\textbf{$k = 2$}}\\
base & \textcolor{black}{0.997} & \textcolor{black}{0.999} & \textcolor{red}{1.071} & \textcolor{red}{1.074} & \textcolor{black}{0.997} & \textcolor{black}{0.999} & \textcolor{red}{1.005} & \textcolor{red}{1.008} & \textcolor{black}{1.000}\\
ct$(bu)$ & \textcolor{black}{0.869} & \textcolor{black}{0.868} & \textcolor{black}{0.868} & \textcolor{black}{0.868} & \textcolor{black}{0.829} & \textcolor{black}{0.829} & \textcolor{black}{0.830} & \textcolor{black}{0.830} & \textcolor{black}{0.831}\\
ct$(shr_{cs}, bu_{te})$ & \textcolor{black}{0.868} & \textcolor{black}{0.867} & \textcolor{black}{\textbf{0.868}} & \textcolor{black}{\textbf{0.867}} & \textcolor{blue}{\textbf{0.829}} & \textcolor{black}{\textbf{0.829}} & \textcolor{black}{\textbf{0.830}} & \textcolor{black}{\textbf{0.829}} & \textcolor{black}{\textbf{0.830}}\\
ct$(wlsv_{te}, bu_{cs})$ & \textcolor{black}{\textbf{0.863}} & \textcolor{black}{\textbf{0.862}} & \textcolor{black}{0.878} & \textcolor{black}{0.878} & \textcolor{black}{0.845} & \textcolor{black}{0.845} & \textcolor{black}{0.846} & \textcolor{black}{0.846} & \textcolor{black}{0.840}\\
oct$(wlsv)$ & \textcolor{black}{0.876} & \textcolor{black}{0.877} & \textcolor{black}{0.891} & \textcolor{black}{0.892} & \textcolor{black}{0.859} & \textcolor{black}{0.860} & \textcolor{black}{0.860} & \textcolor{black}{0.860} & \textcolor{black}{0.851}\\
oct$(bdshr)$ & \textcolor{black}{0.863} & \textcolor{black}{0.863} & \textcolor{black}{0.878} & \textcolor{black}{0.877} & \textcolor{black}{0.844} & \textcolor{black}{0.845} & \textcolor{black}{0.846} & \textcolor{black}{0.845} & \textcolor{black}{0.839}\\
oct$(shr)$ & \textcolor{black}{0.918} & \textcolor{black}{0.906} & \textcolor{black}{0.906} & \textcolor{black}{0.897} & \textcolor{black}{0.832} & \textcolor{black}{0.832} & \textcolor{black}{0.833} & \textcolor{black}{0.832} & \textcolor{black}{0.854}\\
oct$(bshr)$ & \textcolor{black}{0.924} & \textcolor{black}{0.928} & \textcolor{black}{0.911} & \textcolor{black}{0.915} & \textcolor{black}{0.850} & \textcolor{black}{0.851} & \textcolor{black}{0.852} & \textcolor{black}{0.851} & \textcolor{black}{0.869}\\
oct$(hshr)$ & \textcolor{black}{0.907} & \textcolor{black}{0.905} & \textcolor{black}{0.962} & \textcolor{black}{0.951} & \textcolor{black}{0.898} & \textcolor{black}{0.899} & \textcolor{black}{0.902} & \textcolor{black}{0.902} & \textcolor{black}{0.901}\\
oct$(hbshr)$ & \textcolor{black}{0.917} & \textcolor{black}{0.919} & \textcolor{black}{0.964} & \textcolor{black}{0.968} & \textcolor{black}{0.912} & \textcolor{black}{0.913} & \textcolor{black}{0.915} & \textcolor{black}{0.916} & \textcolor{black}{0.915}\\
oct$_h(shr)$ & \textcolor{black}{0.867} & \textcolor{black}{0.864} & \textcolor{black}{0.872} & \textcolor{black}{0.871} & \textcolor{black}{0.832} & \textcolor{black}{0.832} & \textcolor{black}{0.833} & \textcolor{black}{0.832} & \textcolor{black}{0.834}\\
oct$_h(bshr)$ & \textcolor{black}{0.886} & \textcolor{black}{0.890} & \textcolor{black}{0.890} & \textcolor{black}{0.893} & \textcolor{black}{0.850} & \textcolor{black}{0.851} & \textcolor{black}{0.852} & \textcolor{black}{0.851} & \textcolor{black}{0.852}\\
oct$_h(hshr)$ & \textcolor{black}{0.904} & \textcolor{black}{0.905} & \textcolor{black}{0.952} & \textcolor{black}{0.952} & \textcolor{black}{0.898} & \textcolor{black}{0.899} & \textcolor{black}{0.902} & \textcolor{black}{0.902} & \textcolor{black}{0.902}\\
\bottomrule
\end{tabular}

	\endgroup
	\caption{Simulation experiment. $\overline{RelCRPS}$ defined in Section 5. 
	Approaches performing worse than the benchmark (bootstrap base forecasts, ctjb) are highlighted in red, the best for each column is marked in bold, and the overall lowest value is highlighted in blue. The reconciliation approaches are described in Table 2.}
	\label{tab:ar2crps_app_shr}
\end{table}

\begin{table}[H]
	\centering
	\begingroup
	\spacingset{1}
	\fontsize{9}{11}\selectfont
	
\begin{tabular}[t]{l|ccccccccc}
\toprule
\multicolumn{1}{c}{\textbf{}} & \multicolumn{9}{c}{\textbf{Generation of the base forecasts paths}} \\
\cmidrule(l{0pt}r{0pt}){2-10}
\multicolumn{1}{c}{} & \multicolumn{1}{c}{} & \multicolumn{8}{c}{\makecell[c]{Gaussian approach: shrinkage covariance matrix}} \\
\multicolumn{1}{c}{\makecell[c]{\bfseries Reconciliation\\\bfseries approach}} & \multicolumn{1}{c}{ctjb} & \multicolumn{4}{c}{In-sample residuals} & \multicolumn{4}{c}{Multi-step residuals} \\
 &  & G & B & H & HB & G & B & H & HB\\
\midrule
\addlinespace[0.3em]
\multicolumn{10}{c}{\textbf{$\forall k \in \{2,1\}$}}\\
base & \textcolor{red}{1.005} & \textcolor{red}{1.008} & \textcolor{red}{1.039} & \textcolor{red}{1.045} & \textcolor{black}{0.996} & \textcolor{black}{0.999} & \textcolor{black}{1.000} & \textcolor{red}{1.003} & \textcolor{black}{1.000}\\
ct$(bu)$ & \textcolor{black}{\textbf{0.923}} & \textcolor{black}{0.923} & \textcolor{black}{0.923} & \textcolor{black}{0.923} & \textcolor{black}{\textbf{0.895}} & \textcolor{black}{0.896} & \textcolor{black}{0.897} & \textcolor{black}{0.897} & \textcolor{black}{0.897}\\
ct$(shr_{cs}, bu_{te})$ & \textcolor{black}{0.923} & \textcolor{black}{0.922} & \textcolor{black}{\textbf{0.922}} & \textcolor{black}{\textbf{0.922}} & \textcolor{black}{0.896} & \textcolor{blue}{\textbf{0.895}} & \textcolor{black}{\textbf{0.895}} & \textcolor{black}{\textbf{0.895}} & \textcolor{black}{\textbf{0.896}}\\
ct$(wlsv_{te}, bu_{cs})$ & \textcolor{black}{0.924} & \textcolor{black}{0.924} & \textcolor{black}{0.932} & \textcolor{black}{0.932} & \textcolor{black}{0.910} & \textcolor{black}{0.911} & \textcolor{black}{0.911} & \textcolor{black}{0.911} & \textcolor{black}{0.906}\\
oct$(wlsv)$ & \textcolor{black}{0.935} & \textcolor{black}{0.937} & \textcolor{black}{0.944} & \textcolor{black}{0.945} & \textcolor{black}{0.922} & \textcolor{black}{0.924} & \textcolor{black}{0.923} & \textcolor{black}{0.923} & \textcolor{black}{0.916}\\
oct$(bdshr)$ & \textcolor{black}{0.924} & \textcolor{black}{0.924} & \textcolor{black}{0.932} & \textcolor{black}{0.931} & \textcolor{black}{0.909} & \textcolor{black}{0.911} & \textcolor{black}{0.911} & \textcolor{black}{0.910} & \textcolor{black}{0.906}\\
oct$(shr)$ & \textcolor{black}{0.989} & \textcolor{black}{0.978} & \textcolor{black}{0.975} & \textcolor{black}{0.968} & \textcolor{black}{0.897} & \textcolor{black}{0.898} & \textcolor{black}{0.898} & \textcolor{black}{0.898} & \textcolor{black}{0.938}\\
oct$(bshr)$ & \textcolor{black}{0.990} & \textcolor{black}{0.993} & \textcolor{black}{0.978} & \textcolor{black}{0.981} & \textcolor{black}{0.915} & \textcolor{black}{0.915} & \textcolor{black}{0.915} & \textcolor{black}{0.915} & \textcolor{black}{0.947}\\
oct$(hshr)$ & \textcolor{black}{0.986} & \textcolor{black}{0.985} & \textcolor{red}{1.024} & \textcolor{red}{1.015} & \textcolor{black}{0.963} & \textcolor{black}{0.964} & \textcolor{black}{0.966} & \textcolor{black}{0.967} & \textcolor{black}{0.978}\\
oct$(hbshr)$ & \textcolor{black}{0.985} & \textcolor{black}{0.986} & \textcolor{red}{1.012} & \textcolor{red}{1.015} & \textcolor{black}{0.973} & \textcolor{black}{0.976} & \textcolor{black}{0.977} & \textcolor{black}{0.978} & \textcolor{black}{0.977}\\
oct$_h(shr)$ & \textcolor{black}{0.923} & \textcolor{black}{\textbf{0.922}} & \textcolor{black}{0.925} & \textcolor{black}{0.925} & \textcolor{black}{0.897} & \textcolor{black}{0.898} & \textcolor{black}{0.898} & \textcolor{black}{0.898} & \textcolor{black}{0.900}\\
oct$_h(bshr)$ & \textcolor{black}{0.941} & \textcolor{black}{0.943} & \textcolor{black}{0.942} & \textcolor{black}{0.945} & \textcolor{black}{0.913} & \textcolor{black}{0.915} & \textcolor{black}{0.915} & \textcolor{black}{0.915} & \textcolor{black}{0.916}\\
oct$_h(hshr)$ & \textcolor{black}{0.974} & \textcolor{black}{0.975} & \textcolor{red}{1.001} & \textcolor{red}{1.001} & \textcolor{black}{0.964} & \textcolor{black}{0.964} & \textcolor{black}{0.966} & \textcolor{black}{0.966} & \textcolor{black}{0.967}\\
\addlinespace[0.3em]
\multicolumn{10}{c}{\textbf{$k = 1$}}\\
base & \textcolor{red}{1.014} & \textcolor{red}{1.018} & \textcolor{red}{1.015} & \textcolor{red}{1.019} & \textcolor{black}{0.997} & \textcolor{black}{0.999} & \textcolor{black}{0.997} & \textcolor{black}{0.998} & \textcolor{black}{1.000}\\
ct$(bu)$ & \textcolor{black}{0.983} & \textcolor{black}{0.984} & \textcolor{black}{0.984} & \textcolor{black}{0.984} & \textcolor{black}{0.967} & \textcolor{black}{0.967} & \textcolor{black}{0.969} & \textcolor{black}{0.969} & \textcolor{black}{0.969}\\
ct$(shr_{cs}, bu_{te})$ & \textcolor{black}{\textbf{0.983}} & \textcolor{black}{\textbf{0.982}} & \textcolor{black}{\textbf{0.982}} & \textcolor{black}{\textbf{0.983}} & \textcolor{black}{\textbf{0.966}} & \textcolor{black}{\textbf{0.967}} & \textcolor{black}{\textbf{0.966}} & \textcolor{blue}{\textbf{0.966}} & \textcolor{black}{\textbf{0.968}}\\
ct$(wlsv_{te}, bu_{cs})$ & \textcolor{black}{0.991} & \textcolor{black}{0.992} & \textcolor{black}{0.993} & \textcolor{black}{0.992} & \textcolor{black}{0.983} & \textcolor{black}{0.983} & \textcolor{black}{0.983} & \textcolor{black}{0.983} & \textcolor{black}{0.977}\\
oct$(wlsv)$ & \textcolor{red}{1.002} & \textcolor{red}{1.004} & \textcolor{red}{1.004} & \textcolor{red}{1.004} & \textcolor{black}{0.994} & \textcolor{black}{0.995} & \textcolor{black}{0.994} & \textcolor{black}{0.996} & \textcolor{black}{0.989}\\
oct$(bdshr)$ & \textcolor{black}{0.990} & \textcolor{black}{0.991} & \textcolor{black}{0.992} & \textcolor{black}{0.991} & \textcolor{black}{0.981} & \textcolor{black}{0.983} & \textcolor{black}{0.984} & \textcolor{black}{0.982} & \textcolor{black}{0.977}\\
oct$(shr)$ & \textcolor{red}{1.065} & \textcolor{red}{1.054} & \textcolor{red}{1.051} & \textcolor{red}{1.045} & \textcolor{black}{0.969} & \textcolor{black}{0.970} & \textcolor{black}{0.970} & \textcolor{black}{0.969} & \textcolor{red}{1.028}\\
oct$(bshr)$ & \textcolor{red}{1.061} & \textcolor{red}{1.063} & \textcolor{red}{1.050} & \textcolor{red}{1.052} & \textcolor{black}{0.986} & \textcolor{black}{0.986} & \textcolor{black}{0.987} & \textcolor{black}{0.985} & \textcolor{red}{1.034}\\
oct$(hshr)$ & \textcolor{red}{1.076} & \textcolor{red}{1.077} & \textcolor{red}{1.095} & \textcolor{red}{1.088} & \textcolor{red}{1.036} & \textcolor{red}{1.036} & \textcolor{red}{1.040} & \textcolor{red}{1.038} & \textcolor{red}{1.066}\\
oct$(hbshr)$ & \textcolor{red}{1.064} & \textcolor{red}{1.065} & \textcolor{red}{1.071} & \textcolor{red}{1.073} & \textcolor{red}{1.047} & \textcolor{red}{1.048} & \textcolor{red}{1.050} & \textcolor{red}{1.050} & \textcolor{red}{1.050}\\
oct$_h(shr)$ & \textcolor{black}{0.984} & \textcolor{black}{0.985} & \textcolor{black}{0.986} & \textcolor{black}{0.986} & \textcolor{black}{0.969} & \textcolor{black}{0.969} & \textcolor{black}{0.969} & \textcolor{black}{0.968} & \textcolor{black}{0.971}\\
oct$_h(bshr)$ & \textcolor{red}{1.003} & \textcolor{red}{1.005} & \textcolor{red}{1.003} & \textcolor{red}{1.005} & \textcolor{black}{0.985} & \textcolor{black}{0.987} & \textcolor{black}{0.987} & \textcolor{black}{0.986} & \textcolor{black}{0.987}\\
oct$_h(hshr)$ & \textcolor{red}{1.054} & \textcolor{red}{1.054} & \textcolor{red}{1.059} & \textcolor{red}{1.059} & \textcolor{red}{1.036} & \textcolor{red}{1.037} & \textcolor{red}{1.038} & \textcolor{red}{1.039} & \textcolor{red}{1.040}\\
\addlinespace[0.3em]
\multicolumn{10}{c}{\textbf{$k = 2$}}\\
base & \textcolor{black}{0.996} & \textcolor{black}{0.998} & \textcolor{red}{1.064} & \textcolor{red}{1.073} & \textcolor{black}{0.995} & \textcolor{black}{0.999} & \textcolor{red}{1.003} & \textcolor{red}{1.007} & \textcolor{black}{1.000}\\
ct$(bu)$ & \textcolor{black}{0.867} & \textcolor{black}{0.866} & \textcolor{black}{0.867} & \textcolor{black}{0.866} & \textcolor{blue}{\textbf{0.829}} & \textcolor{black}{0.829} & \textcolor{black}{0.830} & \textcolor{black}{0.830} & \textcolor{black}{0.831}\\
ct$(shr_{cs}, bu_{te})$ & \textcolor{black}{0.867} & \textcolor{black}{0.866} & \textcolor{black}{\textbf{0.866}} & \textcolor{black}{\textbf{0.866}} & \textcolor{black}{0.830} & \textcolor{black}{\textbf{0.829}} & \textcolor{black}{\textbf{0.830}} & \textcolor{black}{\textbf{0.830}} & \textcolor{black}{\textbf{0.829}}\\
ct$(wlsv_{te}, bu_{cs})$ & \textcolor{black}{\textbf{0.861}} & \textcolor{black}{\textbf{0.861}} & \textcolor{black}{0.875} & \textcolor{black}{0.875} & \textcolor{black}{0.843} & \textcolor{black}{0.845} & \textcolor{black}{0.845} & \textcolor{black}{0.845} & \textcolor{black}{0.839}\\
oct$(wlsv)$ & \textcolor{black}{0.873} & \textcolor{black}{0.874} & \textcolor{black}{0.888} & \textcolor{black}{0.889} & \textcolor{black}{0.856} & \textcolor{black}{0.857} & \textcolor{black}{0.857} & \textcolor{black}{0.856} & \textcolor{black}{0.849}\\
oct$(bdshr)$ & \textcolor{black}{0.862} & \textcolor{black}{0.861} & \textcolor{black}{0.876} & \textcolor{black}{0.874} & \textcolor{black}{0.843} & \textcolor{black}{0.844} & \textcolor{black}{0.844} & \textcolor{black}{0.844} & \textcolor{black}{0.839}\\
oct$(shr)$ & \textcolor{black}{0.918} & \textcolor{black}{0.907} & \textcolor{black}{0.905} & \textcolor{black}{0.898} & \textcolor{black}{0.831} & \textcolor{black}{0.832} & \textcolor{black}{0.832} & \textcolor{black}{0.832} & \textcolor{black}{0.856}\\
oct$(bshr)$ & \textcolor{black}{0.924} & \textcolor{black}{0.928} & \textcolor{black}{0.911} & \textcolor{black}{0.915} & \textcolor{black}{0.849} & \textcolor{black}{0.849} & \textcolor{black}{0.849} & \textcolor{black}{0.849} & \textcolor{black}{0.868}\\
oct$(hshr)$ & \textcolor{black}{0.904} & \textcolor{black}{0.901} & \textcolor{black}{0.957} & \textcolor{black}{0.946} & \textcolor{black}{0.895} & \textcolor{black}{0.896} & \textcolor{black}{0.898} & \textcolor{black}{0.900} & \textcolor{black}{0.897}\\
oct$(hbshr)$ & \textcolor{black}{0.912} & \textcolor{black}{0.913} & \textcolor{black}{0.956} & \textcolor{black}{0.961} & \textcolor{black}{0.905} & \textcolor{black}{0.909} & \textcolor{black}{0.909} & \textcolor{black}{0.911} & \textcolor{black}{0.910}\\
oct$_h(shr)$ & \textcolor{black}{0.866} & \textcolor{black}{0.863} & \textcolor{black}{0.869} & \textcolor{black}{0.869} & \textcolor{black}{0.830} & \textcolor{black}{0.831} & \textcolor{black}{0.832} & \textcolor{black}{0.832} & \textcolor{black}{0.835}\\
oct$_h(bshr)$ & \textcolor{black}{0.882} & \textcolor{black}{0.886} & \textcolor{black}{0.886} & \textcolor{black}{0.889} & \textcolor{black}{0.846} & \textcolor{black}{0.848} & \textcolor{black}{0.849} & \textcolor{black}{0.848} & \textcolor{black}{0.850}\\
oct$_h(hshr)$ & \textcolor{black}{0.901} & \textcolor{black}{0.902} & \textcolor{black}{0.947} & \textcolor{black}{0.946} & \textcolor{black}{0.896} & \textcolor{black}{0.896} & \textcolor{black}{0.898} & \textcolor{black}{0.899} & \textcolor{black}{0.900}\\
\bottomrule
\end{tabular}

	\endgroup
	\caption{Simulation experiment. ES ratio indices defined in Section 5. 
	Approaches performing worse than the benchmark (bootstrap base forecasts, ctjb) are highlighted in red, the best for each column is marked in bold, and the overall lowest value is highlighted in blue. The reconciliation approaches are described in Table 2.}
	\label{tab:ar2es_app_shr}
\end{table}

\newpage
\section{Forecast reconciliation of the Australian GDP dataset}
\setcounter{table}{0} 

\subsection{The dataset}
\cite{athanasopoulos2020} proposed using state-of-the-art forecast reconciliation methods to improve the accuracy of macroeconomic forecasts and facilitate aligned decision-making. 
In their empirical analysis, they applied cross-sectional forecast reconciliation to 95 Australian QNA time series that represent the Gross Domestic Product (GDP) calculated using both the income and expenditure approaches. These two approaches correspond to two distinct hierarchical structures, with GDP at the top and 15 lower-level aggregates in the income approach, and GDP as the top-level aggregate in a hierarchy of 79 time series in the expenditure approach (for more information, see \citealp{athanasopoulos2020}, pp. 702--705 and figures 21.4--21.7).
\cite{bisaglia2020} showed how to obtain a ``one-number'' forecast where the GDP reconciled forecasts are coherent for both the expenditure and income sides.
\cite{difonzo2022c, giro2022} extended the one number forecasts idea to obtain fully reconciled probabilistic forecasts, and \cite{difonzo2023} computed cross-temporally reconciled point forecasts. 

\subsection{One-step residuals and shrinkage covariance matrix}
\begin{table}[H]
	\centering
	\begingroup
	\spacingset{1}
	\fontsize{8}{10}\selectfont
	
\begin{tabular}[t]{l|>{}cccc>{}c|ccccc}
\toprule
\multicolumn{1}{c}{\textbf{}} & \multicolumn{10}{c}{\textbf{Generation of the base forecasts paths}} \\
\cmidrule(l{0pt}r{0pt}){2-11}
\multicolumn{1}{c}{\makecell[c]{\bfseries Reconciliation\\\bfseries approach}} & \multicolumn{1}{c}{ctjb} & \multicolumn{4}{c}{\makecell[c]{Gaussian approach\textsuperscript{*}}} & \multicolumn{1}{c}{ctjb} & \multicolumn{4}{c}{\makecell[c]{Gaussian approach\textsuperscript{*}}} \\
\multicolumn{1}{c}{} &  & G$_{h}$ & H$_{h}$ & G$_{oh}$ & \multicolumn{1}{c}{H$_{oh}$} &  & G$_{h}$ & H$_{h}$ & G$_{oh}$ & \multicolumn{1}{c}{H$_{oh}$}\\
\midrule
\addlinespace[0.3em]
\multicolumn{1}{c}{} & \multicolumn{5}{c}{\textbf{$\forall k \in \{4,2,1\}$}} & \multicolumn{5}{c}{\textbf{$k = 1$}}\\
base & \textcolor{black}{1.000} & \textcolor{black}{0.979} & \textcolor{black}{0.995} & \textcolor{black}{0.968} & \textcolor{black}{0.976} & \textcolor{black}{1.000} & \textcolor{black}{0.988} & \textcolor{black}{0.988} & \textcolor{black}{0.971} & \textcolor{black}{0.971}\\
ct$(shr_{cs}, bu_{te})$ & \textcolor{black}{0.937} & \textcolor{black}{0.956} & \textcolor{black}{0.956} & \textcolor{black}{0.976} & \textcolor{black}{0.976} & \textcolor{black}{0.992} & \textcolor{red}{1.008} & \textcolor{red}{1.008} & \textcolor{red}{1.029} & \textcolor{red}{1.029}\\
ct$(wls_{cs}, bu_{te})$ & \textcolor{black}{0.930} & \textcolor{black}{0.917} & \textcolor{black}{0.917} & \textcolor{black}{0.898} & \textcolor{black}{0.898} & \textcolor{black}{0.986} & \textcolor{black}{0.974} & \textcolor{black}{0.975} & \textcolor{black}{0.956} & \textcolor{black}{0.956}\\
oct$(wlsv)$ & \textcolor{black}{0.926} & \textcolor{black}{0.919} & \textcolor{black}{0.920} & \textcolor{black}{0.900} & \textcolor{black}{0.900} & \textcolor{black}{0.984} & \textcolor{black}{0.981} & \textcolor{black}{0.979} & \textcolor{black}{0.959} & \textcolor{black}{0.959}\\
oct$(bdshr)$ & \textcolor{black}{0.940} & \textcolor{black}{0.965} & \textcolor{black}{0.945} & \textcolor{black}{0.992} & \textcolor{black}{0.957} & \textcolor{black}{0.997} & \textcolor{red}{1.019} & \textcolor{red}{1.003} & \textcolor{red}{1.044} & \textcolor{red}{1.018}\\
oct$(shr)$ & \textcolor{black}{0.944} & \textcolor{red}{1.020} & \textcolor{black}{0.940} & \textcolor{red}{1.094} & \textcolor{black}{0.988} & \textcolor{red}{1.015} & \textcolor{red}{1.095} & \textcolor{red}{1.010} & \textcolor{red}{1.160} & \textcolor{red}{1.059}\\
oct$(hshr)$ & \textcolor{black}{0.988} & \textcolor{black}{0.972} & \textcolor{red}{1.002} & \textcolor{black}{0.974} & \textcolor{red}{1.001} & \textcolor{red}{1.048} & \textcolor{red}{1.037} & \textcolor{red}{1.060} & \textcolor{red}{1.034} & \textcolor{red}{1.061}\\
oct$_o(wlsv)$ & \textcolor{black}{\textbf{0.926}} & \textcolor{black}{\textbf{0.911}} & \textcolor{black}{\textbf{0.912}} & \textcolor{black}{\textbf{0.896}} & \textcolor{blue}{\textbf{0.895}} & \textcolor{black}{\textbf{0.984}} & \textcolor{black}{\textbf{0.971}} & \textcolor{black}{\textbf{0.970}} & \textcolor{black}{\textbf{0.954}} & \textcolor{blue}{\textbf{0.954}}\\
oct$_o(bdshr)$ & \textcolor{black}{0.978} & \textcolor{black}{0.964} & \textcolor{black}{0.946} & \textcolor{black}{0.952} & \textcolor{black}{0.930} & \textcolor{red}{1.034} & \textcolor{red}{1.016} & \textcolor{red}{1.003} & \textcolor{red}{1.005} & \textcolor{black}{0.989}\\
oct$_o(shr)$ & \textcolor{black}{0.950} & \textcolor{black}{0.946} & \textcolor{black}{0.922} & \textcolor{black}{0.925} & \textcolor{black}{0.903} & \textcolor{red}{1.014} & \textcolor{red}{1.003} & \textcolor{black}{0.985} & \textcolor{black}{0.987} & \textcolor{black}{0.968}\\
oct$_o(hshr)$ & \textcolor{black}{0.989} & \textcolor{black}{0.966} & \textcolor{black}{0.984} & \textcolor{black}{0.954} & \textcolor{black}{0.965} & \textcolor{red}{1.047} & \textcolor{red}{1.028} & \textcolor{red}{1.038} & \textcolor{red}{1.012} & \textcolor{red}{1.023}\\
oct$_{oh}(shr)$ & \textcolor{red}{1.102} & \textcolor{red}{1.059} & \textcolor{red}{1.001} & \textcolor{red}{1.094} & \textcolor{black}{0.988} & \textcolor{red}{1.172} & \textcolor{red}{1.109} & \textcolor{red}{1.066} & \textcolor{red}{1.160} & \textcolor{red}{1.059}\\
oct$_{oh}(hshr)$ & \textcolor{red}{1.006} & \textcolor{black}{0.983} & \textcolor{red}{1.009} & \textcolor{black}{0.974} & \textcolor{red}{1.001} & \textcolor{red}{1.068} & \textcolor{red}{1.046} & \textcolor{red}{1.059} & \textcolor{red}{1.034} & \textcolor{red}{1.061}\\
\addlinespace[0.3em]
\multicolumn{1}{c}{} & \multicolumn{5}{c}{\textbf{$k = 2$}} & \multicolumn{5}{c}{\textbf{$k = 4$}}\\
base & \textcolor{black}{1.000} & \textcolor{black}{0.984} & \textcolor{black}{0.993} & \textcolor{black}{0.968} & \textcolor{black}{0.976} & \textcolor{black}{1.000} & \textcolor{black}{0.966} & \textcolor{red}{1.004} & \textcolor{black}{0.964} & \textcolor{black}{0.981}\\
ct$(shr_{cs}, bu_{te})$ & \textcolor{black}{0.949} & \textcolor{black}{0.966} & \textcolor{black}{0.966} & \textcolor{black}{0.987} & \textcolor{black}{0.987} & \textcolor{black}{0.874} & \textcolor{black}{0.896} & \textcolor{black}{0.896} & \textcolor{black}{0.914} & \textcolor{black}{0.914}\\
ct$(wls_{cs}, bu_{te})$ & \textcolor{black}{0.942} & \textcolor{black}{0.928} & \textcolor{black}{0.928} & \textcolor{black}{0.909} & \textcolor{black}{0.909} & \textcolor{black}{0.866} & \textcolor{black}{0.853} & \textcolor{black}{0.853} & \textcolor{black}{0.834} & \textcolor{black}{0.834}\\
oct$(wlsv)$ & \textcolor{black}{0.938} & \textcolor{black}{0.929} & \textcolor{black}{0.931} & \textcolor{black}{0.911} & \textcolor{black}{0.911} & \textcolor{black}{0.860} & \textcolor{black}{0.853} & \textcolor{black}{0.855} & \textcolor{black}{0.835} & \textcolor{black}{0.834}\\
oct$(bdshr)$ & \textcolor{black}{0.953} & \textcolor{black}{0.976} & \textcolor{black}{0.956} & \textcolor{red}{1.003} & \textcolor{black}{0.969} & \textcolor{black}{0.874} & \textcolor{black}{0.904} & \textcolor{black}{0.880} & \textcolor{black}{0.931} & \textcolor{black}{0.889}\\
oct$(shr)$ & \textcolor{black}{0.955} & \textcolor{red}{1.031} & \textcolor{black}{0.951} & \textcolor{red}{1.113} & \textcolor{red}{1.002} & \textcolor{black}{0.866} & \textcolor{black}{0.940} & \textcolor{black}{0.864} & \textcolor{red}{1.015} & \textcolor{black}{0.909}\\
oct$(hshr)$ & \textcolor{red}{1.001} & \textcolor{black}{0.985} & \textcolor{red}{1.014} & \textcolor{black}{0.987} & \textcolor{red}{1.016} & \textcolor{black}{0.919} & \textcolor{black}{0.900} & \textcolor{black}{0.935} & \textcolor{black}{0.904} & \textcolor{black}{0.931}\\
oct$_o(wlsv)$ & \textcolor{black}{\textbf{0.938}} & \textcolor{black}{\textbf{0.921}} & \textcolor{black}{\textbf{0.923}} & \textcolor{black}{\textbf{0.907}} & \textcolor{blue}{\textbf{0.906}} & \textcolor{black}{\textbf{0.860}} & \textcolor{black}{\textbf{0.847}} & \textcolor{black}{\textbf{0.848}} & \textcolor{black}{\textbf{0.832}} & \textcolor{blue}{\textbf{0.830}}\\
oct$_o(bdshr)$ & \textcolor{black}{0.991} & \textcolor{black}{0.974} & \textcolor{black}{0.957} & \textcolor{black}{0.964} & \textcolor{black}{0.942} & \textcolor{black}{0.914} & \textcolor{black}{0.905} & \textcolor{black}{0.883} & \textcolor{black}{0.892} & \textcolor{black}{0.865}\\
oct$_o(shr)$ & \textcolor{black}{0.965} & \textcolor{black}{0.958} & \textcolor{black}{0.934} & \textcolor{black}{0.938} & \textcolor{black}{0.916} & \textcolor{black}{0.877} & \textcolor{black}{0.882} & \textcolor{black}{0.852} & \textcolor{black}{0.854} & \textcolor{black}{0.831}\\
oct$_o(hshr)$ & \textcolor{red}{1.002} & \textcolor{black}{0.979} & \textcolor{black}{0.996} & \textcolor{black}{0.967} & \textcolor{black}{0.978} & \textcolor{black}{0.922} & \textcolor{black}{0.898} & \textcolor{black}{0.923} & \textcolor{black}{0.888} & \textcolor{black}{0.898}\\
oct$_{oh}(shr)$ & \textcolor{red}{1.120} & \textcolor{red}{1.069} & \textcolor{red}{1.013} & \textcolor{red}{1.113} & \textcolor{red}{1.002} & \textcolor{red}{1.020} & \textcolor{red}{1.002} & \textcolor{black}{0.928} & \textcolor{red}{1.015} & \textcolor{black}{0.909}\\
oct$_{oh}(hshr)$ & \textcolor{red}{1.021} & \textcolor{black}{0.996} & \textcolor{red}{1.021} & \textcolor{black}{0.987} & \textcolor{red}{1.016} & \textcolor{black}{0.934} & \textcolor{black}{0.912} & \textcolor{black}{0.951} & \textcolor{black}{0.904} & \textcolor{black}{0.931}\\
\bottomrule
\multicolumn{11}{l}{\rule{0pt}{1em}\rule{0pt}{1.75em}\makecell[l]{$^\ast$The Gaussian method employs a sample covariance matrix:\\G$_{h}$ and H$_{h}$ use multi-step residuals and G$_{oh}$ and H$_{oh}$ use overlapping and multi-step residuals.}}\\
\end{tabular}

	\endgroup
	\caption{$\overline{RelCRPS}$ indices defined in Section 5 for the Australian QNA dataset. 
	Approaches performing worse than the benchmark (bootstrap base forecasts, ctjb) are highlighted in red, the best for each column is marked in bold, and the overall lowest value is highlighted in blue. The reconciliation approaches are described in Table 2.}
	\label{tab:auscrps}
\end{table}

\begin{table}[H]
	\centering
	\begingroup
	\spacingset{1}
	\fontsize{8}{10}\selectfont
	
\begin{tabular}[t]{l|>{}cccc>{}c|ccccc}
\toprule
\multicolumn{1}{c}{\textbf{}} & \multicolumn{10}{c}{\textbf{Generation of the base forecasts paths}} \\
\cmidrule(l{0pt}r{0pt}){2-11}
\multicolumn{1}{c}{\makecell[c]{\bfseries Reconciliation\\\bfseries approach}} & \multicolumn{1}{c}{ctjb} & \multicolumn{4}{c}{\makecell[c]{Gaussian approach\textsuperscript{*}}} & \multicolumn{1}{c}{ctjb} & \multicolumn{4}{c}{\makecell[c]{Gaussian approach\textsuperscript{*}}} \\
\multicolumn{1}{c}{} &  & G$_{h}$ & H$_{h}$ & G$_{oh}$ & \multicolumn{1}{c}{H$_{oh}$} &  & G$_{h}$ & H$_{h}$ & G$_{oh}$ & \multicolumn{1}{c}{H$_{oh}$}\\
\midrule
\addlinespace[0.3em]
\multicolumn{1}{c}{} & \multicolumn{5}{c}{\textbf{$\forall k \in \{4,2,1\}$}} & \multicolumn{5}{c}{\textbf{$k = 1$}}\\
base & \textcolor{black}{1.000} & \textcolor{black}{0.970} & \textcolor{black}{0.988} & \textcolor{black}{0.960} & \textcolor{black}{0.970} & \textcolor{black}{1.000} & \textcolor{black}{0.977} & \textcolor{black}{0.977} & \textcolor{black}{0.965} & \textcolor{black}{0.965}\\
ct$(shr_{cs}, bu_{te})$ & \textcolor{black}{0.897} & \textcolor{black}{0.944} & \textcolor{black}{0.944} & \textcolor{black}{0.973} & \textcolor{black}{0.973} & \textcolor{black}{0.964} & \textcolor{red}{1.001} & \textcolor{red}{1.001} & \textcolor{red}{1.033} & \textcolor{red}{1.033}\\
ct$(wls_{cs}, bu_{te})$ & \textcolor{black}{\textbf{0.886}} & \textcolor{black}{0.880} & \textcolor{black}{0.880} & \textcolor{black}{\textbf{0.860}} & \textcolor{black}{0.860} & \textcolor{black}{\textbf{0.954}} & \textcolor{black}{\textbf{0.944}} & \textcolor{black}{0.945} & \textcolor{blue}{\textbf{0.928}} & \textcolor{black}{\textbf{0.928}}\\
oct$(wlsv)$ & \textcolor{black}{0.890} & \textcolor{black}{0.890} & \textcolor{black}{0.894} & \textcolor{black}{0.872} & \textcolor{black}{0.872} & \textcolor{black}{0.958} & \textcolor{black}{0.957} & \textcolor{black}{0.957} & \textcolor{black}{0.938} & \textcolor{black}{0.939}\\
oct$(bdshr)$ & \textcolor{black}{0.905} & \textcolor{black}{0.956} & \textcolor{black}{0.934} & \textcolor{black}{0.992} & \textcolor{black}{0.954} & \textcolor{black}{0.972} & \textcolor{red}{1.014} & \textcolor{black}{0.994} & \textcolor{red}{1.048} & \textcolor{red}{1.018}\\
oct$(shr)$ & \textcolor{black}{0.895} & \textcolor{black}{0.979} & \textcolor{black}{0.895} & \textcolor{red}{1.053} & \textcolor{black}{0.944} & \textcolor{black}{0.973} & \textcolor{red}{1.060} & \textcolor{black}{0.969} & \textcolor{red}{1.121} & \textcolor{red}{1.015}\\
oct$(hshr)$ & \textcolor{black}{0.951} & \textcolor{black}{0.940} & \textcolor{black}{0.973} & \textcolor{black}{0.959} & \textcolor{black}{0.992} & \textcolor{red}{1.017} & \textcolor{red}{1.010} & \textcolor{red}{1.034} & \textcolor{red}{1.023} & \textcolor{red}{1.055}\\
oct$_o(wlsv)$ & \textcolor{black}{0.891} & \textcolor{black}{\textbf{0.879}} & \textcolor{black}{0.881} & \textcolor{black}{0.864} & \textcolor{black}{0.864} & \textcolor{black}{0.958} & \textcolor{black}{0.945} & \textcolor{black}{0.945} & \textcolor{black}{0.931} & \textcolor{black}{0.931}\\
oct$_o(bdshr)$ & \textcolor{black}{0.940} & \textcolor{black}{0.928} & \textcolor{black}{0.910} & \textcolor{black}{0.918} & \textcolor{black}{0.895} & \textcolor{red}{1.004} & \textcolor{black}{0.986} & \textcolor{black}{0.971} & \textcolor{black}{0.980} & \textcolor{black}{0.961}\\
oct$_o(shr)$ & \textcolor{black}{0.900} & \textcolor{black}{0.899} & \textcolor{black}{\textbf{0.876}} & \textcolor{black}{0.878} & \textcolor{blue}{\textbf{0.858}} & \textcolor{black}{0.973} & \textcolor{black}{0.963} & \textcolor{black}{\textbf{0.944}} & \textcolor{black}{0.949} & \textcolor{black}{0.930}\\
oct$_o(hshr)$ & \textcolor{black}{0.956} & \textcolor{black}{0.936} & \textcolor{black}{0.955} & \textcolor{black}{0.922} & \textcolor{black}{0.936} & \textcolor{red}{1.021} & \textcolor{red}{1.004} & \textcolor{red}{1.012} & \textcolor{black}{0.987} & \textcolor{black}{1.000}\\
oct$_{oh}(shr)$ & \textcolor{red}{1.059} & \textcolor{red}{1.015} & \textcolor{black}{0.956} & \textcolor{red}{1.053} & \textcolor{black}{0.945} & \textcolor{red}{1.130} & \textcolor{red}{1.063} & \textcolor{red}{1.019} & \textcolor{red}{1.121} & \textcolor{red}{1.016}\\
oct$_{oh}(hshr)$ & \textcolor{black}{0.986} & \textcolor{black}{0.968} & \textcolor{black}{0.999} & \textcolor{black}{0.959} & \textcolor{black}{0.992} & \textcolor{red}{1.053} & \textcolor{red}{1.034} & \textcolor{red}{1.049} & \textcolor{red}{1.024} & \textcolor{red}{1.055}\\
\addlinespace[0.3em]
\multicolumn{1}{c}{} & \multicolumn{5}{c}{\textbf{$k = 2$}} & \multicolumn{5}{c}{\textbf{$k = 4$}}\\
base & \textcolor{black}{1.000} & \textcolor{black}{0.972} & \textcolor{black}{0.985} & \textcolor{black}{0.959} & \textcolor{black}{0.969} & \textcolor{black}{1.000} & \textcolor{black}{0.959} & \textcolor{red}{1.000} & \textcolor{black}{0.957} & \textcolor{black}{0.976}\\
ct$(shr_{cs}, bu_{te})$ & \textcolor{black}{0.915} & \textcolor{black}{0.961} & \textcolor{black}{0.960} & \textcolor{black}{0.991} & \textcolor{black}{0.991} & \textcolor{black}{0.818} & \textcolor{black}{0.874} & \textcolor{black}{0.874} & \textcolor{black}{0.899} & \textcolor{black}{0.900}\\
ct$(wls_{cs}, bu_{te})$ & \textcolor{black}{\textbf{0.904}} & \textcolor{black}{0.896} & \textcolor{black}{\textbf{0.896}} & \textcolor{blue}{\textbf{0.877}} & \textcolor{black}{\textbf{0.877}} & \textcolor{black}{\textbf{0.807}} & \textcolor{black}{0.805} & \textcolor{black}{0.805} & \textcolor{black}{\textbf{0.782}} & \textcolor{black}{0.783}\\
oct$(wlsv)$ & \textcolor{black}{0.909} & \textcolor{black}{0.907} & \textcolor{black}{0.912} & \textcolor{black}{0.889} & \textcolor{black}{0.889} & \textcolor{black}{0.811} & \textcolor{black}{0.813} & \textcolor{black}{0.819} & \textcolor{black}{0.794} & \textcolor{black}{0.794}\\
oct$(bdshr)$ & \textcolor{black}{0.925} & \textcolor{black}{0.976} & \textcolor{black}{0.953} & \textcolor{red}{1.013} & \textcolor{black}{0.974} & \textcolor{black}{0.825} & \textcolor{black}{0.883} & \textcolor{black}{0.860} & \textcolor{black}{0.920} & \textcolor{black}{0.876}\\
oct$(shr)$ & \textcolor{black}{0.913} & \textcolor{red}{1.000} & \textcolor{black}{0.914} & \textcolor{red}{1.076} & \textcolor{black}{0.963} & \textcolor{black}{0.807} & \textcolor{black}{0.885} & \textcolor{black}{0.808} & \textcolor{black}{0.967} & \textcolor{black}{0.861}\\
oct$(hshr)$ & \textcolor{black}{0.973} & \textcolor{black}{0.960} & \textcolor{black}{0.993} & \textcolor{black}{0.978} & \textcolor{red}{1.014} & \textcolor{black}{0.871} & \textcolor{black}{0.856} & \textcolor{black}{0.897} & \textcolor{black}{0.881} & \textcolor{black}{0.913}\\
oct$_o(wlsv)$ & \textcolor{black}{0.908} & \textcolor{black}{\textbf{0.895}} & \textcolor{black}{0.898} & \textcolor{black}{0.881} & \textcolor{black}{0.882} & \textcolor{black}{0.812} & \textcolor{black}{\textbf{0.802}} & \textcolor{black}{0.806} & \textcolor{black}{0.786} & \textcolor{black}{0.786}\\
oct$_o(bdshr)$ & \textcolor{black}{0.960} & \textcolor{black}{0.947} & \textcolor{black}{0.929} & \textcolor{black}{0.938} & \textcolor{black}{0.915} & \textcolor{black}{0.860} & \textcolor{black}{0.856} & \textcolor{black}{0.836} & \textcolor{black}{0.841} & \textcolor{black}{0.816}\\
oct$_o(shr)$ & \textcolor{black}{0.921} & \textcolor{black}{0.919} & \textcolor{black}{0.896} & \textcolor{black}{0.898} & \textcolor{black}{0.878} & \textcolor{black}{0.814} & \textcolor{black}{0.821} & \textcolor{black}{\textbf{0.796}} & \textcolor{black}{0.794} & \textcolor{blue}{\textbf{0.775}}\\
oct$_o(hshr)$ & \textcolor{black}{0.977} & \textcolor{black}{0.956} & \textcolor{black}{0.976} & \textcolor{black}{0.942} & \textcolor{black}{0.957} & \textcolor{black}{0.876} & \textcolor{black}{0.854} & \textcolor{black}{0.882} & \textcolor{black}{0.844} & \textcolor{black}{0.856}\\
oct$_{oh}(shr)$ & \textcolor{red}{1.082} & \textcolor{red}{1.029} & \textcolor{black}{0.973} & \textcolor{red}{1.076} & \textcolor{black}{0.963} & \textcolor{black}{0.971} & \textcolor{black}{0.954} & \textcolor{black}{0.882} & \textcolor{black}{0.967} & \textcolor{black}{0.861}\\
oct$_{oh}(hshr)$ & \textcolor{red}{1.007} & \textcolor{black}{0.988} & \textcolor{red}{1.017} & \textcolor{black}{0.979} & \textcolor{red}{1.014} & \textcolor{black}{0.904} & \textcolor{black}{0.888} & \textcolor{black}{0.934} & \textcolor{black}{0.881} & \textcolor{black}{0.913}\\
\bottomrule
\multicolumn{11}{l}{\rule{0pt}{1em}\rule{0pt}{1.75em}\makecell[l]{$^\ast$The Gaussian method employs a sample covariance matrix:\\G$_{h}$ and H$_{h}$ use multi-step residuals and G$_{oh}$ and H$_{oh}$ use overlapping and multi-step residuals.}}\\
\end{tabular}

	\endgroup
	\caption{ES ratio indices defined in Section 5 for the Australian QNA dataset. 
	Approaches performing worse than the benchmark (bootstrap base forecasts, ctjb) are highlighted in red, the best for each column is marked in bold, and the overall lowest value is highlighted in blue. The reconciliation approaches are described in Table 2.}
	\label{tab:auses}
\end{table}

\begin{table}[H]
	\centering
	\begingroup
	\spacingset{1}
	\fontsize{8}{10}\selectfont
	
\begin{tabular}[t]{l|>{}cccc>{}c|ccccc}
\toprule
\multicolumn{1}{c}{\textbf{}} & \multicolumn{10}{c}{\textbf{Generation of the base forecasts paths}} \\
\cmidrule(l{0pt}r{0pt}){2-11}
\multicolumn{1}{c}{\makecell[c]{\bfseries Reconciliation\\\bfseries approach}} & \multicolumn{1}{c}{ctjb} & \multicolumn{4}{c}{\makecell[c]{Gaussian approach\textsuperscript{*}}} & \multicolumn{1}{c}{ctjb} & \multicolumn{4}{c}{\makecell[c]{Gaussian approach\textsuperscript{*}}} \\
\multicolumn{1}{c}{} &  & G$_{h}$ & H$_{h}$ & G$_{oh}$ & \multicolumn{1}{c}{H$_{oh}$} &  & G$_{h}$ & H$_{h}$ & G$_{oh}$ & \multicolumn{1}{c}{H$_{oh}$}\\
\midrule
\addlinespace[0.3em]
\multicolumn{1}{c}{} & \multicolumn{5}{c}{\textbf{$\forall k \in \{4,2,1\}$}} & \multicolumn{5}{c}{\textbf{$k = 1$}}\\
base & \textcolor{black}{1.000} & \textcolor{black}{0.979} & \textcolor{red}{1.011} & \textcolor{black}{0.968} & \textcolor{black}{0.987} & \textcolor{black}{1.000} & \textcolor{black}{0.988} & \textcolor{black}{0.988} & \textcolor{black}{0.971} & \textcolor{black}{0.971}\\
ct$(shr_{cs}, bu_{te})$ & \textcolor{black}{0.937} & \textcolor{black}{0.960} & \textcolor{black}{0.961} & \textcolor{black}{0.962} & \textcolor{black}{0.960} & \textcolor{black}{0.992} & \textcolor{red}{1.001} & \textcolor{red}{1.001} & \textcolor{red}{1.004} & \textcolor{black}{1.000}\\
ct$(wls_{cs}, bu_{te})$ & \textcolor{black}{0.930} & \textcolor{black}{\textbf{0.951}} & \textcolor{black}{0.953} & \textcolor{blue}{\textbf{0.911}} & \textcolor{black}{0.915} & \textcolor{black}{0.986} & \textcolor{black}{0.997} & \textcolor{black}{0.998} & \textcolor{blue}{\textbf{0.964}} & \textcolor{black}{0.967}\\
oct$(wlsv)$ & \textcolor{black}{0.926} & \textcolor{black}{0.972} & \textcolor{black}{0.957} & \textcolor{black}{0.918} & \textcolor{black}{0.917} & \textcolor{black}{0.984} & \textcolor{red}{1.010} & \textcolor{red}{1.003} & \textcolor{black}{0.971} & \textcolor{black}{0.970}\\
oct$(bdshr)$ & \textcolor{black}{0.940} & \textcolor{black}{0.986} & \textcolor{black}{0.966} & \textcolor{black}{0.981} & \textcolor{black}{0.956} & \textcolor{black}{0.997} & \textcolor{red}{1.015} & \textcolor{red}{1.006} & \textcolor{red}{1.016} & \textcolor{black}{1.000}\\
oct$(shr)$ & \textcolor{black}{0.944} & \textcolor{black}{0.999} & \textcolor{black}{0.962} & \textcolor{red}{1.051} & \textcolor{black}{0.995} & \textcolor{red}{1.015} & \textcolor{red}{1.047} & \textcolor{red}{1.021} & \textcolor{red}{1.105} & \textcolor{red}{1.058}\\
oct$(hshr)$ & \textcolor{black}{0.988} & \textcolor{black}{1.000} & \textcolor{red}{1.021} & \textcolor{black}{0.979} & \textcolor{red}{1.002} & \textcolor{red}{1.048} & \textcolor{red}{1.045} & \textcolor{red}{1.066} & \textcolor{red}{1.034} & \textcolor{red}{1.053}\\
oct$_o(wlsv)$ & \textcolor{black}{\textbf{0.926}} & \textcolor{black}{0.961} & \textcolor{black}{0.948} & \textcolor{black}{0.914} & \textcolor{black}{\textbf{0.912}} & \textcolor{black}{\textbf{0.984}} & \textcolor{black}{1.000} & \textcolor{black}{0.993} & \textcolor{black}{0.966} & \textcolor{black}{\textbf{0.965}}\\
oct$_o(bdshr)$ & \textcolor{black}{0.978} & \textcolor{black}{0.956} & \textcolor{black}{0.949} & \textcolor{black}{0.949} & \textcolor{black}{0.934} & \textcolor{red}{1.034} & \textcolor{black}{\textbf{0.984}} & \textcolor{black}{\textbf{0.983}} & \textcolor{black}{0.988} & \textcolor{black}{0.977}\\
oct$_o(shr)$ & \textcolor{black}{0.950} & \textcolor{black}{0.957} & \textcolor{black}{\textbf{0.946}} & \textcolor{black}{0.933} & \textcolor{black}{0.917} & \textcolor{red}{1.014} & \textcolor{black}{0.998} & \textcolor{black}{0.995} & \textcolor{black}{0.986} & \textcolor{black}{0.974}\\
oct$_o(hshr)$ & \textcolor{black}{0.989} & \textcolor{black}{0.997} & \textcolor{red}{1.013} & \textcolor{black}{0.967} & \textcolor{black}{0.982} & \textcolor{red}{1.047} & \textcolor{red}{1.039} & \textcolor{red}{1.054} & \textcolor{red}{1.019} & \textcolor{red}{1.032}\\
oct$_{oh}(shr)$ & \textcolor{red}{1.102} & \textcolor{red}{1.010} & \textcolor{red}{1.006} & \textcolor{red}{1.051} & \textcolor{black}{0.995} & \textcolor{red}{1.172} & \textcolor{red}{1.059} & \textcolor{red}{1.063} & \textcolor{red}{1.105} & \textcolor{red}{1.058}\\
oct$_{oh}(hshr)$ & \textcolor{red}{1.006} & \textcolor{black}{0.989} & \textcolor{red}{1.004} & \textcolor{black}{0.979} & \textcolor{red}{1.002} & \textcolor{red}{1.068} & \textcolor{red}{1.037} & \textcolor{red}{1.050} & \textcolor{red}{1.034} & \textcolor{red}{1.053}\\
\addlinespace[0.3em]
\multicolumn{1}{c}{} & \multicolumn{5}{c}{\textbf{$k = 2$}} & \multicolumn{5}{c}{\textbf{$k = 4$}}\\
base & \textcolor{black}{1.000} & \textcolor{black}{0.984} & \textcolor{red}{1.009} & \textcolor{black}{0.968} & \textcolor{black}{0.987} & \textcolor{black}{1.000} & \textcolor{black}{0.966} & \textcolor{red}{1.037} & \textcolor{black}{0.964} & \textcolor{red}{1.002}\\
ct$(shr_{cs}, bu_{te})$ & \textcolor{black}{0.949} & \textcolor{black}{0.972} & \textcolor{black}{0.972} & \textcolor{black}{0.974} & \textcolor{black}{0.971} & \textcolor{black}{0.874} & \textcolor{black}{0.910} & \textcolor{black}{0.911} & \textcolor{black}{0.910} & \textcolor{black}{0.910}\\
ct$(wls_{cs}, bu_{te})$ & \textcolor{black}{0.942} & \textcolor{black}{\textbf{0.962}} & \textcolor{black}{0.964} & \textcolor{blue}{\textbf{0.923}} & \textcolor{black}{0.927} & \textcolor{black}{0.866} & \textcolor{black}{\textbf{0.897}} & \textcolor{black}{0.900} & \textcolor{black}{\textbf{0.851}} & \textcolor{black}{0.855}\\
oct$(wlsv)$ & \textcolor{black}{0.938} & \textcolor{black}{0.988} & \textcolor{black}{0.968} & \textcolor{black}{0.931} & \textcolor{black}{0.929} & \textcolor{black}{0.860} & \textcolor{black}{0.921} & \textcolor{black}{0.903} & \textcolor{black}{0.856} & \textcolor{black}{0.856}\\
oct$(bdshr)$ & \textcolor{black}{0.953} & \textcolor{red}{1.004} & \textcolor{black}{0.979} & \textcolor{black}{0.996} & \textcolor{black}{0.970} & \textcolor{black}{0.874} & \textcolor{black}{0.942} & \textcolor{black}{0.914} & \textcolor{black}{0.932} & \textcolor{black}{0.900}\\
oct$(shr)$ & \textcolor{black}{0.955} & \textcolor{red}{1.016} & \textcolor{black}{0.973} & \textcolor{red}{1.070} & \textcolor{red}{1.010} & \textcolor{black}{0.866} & \textcolor{black}{0.937} & \textcolor{black}{0.895} & \textcolor{black}{0.981} & \textcolor{black}{0.922}\\
oct$(hshr)$ & \textcolor{red}{1.001} & \textcolor{red}{1.015} & \textcolor{red}{1.034} & \textcolor{black}{0.993} & \textcolor{red}{1.017} & \textcolor{black}{0.919} & \textcolor{black}{0.942} & \textcolor{black}{0.965} & \textcolor{black}{0.913} & \textcolor{black}{0.937}\\
oct$_o(wlsv)$ & \textcolor{black}{\textbf{0.938}} & \textcolor{black}{0.976} & \textcolor{black}{\textbf{0.959}} & \textcolor{black}{0.927} & \textcolor{black}{\textbf{0.925}} & \textcolor{black}{\textbf{0.860}} & \textcolor{black}{0.910} & \textcolor{black}{0.894} & \textcolor{black}{0.853} & \textcolor{black}{0.852}\\
oct$_o(bdshr)$ & \textcolor{black}{0.991} & \textcolor{black}{0.970} & \textcolor{black}{0.963} & \textcolor{black}{0.963} & \textcolor{black}{0.948} & \textcolor{black}{0.914} & \textcolor{black}{0.917} & \textcolor{black}{0.905} & \textcolor{black}{0.899} & \textcolor{black}{0.880}\\
oct$_o(shr)$ & \textcolor{black}{0.965} & \textcolor{black}{0.973} & \textcolor{black}{0.959} & \textcolor{black}{0.948} & \textcolor{black}{0.931} & \textcolor{black}{0.877} & \textcolor{black}{0.903} & \textcolor{black}{\textbf{0.886}} & \textcolor{black}{0.868} & \textcolor{blue}{\textbf{0.850}}\\
oct$_o(hshr)$ & \textcolor{red}{1.002} & \textcolor{red}{1.013} & \textcolor{red}{1.026} & \textcolor{black}{0.980} & \textcolor{black}{0.996} & \textcolor{black}{0.922} & \textcolor{black}{0.943} & \textcolor{black}{0.962} & \textcolor{black}{0.905} & \textcolor{black}{0.921}\\
oct$_{oh}(shr)$ & \textcolor{red}{1.120} & \textcolor{red}{1.026} & \textcolor{red}{1.019} & \textcolor{red}{1.070} & \textcolor{red}{1.010} & \textcolor{red}{1.020} & \textcolor{black}{0.947} & \textcolor{black}{0.939} & \textcolor{black}{0.981} & \textcolor{black}{0.922}\\
oct$_{oh}(hshr)$ & \textcolor{red}{1.021} & \textcolor{red}{1.005} & \textcolor{red}{1.017} & \textcolor{black}{0.993} & \textcolor{red}{1.017} & \textcolor{black}{0.934} & \textcolor{black}{0.929} & \textcolor{black}{0.946} & \textcolor{black}{0.913} & \textcolor{black}{0.937}\\
\bottomrule
\multicolumn{11}{l}{\rule{0pt}{1em}\rule{0pt}{1.75em}\makecell[l]{$^\ast$The Gaussian method employs a shrinkage covariance matrix:\\G$_{h}$ and H$_{h}$ use multi-step residuals and G$_{oh}$ and H$_{oh}$ use overlapping and multi-step residuals.}}\\
\end{tabular}

	\endgroup
	\caption{$\overline{RelCRPS}$ indices defined in Section 5 for the Australian QNA dataset. 
	Approaches performing worse than the benchmark (bootstrap base forecasts, ctjb) are highlighted in red, the best for each column is marked in bold, and the overall lowest value is highlighted in blue. The reconciliation approaches are described in Table 2.}
	\label{tab:auscrps}
\end{table}

\begin{table}[H]
	\centering
	\begingroup
	\spacingset{1}
	\fontsize{8}{10}\selectfont
	
\begin{tabular}[t]{l|>{}cccc>{}c|ccccc}
\toprule
\multicolumn{1}{c}{\textbf{}} & \multicolumn{10}{c}{\textbf{Generation of the base forecasts paths}} \\
\cmidrule(l{0pt}r{0pt}){2-11}
\multicolumn{1}{c}{\makecell[c]{\bfseries Reconciliation\\\bfseries approach}} & \multicolumn{1}{c}{ctjb} & \multicolumn{4}{c}{\makecell[c]{Gaussian approach\textsuperscript{*}}} & \multicolumn{1}{c}{ctjb} & \multicolumn{4}{c}{\makecell[c]{Gaussian approach\textsuperscript{*}}} \\
\multicolumn{1}{c}{} &  & G$_{h}$ & H$_{h}$ & G$_{oh}$ & \multicolumn{1}{c}{H$_{oh}$} &  & G$_{h}$ & H$_{h}$ & G$_{oh}$ & \multicolumn{1}{c}{H$_{oh}$}\\
\midrule
\addlinespace[0.3em]
\multicolumn{1}{c}{} & \multicolumn{5}{c}{\textbf{$\forall k \in \{4,2,1\}$}} & \multicolumn{5}{c}{\textbf{$k = 1$}}\\
base & \textcolor{black}{1.000} & \textcolor{black}{0.967} & \textcolor{red}{1.002} & \textcolor{black}{0.957} & \textcolor{black}{0.980} & \textcolor{black}{1.000} & \textcolor{black}{0.973} & \textcolor{black}{0.973} & \textcolor{black}{0.961} & \textcolor{black}{0.962}\\
ct$(shr_{cs}, bu_{te})$ & \textcolor{black}{0.897} & \textcolor{black}{0.968} & \textcolor{black}{0.969} & \textcolor{black}{0.963} & \textcolor{black}{0.962} & \textcolor{black}{0.964} & \textcolor{red}{1.012} & \textcolor{red}{1.012} & \textcolor{red}{1.009} & \textcolor{red}{1.004}\\
ct$(wls_{cs}, bu_{te})$ & \textcolor{black}{\textbf{0.886}} & \textcolor{black}{0.939} & \textcolor{black}{0.944} & \textcolor{blue}{\textbf{0.882}} & \textcolor{black}{0.888} & \textcolor{black}{\textbf{0.954}} & \textcolor{black}{0.994} & \textcolor{black}{0.998} & \textcolor{blue}{\textbf{0.947}} & \textcolor{black}{0.952}\\
oct$(wlsv)$ & \textcolor{black}{0.890} & \textcolor{black}{0.966} & \textcolor{black}{0.959} & \textcolor{black}{0.897} & \textcolor{black}{0.901} & \textcolor{black}{0.958} & \textcolor{red}{1.017} & \textcolor{red}{1.012} & \textcolor{black}{0.960} & \textcolor{black}{0.965}\\
oct$(bdshr)$ & \textcolor{black}{0.905} & \textcolor{black}{0.997} & \textcolor{black}{0.981} & \textcolor{black}{0.986} & \textcolor{black}{0.960} & \textcolor{black}{0.972} & \textcolor{red}{1.031} & \textcolor{red}{1.021} & \textcolor{red}{1.024} & \textcolor{red}{1.005}\\
oct$(shr)$ & \textcolor{black}{0.895} & \textcolor{black}{0.979} & \textcolor{black}{0.945} & \textcolor{red}{1.021} & \textcolor{black}{0.962} & \textcolor{black}{0.973} & \textcolor{red}{1.041} & \textcolor{red}{1.011} & \textcolor{red}{1.083} & \textcolor{red}{1.028}\\
oct$(hshr)$ & \textcolor{black}{0.951} & \textcolor{black}{0.997} & \textcolor{red}{1.023} & \textcolor{black}{0.973} & \textcolor{red}{1.005} & \textcolor{red}{1.017} & \textcolor{red}{1.051} & \textcolor{red}{1.073} & \textcolor{red}{1.034} & \textcolor{red}{1.063}\\
oct$_o(wlsv)$ & \textcolor{black}{0.891} & \textcolor{black}{0.950} & \textcolor{black}{0.945} & \textcolor{black}{0.889} & \textcolor{black}{0.892} & \textcolor{black}{0.958} & \textcolor{red}{1.002} & \textcolor{black}{0.997} & \textcolor{black}{0.953} & \textcolor{black}{0.956}\\
oct$_o(bdshr)$ & \textcolor{black}{0.940} & \textcolor{black}{0.935} & \textcolor{black}{0.933} & \textcolor{black}{0.922} & \textcolor{black}{0.909} & \textcolor{red}{1.004} & \textcolor{black}{\textbf{0.965}} & \textcolor{black}{\textbf{0.964}} & \textcolor{black}{0.969} & \textcolor{black}{0.959}\\
oct$_o(shr)$ & \textcolor{black}{0.900} & \textcolor{black}{\textbf{0.935}} & \textcolor{black}{\textbf{0.928}} & \textcolor{black}{0.895} & \textcolor{black}{\textbf{0.884}} & \textcolor{black}{0.973} & \textcolor{black}{0.984} & \textcolor{black}{0.982} & \textcolor{black}{0.960} & \textcolor{black}{\textbf{0.950}}\\
oct$_o(hshr)$ & \textcolor{black}{0.956} & \textcolor{black}{0.997} & \textcolor{red}{1.015} & \textcolor{black}{0.945} & \textcolor{black}{0.965} & \textcolor{red}{1.021} & \textcolor{red}{1.049} & \textcolor{red}{1.062} & \textcolor{red}{1.007} & \textcolor{red}{1.024}\\
oct$_{oh}(shr)$ & \textcolor{red}{1.059} & \textcolor{black}{0.981} & \textcolor{black}{0.983} & \textcolor{red}{1.021} & \textcolor{black}{0.962} & \textcolor{red}{1.130} & \textcolor{red}{1.034} & \textcolor{red}{1.041} & \textcolor{red}{1.083} & \textcolor{red}{1.029}\\
oct$_{oh}(hshr)$ & \textcolor{black}{0.986} & \textcolor{black}{0.996} & \textcolor{red}{1.014} & \textcolor{black}{0.973} & \textcolor{red}{1.005} & \textcolor{red}{1.053} & \textcolor{red}{1.050} & \textcolor{red}{1.064} & \textcolor{red}{1.034} & \textcolor{red}{1.063}\\
\addlinespace[0.3em]
\multicolumn{1}{c}{} & \multicolumn{5}{c}{\textbf{$k = 2$}} & \multicolumn{5}{c}{\textbf{$k = 4$}}\\
base & \textcolor{black}{1.000} & \textcolor{black}{0.970} & \textcolor{black}{0.999} & \textcolor{black}{0.955} & \textcolor{black}{0.980} & \textcolor{black}{1.000} & \textcolor{black}{0.958} & \textcolor{red}{1.033} & \textcolor{black}{0.953} & \textcolor{black}{1.000}\\
ct$(shr_{cs}, bu_{te})$ & \textcolor{black}{0.915} & \textcolor{black}{0.987} & \textcolor{black}{0.988} & \textcolor{black}{0.983} & \textcolor{black}{0.982} & \textcolor{black}{0.818} & \textcolor{black}{0.909} & \textcolor{black}{0.910} & \textcolor{black}{0.902} & \textcolor{black}{0.902}\\
ct$(wls_{cs}, bu_{te})$ & \textcolor{black}{\textbf{0.904}} & \textcolor{black}{\textbf{0.958}} & \textcolor{black}{0.962} & \textcolor{blue}{\textbf{0.900}} & \textcolor{black}{0.906} & \textcolor{black}{\textbf{0.807}} & \textcolor{black}{0.871} & \textcolor{black}{0.876} & \textcolor{black}{\textbf{0.805}} & \textcolor{black}{0.812}\\
oct$(wlsv)$ & \textcolor{black}{0.909} & \textcolor{black}{0.988} & \textcolor{black}{0.979} & \textcolor{black}{0.916} & \textcolor{black}{0.920} & \textcolor{black}{0.811} & \textcolor{black}{0.896} & \textcolor{black}{0.891} & \textcolor{black}{0.820} & \textcolor{black}{0.825}\\
oct$(bdshr)$ & \textcolor{black}{0.925} & \textcolor{red}{1.024} & \textcolor{red}{1.005} & \textcolor{red}{1.010} & \textcolor{black}{0.984} & \textcolor{black}{0.825} & \textcolor{black}{0.938} & \textcolor{black}{0.919} & \textcolor{black}{0.926} & \textcolor{black}{0.895}\\
oct$(shr)$ & \textcolor{black}{0.913} & \textcolor{red}{1.006} & \textcolor{black}{0.967} & \textcolor{red}{1.045} & \textcolor{black}{0.982} & \textcolor{black}{0.807} & \textcolor{black}{0.898} & \textcolor{black}{0.864} & \textcolor{black}{0.940} & \textcolor{black}{0.881}\\
oct$(hshr)$ & \textcolor{black}{0.973} & \textcolor{red}{1.020} & \textcolor{red}{1.046} & \textcolor{black}{0.994} & \textcolor{red}{1.028} & \textcolor{black}{0.871} & \textcolor{black}{0.924} & \textcolor{black}{0.954} & \textcolor{black}{0.897} & \textcolor{black}{0.929}\\
oct$_o(wlsv)$ & \textcolor{black}{0.908} & \textcolor{black}{0.972} & \textcolor{black}{0.964} & \textcolor{black}{0.908} & \textcolor{black}{0.911} & \textcolor{black}{0.812} & \textcolor{black}{0.882} & \textcolor{black}{0.876} & \textcolor{black}{0.812} & \textcolor{black}{0.816}\\
oct$_o(bdshr)$ & \textcolor{black}{0.960} & \textcolor{black}{0.959} & \textcolor{black}{0.957} & \textcolor{black}{0.945} & \textcolor{black}{0.932} & \textcolor{black}{0.860} & \textcolor{black}{0.884} & \textcolor{black}{0.879} & \textcolor{black}{0.857} & \textcolor{black}{0.841}\\
oct$_o(shr)$ & \textcolor{black}{0.921} & \textcolor{black}{0.958} & \textcolor{black}{\textbf{0.950}} & \textcolor{black}{0.917} & \textcolor{black}{\textbf{0.905}} & \textcolor{black}{0.814} & \textcolor{black}{\textbf{0.867}} & \textcolor{black}{\textbf{0.857}} & \textcolor{black}{0.815} & \textcolor{blue}{\textbf{0.803}}\\
oct$_o(hshr)$ & \textcolor{black}{0.977} & \textcolor{red}{1.021} & \textcolor{red}{1.038} & \textcolor{black}{0.966} & \textcolor{black}{0.987} & \textcolor{black}{0.876} & \textcolor{black}{0.926} & \textcolor{black}{0.949} & \textcolor{black}{0.868} & \textcolor{black}{0.889}\\
oct$_{oh}(shr)$ & \textcolor{red}{1.082} & \textcolor{red}{1.002} & \textcolor{red}{1.003} & \textcolor{red}{1.045} & \textcolor{black}{0.982} & \textcolor{black}{0.971} & \textcolor{black}{0.910} & \textcolor{black}{0.911} & \textcolor{black}{0.941} & \textcolor{black}{0.882}\\
oct$_{oh}(hshr)$ & \textcolor{red}{1.007} & \textcolor{red}{1.017} & \textcolor{red}{1.036} & \textcolor{black}{0.994} & \textcolor{red}{1.028} & \textcolor{black}{0.904} & \textcolor{black}{0.924} & \textcolor{black}{0.947} & \textcolor{black}{0.896} & \textcolor{black}{0.929}\\
\bottomrule
\multicolumn{11}{l}{\rule{0pt}{1em}\rule{0pt}{1.75em}\makecell[l]{$^\ast$The Gaussian method employs a shrinkage covariance matrix:\\G$_{h}$ and H$_{h}$ use multi-step residuals and G$_{oh}$ and H$_{oh}$ use overlapping and multi-step residuals.}}\\
\end{tabular}

	\endgroup
	\caption{ES ratio indices defined in Section 5 for the Australian QNA dataset. 
	Approaches performing worse than the benchmark (bootstrap base forecasts, ctjb) are highlighted in red, the best for each column is marked in bold, and the overall lowest value is highlighted in blue. The reconciliation approaches are described in Table 2.}
	\label{tab:auses}
\end{table}

\newpage
\section{Australian Tourism Demand dataset}
\setcounter{table}{0} 

\begin{table}[H]
	\caption{Geographic divisions of Australia in States, Zones e Regions. Zones formed by a single region are highlighted in italics and not numbered.}
	\spacingset{1}
	\label{tab:australia}
	\fontsize{9}{10}\selectfont
	\centering
	\begin{tabular}{r l l|r l l}
		\toprule
		\textbf{Series}                      & \textbf{Name} & \textbf{Label} & \textbf{Series}   & \textbf{Name}         & \textbf{Label} \\
		\midrule
		\multicolumn{1}{l}{\textit{Total}}   &     &      & \multicolumn{3}{l}{\textit{continues Regions}}  \\
		1      & Australia     & Total          & 49      & Gippsland   & BCB  \\
		\cline{1-3}
		\multicolumn{1}{l}{\textit{States}}  &     &      & 50      & Phillip Island        & BCC  \\
		2      & New South Wales (NSW)   & A    & 51      & Central Murray        & BDA  \\
		3      & Victoria (VIC)          & B    & 52      & Goulburn    & BDB  \\
		4      & Queensland (QLD)        & C    & 53      & High Country          & BDC  \\
		5      & South Australia (SA)    & D    & 54      & Melbourne East        & BDD  \\
		6      & Western Australia (WA)  & E    & 55      & Upper Yarra & BDE  \\
		7      & Tasmania (TAS)          & F    & 56      & MurrayEast  & BDF  \\
		8      & Northern Territory (NT) & G    & 57      & Mallee      & BEA  \\
		\cline{1-3}
		\multicolumn{1}{l}{\textit{Zones}}   &     &      & 58      & Wimmera     & BEB  \\
		9      & Metro NSW     & AA   & 59      & Western Grampians     & BEC  \\
		10     & Nth Coast NSW & AB   & 60      & Bendigo Loddon        & BED  \\
		       & \textit{Sth Coast NSW}  & \textit{AC}    & 61      & Macedon     & BEE  \\
		11     & Sth NSW       & AD   & 62      & Spa Country & BEF  \\
		12     & Nth NSW       & AE   & 63      & Ballarat    & BEG  \\
		       & \textit{ACT}  & \textit{AF}    & 64      & Central Highlands     & BEG  \\
		13     & Metro VIC     & BA   & 65      & Gold Coast  & CAA  \\
		       & \textit{West Coast VIC} & \textit{BB}    & 66      & Brisbane    & CAB  \\
		14     & East Coast VIC          & BC   & 67      & Sunshine Coast        & CAC  \\
		15     & Nth East VIC  & BD   & 68      & Central Queensland    & CBA  \\
		16     & Nth West VIC  & BE   & 69      & Bundaberg   & CBB  \\
		17     & Metro QLD     & CA   & 70      & Fraser Coast          & CBC  \\
		18     & Central Coast QLD       & CB   & 71      & Mackay      & CBD  \\
		19     & Nth Coast QLD & CC   & 72      & Whitsundays & CCA  \\
		20     & Inland QLD    & CD   & 73      & Northern    & CCB  \\
		21     & Metro SA      & DA   & 74      & Tropical North Queensland       & CCC  \\
		22     & Sth Coast SA  & DB   & 75      & Darling Downs         & CDA  \\
		23     & Inland SA     & DC   & 76      & Outback     & CDB  \\
		24     & West Coast SA & DD   & 77      & Adelaide    & DAA  \\
		25     & West CoastWA  & EA   & 78      & Barossa     & DAB  \\
		       & \textit{Nth WA}         & \textit{EB}    & 79      & Adelaide Hills        & DAC  \\
		       & \textit{SthWA}          & \textit{EC}    & 80      & Limestone Coast       & DBA  \\
		       & \textit{Sth TAS}        & \textit{FA}    & 81      & Fleurieu Peninsula    & DBB  \\
		26     & Nth East TAS  & FB   & 82      & Kangaroo Island       & DBC  \\
		27     & Nth West TAS  & FC   & 83      & Murraylands & DCA  \\
		28     & Nth Coast NT  & GA   & 84      & Riverland   & DCB  \\
		29     & Central NT    & GB   & 85      & Clare Valley          & DCC  \\
		\cline{1-3}
		\multicolumn{1}{l}{\textit{Regions}} &     &      & 86      & Flinders Range and Outback      & DCD  \\
		30     & Sydney        & AAA  & 87      & Eyre Peninsula        & DDA  \\
		31     & Central Coast & AAB  & 88      & Yorke Peninsula       & DDB  \\
		32     & Hunter        & ABA  & 89      & Australia’s Coral Coast         & EAA  \\
		33     & North Coast NSW         & ABB  & 90      & Experience Perth      & EAB  \\
		34     & South Coast   & ACA  & 91      & Australia’s SouthWest & EAC  \\
		35     & Snowy Mountains         & ADA  & 92      & Australia’s North West          & EBA  \\
		36     & Capital Country         & ADB  & 93      & Australia’s Golden Outback      & ECA  \\
		37     & The Murray    & ADC  & 94      & Hobart and the South  & FAA  \\
		38     & Riverina      & ADD  & 95      & East Coast  & FBA  \\
		39     & Central NSW   & AEA  & 96      & Launceston, Tamar and the North & FBB  \\
		40     & New England North West  & AEB  & 97      & North West  & FCA  \\
		41     & Outback NSW   & AEC  & 98      & WildernessWest        & FCB  \\
		42     & Blue Mountains          & AED  & 99      & Darwin      & GAA  \\
		43     & Canberra      & AFA  & 100     & Kakadu Arnhem         & GAB  \\
		44     & Melbourne     & BAA  & 101     & Katherine Daly        & GAC  \\
		45     & Peninsula     & BAB  & 102     & Barkly      & GBA  \\
		46     & Geelong       & BAC  & 103     & Lasseter    & GBB  \\
		47     & Western       & BBA  & 104     & Alice Springs         & GBC  \\
		48     & Lakes         & BCA  & 105     & MacDonnell & GBD\\
		\bottomrule
	\end{tabular}
	\begin{flushleft}
		\begin{footnotesize}
			Source: \cite{wickramasuriya2019, difonzo2022a}
		\end{footnotesize}
	\end{flushleft}
\end{table}

\subsection{Dealing with negative reconciled forecasts}
One issue in working with time series data is the presence of negative values, which can cause difficulties for certain types of models or analyses.
For the base forecasts, using the bootstrap approach produces forecasts naturally non negative (ETS model with the log-transformation), while this is not true for the Gaussian approach. In this case, any negative forecast is set equal to zero. For the cross-temporal reconciliation, \citet{difonzo2022b, difonzo2023a} propose two solutions: either a state-of-the-art numerical optimization procedure (\texttt{osqp}, \citealp{stellato2020, stellato2019}), or a simple heuristic strategy called set-negative-to-zero (sntz). With sntz, any negative high frequency bottom time series reconciled forecasts are set to zero, and then a cross-temporal reconciliation bottom-up is used to obtain the complete set of fully coherent forecasts. \cite{difonzo2023a} found that both methods produce similar quality forecasts, but the optimization method required much more time and computational effort compared to the sntz heuristic. To reduce computational demands, we used the less time-intensive heuristic approach for reconciliation. 

\subsection{Tables for all the temporal aggregation orders} 

\begin{table}[H]
	\centering
	\begingroup
	\spacingset{1}
	\fontsize{9}{10}\selectfont
	
\begin{tabular}[t]{l|>{}cccc>{}c|ccccc}
\toprule
\multicolumn{1}{c}{\textbf{}} & \multicolumn{10}{c}{\textbf{Generation of the base forecasts paths}} \\
\cmidrule(l{0pt}r{0pt}){2-11}
\multicolumn{1}{c}{\makecell[c]{\bfseries Reconciliation\\\bfseries approach}} & \multicolumn{1}{c}{ctjb} & \multicolumn{4}{c}{\makecell[c]{Gaussian approach\textsuperscript{*}}} & \multicolumn{1}{c}{ctjb} & \multicolumn{4}{c}{\makecell[c]{Gaussian approach\textsuperscript{*}}} \\
\multicolumn{1}{c}{} &  & G & B & H & \multicolumn{1}{c}{HB} &  & G & B & H & HB\\
\midrule
\addlinespace[0.3em]
\multicolumn{1}{c}{} & \multicolumn{5}{c}{\textbf{$\forall k \in \{12,6,4,3,2,1\}$}} & \multicolumn{5}{c}{\textbf{$k = 1$}}\\
base & \textcolor{black}{1.000} & \textcolor{black}{0.971} & \textcolor{black}{0.971} & \textcolor{black}{0.973} & \textcolor{black}{0.973} & \textcolor{black}{1.000} & \textcolor{black}{0.972} & \textcolor{black}{0.972} & \textcolor{black}{0.972} & \textcolor{black}{0.972}\\
ct$(bu)$ & \textcolor{red}{1.321} & \textcolor{red}{1.011} & \textcolor{red}{1.011} & \textcolor{red}{1.011} & \textcolor{red}{1.011} & \textcolor{red}{1.077} & \textcolor{black}{0.983} & \textcolor{black}{0.982} & \textcolor{black}{0.982} & \textcolor{black}{0.982}\\
ct$(shr_{cs}, bu_{te})$ & \textcolor{red}{1.057} & \textcolor{black}{0.974} & \textcolor{black}{0.969} & \textcolor{black}{0.974} & \textcolor{black}{0.969} & \textcolor{black}{0.976} & \textcolor{black}{0.963} & \textcolor{black}{0.962} & \textcolor{black}{0.963} & \textcolor{black}{0.962}\\
ct$(wlsv_{te}, bu_{cs})$ & \textcolor{red}{1.062} & \textcolor{black}{0.974} & \textcolor{black}{0.974} & \textcolor{black}{0.972} & \textcolor{black}{0.972} & \textcolor{black}{0.976} & \textcolor{black}{0.965} & \textcolor{black}{0.965} & \textcolor{black}{0.966} & \textcolor{black}{0.966}\\
oct$(ols)$ & \textcolor{black}{0.989} & \textcolor{black}{0.989} & \textcolor{black}{0.989} & \textcolor{black}{0.987} & \textcolor{black}{0.987} & \textcolor{black}{0.982} & \textcolor{black}{0.986} & \textcolor{black}{0.988} & \textcolor{black}{0.986} & \textcolor{black}{0.989}\\
oct$(struc)$ & \textcolor{black}{0.982} & \textcolor{black}{0.962} & \textcolor{black}{0.961} & \textcolor{black}{0.961} & \textcolor{black}{0.959} & \textcolor{black}{0.970} & \textcolor{black}{0.963} & \textcolor{black}{0.963} & \textcolor{black}{0.963} & \textcolor{black}{0.963}\\
oct$(wlsv)$ & \textcolor{black}{0.987} & \textcolor{black}{0.959} & \textcolor{black}{0.959} & \textcolor{black}{0.958} & \textcolor{black}{0.957} & \textcolor{black}{0.952} & \textcolor{black}{0.957} & \textcolor{black}{0.957} & \textcolor{black}{0.957} & \textcolor{black}{0.957}\\
oct$(bdshr)$ & \textcolor{black}{0.975} & \textcolor{black}{\textbf{0.956}} & \textcolor{black}{\textbf{0.953}} & \textcolor{black}{\textbf{0.952}} & \textcolor{blue}{\textbf{0.951}} & \textcolor{blue}{\textbf{0.949}} & \textcolor{black}{\textbf{0.955}} & \textcolor{black}{\textbf{0.953}} & \textcolor{black}{\textbf{0.954}} & \textcolor{black}{\textbf{0.954}}\\
oct$_h(bshr)$ & \textcolor{black}{0.994} & \textcolor{red}{1.018} & \textcolor{red}{1.020} & \textcolor{red}{1.016} & \textcolor{red}{1.019} & \textcolor{black}{0.988} & \textcolor{red}{1.007} & \textcolor{red}{1.013} & \textcolor{red}{1.006} & \textcolor{red}{1.012}\\
oct$_h(hshr)$ & \textcolor{black}{\textbf{0.969}} & \textcolor{black}{0.993} & \textcolor{black}{0.993} & \textcolor{black}{0.990} & \textcolor{black}{0.991} & \textcolor{black}{0.953} & \textcolor{black}{0.977} & \textcolor{black}{0.977} & \textcolor{black}{0.979} & \textcolor{black}{0.979}\\
oct$_h(shr)$ & \textcolor{red}{1.007} & \textcolor{black}{0.980} & \textcolor{black}{0.972} & \textcolor{black}{0.970} & \textcolor{black}{0.970} & \textcolor{red}{1.000} & \textcolor{black}{0.986} & \textcolor{black}{0.977} & \textcolor{black}{0.976} & \textcolor{black}{0.974}\\
\addlinespace[0.3em]
\multicolumn{1}{c}{} & \multicolumn{5}{c}{\textbf{$k = 2$}} & \multicolumn{5}{c}{\textbf{$k = 3$}}\\
base & \textcolor{black}{1.000} & \textcolor{black}{0.970} & \textcolor{black}{0.969} & \textcolor{black}{0.970} & \textcolor{black}{0.971} & \textcolor{black}{1.000} & \textcolor{black}{0.971} & \textcolor{black}{0.971} & \textcolor{black}{0.972} & \textcolor{black}{0.973}\\
ct$(bu)$ & \textcolor{red}{1.189} & \textcolor{black}{0.999} & \textcolor{black}{0.999} & \textcolor{black}{0.999} & \textcolor{black}{0.999} & \textcolor{red}{1.273} & \textcolor{red}{1.010} & \textcolor{red}{1.010} & \textcolor{red}{1.010} & \textcolor{red}{1.010}\\
ct$(shr_{cs}, bu_{te})$ & \textcolor{red}{1.015} & \textcolor{black}{0.972} & \textcolor{black}{0.970} & \textcolor{black}{0.972} & \textcolor{black}{0.970} & \textcolor{red}{1.041} & \textcolor{black}{0.977} & \textcolor{black}{0.974} & \textcolor{black}{0.977} & \textcolor{black}{0.974}\\
ct$(wlsv_{te}, bu_{cs})$ & \textcolor{red}{1.016} & \textcolor{black}{0.971} & \textcolor{black}{0.971} & \textcolor{black}{0.970} & \textcolor{black}{0.970} & \textcolor{red}{1.046} & \textcolor{black}{0.976} & \textcolor{black}{0.976} & \textcolor{black}{0.974} & \textcolor{black}{0.974}\\
oct$(ols)$ & \textcolor{black}{0.992} & \textcolor{black}{0.991} & \textcolor{black}{0.991} & \textcolor{black}{0.990} & \textcolor{black}{0.991} & \textcolor{black}{0.994} & \textcolor{black}{0.992} & \textcolor{black}{0.993} & \textcolor{black}{0.991} & \textcolor{black}{0.992}\\
oct$(struc)$ & \textcolor{black}{0.982} & \textcolor{black}{0.966} & \textcolor{black}{0.965} & \textcolor{black}{0.965} & \textcolor{black}{0.965} & \textcolor{black}{0.986} & \textcolor{black}{0.967} & \textcolor{black}{0.966} & \textcolor{black}{0.966} & \textcolor{black}{0.965}\\
oct$(wlsv)$ & \textcolor{black}{0.972} & \textcolor{black}{0.961} & \textcolor{black}{0.960} & \textcolor{black}{0.960} & \textcolor{black}{0.960} & \textcolor{black}{0.983} & \textcolor{black}{0.963} & \textcolor{black}{0.962} & \textcolor{black}{0.962} & \textcolor{black}{0.962}\\
oct$(bdshr)$ & \textcolor{black}{\textbf{0.964}} & \textcolor{black}{\textbf{0.958}} & \textcolor{black}{\textbf{0.957}} & \textcolor{black}{\textbf{0.956}} & \textcolor{blue}{\textbf{0.956}} & \textcolor{black}{0.972} & \textcolor{black}{\textbf{0.960}} & \textcolor{black}{\textbf{0.958}} & \textcolor{black}{\textbf{0.957}} & \textcolor{blue}{\textbf{0.957}}\\
oct$_h(bshr)$ & \textcolor{black}{0.997} & \textcolor{red}{1.015} & \textcolor{red}{1.018} & \textcolor{red}{1.013} & \textcolor{red}{1.017} & \textcolor{black}{0.999} & \textcolor{red}{1.021} & \textcolor{red}{1.022} & \textcolor{red}{1.018} & \textcolor{red}{1.022}\\
oct$_h(hshr)$ & \textcolor{black}{0.965} & \textcolor{black}{0.987} & \textcolor{black}{0.987} & \textcolor{black}{0.986} & \textcolor{black}{0.987} & \textcolor{black}{\textbf{0.971}} & \textcolor{black}{0.994} & \textcolor{black}{0.994} & \textcolor{black}{0.992} & \textcolor{black}{0.993}\\
oct$_h(shr)$ & \textcolor{red}{1.005} & \textcolor{black}{0.986} & \textcolor{black}{0.978} & \textcolor{black}{0.976} & \textcolor{black}{0.975} & \textcolor{red}{1.009} & \textcolor{black}{0.986} & \textcolor{black}{0.978} & \textcolor{black}{0.976} & \textcolor{black}{0.976}\\
\addlinespace[0.3em]
\multicolumn{1}{c}{} & \multicolumn{5}{c}{\textbf{$k = 4$}} & \multicolumn{5}{c}{\textbf{$k = 6$}}\\
base & \textcolor{black}{1.000} & \textcolor{black}{0.973} & \textcolor{black}{0.973} & \textcolor{black}{0.974} & \textcolor{black}{0.975} & \textcolor{black}{1.000} & \textcolor{black}{0.976} & \textcolor{black}{0.976} & \textcolor{black}{0.978} & \textcolor{black}{0.978}\\
ct$(bu)$ & \textcolor{red}{1.340} & \textcolor{red}{1.016} & \textcolor{red}{1.015} & \textcolor{red}{1.015} & \textcolor{red}{1.015} & \textcolor{red}{1.450} & \textcolor{red}{1.023} & \textcolor{red}{1.023} & \textcolor{red}{1.023} & \textcolor{red}{1.023}\\
ct$(shr_{cs}, bu_{te})$ & \textcolor{red}{1.061} & \textcolor{black}{0.978} & \textcolor{black}{0.973} & \textcolor{black}{0.978} & \textcolor{black}{0.973} & \textcolor{red}{1.094} & \textcolor{black}{0.978} & \textcolor{black}{0.972} & \textcolor{black}{0.978} & \textcolor{black}{0.972}\\
ct$(wlsv_{te}, bu_{cs})$ & \textcolor{red}{1.068} & \textcolor{black}{0.977} & \textcolor{black}{0.977} & \textcolor{black}{0.974} & \textcolor{black}{0.974} & \textcolor{red}{1.103} & \textcolor{black}{0.977} & \textcolor{black}{0.977} & \textcolor{black}{0.974} & \textcolor{black}{0.974}\\
oct$(ols)$ & \textcolor{black}{0.993} & \textcolor{black}{0.991} & \textcolor{black}{0.992} & \textcolor{black}{0.990} & \textcolor{black}{0.990} & \textcolor{black}{0.989} & \textcolor{black}{0.989} & \textcolor{black}{0.989} & \textcolor{black}{0.987} & \textcolor{black}{0.986}\\
oct$(struc)$ & \textcolor{black}{0.986} & \textcolor{black}{0.965} & \textcolor{black}{0.964} & \textcolor{black}{0.964} & \textcolor{black}{0.963} & \textcolor{black}{0.986} & \textcolor{black}{0.961} & \textcolor{black}{0.960} & \textcolor{black}{0.959} & \textcolor{black}{0.957}\\
oct$(wlsv)$ & \textcolor{black}{0.990} & \textcolor{black}{0.962} & \textcolor{black}{0.961} & \textcolor{black}{0.961} & \textcolor{black}{0.960} & \textcolor{red}{1.001} & \textcolor{black}{0.960} & \textcolor{black}{0.959} & \textcolor{black}{0.958} & \textcolor{black}{0.957}\\
oct$(bdshr)$ & \textcolor{black}{0.977} & \textcolor{black}{\textbf{0.959}} & \textcolor{black}{\textbf{0.956}} & \textcolor{black}{\textbf{0.955}} & \textcolor{blue}{\textbf{0.954}} & \textcolor{black}{0.985} & \textcolor{black}{\textbf{0.956}} & \textcolor{black}{\textbf{0.953}} & \textcolor{black}{\textbf{0.950}} & \textcolor{blue}{\textbf{0.948}}\\
oct$_h(bshr)$ & \textcolor{black}{0.997} & \textcolor{red}{1.022} & \textcolor{red}{1.022} & \textcolor{red}{1.019} & \textcolor{red}{1.022} & \textcolor{black}{0.994} & \textcolor{red}{1.022} & \textcolor{red}{1.022} & \textcolor{red}{1.020} & \textcolor{red}{1.022}\\
oct$_h(hshr)$ & \textcolor{black}{\textbf{0.973}} & \textcolor{black}{0.996} & \textcolor{black}{0.997} & \textcolor{black}{0.994} & \textcolor{black}{0.995} & \textcolor{black}{\textbf{0.976}} & \textcolor{black}{1.000} & \textcolor{red}{1.001} & \textcolor{black}{0.996} & \textcolor{black}{0.997}\\
oct$_h(shr)$ & \textcolor{red}{1.009} & \textcolor{black}{0.984} & \textcolor{black}{0.976} & \textcolor{black}{0.973} & \textcolor{black}{0.973} & \textcolor{red}{1.010} & \textcolor{black}{0.978} & \textcolor{black}{0.970} & \textcolor{black}{0.967} & \textcolor{black}{0.967}\\
\addlinespace[0.3em]
\multicolumn{1}{c}{} & \multicolumn{5}{c}{\textbf{$k = 12$}} & \multicolumn{5}{c}{}\\
base & \textcolor{black}{1.000} & \textcolor{black}{0.968} & \textcolor{black}{0.967} & \textcolor{black}{0.969} & \textcolor{black}{0.969} &  &  &  &  & \\
ct$(bu)$ & \textcolor{red}{1.675} & \textcolor{red}{1.038} & \textcolor{red}{1.037} & \textcolor{red}{1.037} & \textcolor{red}{1.038} &  &  &  &  & \\
ct$(shr_{cs}, bu_{te})$ & \textcolor{red}{1.163} & \textcolor{black}{0.977} & \textcolor{black}{0.965} & \textcolor{black}{0.977} & \textcolor{black}{0.965} &  &  &  &  & \\
ct$(wlsv_{te}, bu_{cs})$ & \textcolor{red}{1.174} & \textcolor{black}{0.978} & \textcolor{black}{0.978} & \textcolor{black}{0.971} & \textcolor{black}{0.971} &  &  &  &  & \\
oct$(ols)$ & \textcolor{black}{0.982} & \textcolor{black}{0.982} & \textcolor{black}{0.983} & \textcolor{black}{0.980} & \textcolor{black}{0.975} &  &  &  &  & \\
oct$(struc)$ & \textcolor{black}{0.982} & \textcolor{black}{0.951} & \textcolor{black}{0.949} & \textcolor{black}{0.947} & \textcolor{black}{0.943} &  &  &  &  & \\
oct$(wlsv)$ & \textcolor{red}{1.025} & \textcolor{black}{0.954} & \textcolor{black}{0.953} & \textcolor{black}{0.949} & \textcolor{black}{0.947} &  &  &  &  & \\
oct$(bdshr)$ & \textcolor{red}{1.002} & \textcolor{black}{\textbf{0.950}} & \textcolor{black}{\textbf{0.944}} & \textcolor{black}{\textbf{0.939}} & \textcolor{blue}{\textbf{0.935}} &  &  &  &  & \\
oct$_h(bshr)$ & \textcolor{black}{0.987} & \textcolor{red}{1.024} & \textcolor{red}{1.021} & \textcolor{red}{1.021} & \textcolor{red}{1.019} &  &  &  &  & \\
oct$_h(hshr)$ & \textcolor{black}{\textbf{0.978}} & \textcolor{red}{1.003} & \textcolor{red}{1.005} & \textcolor{black}{0.996} & \textcolor{black}{0.997} &  &  &  &  & \\
oct$_h(shr)$ & \textcolor{red}{1.010} & \textcolor{black}{0.963} & \textcolor{black}{0.956} & \textcolor{black}{0.952} & \textcolor{black}{0.952} &  &  &  &  & \\
\bottomrule
\multicolumn{11}{l}{\rule{0pt}{1em}\rule{0pt}{1.75em}\makecell[l]{$^\ast$The Gaussian method employs a sample covariance matrix and includes four techniques\\ (G, B, H, HB) with multi-step residuals.}}\\
\end{tabular}

	\endgroup
	\caption{$\overline{RelCRPS}$ defined in Section 5 for the Australian Tourism Demand dataset. 
	Approaches performing worse than the benchmark (bootstrap base forecasts, ctjb) are highlighted in red, the best for each column is marked in bold, and the overall lowest value is highlighted in blue. The reconciliation approaches are described in Table 2.}
	\label{tab:vncrps_sam}
\end{table}

\begin{table}[H]
	\centering
	\begingroup
	\spacingset{1}
	\fontsize{9}{10}\selectfont
	
\begin{tabular}[t]{l|>{}cccc>{}c|ccccc}
\toprule
\multicolumn{1}{c}{\textbf{}} & \multicolumn{10}{c}{\textbf{Generation of the base forecasts paths}} \\
\cmidrule(l{0pt}r{0pt}){2-11}
\multicolumn{1}{c}{\makecell[c]{\bfseries Reconciliation\\\bfseries approach}} & \multicolumn{1}{c}{ctjb} & \multicolumn{4}{c}{\makecell[c]{Gaussian approach\textsuperscript{*}}} & \multicolumn{1}{c}{ctjb} & \multicolumn{4}{c}{\makecell[c]{Gaussian approach\textsuperscript{*}}} \\
\multicolumn{1}{c}{} &  & G & B & H & \multicolumn{1}{c}{HB} &  & G & B & H & HB\\
\midrule
\addlinespace[0.3em]
\multicolumn{1}{c}{} & \multicolumn{5}{c}{\textbf{$\forall k \in \{12,6,4,3,2,1\}$}} & \multicolumn{5}{c}{\textbf{$k = 1$}}\\
base & \textcolor{black}{1.000} & \textcolor{black}{0.956} & \textcolor{black}{0.955} & \textcolor{black}{0.958} & \textcolor{black}{0.951} & \textcolor{black}{1.000} & \textcolor{black}{0.952} & \textcolor{black}{0.950} & \textcolor{black}{0.952} & \textcolor{black}{0.950}\\
ct$(bu)$ & \textcolor{red}{2.427} & \textcolor{black}{0.983} & \textcolor{black}{0.983} & \textcolor{black}{0.983} & \textcolor{black}{0.983} & \textcolor{red}{1.759} & \textcolor{black}{0.982} & \textcolor{black}{0.982} & \textcolor{black}{0.982} & \textcolor{black}{0.982}\\
ct$(shr_{cs}, bu_{te})$ & \textcolor{red}{1.243} & \textcolor{black}{0.886} & \textcolor{black}{0.879} & \textcolor{black}{0.886} & \textcolor{black}{0.879} & \textcolor{red}{1.098} & \textcolor{black}{0.929} & \textcolor{black}{0.928} & \textcolor{black}{0.930} & \textcolor{black}{0.927}\\
ct$(wlsv_{te}, bu_{cs})$ & \textcolor{red}{1.499} & \textcolor{black}{0.977} & \textcolor{black}{0.977} & \textcolor{black}{0.971} & \textcolor{black}{0.972} & \textcolor{red}{1.241} & \textcolor{black}{0.975} & \textcolor{black}{0.975} & \textcolor{black}{0.973} & \textcolor{black}{0.974}\\
oct$(ols)$ & \textcolor{black}{0.955} & \textcolor{black}{0.893} & \textcolor{black}{0.891} & \textcolor{black}{0.893} & \textcolor{black}{0.888} & \textcolor{black}{0.975} & \textcolor{black}{0.937} & \textcolor{black}{0.936} & \textcolor{black}{0.936} & \textcolor{black}{0.935}\\
oct$(struc)$ & \textcolor{red}{1.085} & \textcolor{black}{0.917} & \textcolor{black}{0.915} & \textcolor{black}{0.916} & \textcolor{black}{0.912} & \textcolor{red}{1.027} & \textcolor{black}{0.943} & \textcolor{black}{0.942} & \textcolor{black}{0.943} & \textcolor{black}{0.942}\\
oct$(wlsv)$ & \textcolor{red}{1.132} & \textcolor{black}{0.933} & \textcolor{black}{0.929} & \textcolor{black}{0.931} & \textcolor{black}{0.927} & \textcolor{red}{1.050} & \textcolor{black}{0.951} & \textcolor{black}{0.949} & \textcolor{black}{0.950} & \textcolor{black}{0.949}\\
oct$(bdshr)$ & \textcolor{red}{1.047} & \textcolor{black}{0.904} & \textcolor{black}{0.897} & \textcolor{black}{0.897} & \textcolor{black}{0.891} & \textcolor{red}{1.009} & \textcolor{black}{0.936} & \textcolor{black}{0.933} & \textcolor{black}{0.934} & \textcolor{black}{0.931}\\
oct$_h(bshr)$ & \textcolor{black}{\textbf{0.931}} & \textcolor{black}{\textbf{0.867}} & \textcolor{black}{\textbf{0.866}} & \textcolor{black}{\textbf{0.863}} & \textcolor{blue}{\textbf{0.860}} & \textcolor{black}{\textbf{0.965}} & \textcolor{black}{\textbf{0.927}} & \textcolor{black}{0.927} & \textcolor{black}{0.925} & \textcolor{black}{0.923}\\
oct$_h(hshr)$ & \textcolor{red}{1.081} & \textcolor{black}{0.935} & \textcolor{black}{0.931} & \textcolor{black}{0.935} & \textcolor{black}{0.927} & \textcolor{red}{1.028} & \textcolor{black}{0.952} & \textcolor{black}{0.951} & \textcolor{black}{0.952} & \textcolor{black}{0.950}\\
oct$_h(shr)$ & \textcolor{red}{1.068} & \textcolor{black}{0.899} & \textcolor{black}{0.878} & \textcolor{black}{0.875} & \textcolor{black}{0.864} & \textcolor{red}{1.023} & \textcolor{black}{0.935} & \textcolor{black}{\textbf{0.923}} & \textcolor{black}{\textbf{0.921}} & \textcolor{blue}{\textbf{0.916}}\\
\addlinespace[0.3em]
\multicolumn{1}{c}{} & \multicolumn{5}{c}{\textbf{$k = 2$}} & \multicolumn{5}{c}{\textbf{$k = 3$}}\\
base & \textcolor{black}{1.000} & \textcolor{black}{0.958} & \textcolor{black}{0.954} & \textcolor{black}{0.956} & \textcolor{black}{0.953} & \textcolor{black}{1.000} & \textcolor{black}{0.961} & \textcolor{black}{0.958} & \textcolor{black}{0.960} & \textcolor{black}{0.955}\\
ct$(bu)$ & \textcolor{red}{2.176} & \textcolor{red}{1.001} & \textcolor{red}{1.001} & \textcolor{red}{1.001} & \textcolor{red}{1.001} & \textcolor{red}{2.428} & \textcolor{black}{0.998} & \textcolor{black}{0.997} & \textcolor{black}{0.997} & \textcolor{black}{0.997}\\
ct$(shr_{cs}, bu_{te})$ & \textcolor{red}{1.192} & \textcolor{black}{0.927} & \textcolor{black}{0.921} & \textcolor{black}{0.927} & \textcolor{black}{0.921} & \textcolor{red}{1.245} & \textcolor{black}{0.911} & \textcolor{black}{0.904} & \textcolor{black}{0.911} & \textcolor{black}{0.904}\\
ct$(wlsv_{te}, bu_{cs})$ & \textcolor{red}{1.400} & \textcolor{black}{0.992} & \textcolor{black}{0.992} & \textcolor{black}{0.988} & \textcolor{black}{0.988} & \textcolor{red}{1.500} & \textcolor{black}{0.991} & \textcolor{black}{0.991} & \textcolor{black}{0.986} & \textcolor{black}{0.987}\\
oct$(ols)$ & \textcolor{black}{0.985} & \textcolor{black}{0.935} & \textcolor{black}{0.932} & \textcolor{black}{0.934} & \textcolor{black}{0.930} & \textcolor{black}{0.976} & \textcolor{black}{0.918} & \textcolor{black}{0.915} & \textcolor{black}{0.917} & \textcolor{black}{0.912}\\
oct$(struc)$ & \textcolor{red}{1.075} & \textcolor{black}{0.949} & \textcolor{black}{0.947} & \textcolor{black}{0.948} & \textcolor{black}{0.944} & \textcolor{red}{1.096} & \textcolor{black}{0.939} & \textcolor{black}{0.936} & \textcolor{black}{0.938} & \textcolor{black}{0.933}\\
oct$(wlsv)$ & \textcolor{red}{1.110} & \textcolor{black}{0.960} & \textcolor{black}{0.958} & \textcolor{black}{0.958} & \textcolor{black}{0.955} & \textcolor{red}{1.142} & \textcolor{black}{0.953} & \textcolor{black}{0.949} & \textcolor{black}{0.951} & \textcolor{black}{0.946}\\
oct$(bdshr)$ & \textcolor{red}{1.045} & \textcolor{black}{0.938} & \textcolor{black}{0.933} & \textcolor{black}{0.933} & \textcolor{black}{0.929} & \textcolor{red}{1.060} & \textcolor{black}{0.926} & \textcolor{black}{0.920} & \textcolor{black}{0.921} & \textcolor{black}{0.915}\\
oct$_h(bshr)$ & \textcolor{black}{\textbf{0.967}} & \textcolor{black}{\textbf{0.917}} & \textcolor{black}{\textbf{0.916}} & \textcolor{black}{0.913} & \textcolor{black}{0.908} & \textcolor{black}{\textbf{0.954}} & \textcolor{black}{\textbf{0.895}} & \textcolor{black}{\textbf{0.895}} & \textcolor{black}{\textbf{0.892}} & \textcolor{blue}{\textbf{0.887}}\\
oct$_h(hshr)$ & \textcolor{red}{1.073} & \textcolor{black}{0.962} & \textcolor{black}{0.959} & \textcolor{black}{0.963} & \textcolor{black}{0.956} & \textcolor{red}{1.093} & \textcolor{black}{0.955} & \textcolor{black}{0.951} & \textcolor{black}{0.956} & \textcolor{black}{0.949}\\
oct$_h(shr)$ & \textcolor{red}{1.064} & \textcolor{black}{0.933} & \textcolor{black}{0.916} & \textcolor{black}{\textbf{0.913}} & \textcolor{blue}{\textbf{0.904}} & \textcolor{red}{1.082} & \textcolor{black}{0.923} & \textcolor{black}{0.903} & \textcolor{black}{0.900} & \textcolor{black}{0.890}\\
\addlinespace[0.3em]
\multicolumn{1}{c}{} & \multicolumn{5}{c}{\textbf{$k = 4$}} & \multicolumn{5}{c}{\textbf{$k = 6$}}\\
base & \textcolor{black}{1.000} & \textcolor{black}{0.960} & \textcolor{black}{0.960} & \textcolor{black}{0.962} & \textcolor{black}{0.956} & \textcolor{black}{1.000} & \textcolor{black}{0.961} & \textcolor{black}{0.959} & \textcolor{black}{0.964} & \textcolor{black}{0.956}\\
ct$(bu)$ & \textcolor{red}{2.585} & \textcolor{black}{0.996} & \textcolor{black}{0.996} & \textcolor{black}{0.995} & \textcolor{black}{0.996} & \textcolor{red}{2.849} & \textcolor{red}{1.004} & \textcolor{red}{1.003} & \textcolor{red}{1.003} & \textcolor{red}{1.004}\\
ct$(shr_{cs}, bu_{te})$ & \textcolor{red}{1.277} & \textcolor{black}{0.898} & \textcolor{black}{0.890} & \textcolor{black}{0.899} & \textcolor{black}{0.891} & \textcolor{red}{1.339} & \textcolor{black}{0.882} & \textcolor{black}{0.873} & \textcolor{black}{0.883} & \textcolor{black}{0.874}\\
ct$(wlsv_{te}, bu_{cs})$ & \textcolor{red}{1.559} & \textcolor{black}{0.990} & \textcolor{black}{0.990} & \textcolor{black}{0.984} & \textcolor{black}{0.985} & \textcolor{red}{1.662} & \textcolor{black}{0.997} & \textcolor{black}{0.997} & \textcolor{black}{0.991} & \textcolor{black}{0.992}\\
oct$(ols)$ & \textcolor{black}{0.966} & \textcolor{black}{0.905} & \textcolor{black}{0.902} & \textcolor{black}{0.904} & \textcolor{black}{0.899} & \textcolor{black}{0.962} & \textcolor{black}{0.889} & \textcolor{black}{0.887} & \textcolor{black}{0.890} & \textcolor{black}{0.885}\\
oct$(struc)$ & \textcolor{red}{1.106} & \textcolor{black}{0.930} & \textcolor{black}{0.927} & \textcolor{black}{0.928} & \textcolor{black}{0.924} & \textcolor{red}{1.132} & \textcolor{black}{0.923} & \textcolor{black}{0.919} & \textcolor{black}{0.922} & \textcolor{black}{0.916}\\
oct$(wlsv)$ & \textcolor{red}{1.157} & \textcolor{black}{0.947} & \textcolor{black}{0.943} & \textcolor{black}{0.945} & \textcolor{black}{0.939} & \textcolor{red}{1.192} & \textcolor{black}{0.942} & \textcolor{black}{0.937} & \textcolor{black}{0.941} & \textcolor{black}{0.934}\\
oct$(bdshr)$ & \textcolor{red}{1.065} & \textcolor{black}{0.917} & \textcolor{black}{0.909} & \textcolor{black}{0.910} & \textcolor{black}{0.903} & \textcolor{red}{1.084} & \textcolor{black}{0.907} & \textcolor{black}{0.897} & \textcolor{black}{0.898} & \textcolor{black}{0.890}\\
oct$_h(bshr)$ & \textcolor{black}{\textbf{0.943}} & \textcolor{black}{\textbf{0.879}} & \textcolor{black}{\textbf{0.878}} & \textcolor{black}{\textbf{0.876}} & \textcolor{blue}{\textbf{0.871}} & \textcolor{black}{\textbf{0.932}} & \textcolor{black}{\textbf{0.856}} & \textcolor{black}{\textbf{0.855}} & \textcolor{black}{\textbf{0.851}} & \textcolor{blue}{\textbf{0.848}}\\
oct$_h(hshr)$ & \textcolor{red}{1.101} & \textcolor{black}{0.949} & \textcolor{black}{0.944} & \textcolor{black}{0.949} & \textcolor{black}{0.941} & \textcolor{red}{1.126} & \textcolor{black}{0.945} & \textcolor{black}{0.939} & \textcolor{black}{0.945} & \textcolor{black}{0.936}\\
oct$_h(shr)$ & \textcolor{red}{1.089} & \textcolor{black}{0.915} & \textcolor{black}{0.893} & \textcolor{black}{0.890} & \textcolor{black}{0.878} & \textcolor{red}{1.107} & \textcolor{black}{0.899} & \textcolor{black}{0.875} & \textcolor{black}{0.871} & \textcolor{black}{0.858}\\
\addlinespace[0.3em]
\multicolumn{1}{c}{} & \multicolumn{5}{c}{\textbf{$k = 12$}} & \multicolumn{5}{c}{}\\
base & \textcolor{black}{1.000} & \textcolor{black}{0.942} & \textcolor{black}{0.947} & \textcolor{black}{0.951} & \textcolor{black}{0.937} &  &  &  &  & \\
ct$(bu)$ & \textcolor{red}{2.990} & \textcolor{black}{0.922} & \textcolor{black}{0.921} & \textcolor{black}{0.923} & \textcolor{black}{0.923} &  &  &  &  & \\
ct$(shr_{cs}, bu_{te})$ & \textcolor{red}{1.326} & \textcolor{black}{0.779} & \textcolor{black}{0.767} & \textcolor{black}{0.777} & \textcolor{black}{0.766} &  &  &  &  & \\
ct$(wlsv_{te}, bu_{cs})$ & \textcolor{red}{1.679} & \textcolor{black}{0.917} & \textcolor{black}{0.917} & \textcolor{black}{0.906} & \textcolor{black}{0.908} &  &  &  &  & \\
oct$(ols)$ & \textcolor{black}{0.872} & \textcolor{black}{0.783} & \textcolor{black}{0.784} & \textcolor{black}{0.783} & \textcolor{black}{0.779} &  &  &  &  & \\
oct$(struc)$ & \textcolor{red}{1.077} & \textcolor{black}{0.826} & \textcolor{black}{0.822} & \textcolor{black}{0.823} & \textcolor{black}{0.818} &  &  &  &  & \\
oct$(wlsv)$ & \textcolor{red}{1.149} & \textcolor{black}{0.851} & \textcolor{black}{0.845} & \textcolor{black}{0.847} & \textcolor{black}{0.840} &  &  &  &  & \\
oct$(bdshr)$ & \textcolor{red}{1.021} & \textcolor{black}{0.808} & \textcolor{black}{0.796} & \textcolor{black}{0.796} & \textcolor{black}{0.787} &  &  &  &  & \\
oct$_h(bshr)$ & \textcolor{black}{\textbf{0.833}} & \textcolor{black}{\textbf{0.741}} & \textcolor{black}{\textbf{0.741}} & \textcolor{black}{\textbf{0.737}} & \textcolor{blue}{\textbf{0.735}} &  &  &  &  & \\
oct$_h(hshr)$ & \textcolor{red}{1.066} & \textcolor{black}{0.851} & \textcolor{black}{0.846} & \textcolor{black}{0.848} & \textcolor{black}{0.838} &  &  &  &  & \\
oct$_h(shr)$ & \textcolor{red}{1.043} & \textcolor{black}{0.797} & \textcolor{black}{0.768} & \textcolor{black}{0.764} & \textcolor{black}{0.750} &  &  &  &  & \\
\bottomrule
\multicolumn{11}{l}{\rule{0pt}{1em}\rule{0pt}{1.75em}\makecell[l]{$^\ast$The Gaussian method employs a sample covariance matrix and includes four techniques\\ (G, B, H, HB) with multi-step residuals.}}\\
\end{tabular}

	\endgroup
	\caption{ES ratio indices defined in Section 5 for the Australian Tourism Demand dataset. 
	Approaches performing worse than the benchmark (bootstrap base forecasts, ctjb) are highlighted in red, the best for each column is marked in bold, and the overall lowest value is highlighted in blue. The reconciliation approaches are described in Table 2.}
	\label{tab:vnes_sam}
\end{table}

\begin{table}[H]
	\centering
	\begingroup
	\spacingset{1}
	\fontsize{9}{10}\selectfont
	
\begin{tabular}[t]{l|>{}cccc>{}c|ccccc}
\toprule
\multicolumn{1}{c}{\textbf{}} & \multicolumn{10}{c}{\textbf{Generation of the base forecasts paths}} \\
\cmidrule(l{0pt}r{0pt}){2-11}
\multicolumn{1}{c}{\makecell[c]{\bfseries Reconciliation\\\bfseries approach}} & \multicolumn{1}{c}{ctjb} & \multicolumn{4}{c}{\makecell[c]{Gaussian approach\textsuperscript{*}}} & \multicolumn{1}{c}{ctjb} & \multicolumn{4}{c}{\makecell[c]{Gaussian approach\textsuperscript{*}}} \\
\multicolumn{1}{c}{} &  & G & B & H & \multicolumn{1}{c}{HB} &  & G & B & H & HB\\
\midrule
\addlinespace[0.3em]
\multicolumn{1}{c}{} & \multicolumn{5}{c}{\textbf{$\forall k \in \{12,6,4,3,2,1\}$}} & \multicolumn{5}{c}{\textbf{$k = 1$}}\\
base & \textcolor{black}{1.000} & \textcolor{black}{\textbf{0.971}} & \textcolor{black}{0.972} & \textcolor{black}{\textbf{0.971}} & \textcolor{black}{0.972} & \textcolor{black}{1.000} & \textcolor{black}{\textbf{0.972}} & \textcolor{black}{0.971} & \textcolor{black}{0.972} & \textcolor{black}{0.971}\\
ct$(bu)$ & \textcolor{red}{1.321} & \textcolor{red}{1.017} & \textcolor{red}{1.018} & \textcolor{red}{1.017} & \textcolor{red}{1.017} & \textcolor{red}{1.077} & \textcolor{black}{0.983} & \textcolor{black}{0.983} & \textcolor{black}{0.983} & \textcolor{black}{0.983}\\
ct$(shr_{cs}, bu_{te})$ & \textcolor{red}{1.057} & \textcolor{red}{1.013} & \textcolor{black}{\textbf{0.971}} & \textcolor{red}{1.013} & \textcolor{black}{0.971} & \textcolor{black}{0.976} & \textcolor{black}{0.987} & \textcolor{black}{\textbf{0.961}} & \textcolor{black}{0.988} & \textcolor{black}{0.961}\\
ct$(wlsv_{te}, bu_{cs})$ & \textcolor{red}{1.062} & \textcolor{red}{1.069} & \textcolor{red}{1.070} & \textcolor{black}{0.974} & \textcolor{black}{0.974} & \textcolor{black}{0.976} & \textcolor{black}{0.986} & \textcolor{black}{0.986} & \textcolor{black}{\textbf{0.965}} & \textcolor{black}{0.965}\\
oct$(ols)$ & \textcolor{black}{0.989} & \textcolor{red}{1.163} & \textcolor{red}{1.052} & \textcolor{red}{1.139} & \textcolor{black}{0.987} & \textcolor{black}{0.982} & \textcolor{red}{1.038} & \textcolor{black}{0.992} & \textcolor{red}{1.047} & \textcolor{black}{0.987}\\
oct$(struc)$ & \textcolor{black}{0.982} & \textcolor{red}{1.099} & \textcolor{red}{1.039} & \textcolor{red}{1.037} & \textcolor{black}{0.960} & \textcolor{black}{0.970} & \textcolor{red}{1.007} & \textcolor{black}{0.971} & \textcolor{black}{0.999} & \textcolor{black}{0.962}\\
oct$(wlsv)$ & \textcolor{black}{0.987} & \textcolor{red}{1.080} & \textcolor{red}{1.041} & \textcolor{black}{0.992} & \textcolor{black}{0.958} & \textcolor{black}{0.952} & \textcolor{red}{1.004} & \textcolor{black}{0.969} & \textcolor{black}{0.978} & \textcolor{black}{0.956}\\
oct$(bdshr)$ & \textcolor{black}{0.975} & \textcolor{red}{1.072} & \textcolor{red}{1.032} & \textcolor{black}{0.985} & \textcolor{blue}{\textbf{0.950}} & \textcolor{blue}{\textbf{0.949}} & \textcolor{black}{0.999} & \textcolor{black}{0.965} & \textcolor{black}{0.975} & \textcolor{black}{\textbf{0.952}}\\
oct$_h(bshr)$ & \textcolor{black}{0.994} & \textcolor{red}{1.202} & \textcolor{red}{1.073} & \textcolor{red}{1.168} & \textcolor{red}{1.021} & \textcolor{black}{0.988} & \textcolor{red}{1.046} & \textcolor{red}{1.012} & \textcolor{red}{1.063} & \textcolor{red}{1.012}\\
oct$_h(hshr)$ & \textcolor{black}{\textbf{0.969}} & \textcolor{red}{1.066} & \textcolor{red}{1.052} & \textcolor{red}{1.008} & \textcolor{black}{0.994} & \textcolor{black}{0.953} & \textcolor{black}{0.994} & \textcolor{black}{0.972} & \textcolor{black}{0.991} & \textcolor{black}{0.979}\\
oct$_h(shr)$ & \textcolor{red}{1.007} & \textcolor{red}{1.090} & \textcolor{red}{1.046} & \textcolor{black}{1.000} & \textcolor{black}{0.970} & \textcolor{red}{1.000} & \textcolor{red}{1.035} & \textcolor{black}{0.992} & \textcolor{black}{0.998} & \textcolor{black}{0.973}\\
\addlinespace[0.3em]
\multicolumn{1}{c}{} & \multicolumn{5}{c}{\textbf{$k = 2$}} & \multicolumn{5}{c}{\textbf{$k = 3$}}\\
base & \textcolor{black}{1.000} & \textcolor{black}{\textbf{0.969}} & \textcolor{black}{0.969} & \textcolor{black}{\textbf{0.968}} & \textcolor{black}{0.968} & \textcolor{black}{1.000} & \textcolor{black}{\textbf{0.971}} & \textcolor{black}{\textbf{0.970}} & \textcolor{black}{\textbf{0.969}} & \textcolor{black}{0.970}\\
ct$(bu)$ & \textcolor{red}{1.189} & \textcolor{black}{1.000} & \textcolor{red}{1.000} & \textcolor{red}{1.000} & \textcolor{red}{1.000} & \textcolor{red}{1.273} & \textcolor{red}{1.013} & \textcolor{red}{1.013} & \textcolor{red}{1.013} & \textcolor{red}{1.013}\\
ct$(shr_{cs}, bu_{te})$ & \textcolor{red}{1.015} & \textcolor{red}{1.004} & \textcolor{black}{\textbf{0.968}} & \textcolor{red}{1.004} & \textcolor{black}{0.968} & \textcolor{red}{1.041} & \textcolor{red}{1.013} & \textcolor{black}{0.973} & \textcolor{red}{1.014} & \textcolor{black}{0.973}\\
ct$(wlsv_{te}, bu_{cs})$ & \textcolor{red}{1.016} & \textcolor{red}{1.043} & \textcolor{red}{1.044} & \textcolor{black}{0.969} & \textcolor{black}{0.969} & \textcolor{red}{1.046} & \textcolor{red}{1.067} & \textcolor{red}{1.068} & \textcolor{black}{0.974} & \textcolor{black}{0.974}\\
oct$(ols)$ & \textcolor{black}{0.992} & \textcolor{red}{1.118} & \textcolor{red}{1.037} & \textcolor{red}{1.092} & \textcolor{black}{0.989} & \textcolor{black}{0.994} & \textcolor{red}{1.153} & \textcolor{red}{1.053} & \textcolor{red}{1.124} & \textcolor{black}{0.990}\\
oct$(struc)$ & \textcolor{black}{0.982} & \textcolor{red}{1.075} & \textcolor{red}{1.022} & \textcolor{red}{1.020} & \textcolor{black}{0.963} & \textcolor{black}{0.986} & \textcolor{red}{1.099} & \textcolor{red}{1.041} & \textcolor{red}{1.033} & \textcolor{black}{0.964}\\
oct$(wlsv)$ & \textcolor{black}{0.972} & \textcolor{red}{1.064} & \textcolor{red}{1.021} & \textcolor{black}{0.987} & \textcolor{black}{0.958} & \textcolor{black}{0.983} & \textcolor{red}{1.083} & \textcolor{red}{1.041} & \textcolor{black}{0.993} & \textcolor{black}{0.960}\\
oct$(bdshr)$ & \textcolor{black}{\textbf{0.964}} & \textcolor{red}{1.057} & \textcolor{red}{1.015} & \textcolor{black}{0.983} & \textcolor{blue}{\textbf{0.953}} & \textcolor{black}{0.972} & \textcolor{red}{1.075} & \textcolor{red}{1.033} & \textcolor{black}{0.988} & \textcolor{blue}{\textbf{0.955}}\\
oct$_h(bshr)$ & \textcolor{black}{0.997} & \textcolor{red}{1.145} & \textcolor{red}{1.059} & \textcolor{red}{1.114} & \textcolor{red}{1.016} & \textcolor{black}{0.999} & \textcolor{red}{1.190} & \textcolor{red}{1.075} & \textcolor{red}{1.151} & \textcolor{red}{1.021}\\
oct$_h(hshr)$ & \textcolor{black}{0.965} & \textcolor{red}{1.050} & \textcolor{red}{1.029} & \textcolor{red}{1.001} & \textcolor{black}{0.986} & \textcolor{black}{\textbf{0.971}} & \textcolor{red}{1.067} & \textcolor{red}{1.051} & \textcolor{red}{1.009} & \textcolor{black}{0.994}\\
oct$_h(shr)$ & \textcolor{red}{1.005} & \textcolor{red}{1.083} & \textcolor{red}{1.035} & \textcolor{red}{1.001} & \textcolor{black}{0.973} & \textcolor{red}{1.009} & \textcolor{red}{1.097} & \textcolor{red}{1.050} & \textcolor{red}{1.004} & \textcolor{black}{0.974}\\
\addlinespace[0.3em]
\multicolumn{1}{c}{} & \multicolumn{5}{c}{\textbf{$k = 4$}} & \multicolumn{5}{c}{\textbf{$k = 6$}}\\
base & \textcolor{black}{1.000} & \textcolor{black}{\textbf{0.973}} & \textcolor{black}{\textbf{0.973}} & \textcolor{black}{\textbf{0.971}} & \textcolor{black}{0.973} & \textcolor{black}{1.000} & \textcolor{black}{\textbf{0.976}} & \textcolor{black}{0.977} & \textcolor{black}{\textbf{0.975}} & \textcolor{black}{0.977}\\
ct$(bu)$ & \textcolor{red}{1.340} & \textcolor{red}{1.021} & \textcolor{red}{1.021} & \textcolor{red}{1.021} & \textcolor{red}{1.021} & \textcolor{red}{1.450} & \textcolor{red}{1.032} & \textcolor{red}{1.033} & \textcolor{red}{1.032} & \textcolor{red}{1.033}\\
ct$(shr_{cs}, bu_{te})$ & \textcolor{red}{1.061} & \textcolor{red}{1.018} & \textcolor{black}{0.974} & \textcolor{red}{1.018} & \textcolor{black}{0.974} & \textcolor{red}{1.094} & \textcolor{red}{1.023} & \textcolor{black}{\textbf{0.974}} & \textcolor{red}{1.024} & \textcolor{black}{0.974}\\
ct$(wlsv_{te}, bu_{cs})$ & \textcolor{red}{1.068} & \textcolor{red}{1.087} & \textcolor{red}{1.089} & \textcolor{black}{0.976} & \textcolor{black}{0.976} & \textcolor{red}{1.103} & \textcolor{red}{1.108} & \textcolor{red}{1.110} & \textcolor{black}{0.978} & \textcolor{black}{0.978}\\
oct$(ols)$ & \textcolor{black}{0.993} & \textcolor{red}{1.186} & \textcolor{red}{1.068} & \textcolor{red}{1.148} & \textcolor{black}{0.989} & \textcolor{black}{0.989} & \textcolor{red}{1.223} & \textcolor{red}{1.080} & \textcolor{red}{1.184} & \textcolor{black}{0.987}\\
oct$(struc)$ & \textcolor{black}{0.986} & \textcolor{red}{1.120} & \textcolor{red}{1.057} & \textcolor{red}{1.042} & \textcolor{black}{0.962} & \textcolor{black}{0.986} & \textcolor{red}{1.141} & \textcolor{red}{1.071} & \textcolor{red}{1.054} & \textcolor{black}{0.959}\\
oct$(wlsv)$ & \textcolor{black}{0.990} & \textcolor{red}{1.100} & \textcolor{red}{1.059} & \textcolor{black}{0.996} & \textcolor{black}{0.959} & \textcolor{red}{1.001} & \textcolor{red}{1.115} & \textcolor{red}{1.076} & \textcolor{black}{0.998} & \textcolor{black}{0.958}\\
oct$(bdshr)$ & \textcolor{black}{0.977} & \textcolor{red}{1.091} & \textcolor{red}{1.049} & \textcolor{black}{0.989} & \textcolor{blue}{\textbf{0.952}} & \textcolor{black}{0.985} & \textcolor{red}{1.103} & \textcolor{red}{1.064} & \textcolor{black}{0.989} & \textcolor{blue}{\textbf{0.949}}\\
oct$_h(bshr)$ & \textcolor{black}{0.997} & \textcolor{red}{1.230} & \textcolor{red}{1.089} & \textcolor{red}{1.178} & \textcolor{red}{1.023} & \textcolor{black}{0.994} & \textcolor{red}{1.278} & \textcolor{red}{1.101} & \textcolor{red}{1.219} & \textcolor{red}{1.025}\\
oct$_h(hshr)$ & \textcolor{black}{\textbf{0.973}} & \textcolor{red}{1.084} & \textcolor{red}{1.071} & \textcolor{red}{1.012} & \textcolor{black}{0.996} & \textcolor{black}{\textbf{0.976}} & \textcolor{red}{1.097} & \textcolor{red}{1.091} & \textcolor{red}{1.017} & \textcolor{red}{1.002}\\
oct$_h(shr)$ & \textcolor{red}{1.009} & \textcolor{red}{1.108} & \textcolor{red}{1.062} & \textcolor{red}{1.003} & \textcolor{black}{0.972} & \textcolor{red}{1.010} & \textcolor{red}{1.113} & \textcolor{red}{1.070} & \textcolor{red}{1.000} & \textcolor{black}{0.968}\\
\addlinespace[0.3em]
\multicolumn{1}{c}{} & \multicolumn{5}{c}{\textbf{$k = 12$}} & \multicolumn{5}{c}{}\\
base & \textcolor{black}{1.000} & \textcolor{black}{\textbf{0.968}} & \textcolor{black}{\textbf{0.969}} & \textcolor{black}{\textbf{0.969}} & \textcolor{black}{0.971} &  &  &  &  & \\
ct$(bu)$ & \textcolor{red}{1.675} & \textcolor{red}{1.056} & \textcolor{red}{1.057} & \textcolor{red}{1.057} & \textcolor{red}{1.057} &  &  &  &  & \\
ct$(shr_{cs}, bu_{te})$ & \textcolor{red}{1.163} & \textcolor{red}{1.032} & \textcolor{black}{0.974} & \textcolor{red}{1.033} & \textcolor{black}{0.974} &  &  &  &  & \\
ct$(wlsv_{te}, bu_{cs})$ & \textcolor{red}{1.174} & \textcolor{red}{1.128} & \textcolor{red}{1.130} & \textcolor{black}{0.982} & \textcolor{black}{0.982} &  &  &  &  & \\
oct$(ols)$ & \textcolor{black}{0.982} & \textcolor{red}{1.277} & \textcolor{red}{1.085} & \textcolor{red}{1.252} & \textcolor{black}{0.982} &  &  &  &  & \\
oct$(struc)$ & \textcolor{black}{0.982} & \textcolor{red}{1.158} & \textcolor{red}{1.074} & \textcolor{red}{1.075} & \textcolor{black}{0.950} &  &  &  &  & \\
oct$(wlsv)$ & \textcolor{red}{1.025} & \textcolor{red}{1.122} & \textcolor{red}{1.085} & \textcolor{red}{1.001} & \textcolor{black}{0.954} &  &  &  &  & \\
oct$(bdshr)$ & \textcolor{red}{1.002} & \textcolor{red}{1.110} & \textcolor{red}{1.071} & \textcolor{black}{0.989} & \textcolor{blue}{\textbf{0.941}} &  &  &  &  & \\
oct$_h(bshr)$ & \textcolor{black}{0.987} & \textcolor{red}{1.347} & \textcolor{red}{1.107} & \textcolor{red}{1.297} & \textcolor{red}{1.031} &  &  &  &  & \\
oct$_h(hshr)$ & \textcolor{black}{\textbf{0.978}} & \textcolor{red}{1.106} & \textcolor{red}{1.107} & \textcolor{red}{1.021} & \textcolor{red}{1.010} &  &  &  &  & \\
oct$_h(shr)$ & \textcolor{red}{1.010} & \textcolor{red}{1.107} & \textcolor{red}{1.067} & \textcolor{black}{0.991} & \textcolor{black}{0.959} &  &  &  &  & \\
\bottomrule
\multicolumn{11}{l}{\rule{0pt}{1em}\rule{0pt}{1.75em}\makecell[l]{$^\ast$The Gaussian method employs a shrikage covariance matrix and includes four techniques\\ (G, B, H, HB) with multi-step residuals..}}\\
\end{tabular}

	\endgroup
	\caption{$\overline{RelCRPS}$ defined in Section 5 for the Australian Tourism Demand dataset. 
	Approaches performing worse than the benchmark (bootstrap base forecasts, ctjb) are highlighted in red, the best for each column is marked in bold, and the overall lowest value is highlighted in blue. The reconciliation approaches are described in Table 2.}
	\label{tab:vncrps}
\end{table}

\begin{table}[H]
	\centering
	\begingroup
	\spacingset{1}
	\fontsize{9}{10}\selectfont
	
\begin{tabular}[t]{l|>{}cccc>{}c|ccccc}
\toprule
\multicolumn{1}{c}{\textbf{}} & \multicolumn{10}{c}{\textbf{Generation of the base forecasts paths}} \\
\cmidrule(l{0pt}r{0pt}){2-11}
\multicolumn{1}{c}{\makecell[c]{\bfseries Reconciliation\\\bfseries approach}} & \multicolumn{1}{c}{ctjb} & \multicolumn{4}{c}{\makecell[c]{Gaussian approach\textsuperscript{*}}} & \multicolumn{1}{c}{ctjb} & \multicolumn{4}{c}{\makecell[c]{Gaussian approach\textsuperscript{*}}} \\
\multicolumn{1}{c}{} &  & G & B & H & \multicolumn{1}{c}{HB} &  & G & B & H & HB\\
\midrule
\addlinespace[0.3em]
\multicolumn{1}{c}{} & \multicolumn{5}{c}{\textbf{$\forall k \in \{12,6,4,3,2,1\}$}} & \multicolumn{5}{c}{\textbf{$k = 1$}}\\
base & \textcolor{black}{1.000} & \textcolor{black}{\textbf{0.958}} & \textcolor{black}{0.984} & \textcolor{black}{\textbf{0.972}} & \textcolor{black}{0.992} & \textcolor{black}{1.000} & \textcolor{black}{\textbf{0.954}} & \textcolor{black}{0.958} & \textcolor{black}{\textbf{0.954}} & \textcolor{black}{0.958}\\
ct$(bu)$ & \textcolor{red}{2.427} & \textcolor{red}{1.040} & \textcolor{red}{1.042} & \textcolor{red}{1.040} & \textcolor{red}{1.041} & \textcolor{red}{1.759} & \textcolor{red}{1.001} & \textcolor{red}{1.002} & \textcolor{red}{1.002} & \textcolor{red}{1.002}\\
ct$(shr_{cs}, bu_{te})$ & \textcolor{red}{1.243} & \textcolor{black}{0.988} & \textcolor{black}{\textbf{0.913}} & \textcolor{black}{0.990} & \textcolor{black}{0.913} & \textcolor{red}{1.098} & \textcolor{red}{1.011} & \textcolor{black}{\textbf{0.938}} & \textcolor{red}{1.013} & \textcolor{black}{0.938}\\
ct$(wlsv_{te}, bu_{cs})$ & \textcolor{red}{1.499} & \textcolor{red}{1.117} & \textcolor{red}{1.120} & \textcolor{red}{1.025} & \textcolor{red}{1.025} & \textcolor{red}{1.241} & \textcolor{red}{1.019} & \textcolor{red}{1.020} & \textcolor{black}{0.990} & \textcolor{black}{0.990}\\
oct$(ols)$ & \textcolor{black}{0.955} & \textcolor{black}{1.000} & \textcolor{black}{0.984} & \textcolor{black}{0.985} & \textcolor{black}{0.922} & \textcolor{black}{0.975} & \textcolor{black}{0.983} & \textcolor{black}{0.961} & \textcolor{black}{0.987} & \textcolor{black}{0.945}\\
oct$(struc)$ & \textcolor{red}{1.085} & \textcolor{red}{1.094} & \textcolor{red}{1.047} & \textcolor{red}{1.018} & \textcolor{black}{0.952} & \textcolor{red}{1.027} & \textcolor{red}{1.054} & \textcolor{black}{0.981} & \textcolor{red}{1.022} & \textcolor{black}{0.953}\\
oct$(wlsv)$ & \textcolor{red}{1.132} & \textcolor{red}{1.137} & \textcolor{red}{1.065} & \textcolor{red}{1.059} & \textcolor{black}{0.969} & \textcolor{red}{1.050} & \textcolor{red}{1.078} & \textcolor{black}{0.989} & \textcolor{red}{1.043} & \textcolor{black}{0.960}\\
oct$(bdshr)$ & \textcolor{red}{1.047} & \textcolor{red}{1.085} & \textcolor{red}{1.013} & \textcolor{red}{1.011} & \textcolor{black}{0.927} & \textcolor{red}{1.009} & \textcolor{red}{1.050} & \textcolor{black}{0.966} & \textcolor{red}{1.019} & \textcolor{black}{0.942}\\
oct$_h(bshr)$ & \textcolor{black}{\textbf{0.931}} & \textcolor{red}{1.002} & \textcolor{red}{1.001} & \textcolor{black}{0.982} & \textcolor{blue}{\textbf{0.889}} & \textcolor{black}{\textbf{0.965}} & \textcolor{black}{0.980} & \textcolor{black}{0.975} & \textcolor{black}{0.985} & \textcolor{black}{0.933}\\
oct$_h(hshr)$ & \textcolor{red}{1.081} & \textcolor{red}{1.109} & \textcolor{red}{1.039} & \textcolor{red}{1.076} & \textcolor{black}{0.973} & \textcolor{red}{1.028} & \textcolor{red}{1.061} & \textcolor{black}{0.978} & \textcolor{red}{1.052} & \textcolor{black}{0.963}\\
oct$_h(shr)$ & \textcolor{red}{1.068} & \textcolor{red}{1.088} & \textcolor{red}{1.008} & \textcolor{black}{0.995} & \textcolor{black}{0.896} & \textcolor{red}{1.023} & \textcolor{red}{1.061} & \textcolor{black}{0.966} & \textcolor{red}{1.011} & \textcolor{blue}{\textbf{0.924}}\\
\addlinespace[0.3em]
\multicolumn{1}{c}{} & \multicolumn{5}{c}{\textbf{$k = 2$}} & \multicolumn{5}{c}{\textbf{$k = 3$}}\\
base & \textcolor{black}{1.000} & \textcolor{black}{\textbf{0.960}} & \textcolor{black}{0.971} & \textcolor{black}{\textbf{0.958}} & \textcolor{black}{0.972} & \textcolor{black}{1.000} & \textcolor{black}{\textbf{0.963}} & \textcolor{black}{0.981} & \textcolor{black}{\textbf{0.966}} & \textcolor{black}{0.986}\\
ct$(bu)$ & \textcolor{red}{2.176} & \textcolor{red}{1.035} & \textcolor{red}{1.036} & \textcolor{red}{1.035} & \textcolor{red}{1.035} & \textcolor{red}{2.428} & \textcolor{red}{1.042} & \textcolor{red}{1.044} & \textcolor{red}{1.042} & \textcolor{red}{1.043}\\
ct$(shr_{cs}, bu_{te})$ & \textcolor{red}{1.192} & \textcolor{red}{1.020} & \textcolor{black}{\textbf{0.942}} & \textcolor{red}{1.021} & \textcolor{black}{0.942} & \textcolor{red}{1.245} & \textcolor{red}{1.009} & \textcolor{black}{\textbf{0.931}} & \textcolor{red}{1.011} & \textcolor{black}{0.931}\\
ct$(wlsv_{te}, bu_{cs})$ & \textcolor{red}{1.400} & \textcolor{red}{1.104} & \textcolor{red}{1.106} & \textcolor{red}{1.018} & \textcolor{red}{1.019} & \textcolor{red}{1.500} & \textcolor{red}{1.127} & \textcolor{red}{1.130} & \textcolor{red}{1.029} & \textcolor{red}{1.029}\\
oct$(ols)$ & \textcolor{black}{0.985} & \textcolor{red}{1.028} & \textcolor{red}{1.008} & \textcolor{red}{1.002} & \textcolor{black}{0.950} & \textcolor{black}{0.976} & \textcolor{red}{1.020} & \textcolor{red}{1.004} & \textcolor{black}{0.994} & \textcolor{black}{0.938}\\
oct$(struc)$ & \textcolor{red}{1.075} & \textcolor{red}{1.115} & \textcolor{red}{1.051} & \textcolor{red}{1.039} & \textcolor{black}{0.967} & \textcolor{red}{1.096} & \textcolor{red}{1.117} & \textcolor{red}{1.064} & \textcolor{red}{1.033} & \textcolor{black}{0.965}\\
oct$(wlsv)$ & \textcolor{red}{1.110} & \textcolor{red}{1.149} & \textcolor{red}{1.065} & \textcolor{red}{1.070} & \textcolor{black}{0.979} & \textcolor{red}{1.142} & \textcolor{red}{1.160} & \textcolor{red}{1.082} & \textcolor{red}{1.073} & \textcolor{black}{0.981}\\
oct$(bdshr)$ & \textcolor{red}{1.045} & \textcolor{red}{1.105} & \textcolor{red}{1.024} & \textcolor{red}{1.033} & \textcolor{black}{0.949} & \textcolor{red}{1.060} & \textcolor{red}{1.109} & \textcolor{red}{1.032} & \textcolor{red}{1.029} & \textcolor{black}{0.943}\\
oct$_h(bshr)$ & \textcolor{black}{\textbf{0.967}} & \textcolor{red}{1.029} & \textcolor{red}{1.025} & \textcolor{black}{0.998} & \textcolor{black}{0.928} & \textcolor{black}{\textbf{0.954}} & \textcolor{red}{1.024} & \textcolor{red}{1.025} & \textcolor{black}{0.993} & \textcolor{blue}{\textbf{0.911}}\\
oct$_h(hshr)$ & \textcolor{red}{1.073} & \textcolor{red}{1.122} & \textcolor{red}{1.042} & \textcolor{red}{1.083} & \textcolor{black}{0.983} & \textcolor{red}{1.093} & \textcolor{red}{1.129} & \textcolor{red}{1.054} & \textcolor{red}{1.090} & \textcolor{black}{0.984}\\
oct$_h(shr)$ & \textcolor{red}{1.064} & \textcolor{red}{1.110} & \textcolor{red}{1.019} & \textcolor{red}{1.018} & \textcolor{blue}{\textbf{0.922}} & \textcolor{red}{1.082} & \textcolor{red}{1.116} & \textcolor{red}{1.030} & \textcolor{red}{1.015} & \textcolor{black}{0.915}\\
\addlinespace[0.3em]
\multicolumn{1}{c}{} & \multicolumn{5}{c}{\textbf{$k = 4$}} & \multicolumn{5}{c}{\textbf{$k = 6$}}\\
base & \textcolor{black}{1.000} & \textcolor{black}{\textbf{0.962}} & \textcolor{black}{0.987} & \textcolor{black}{\textbf{0.973}} & \textcolor{black}{0.996} & \textcolor{black}{1.000} & \textcolor{black}{\textbf{0.963}} & \textcolor{black}{0.998} & \textcolor{black}{\textbf{0.984}} & \textcolor{red}{1.011}\\
ct$(bu)$ & \textcolor{red}{2.585} & \textcolor{red}{1.052} & \textcolor{red}{1.054} & \textcolor{red}{1.053} & \textcolor{red}{1.053} & \textcolor{red}{2.849} & \textcolor{red}{1.083} & \textcolor{red}{1.085} & \textcolor{red}{1.083} & \textcolor{red}{1.084}\\
ct$(shr_{cs}, bu_{te})$ & \textcolor{red}{1.277} & \textcolor{red}{1.000} & \textcolor{black}{\textbf{0.923}} & \textcolor{red}{1.002} & \textcolor{black}{0.923} & \textcolor{red}{1.339} & \textcolor{black}{0.999} & \textcolor{black}{\textbf{0.921}} & \textcolor{red}{1.000} & \textcolor{black}{0.920}\\
ct$(wlsv_{te}, bu_{cs})$ & \textcolor{red}{1.559} & \textcolor{red}{1.150} & \textcolor{red}{1.153} & \textcolor{red}{1.037} & \textcolor{red}{1.037} & \textcolor{red}{1.662} & \textcolor{red}{1.189} & \textcolor{red}{1.193} & \textcolor{red}{1.066} & \textcolor{red}{1.066}\\
oct$(ols)$ & \textcolor{black}{0.966} & \textcolor{red}{1.022} & \textcolor{red}{1.008} & \textcolor{black}{0.994} & \textcolor{black}{0.931} & \textcolor{black}{0.962} & \textcolor{red}{1.023} & \textcolor{red}{1.014} & \textcolor{red}{1.003} & \textcolor{black}{0.930}\\
oct$(struc)$ & \textcolor{red}{1.106} & \textcolor{red}{1.120} & \textcolor{red}{1.076} & \textcolor{red}{1.031} & \textcolor{black}{0.963} & \textcolor{red}{1.132} & \textcolor{red}{1.132} & \textcolor{red}{1.100} & \textcolor{red}{1.039} & \textcolor{black}{0.972}\\
oct$(wlsv)$ & \textcolor{red}{1.157} & \textcolor{red}{1.167} & \textcolor{red}{1.097} & \textcolor{red}{1.075} & \textcolor{black}{0.982} & \textcolor{red}{1.192} & \textcolor{red}{1.187} & \textcolor{red}{1.124} & \textcolor{red}{1.090} & \textcolor{black}{0.995}\\
oct$(bdshr)$ & \textcolor{red}{1.065} & \textcolor{red}{1.112} & \textcolor{red}{1.041} & \textcolor{red}{1.025} & \textcolor{black}{0.939} & \textcolor{red}{1.084} & \textcolor{red}{1.121} & \textcolor{red}{1.058} & \textcolor{red}{1.029} & \textcolor{black}{0.940}\\
oct$_h(bshr)$ & \textcolor{black}{\textbf{0.943}} & \textcolor{red}{1.028} & \textcolor{red}{1.028} & \textcolor{black}{0.994} & \textcolor{blue}{\textbf{0.900}} & \textcolor{black}{\textbf{0.932}} & \textcolor{red}{1.029} & \textcolor{red}{1.032} & \textcolor{black}{1.000} & \textcolor{blue}{\textbf{0.887}}\\
oct$_h(hshr)$ & \textcolor{red}{1.101} & \textcolor{red}{1.137} & \textcolor{red}{1.068} & \textcolor{red}{1.093} & \textcolor{black}{0.986} & \textcolor{red}{1.126} & \textcolor{red}{1.153} & \textcolor{red}{1.089} & \textcolor{red}{1.110} & \textcolor{black}{0.999}\\
oct$_h(shr)$ & \textcolor{red}{1.089} & \textcolor{red}{1.118} & \textcolor{red}{1.039} & \textcolor{red}{1.012} & \textcolor{black}{0.910} & \textcolor{red}{1.107} & \textcolor{red}{1.118} & \textcolor{red}{1.045} & \textcolor{red}{1.006} & \textcolor{black}{0.902}\\
\addlinespace[0.3em]
\multicolumn{1}{c}{} & \multicolumn{5}{c}{\textbf{$k = 12$}} & \multicolumn{5}{c}{}\\
base & \textcolor{black}{1.000} & \textcolor{black}{0.948} & \textcolor{red}{1.010} & \textcolor{red}{1.002} & \textcolor{red}{1.033} &  &  &  &  & \\
ct$(bu)$ & \textcolor{red}{2.990} & \textcolor{red}{1.028} & \textcolor{red}{1.031} & \textcolor{red}{1.029} & \textcolor{red}{1.029} &  &  &  &  & \\
ct$(shr_{cs}, bu_{te})$ & \textcolor{red}{1.326} & \textcolor{black}{\textbf{0.897}} & \textcolor{black}{\textbf{0.830}} & \textcolor{black}{\textbf{0.899}} & \textcolor{black}{0.830} &  &  &  &  & \\
ct$(wlsv_{te}, bu_{cs})$ & \textcolor{red}{1.679} & \textcolor{red}{1.119} & \textcolor{red}{1.123} & \textcolor{red}{1.009} & \textcolor{red}{1.009} &  &  &  &  & \\
oct$(ols)$ & \textcolor{black}{0.872} & \textcolor{black}{0.927} & \textcolor{black}{0.914} & \textcolor{black}{0.930} & \textcolor{black}{0.840} &  &  &  &  & \\
oct$(struc)$ & \textcolor{red}{1.077} & \textcolor{red}{1.028} & \textcolor{red}{1.012} & \textcolor{black}{0.950} & \textcolor{black}{0.894} &  &  &  &  & \\
oct$(wlsv)$ & \textcolor{red}{1.149} & \textcolor{red}{1.089} & \textcolor{red}{1.041} & \textcolor{red}{1.006} & \textcolor{black}{0.922} &  &  &  &  & \\
oct$(bdshr)$ & \textcolor{red}{1.021} & \textcolor{red}{1.015} & \textcolor{black}{0.964} & \textcolor{black}{0.935} & \textcolor{black}{0.855} &  &  &  &  & \\
oct$_h(bshr)$ & \textcolor{black}{\textbf{0.833}} & \textcolor{black}{0.927} & \textcolor{black}{0.927} & \textcolor{black}{0.927} & \textcolor{blue}{\textbf{0.784}} &  &  &  &  & \\
oct$_h(hshr)$ & \textcolor{red}{1.066} & \textcolor{red}{1.056} & \textcolor{red}{1.005} & \textcolor{red}{1.026} & \textcolor{black}{0.926} &  &  &  &  & \\
oct$_h(shr)$ & \textcolor{red}{1.043} & \textcolor{red}{1.011} & \textcolor{black}{0.952} & \textcolor{black}{0.909} & \textcolor{black}{0.809} &  &  &  &  & \\
\bottomrule
\multicolumn{11}{l}{\rule{0pt}{1em}\rule{0pt}{1.75em}\makecell[l]{$^\ast$The Gaussian method employs a shrikage covariance matrix and includes four techniques\\ (G, B, H, HB) with multi-step residuals.}}\\
\end{tabular}

	\endgroup
	\caption{ES ratio indices defined in Section 5 for the Australian Tourism Demand dataset. 
	Approaches performing worse than the benchmark (bootstrap base forecasts, ctjb) are highlighted in red, the best for each column is marked in bold, and the overall lowest value is highlighted in blue. The reconciliation approaches are described in Table 2.}
	\label{tab:vnes}
\end{table}

\newpage
\section{Computation time report}
\setcounter{table}{0}  

In this section, we provide a computational time analysis for the two forecasting experiments in the paper. Tables \ref{tab:runGDP} and \ref{tab:runATD} show the runtime (in seconds) required for simulating 1000 samples (first row, base) and the additional time needed for reconciliation with various approaches in the first iteration of the experiment. The first table is refers to the Australian QNA dataset, while the second to the Australian Tourism Demand dataset. The system's hardware and software specifications are 
\begin{itemize}[nosep]
	\item CPU: \texttt{Intel(R) Core(TM) i7-10700 CPU \@ 2.90GHz 2.90 GHz}
	\item RAM size: \texttt{64 GB}
	\item R version: \texttt{R-4.2.1\_2022-06-23\_ucrt}
	\item R packages: \texttt{forecast} \citep{Rforecast}, \texttt{MASS} 	\citep{mass2002}, \texttt{Rfast} \citep{rfast2022}, and \texttt{FoReco} \citep{foreco2023}
\end{itemize}
\vskip1cm
\begin{table}[H]
\centering
	\begingroup
	\spacingset{1}
	\fontsize{9}{11}\selectfont
\begin{tabular}[t]{l|>{}crcrcrcrcr}
\toprule
\multicolumn{1}{c}{\makecell[c]{\bfseries Reconciliation\\\bfseries approach}} & \multicolumn{2}{c}{ctjb} & \multicolumn{2}{c}{G$_{h}$} & \multicolumn{2}{c}{H$_{h}$} & \multicolumn{2}{c}{G$_{oh}$} & \multicolumn{2}{c}{H$_{oh}$} \\
\midrule
base &  & 29.01 &  & 2.21 &  & 2.17 &  & 2.14 &  & 2.18\\[0.15cm]
ct$(shr_{cs}, bu_{te})$ & + & 0.19 & + & 0.13 & + & 0.12 & + & 0.30 & + & 0.30\\
ct$(wls_{cs}, bu_{te})$ & + & 0.21 & + & 0.31 & + & 0.31 & + & 0.33 & + & 0.35\\
oct$_o(wlsv)$ & + & 0.25 & + & 0.24 & + & 0.22 & + & 0.22 & + & 0.22\\
oct$_o(bdshr)$ & + & 0.48 & + & 0.44 & + & 0.45 & + & 0.45 & + & 0.45\\
oct$_{oh}(hshr)$ & + & 0.64 & + & 0.65 & + & 0.64 & + & 0.65 & + & 0.64\\
\bottomrule
\end{tabular}
	\endgroup
\caption{Computational time (in seconds) for the first iteration of the Australian QNA forecasting experiment. The first row (base) reports the time to simulate 1000 samples, and the remaining rows the additional time to reconcile them with different approaches.}
\label{tab:runGDP}
\end{table}

\begin{table}[H]
	\centering
	\begingroup
	\spacingset{1}
	\fontsize{9}{11}\selectfont
\begin{tabular}[t]{l|>{}crcrcrcrcr}
\toprule
\multicolumn{1}{c}{\makecell[c]{\bfseries Reconciliation\\\bfseries approach}} & \multicolumn{2}{c}{ctjb} & \multicolumn{2}{c}{G} & \multicolumn{2}{c}{B} & \multicolumn{2}{c}{H} & \multicolumn{2}{c}{HB} \\
\midrule
base &  & 61.21 &  & 660.43 &  & 643.54 &  & 641.40 &  & 692.63\\[0.15cm]
ct$(shr_{cs}, bu_{te})$ & + & 3.79 & + & 4.02 & + & 3.79 & + & 3.54 & + & 4.18\\
oct$(struc)$ & + & 7.73 & + & 7.10 & + & 7.19 & + & 7.11 & + & 6.56\\
oct$(wlsv)$ & + & 8.24 & + & 6.97 & + & 6.99 & + & 7.04 & + & 6.46\\
oct$(bdshr)$ & + & 60.27 & + & 52.65 & + & 52.13 & + & 51.92 & + & 49.51\\
oct$_h(bshr)$ & + & 503.52 & + & 426.20 & + & 419.99 & + & 418.94 & + & 422.94\\
oct$_h(hshr)$ & + & 485.66 & + & 482.13 & + & 418.64 & + & 418.82 & + & 463.18\\
\bottomrule
\end{tabular}
	\endgroup
	\caption{Computational time (in seconds) for the first iteration of the Australian Tourism Demand forecasting experiment. The first row (base) reports the time to simulate 1000 samples, and the remaining rows the additional time to reconcile them with different approaches.}
\label{tab:runATD}
\end{table}

\newpage
\phantomsection\addcontentsline{toc}{section}{References}

\bibliographystyle{agsm}
\bibliography{mybibfile}